\title{Local quantum field logic}
\author{{\sc Hector Freytes}}
\date{{\small
Universit\'a di Cagliari\\
Dipartimento di Matematica $\setminus$ Filosofia , \\ Via Is Mirionis I, Cagliari - Italia.\\
email: hfreytes@gmail.com}}
\begin{document}

\bibliographystyle{plain}

\maketitle

\begin{abstract}
\noindent
Algebraic quantum field theory, or AQFT for short, is a rigorous analysis
of the structure of relativistic quantum mechanics. It is formulated in
terms of a net of operator algebras indexed by regions of a Lorentzian
manifold. In several cases the mentioned net is represented by a family of
von Neumann algebras, concretely, type III factors. Local quantum field
logic arises as a logical system that captures the propositional structure
encoded in the algebras of the net. In this framework, this work contributes
to the solution of a family of open problems, emerged since the 30s, about
the characterization of those logical systems which can be identified with
the lattice of projectors arising from the Murray-von Neumann classification
of factors. More precisely, based on physical requirements formally described in AQFT, an equational theory
able to characterize the type III condition in a factor is provided.
This equational system motivates the study of
a variety of algebras having an underlying orthomodular lattice structure.
A Hilbert style calculus, algebraizable in the mentioned variety, is also introduced
and a corresponding completeness theorem is established.

\end{abstract}

\begin{small}

{\em Keywords: Von Neumann lattices, type III factor, varieties, Hilbert style calculus.}

{\em Mathematics Subject Classification 2010: 03G12, 81P10}

\end{small}

\bibliography{pom}

\begin{thebibliography}{10}

\bibitem{ARAKI} H. Araki, von Neumann algebras of local observables for free scalar field, Math. Phys. 5 (1964) 1-13.

\bibitem{BELT} E. Beltrametti, G. Cassinelli, The Logic of Quantum Mechanics, Encyclopedia of Mathematics and its Applications, N 15,  Cambridge University Press; Reissue edition 2010.


\bibitem{BvN} G. Birkhoff, and J. von Neumann,  The logic of quantum mechanics, Ann. Math. 27  (1936) 823-843.


\bibitem{BRO}  H.J. Borchers,  A remark on a theorem of B. Misra, Comm. Math. Phys. 4 (1967) 315-323.

\bibitem{BUN} L. J. Bunce, J. D. Maitland Wright, Quantum Logic, State Space Geometry and Operator Algebras, Comm. Math. Phys. 96 (1984) 345-348.


\bibitem{BrunsH} G. Bruns and J. Harding, Algebraic aspects of orthomodular lattices, in: B. Coecke, D. Moore, and A. Wilce (Eds),
Current Research in Operational Quantum Logic, Kluwer Academic Publishers (2000) pp. 37-65.

\bibitem{Bur} S. Burris,  H. P.  Sankappanavar,  A Course in Universal Algebra, Graduate Text in Mathematics, vol. 78, Springer-Verlag, New York Heidelberg Berlin, 1981.


\bibitem{DOP} S. Doplicher, R. Haag, J.E. Roberts, Local observables and particle statistics I, Comm. Math. Phys. 23 (1971) 199-230.


\bibitem{FILLMORE} P. Fillmore, Perspectivity in projection lattices, Proc. Amer. Math. Soc. 16 (1965) 383-387.


\bibitem{GRATZER} G. Gr\"atzer,  General Lattice Theory, 2nd edition, Bikh\"auser Verlag, Basel, Boston,  Berlin 1996.


\bibitem{HAGKAS} R. Haag, D. Kastler, An algebraic approach to quantum field theory, J. Math. Phys. 5 (1964) 848-861.

\bibitem{HAG} R. Haag, Local Quantum Physics: Fields, Particles, Algebras, 2nd edition, Springer, Berlin et al 1996.

\bibitem{HAL} P. Halmos,  Introduction to Hilbert space and the Theory of Spectral Multiplicity, 2nd edition, Dover Publications INC. Mineola, NY. 2018.



\bibitem{HOLAND} S. Holland, Distributivity and perspectivity in orthomodular lattices Trans. Am. Math. Soc. 112 (1964) 330-343.


\bibitem{HOLAND2} S. Holland, The Current Interest in Orthomodular Lattices, in: J. C. Abbott, (ed), Trends in Lattice Theory, Van Nostrand-Reinhold, New York (1970) pp. 41-26.

\bibitem{HOLAND3} S. Holland,  Orthomodularity in infinite dimensions; a theorem of M. Sol\'{e}r, Bull. Amer. Math. Soc. 32 (1995), 205-234.


\bibitem{HORRU} S. S. Horuzhy, Introduction to Algebraic Quantum Field Theorey, Kluwer Academic Publishers, Dordretch, Boston, London 1986.


\bibitem{Ka} J. A. Kalman, Lattices with involution, Trans. Amer. Math. Soc.  87 (1958) 485-491.



\bibitem{KAL}  G. Kalmbach,  Ortomodular Lattices, Academic Press, London, 1983.


\bibitem{KAPL} I. Kaplansky, Projections in Banach algebras, Ann. of Math. 53 (1951) 235-249.

\bibitem{KAPL2} I. Kaplansky, Rings of Operators, University of Chicago mimeographed notes 1955.


\bibitem{HALV} H. Halvorson , M. Mueger, Algebraic quantum field theory, in: J. Butterfield, J. Earman (eds), Handbook of the Philosophy of Physics. Kluwer Academic Publishers (2006) pp. 731-864.


\bibitem{LOOMIS} L. H. Loomis, The lattice theoretic background of the dimension theory of operator algebras, Mem. Amer. Math. Soc. 18, 1955.



\bibitem{MAEDA} F. Maeda, Relative dimensionality in operator rings, J. Sci. Hiroshima Univ. Ser. A-11 Math., (1941), 1-6.


\bibitem{MAEDA2} F. Maeda, Dimension functions on certain general lattices, J. Sci. Hiroshima Univ. Ser. A-1 Math. 19 (1955), 211-237.


\bibitem{MM} F. Maeda, S. Maeda, Theory of Symmetric Lattices, Springer-Verlag, Berlin, 1970.


\bibitem{MV} F.J. Murray, J. von Neumann, On rings of operators, Ann. of Math. 37 (1936) 116-229.


\bibitem{MV2} F.J. Murray and J. von Neumann. On rings of operators II, Trans. Amer. Math. Soc.  41 (1937) 208-248.


\bibitem{MV4} F.J. Murray and J. von Neumann, On rings of operators IV, Ann. of Math. 2 (1943) 716 - 808.


\bibitem{PENDERSEN} G. K. Pedersen, $C^*$-algebras and their Automorphism Groups, Academic Press, London 1979.


\bibitem{POWER} R. Power,  Existence of uncountable number of non isomorphic type $III$ factors,  Ann. of Math. 86  (1967) 138-171.


\bibitem{SASAKI} U. Sasaki,  Orthocomplemented lattices satisfying the exchange axiom, J. Sci. Hiroshima Univ. Ser. A-17, (1954) 293-302.



\bibitem{REDEI} M. R\'{e}dei, Why John von Neumann did not Like the Hilbert Space formalism of quantum mechanics, Studies in History and Philosophy of Modern Physics 27 (1996) 493-510.


\bibitem{vNlibro} J. von Neumann, Marhemarische Grundlagen der Quanrenmechanik, Heidelberg, Springer 1932. Transl. by R. Beyer (Princeton: Princeton University Press, 1955).

\bibitem{vonNeumann49} J. von Neumann,  On Rings of Operators. Reduction Theory, Ann. of Math, Second Series, 50 (1949) 401-485

\bibitem{vonNeumannIII} J. von Neumann,  On Rings of Operators. Reduction Theory III, Ann. of Math, Second Series, 41 (1940) 94-161.

\bibitem{vonNeumannGEO} J. von Neumann, Continuous Geometry Princeton Univ. Press, Princeton, (ed by I. Halperin), 1960.

\bibitem{YNG} J. Yngvason,  The role of type III factors in quantum field theory, Rep. Math. Phys. 55 (2005), 135-147.

\end{thebibliography}

\newtheorem{theo}{Theorem}[section]

\newtheorem{definition}[theo]{Definition}

\newtheorem{lem}[theo]{Lemma}

\newtheorem{met}[theo]{Method}

\newtheorem{prop}[theo]{Proposition}

\newtheorem{coro}[theo]{Corollary}

\newtheorem{exam}[theo]{Example}

\newtheorem{rema}[theo]{Remark}{\hspace*{4mm}}

\newtheorem{example}[theo]{Example}

\newcommand{\proof}{\noindent {\em Proof:\/}{\hspace*{4mm}}}

\newcommand{\skproof}{\noindent {\em  Sketch of proof:\/}{\hspace*{4mm}}}

\newcommand{\qed}{\hfill$\Box$}

\newcommand{\ninv}{\mathord{\sim}} 

\newcommand\myeq{\mathrel{\stackrel{\makebox[0pt]{\mbox{\normalfont\tiny D}}}{\thickapprox}}}

\section*{Introduction}\label{INTRO}

Quantum field theory (QFT) is a set of tools that combines three areas of the modern physic: quantum theory, field theory and relativity. This theory underlies elementary particle physics and supplies essentials tools to other branches of the theoretical physics, such as condensed matter physics, statistical mechanics, astrophysics etc. Although quantum field theories have been developed and used for more than 70 years, a generally accepted rigorous description
of the structure of these theories has not yet been established. With the aim to establish a consistent mathematical framework for the treatment of  QFT, several axiomatic frameworks were formulated since the mid-fifties. One of these is known as {\it Algebraic quantum field theory} (AQFT).  Its origin lies in a seminal work of Haag and Kastler dating back to the early 1960s \cite{HAGKAS}. On this picture, a collection of observables is assigned to each open region of the Lorentzian spacetime. These observables correspond to physical quantities that can be measured by experiments causally confined to that region. The collection of observables comes equipped with an intrinsic operator algebra structure.  AQFT exists in two versions: the original Haag-Kastler formalism \cite{HAGKAS} based on $C^*$-algebras and the Haag-Araki formalism \cite{ARAKI, HAG} which uses von Neumann algebras. Here we adopt the second model mentioned above. Formally, the Haag-Araki model is based on a family $\{{\cal N}({\cal O})\}_{{\cal O}}$, called {\it net of local observable algebras over the spacetime}, where ${\cal O}$ is an open bounded region of a Lorentzian manifold  and ${\cal N}({\cal O})$ is a type III factor in the Murray-von Neumann classification of factors \cite{HALV, MV, MV2, MV4,vonNeumannIII}. Each algebra ${\cal N}({\cal O})$ mathematically represents the set of physical properties in the region ${\cal O}$ of the spacetime.  In this framework the aim of this work is to develop a logical system describing the propositional structure arising from the algebras of the net.

As is widely known, the elementary propositional structure associated to each algebra ${\cal N}({\cal O})$ is encoded in the orthomodular lattice ${\cal P}({\cal N}({\cal O}))$ defined by its projectors, each one of them, representing a true/false assertion related to a physical property in the region ${\cal O}$. However, the only orthomodularity condition on ${\cal P}({\cal N}({\cal O}))$ is not enough to distinguish the most important formal requirement of the algebras of the net, namely, the type III factor condition in the Murray-von Neumann classification.  {\it Local quantum field logic}, or $LQF$-logic for short, is an expansion of the  orthomodular logic that captures the type III factor condition of the algebras of the net. More precisely, this logical system is based on a necessary and sufficient condition formulated by a set of equations on an expanded language of the variety of orthomodular lattices that, when imposed on the projector lattice of a von Neumann factor, implies that this is a type III factor. In this perspective,  this work attempts to contribute to the solution of a family of open questions emerged since the 30's about whether it might be possible to establish lattice theoretical conditions in order to characterize each factor of the Murray-von Neumann classification \cite{BUN, HOLAND2, HOLAND3}.

The paper is structured as follows. Section \ref{BASICNOTION} contains generalities on universal algebra and lattice theory. Some technical results about operations expanding the orthomodular structure are also given. In Section \ref{VNlatticesDimension} an outline about operator algebras is provided. Moreover, useful facts linking elements of the Murray-von Neumann dimension theory and von Neumann lattices are established. Section \ref{RINGOP} provides a detailed motivation of the $LQF$-logic. In order to do this the physical framework underlying AQFT is briefly described.  In Section \ref{EQTYPEIII} the action of the partial isometries on the  lattice of projectors of a type III factor is studied. This allows us to transplant the Murray-von Neumann equivalence of type $III$ factors in the language of the lattice of projectors. Furthermore, partial isometries define a natural expansion of the language of the orthomdular lattices in which, a set of equations characterizing the type III factor, is formulated. Based on this equational theory, in Section \ref{ALMODELQF}, a variety of algebras, called $LQF$-algebras, is introduced and studied. Via $LQF$-algebras, alternative proofs of the crucial characteristics regarding to the non atomicity and non modularity of the  type III factors are given. Section \ref{FILTERCONGRUENCE} is devoted to the study of the congruences and filters in $LQF$-algebras. Finaly, in Section \ref{HILBERTSTYLE}, a Hilbert style calculus for $LFQ$-logic is introduced and a completeness theorem for this calculus is also established.

\section{Basic Notions}\label{BASICNOTION}

We first recall from \cite{Bur} some notion of universal algebra
that plays an important role along this article. Let $\tau$ be a type of algebras. We denote by $Term_{\tau}(X)$ the {\it absolutely free algebra} of type $\tau$
built  from the denumerable set of variables $X = \{x_1, x_2,...\}$. Each element of $Term_{\tau}(X)$ is referred to as a {\it ${\tau}$-term}. We denote
by $Comp(t)$ the complexity of the term $t$. An {\it equation of type $\tau$} is an expression of the form $t = s$ where $t,s \in Term_{\tau}(X)$.
For $t\in Term_{\tau}(X)$ we usually write $t(x_1, \ldots, x_n)$ to indicate that the variables occurring in $t$ are among $x_1, \ldots, x_n$.
A {\it variety} is a class of algebras of the same type defined by a set of equations. Let ${\cal A}$ be a variety of algebras of type $\tau$. In this case, $Term_{\tau}(X)$ is also denoted by
$Term_{\cal A}(X)$ and each element of $Term_{\tau}(X)$ is indistinctly referred to as a {\it ${\tau}$-term}, {\it ${\cal A}$-term} or simply {\it term} when there is no confusion.
A {\it ${\cal A}$-homomorphism} is a $\tau$-operation preserving map between two algebras of ${\cal A}$. If $A\in {\cal A}$ then we denote by $1_A$ the identity ${\cal A}$-homomorphism on $A$.
If ${\cal B}$ be a subclass of ${\cal A}$ then we denote by ${\cal V}({\cal B})$ the subvariety of ${\cal A}$
generated by the class ${\cal B}$, i.e. ${\cal V}({\cal B})$ is the
smallest subvariety of  ${\cal A}$ containing ${\cal B}$.
Let $A \in {\cal A}$. Each term $t(x_1, \ldots, x_n)$ in $Term_{\tau}(X)$ canonically defines an $n$-ary operation on $A$ denoted by $t^A$.
If $a_1,\dots, a_n \in A$ then we denote by $t^A(a_1,\dots, a_n)$
the result of the application of the term operation $t^A$ to the
elements $a_1,\dots, a_n$. A {\it valuation} in the algebra $A$ is a function of the form
$v:X\rightarrow A$. By induction on $Comp(t)$ any valuation $v$ in $A$ can be
uniquely extended to an ${\cal A}$-homomorphism $v:Term_{\tau}(X) \rightarrow A$, that is, $v(t(t_1, \ldots, t_n)) = t^A(v(t_1), \ldots,
v(t_n))$ for $t_1, \ldots, t_n \in Term_{\tau}(X)$.  Thus, valuations are identified with ${\cal A}$-homomorphisms from the absolutely free algebra. If $t,s \in
Term_{\tau}(X)$, $A \models t = s$ means that for each valuation $v$
in $A$, $v(t) = v(s)$ and ${\cal A}\models t=s$ means that for
each $A\in {\cal A}$, $A \models t = s$.

For each algebra $A \in {\cal A}$, we denote by $Con(A)$ the
congruence lattice of $A$, the diagonal congruence is denoted by
$\Delta_A$ and the largest congruence $A^2$ is denoted by $\nabla_A$.
A congruence $\theta$ is called  {\it factor congruence} iff there is a
congruence $\theta^*$ on $A$ such that, $\theta \land \theta^* =
\Delta_A$, $\theta \lor \theta^* = \nabla_A$ and $\theta$ permutes with
$\theta^*$. In this case the pair $(\theta, \theta^*)$ is called a {\it pair of factor
congruences} on $A$ and we can prove that  $A \cong A/\theta \times A/\theta^*$. The algebra $A$ is
{\it directly indecomposable} iff $A$ is not isomorphic to a product
of two non trivial algebras or, equivalently, if $\Delta_A,\nabla_A$ are
the only factor congruences in $A$. If ${\cal A}$ is a variety then we denote by ${\cal DI}({\cal A})$ the class of directly indecomposable algebras of ${\cal A}$.
An algebra $A$ has the {\it congruence extension property} (CEP) iff for each subalgebra $B$ and $\theta
\in Con(B)$ there is a $\phi \in Con(A)$ such that $\theta = \phi
\cap A^2$.  A variety ${\cal A}$ satisfies CEP iff every algebra
in ${\cal V}$ has the CEP.

The variety ${\cal A}$ is said to be {\it congruence distributive} iff for each $A \in {\cal A}$, $Con(A)$ is a distributive lattice. If for each $A \in {\cal A}$ the congruences of $Con(A)$ are permutable then we said that ${\cal A}$ is a {\it congruence permutable} variety. The variety ${\cal A}$ is {\it arithmetical} iff it is both congruence distributive and congruence permutable variety.

Let $A$ be an algebra. We say that $A$ is {\it subdirect product} of a family of $(A_i)_{i\in I}$ of algebras if
there exists an embedding $f: A \rightarrow \prod_{i\in I} A_i$ such
that $\pi_i f : A\! \rightarrow A_i$ is a surjective homomorphism
for each $i\in I$ where $\pi_i$ is the $i$th-projection onto $A_i$. The algebra $A$ is
{\it subdirectly irreducible} iff it is trivial or there is a
minimum congruence in $Con(A) - \Delta_A$. We denote by ${\cal SI}({\cal A})$ the class of subdirectly irreducible algebras of the variety ${\cal A}$.
It is clear that a subdirectly irreducible algebra is directly indecomposable. Then, for each variety ${\cal A}$, we have that
\begin{equation}\label{SIincDI}
{\cal SI}({\cal A}) \subseteq {\cal DI}({\cal A}).
\end{equation}

 An important result by Birkhoff is the following subdirect representation theorem.

\begin{theo}\label{Birkhoff} {\rm \cite[Theorem 8.6]{Bur}}
Let ${\cal A}$ be a variety. Then every algebra $A \in {\cal A}$ is a subdirect product of subdirectly irreducible algebras of ${\cal A}$.
\qed
\end{theo}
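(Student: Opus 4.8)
The plan is to produce, for each $A \in {\cal A}$, a family of congruences whose quotients are subdirectly irreducible and whose intersection is the diagonal $\Delta_A$; by the definition of subdirect product and the characterization of subdirect irreducibility recalled above, this is exactly what the representation requires. First I would dispose of the trivial case: if $A$ has a single element it is itself subdirectly irreducible and nothing is to prove, so assume $A$ has at least two elements. For each pair $a \neq b$ in $A$, consider the set of congruences $\theta \in Con(A)$ with $(a,b) \notin \theta$. This set is nonempty, since it contains $\Delta_A$, and it is closed under unions of chains, because a pair belongs to a directed union of congruences iff it belongs to one of them. By Zorn's Lemma there is a congruence $\theta_{a,b}$ maximal with respect to the property $(a,b) \notin \theta_{a,b}$.

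The central step is to verify that each quotient $A/\theta_{a,b}$ is subdirectly irreducible. By the correspondence theorem, the congruences of $A/\theta_{a,b}$ are order-isomorphic to the congruences of $A$ lying above $\theta_{a,b}$. Writing $\bar a, \bar b$ for the images of $a,b$, maximality of $\theta_{a,b}$ forces every congruence strictly above $\theta_{a,b}$ to contain $(a,b)$; hence every congruence of $A/\theta_{a,b}$ distinct from its diagonal contains the pair $(\bar a,\bar b)$, and therefore contains the principal congruence $\mathrm{Cg}(\bar a,\bar b)$ generated by it. Since $\bar a \neq \bar b$, this principal congruence is itself off-diagonal and is contained in every nontrivial congruence, so it is the minimum of $Con(A/\theta_{a,b}) - \Delta_{A/\theta_{a,b}}$. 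By the definition recalled above, $A/\theta_{a,b}$ is subdirectly irreducible, and it lies in ${\cal A}$ because varieties are closed under quotients.

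Finally I would assemble the representation. As $(a,b) \notin \theta_{a,b}$ for every pair $a \neq b$, the intersection $\bigcap_{a \neq b}\theta_{a,b}$ contains no off-diagonal pair and thus equals $\Delta_A$. Consequently the natural map $f : A \to \prod_{a \neq b} A/\theta_{a,b}$, sending each $x$ to the tuple of its congruence classes, has kernel $\Delta_A$ and is therefore an embedding, while each composite $\pi_{a,b}\circ f$ is the canonical surjection onto $A/\theta_{a,b}$. This exhibits $A$ as a subdirect product of subdirectly irreducible algebras of ${\cal A}$, as claimed. The only genuinely nonroutine point is the subdirect-irreducibility step of the second paragraph, which hinges on combining the Zorn-maximality of $\theta_{a,b}$ with the correspondence theorem; the closure of the separating set under chains and the embedding criterion via the kernel are otherwise bookkeeping.
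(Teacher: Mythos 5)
Your proof is correct and is essentially the classical argument from the source the paper cites (Burris--Sankappanavar, Theorem 8.6): Zorn's lemma yields for each pair $a \neq b$ a congruence $\theta_{a,b}$ maximal with respect to omitting $(a,b)$, the correspondence theorem shows each quotient $A/\theta_{a,b}$ is subdirectly irreducible, and the intersection of these congruences being $\Delta_A$ gives the subdirect embedding. The paper itself offers no proof, only the citation, and your argument faithfully reproduces the standard one with all the key steps (chain-closure for Zorn, minimality of the principal congruence $\mathrm{Cg}(\bar a,\bar b)$, surjectivity of the projections) correctly justified.
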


Let us notice that, by the above theorem and Eq.(\ref{SIincDI}), in each variety ${\cal A}$ the class ${\cal SI}({\cal A})$ and the class ${\cal DI}({\cal A})$ rule the valid equations in ${\cal A}$. That is, for any pair of terms $t, s\in Term_{\cal A}(X)$ we have that
\begin{equation}\label{EQDISI}
{\cal A}\models t=s \hspace{0.4cm} \mbox{iff} \hspace{0.4cm} {\cal DI}({\cal A})\models t=s \hspace{0.4cm} \mbox{iff} \hspace{0.4cm} {\cal SI}({\cal A})\models t=s.
\end{equation}

An algebra $A$ is said to be {\it simple} iff $Con(A) = \{\Delta_A, \nabla_A \}$. The class of simple algebras in the variety ${\cal A}$ is denoted by ${\cal S}im({\cal A})$.  The algebra $A$ is {\it semisimple} iff $A$ is a subdirect product of simple algebras. A variety ${\cal A}$ is semisimple iff each algebra of ${\cal A}$ is semisimple.  A {\it discriminator term} for the algebra $A$ is a term $t(x,y,z)$ such that
$$
t^A(x,y,z) = \begin{cases}x, & x\not=y, \\
z, & x=y. \end{cases}
$$
A variety ${\cal A}$ is a {\it discriminator variety} iff there exists a subclass of algebras ${\cal K}$ with a common discriminator term $t(x,y,z)$ such that ${\cal A} = {\cal V}({\cal K})$.

\begin{theo} \label{BULMAN} {\rm \cite[Theorem 9.4]{Bur}} Let ${\cal K}$ be a class of algebras of type $\tau$ and $t(x,y,z)$ be a common discriminator $\tau$-term for the class ${\cal K}$.
If we consider the generated variety ${\cal A} ={\cal V}({\cal K})$ then

\begin{enumerate}
\item
${\cal A}$ is an arithmetical semisimple variety.

\item
${\cal DI}({\cal A}) = {\cal SI}({\cal A}) = {\cal S}im({\cal A})$.

\end{enumerate}
\qed
\end{theo}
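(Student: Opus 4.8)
The plan is to obtain both assertions at once by first pinning down the subdirectly irreducible members of ${\cal A}$ and showing that they are exactly the algebras on which $t$ acts as a genuine discriminator, which will turn out to be precisely the simple algebras. The first step is arithmeticity, and here the key observation is that the discriminator term is itself a Pixley term. Reading off the defining cases of $t$, one checks directly the identities $t(x,x,y)\approx y$, $t(x,y,y)\approx x$ and $t(x,y,x)\approx x$; the first two form a Mal'cev pair, and together with the third they constitute a Pixley term. These identities hold in every member of ${\cal K}$, hence --- being equations --- throughout ${\cal A}={\cal V}({\cal K})$, so by Pixley's characterization of arithmetical varieties (see \cite{Bur}) the variety ${\cal A}$ is both congruence distributive and congruence permutable.

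Next I would characterize the simple members. If $t$ acts as the ternary discriminator on a nontrivial algebra $B$ and $\theta$ is a congruence containing a pair $(a,b)$ with $a\neq b$, then congruence compatibility gives $t(a,a,c)\,\theta\,t(a,b,c)$, that is $c\,\theta\,a$, for every $c\in B$; hence $\theta=\nabla_B$ and $B$ is simple. The crucial point is that ``$t$ is a discriminator'' is a universal first-order sentence, so it is inherited by isomorphic copies, subalgebras and ultraproducts (by {\L}o\'s's theorem). Consequently every algebra in $\mathbf{ISP_U}({\cal K})$ --- the isomorphic copies of subalgebras of ultraproducts of members of ${\cal K}$ --- is simple.

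Now I invoke J\'onsson's Lemma: since ${\cal A}$ is congruence distributive, ${\cal SI}({\cal A})\subseteq\mathbf{HSP_U}({\cal K})$. Thus any $B\in{\cal SI}({\cal A})$ is a homomorphic image of some $C\in\mathbf{SP_U}({\cal K})$, and $C$ is simple by the previous paragraph; as a simple algebra has no nontrivial proper quotients, $B$ is either trivial or isomorphic to $C$, hence simple. Together with the trivial inclusion ${\cal S}im({\cal A})\subseteq{\cal SI}({\cal A})$ this gives ${\cal SI}({\cal A})={\cal S}im({\cal A})$. Semisimplicity of ${\cal A}$ is then immediate from Birkhoff's subdirect representation theorem (Theorem \ref{Birkhoff}): every algebra of ${\cal A}$ is a subdirect product of members of ${\cal SI}({\cal A})={\cal S}im({\cal A})$, i.e.\ of simple algebras.

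It remains to identify ${\cal DI}({\cal A})$ with ${\cal SI}({\cal A})$; the inclusion ${\cal SI}({\cal A})\subseteq{\cal DI}({\cal A})$ is (\ref{SIincDI}). For the converse I expect the main obstacle to be controlling possibly infinite decompositions, and I would route around it by using the fact --- provable from the discriminator term --- that in ${\cal A}$ every principal congruence $\Theta(a,b)$ is a factor congruence, its complementary congruence $\Theta(a,b)^{*}$ being definable from $t$. Granting this, suppose $A$ is not simple; then some congruence differs from $\Delta_A$ and $\nabla_A$, so there exist $a\neq b$ with $\Delta_A\neq\Theta(a,b)\neq\nabla_A$. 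Being a proper nontrivial factor congruence, $\Theta(a,b)$ yields a decomposition $A\cong A/\Theta(a,b)\times A/\Theta(a,b)^{*}$ into two nontrivial factors, so $A$ is directly decomposable. Contrapositively, every directly indecomposable algebra is simple, which gives ${\cal DI}({\cal A})\subseteq{\cal S}im({\cal A})={\cal SI}({\cal A})$ and closes the chain ${\cal DI}({\cal A})={\cal SI}({\cal A})={\cal S}im({\cal A})$. The technical heart is therefore the J\'onsson-Lemma reduction together with the factorability of principal congruences; once these are in place, the class equalities and the arithmetical-semisimple conclusion follow formally.
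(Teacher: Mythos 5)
There is no argument in the paper to compare against: Theorem \ref{BULMAN} is imported from Burris--Sankappanavar \cite{Bur} (their Theorem 9.4) and stated without proof. Judged on its own, your proposal is essentially the standard textbook argument, correctly assembled. The Pixley identities $t(x,x,y)\approx y$, $t(x,y,y)\approx x$, $t(x,y,x)\approx x$ do follow from the two defining cases of the discriminator; being equations, they pass from ${\cal K}$ to ${\cal A}={\cal V}({\cal K})$, and Pixley's theorem gives arithmeticity. Your route to ${\cal SI}({\cal A})={\cal S}im({\cal A})$ is also sound, and it navigates the one real subtlety correctly: ``$t$ is a discriminator'' is a universal first-order sentence, so it survives $\mathbf{S}$ and, by {\L}o\'s, $\mathbf{P_U}$, but it does \emph{not} survive $\mathbf{H}$; J\'onsson's Lemma (legitimately available, since congruence distributivity was secured first) is exactly what confines the subdirectly irreducibles to $\mathbf{HSP_U}({\cal K})$, where homomorphic images of simple algebras are simple or trivial. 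Birkhoff's theorem (Theorem \ref{Birkhoff}) then yields semisimplicity.

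The one genuine debt is the lemma you grant yourself for ${\cal DI}({\cal A})\subseteq{\cal S}im({\cal A})$: that in every $A\in{\cal A}$ each principal congruence $\Theta(a,b)$ is a factor congruence with complement definable from $t$. The claim is true, but it is the technical heart of part 2 and cannot be waved through, precisely because of the subtlety you exploited earlier: in a general $A\in{\cal A}$ the discriminator property itself is unavailable, so every verification must be reduced to \emph{identities} valid in ${\cal K}$, never to quasi-identities. Concretely, introduce the switching term $s(x,y,z,u)=t\big(t(x,y,z),t(x,y,u),u\big)$, which on members of ${\cal K}$ returns $z$ when $x=y$ and $u$ when $x\neq y$, and put $\theta_1=\{(c,d)\in A^2 : s(a,b,c,d)=d\}$ and $\theta_2=\{(c,d)\in A^2 : s(a,b,c,d)=c\}$. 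Every required fact is then an identity checkable by the two-case computation in ${\cal K}$, hence valid throughout ${\cal A}$: compatibility with a basic operation $f$ comes from $s(x,y,f(\vec{z}),f(\vec{u}))\approx f\big(s(x,y,z_1,u_1),\ldots,s(x,y,z_n,u_n)\big)$; symmetry and transitivity of $\theta_1$ come from $s\big(x,y,s(x,y,z,u),z\big)\approx z$ and $s\big(x,y,s(x,y,z,u),v\big)\approx s(x,y,z,v)$ (similarly for $\theta_2$); the equality $\theta_1=\Theta(a,b)$ follows from $s(x,y,x,y)\approx y$ together with $s(x,x,z,u)\approx z$; and while $\theta_1\wedge\theta_2=\Delta_A$ is immediate, $\theta_1\circ\theta_2=\nabla_A$ holds because for arbitrary $(c,d)$ the element $e=s(a,b,c,d)$ satisfies $(c,e)\in\theta_1$ and $(e,d)\in\theta_2$, by $s\big(x,y,z,s(x,y,z,u)\big)\approx s(x,y,z,u)$ and $s\big(x,y,s(x,y,z,u),u\big)\approx s(x,y,z,u)$. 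With this lemma discharged, your contrapositive step (a non-simple $A$ has a principal congruence distinct from $\Delta_A$ and $\nabla_A$, which then splits $A$ into two nontrivial factors) is correct and closes the chain ${\cal DI}({\cal A})={\cal SI}({\cal A})={\cal S}im({\cal A})$.
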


Now we recall from \cite{KAL} and \cite{MM} some notion about orthomodular lattices.  A {\it lattice with involution} \cite{Ka} is an algebra $\langle L, \lor, \land, \neg
\rangle$ such that $\langle L, \lor, \land \rangle$ is a  lattice
and $\neg$ is a unary operation on $L$ that fulfills the following
conditions: $\neg \neg x = x$ and  $\neg (x \lor y) = \neg x \land \neg y$.  An {\it orthomodular lattice} is an algebra $\langle L,
\land, \lor, \neg, 0,1 \rangle$ of type $\langle 2,2,1,0,0 \rangle$
that satisfies the following conditions:

\begin{enumerate}
\item
$\langle L, \land, \lor, \neg, 0,1 \rangle$ is a bounded lattice with involution,

\item
$x\land  \neg x = 0 $,

\item
$x\lor ( \neg x \land (x\lor y)) = x\lor y $.  \hspace*{\fill} {{\rm ({\it orthomodular law})} }

\end{enumerate}

We denote by ${\cal OML}$ the variety of orthomodular lattices. Upon defining the ${\cal OML}$-term $t R s$ as
\begin{equation}\label{EQCONNECTIVE}
t R s \hspace{0.2 cm} = \hspace{0.2 cm} (t \land s) \lor (\neg t\land \neg s)
\end{equation}
an important characterization of the equations in ${\cal OML}$ is given by:
\begin{equation}\label{ECMOL}
{\cal OML} \models t = s \hspace{0.4cm} iff  \hspace{0.4cm}
{\cal OML} \models t R s = 1.
\end{equation}
Therefore we can safely assume that all ${\cal OML}$-equations are of the form $t = 1$, where $t \in Term_{\cal OML}(X)$.
\begin{rema}\label{EQOML1}
{\rm  It is clear that the equational characterization given in Eq.(\ref{ECMOL}) is satisfied for each
variety ${\cal A}$ admitting terms of the language that define, on each $A\in {\cal A}$, operations $\lor$,
$\land$, $\neg$, $0,1$ such that $\langle A,\lor,\land, \neg, 0,1\rangle$ is an orthomodular lattice.
}
\end{rema}

Let $L$ be an orthomodular lattice. An element $a\in L$ is an {\it atom} iff $a\not = 0$ but $x\leq a$ implies, $x= 0$ or $x=a$. Dually, the notion of {\it coatom} is established. Let us notice that $a$ is an atom iff $\neg a$ is a coatom. Two  elements $a,b$ in $L$ are {\it orthogonal} (noted $a \bot b$) iff $a\leq \neg b$.  For each $a\in L$ let us consider the interval $[0,a] = \{x\in L : 0\leq x \leq a \}$ and the unary operation on  $[0,a]$ given by $\neg_a x = \neg x \land a$. As one can readly realize, the structure
\begin{equation}\label{internalalg}
[0,a]_L  = \langle [0,a], \land, \lor, \neg_a, 0, a \rangle
\end{equation}
is an orthomodular lattice.
Let $a\in L$. Then the mapping $\mu_a: L \rightarrow L$ given by
\begin{equation}\label{SASAKIPROJECTION}
\mu_a(x) = a\land (\neg a \lor x)
\end{equation}
is called the {\it Sasaki projection} onto $[0,a]$. In {\rm \cite[p. 156]{KAL}} it is proved that
\begin{equation}\label{SASAKIPROP}
x= \mu_a(x) \hspace{0.3cm} \mbox{iff}  \hspace{0.3cm} x\leq a.
\end{equation}

For elements $a,b \in L$ we said that $a$ {\it commutes} with $b$, in symbols $aCb$, iff $a = (a\lor b) \land (a\lor \neg b)$. It is not very hard to see that $aCb$ iff $bCa$. We also note that $aCb$, $\neg aCb$, $aC \neg b$, $\neg a C \neg b$ are equivalent conditions in an orthomodular lattices. \\

{\it Boolean algebras} are orthomodular lattices satisfying  the
{\it distributive law} $x\land (y \lor z) = (x \land y) \lor (x
\land z)$. We denote by ${\bf 2}$ the Boolean algebra of two
elements. Let $L$ be an orthomodular lattice. Given $a, b, c$ in $L$, we write: $(a,b,c)D$\ \   iff $(a\lor
b)\land c = (a\land c)\lor (b\land c)$; $(a,b,c)D^{*}$ iff $(a\land
b)\lor c = (a\lor c)\land (b\lor c)$ and $(a,b,c)T$\ \ iff
$(a,b,c)D$, (a,b,c)$D^{*}$ hold for all permutations of $a, b, c$.
An element $z$ of $L$ is called {\it central} iff for all elements
$a,b\in L$ we have\ $(a,b,z)T$. We denote by $Z(L)$ the set of all
central elements of $L$ and it is called the {\it center} of $L$.

\begin{prop}\label{eqcentro} Let $L$ be an orthomodular lattice. Then we have:

\begin{enumerate}

\item
$Z(L)$ is a Boolean sublattice of $L$ {\rm \cite[Theorem 4.15]{MM}}.

\item
$z \in Z(L)$ iff for each $a\in L$, $a = (a\land z) \lor (a \land \neg z)$  {\rm \cite[Lemma 29.9]{MM}}.
\end{enumerate}
\qed
\end{prop}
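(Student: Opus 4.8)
The plan is to dispatch part (2) first, extract from it a workable characterization of centrality, and then bootstrap that to part (1). The starting observation is that, in an orthomodular lattice, the commutation relation admits a meet-form description,
$$aCz \quad\Longleftrightarrow\quad a = (a\land z)\lor(a\land\neg z),$$
which is equivalent to the join-form definition given in the text: applying $\neg$ to the right-hand side and using De Morgan twice turns it into $\neg a = (\neg a\lor z)\land(\neg a\lor\neg z)$, i.e.\ $\neg a\,C\,z$, and $\neg a\,C\,z \Leftrightarrow aCz$ by the equivalences noted just before the statement. Hence the right-hand condition of (2), namely "$a = (a\land z)\lor(a\land\neg z)$ for every $a$", says exactly that $z$ commutes with every element of $L$, and so (2) is the assertion that \emph{$z$ is central iff $z$ commutes with all of $L$}.

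For the direction ($\Rightarrow$) of (2), assume $z\in Z(L)$. Fixing an arbitrary $a$ and taking $b=\neg z$ in the definition of centrality gives $(a,\neg z,z)T$; since $T$ demands all six permutations, the instance $(z\lor\neg z)\land a = (z\land a)\lor(\neg z\land a)$ is available, and as $z\lor\neg z=1$ this collapses to $a = (a\land z)\lor(a\land\neg z)$. For ($\Leftarrow$), the hypothesis says $z$ commutes with every element, so given arbitrary $a,b$ I would invoke the Foulis--Holland theorem (an element commuting with two others generates, with them, a distributive sublattice): all distributive identities for the permutations of $(a,b,z)$ hold, that is $(a,b,z)T$, whence $z\in Z(L)$.

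For part (1) I would use the characterization just obtained. That $0,1\in Z(L)$ and that $Z(L)$ is closed under $\neg$ are immediate, the latter because the defining condition $a=(a\land z)\lor(a\land\neg z)$ is manifestly symmetric in $z$ and $\neg z$. The substantive step is closure under the lattice operations: for $z_1,z_2\in Z(L)$ and arbitrary $a$ one must check $a=(a\land(z_1\land z_2))\lor(a\land\neg(z_1\land z_2))$, equivalently that $z_1\land z_2$ commutes with every element; I would establish this by exploiting that each $z_i$ commutes with everything and applying Foulis--Holland repeatedly to reduce the relevant joins and meets, then conclude $z_1\land z_2\in Z(L)$ via (2) (and $z_1\lor z_2\in Z(L)$ by closure under $\neg$). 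Finally, within $Z(L)$ distributivity is free, since any triple of central elements satisfies $T$, and $\neg z$ is a complement of $z$ because $z\land\neg z=0$ and $z\lor\neg z=1$ hold in $L$; thus $Z(L)$ is a bounded, distributive, complemented sublattice, i.e.\ a Boolean sublattice of $L$.

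The main obstacle I expect is precisely the closure of $Z(L)$ under meet and join: unlike the nullary and unary cases, it requires genuine orthomodular computation, organized around the Foulis--Holland theorem, to verify that $z_1\land z_2$ still commutes with an arbitrary element. Once that is in place, the remaining Boolean structure (distributivity and complementation) follows routinely from the axioms already recorded.
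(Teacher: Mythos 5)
Your proposal is correct in substance, but note that the paper offers no argument of its own here: both items are quoted directly from Maeda--Maeda (\cite[Theorem 4.15]{MM} and \cite[Lemma 29.9]{MM}), so the box after the statement closes a citation, not a proof. Your route is therefore genuinely different in that it is nearly self-contained: you correctly translate between the paper's join-form definition of commutation and the meet-form condition of item (2) via the $\neg a\,C\,z \Leftrightarrow a\,C\,z$ equivalences, prove the forward direction of (2) from a single instance of $T$ with $b=\neg z$, and get the converse from Foulis--Holland; item (1) then follows from (2). The one step deserving more care is closure of $Z(L)$ under $\land$: Foulis--Holland gives distributivity of the \emph{sublattice} generated by $\{a,z_1,z_2\}$, while the element $\neg(z_1\land z_2)=\neg z_1\lor\neg z_2$ you must control lives only in the generated \emph{subalgebra}, so you need a second application of Foulis--Holland to the triple $(a,\neg z_1,\neg z_2)$ (legitimate, since commutation is stable under $\neg$ in each argument), plus a short computation to reassemble $a$. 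In fact, in this special situation you can avoid Foulis--Holland altogether: since $z_1,z_2$ both satisfy the decomposition property of (2), write $a=(a\land z_1)\lor(a\land\neg z_1)$, then $a\land z_1=(a\land z_1\land z_2)\lor(a\land z_1\land\neg z_2)$, and observe that both $a\land z_1\land\neg z_2$ and $a\land\neg z_1$ lie below $a\land\neg(z_1\land z_2)$; this yields $a\leq\big(a\land(z_1\land z_2)\big)\lor\big(a\land\neg(z_1\land z_2)\big)\leq a$, so $z_1\land z_2\in Z(L)$ by (2), and the rest of your argument for (1) (closure under $\neg$, complementation, and distributivity of central triples) goes through as you describe.
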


Let $L$ be an orthomodular lattice and $a\in L$. One can define the {\it central cover} of $a$, as
\begin{equation}\label{CENTRALCOVER}
e(a) = \bigwedge\{z\in Z(L): a\leq z\}
\end{equation}
if such infimum exists. By straightforward calculation we can see that
\begin{equation}\label{DUALCENTRALCOVER}
e_d(a) = \neg e(\neg a) = \bigvee\{z\in Z(L): z\leq a\}.
\end{equation}
if such supremum exists. This element is called the {\it dual central cover} of $a$. For example, if $L$ is a complete orthomodular then $e(a)$ and $e_d(a)$ exist for each $a\in L$. Moreover, they are central elements of $L$ {\rm \cite[Lemma 29.16]{MM}}.

\begin{prop}\label{DIOML}
Let $L$ be an orthomodular lattice and $z \in Z(L)$. Then:

\begin{enumerate}
\item
The binary
relation ${\theta}_z$ on $L$ defined by $a  \theta_z b$  iff $a\land z = b\land z$ is a congruence of $L$.

\item
The function $f_z: L/{\theta}_z \rightarrow [0,z]_L$ such that $f_z([a]_z) = a\land z $ is a ${\cal OML}$-isomorphism.

\item
$L$ is ${\cal OML}$-isomorphic to $L/{\theta}_z\times L/{\theta}_{\neg z}$ i.e., $(\theta_z, \theta_{\neg z})$ is a pair of factor congruences on $L$.

\item
The map $z \rightarrow \theta_z$ is a lattice isomorphism between
$Z(L)$ and the Boolean subalgebra of $Con_{{\cal OML}}(L)$ of factor congruences.

\item
$L$ is directly indecomposable iff $Z(L) = \{0,1\}$.

\end{enumerate}
\qed
\end{prop}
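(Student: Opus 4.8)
The plan is to treat the five items essentially in order, deriving (5) as an immediate consequence of (4). Throughout, the decisive tool is that $z\in Z(L)$ supplies the distributive identities $(a\lor b)\land z=(a\land z)\lor(b\land z)$ for all $a,b$ (from the definition of central element via $(a,b,z)T$), together with the characterization in Proposition~\ref{eqcentro}(2) that $a=(a\land z)\lor(a\land\neg z)$.

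For (1) I would observe that $\theta_z$ is exactly the kernel of the map $a\mapsto a\land z$, hence an equivalence relation. Compatibility with $\land$ is automatic since $(a\land c)\land z=(a\land z)\land(c\land z)$ by idempotency and commutativity of meet. Compatibility with $\lor$ uses the central distributivity above, so that $(a\lor c)\land z=(a\land z)\lor(c\land z)$. For $\neg$ the key computation is $\neg(a\land z)\land z=(\neg a\lor\neg z)\land z=(\neg a\land z)\lor 0=\neg a\land z$, which shows that $a\land z$ determines $\neg a\land z$; this same identity shows that, under $f_z$, complementation goes to the operation $\neg_z x=\neg x\land z$ of the interval algebra $[0,z]_L$ of Eq.(\ref{internalalg}). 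Item (2) then follows quickly: $f_z$ is well defined and injective directly from the definition of $\theta_z$, it is surjective because $c\leq z$ gives $f_z([c]_z)=c\land z=c$, and it is an $\mathcal{OML}$-homomorphism by the identities just recorded, with $1\mapsto z$, the top of $[0,z]_L$.

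For (3) I would check the three defining properties of a pair of factor congruences. That $\theta_z\land\theta_{\neg z}=\Delta_L$ is immediate from Proposition~\ref{eqcentro}(2): if $a\land z=b\land z$ and $a\land\neg z=b\land\neg z$ then $a=(a\land z)\lor(a\land\neg z)=(b\land z)\lor(b\land\neg z)=b$. For permutability and join I would exhibit a single witness: given $a,b$, set $c=(a\land z)\lor(b\land\neg z)$; central distributivity gives $c\land z=a\land z$ and $c\land\neg z=b\land\neg z$, i.e. $a\,\theta_z\,c\,\theta_{\neg z}\,b$. Hence $\theta_z\circ\theta_{\neg z}=\nabla_L$, which simultaneously forces the join to be $\nabla_L$ and the two congruences to permute. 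The general fact recalled in Section~\ref{BASICNOTION} then yields $L\cong L/\theta_z\times L/\theta_{\neg z}$.

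For (4), item (3) already places the map into the factor congruences. Injectivity follows from $z\,\theta_z\,1$ (since $z\land z=z=1\land z$): if $\theta_z=\theta_w$ then $z\,\theta_w\,1$ gives $z\land w=w$, so $w\leq z$, and symmetrically $z\leq w$. The part I expect to be the main obstacle is \emph{surjectivity}: given an arbitrary pair $(\theta,\theta^{*})$ of factor congruences with $L\cong L/\theta\times L/\theta^{*}$, I would let $z$ be the preimage of the element $(1,0)$ (top in the first factor, bottom in the second); since the center of a product is the product of the centers, $(1,0)$ is central, so $z\in Z(L)$, and a direct check that $\theta$ corresponds under the isomorphism to the kernel of the first projection, which is precisely $\theta_{(1,0)}$, gives $\theta=\theta_z$. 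For the algebraic structure I would verify $\theta_{z\lor w}=\theta_z\land\theta_w$ (by meeting both sides with $z$ and with $w$) and note that by (3) the complement of $\theta_z$ among factor congruences is $\theta_{\neg z}$; since meets and complements determine a Boolean algebra, this identifies $Z(L)$ (Boolean by Proposition~\ref{eqcentro}(1)) with the Boolean algebra of factor congruences, the correspondence interchanging the lattice operations. Finally, (5) is read off from (4) together with the definition of direct indecomposability: the factor congruences are exactly $\{\theta_z:z\in Z(L)\}$ with $\theta_1=\Delta_L$ and $\theta_0=\nabla_L$, so for a nontrivial $L$ the only factor congruences are $\Delta_L,\nabla_L$ precisely when $Z(L)=\{0,1\}$.
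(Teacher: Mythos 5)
Your proposal is correct. There is nothing in the paper to compare it against: the paper gives no proof of this proposition, deferring entirely to \cite[\S 4]{BrunsH}, and your argument is the standard self-contained one that the cited reference carries out. The key steps all check: compatibility of $\theta_z$ with $\neg$ via $\neg(a\land z)\land z=\neg a\land z$; the single witness $c=(a\land z)\lor(b\land\neg z)$ giving $\theta_z\circ\theta_{\neg z}=\nabla_L$, which (because congruences are symmetric relations) yields permutability and the join condition simultaneously; and surjectivity in item 4 by pulling back $(1,0)$, which is indeed central in the product, as one sees directly from Proposition~\ref{eqcentro}(2) since $\neg(1,0)=(0,1)$. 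One point you handle more carefully than the paper's own wording: the map $z\mapsto\theta_z$ satisfies $\theta_1=\Delta_L$, $\theta_0=\nabla_L$ and $\theta_{z\lor w}=\theta_z\land\theta_w$, so it is a \emph{dual} isomorphism of Boolean algebras; the literal ``lattice isomorphism'' of item 4 is recovered only after composing with complementation (equivalently, using $z\mapsto\theta_{\neg z}$), which is harmless by self-duality of Boolean algebras but worth stating, as you do.
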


For a proof of Proposition \ref{DIOML} we refer to  {\rm \cite[\S 4]{BrunsH}}. Let us notice that the above proposition say that in an orthomodular lattice there exists a one to one correspondence between its central elements and the factor congruences. In other words, the presence of non trivial central elements in an orthomodular lattice determines each possible decomposition of the lattice in a direct product of two orthomodular lattices.

\begin{prop}\label{BAAZ}
Let $L$ be a directly indecomposable orthomodular lattice. Then the operation
$$
w_0(x) = \begin{cases}0, & x = 0, \\
1, & otherwise. \end{cases}
$$
is the unique operation on $L$ that satisfies the following conditions:
\begin{equation}\label{w00}
w_0(0) = 0,
\end{equation}
\begin{equation}\label{w0mX}
X \leq w_0(X),
\end{equation}
\begin{equation}\label{w0BOOL}
Y  = (Y\land w_0(X)) \lor (Y\land w_0(X)^\bot).
\end{equation}
\end{prop}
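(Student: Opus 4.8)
The plan is to reduce the three defining conditions to a single statement about the center of $L$ and then exploit direct indecomposability. The crucial observation is that Eq.(\ref{w0BOOL}), read as holding for every $Y$ with $X$ fixed, is precisely the characterization of central elements supplied by Proposition \ref{eqcentro}(2). Hence Eq.(\ref{w0BOOL}) is equivalent to requiring $w_0(X) \in Z(L)$ for each $X \in L$. Since $L$ is directly indecomposable, Proposition \ref{DIOML}(5) gives $Z(L) = \{0,1\}$, and therefore any operation satisfying Eq.(\ref{w0BOOL}) must take values in $\{0,1\}$.

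For uniqueness, suppose $w_0$ satisfies the three conditions. If $X = 0$ then Eq.(\ref{w00}) forces $w_0(0) = 0$. If $X \neq 0$ then $w_0(X) \in \{0,1\}$ by the preceding paragraph, while Eq.(\ref{w0mX}) gives $X \leq w_0(X)$. Were $w_0(X) = 0$ we would have $X \leq 0$, i.e.\ $X = 0$, contradicting $X \neq 0$; hence $w_0(X) = 1$. This pins down $w_0$ exactly as the piecewise operation in the statement.

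For existence it remains to verify that this piecewise operation does satisfy the three equations. Eq.(\ref{w00}) holds by definition, and Eq.(\ref{w0mX}) is immediate since $w_0(X) = 1 \geq X$ whenever $X \neq 0$, while $w_0(0) = 0 = X$ when $X = 0$. Finally, the values of $w_0$ lie in $\{0,1\} = Z(L)$, so Eq.(\ref{w0BOOL}) follows by a second application of Proposition \ref{eqcentro}(2). I do not anticipate any genuine obstacle; the only real step is recognizing that Eq.(\ref{w0BOOL}) encodes centrality, after which Proposition \ref{DIOML}(5) does all the work.
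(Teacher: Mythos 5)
Your proof is correct and follows essentially the same route as the paper's: both read Eq.(\ref{w0BOOL}) as the centrality condition of Proposition \ref{eqcentro}(2), invoke Proposition \ref{DIOML}(5) to force the image into $Z(L)=\{0,1\}$, and then use Eq.(\ref{w00}) and Eq.(\ref{w0mX}) to pin down the values. Your write-up is in fact slightly more complete, since the paper only asserts existence while you verify it explicitly.
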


\begin{proof}
We first note that $w_0$ satisfy  Eq.(\ref{w00}), Eq.(\ref{w0mX}) and Eq.(\ref{w0BOOL}). Let $v$ be an operation on $L$ satisfying the mentioned equations.  Combining Eq.(\ref{w0BOOL}), Proposition \ref{eqcentro}-2 and Proposition \ref{DIOML}-5 we can see that $Imag(v) \subseteq Z({\cal P}({\cal N})) = \{0,1_{{\cal H}}\}$. Then, by Eq.(\ref{w0mX}), $v(x) = 1$ iff $x\not = 0$. Hence $v=w_0$.

\qed
\end{proof}

Let $L$ be an orthomodular lattice. For elements $a,b \in L$ we say that $a$ is a {\it complement} of $b$ iff $a\lor b = 1$ and $a\land b = 0$.

\begin{prop}\label{COMPLEMENTS} {\rm \cite[Proposition 6]{KAL}}.
Let $L$ be an orthomodular lattice and $a\in L$. Then the complements of $a$ are precisely the elements of the image of the following operation
\begin{equation}\label{COMPEQ}
L\ni x \mapsto c_a(x) = \big(x\land \neg(x\land a)\big)\lor \neg(x\lor a).
\end{equation}
\end{prop}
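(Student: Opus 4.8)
The plan is to prove the set equality by showing that the image of $c_a$ is contained in the set of complements of $a$, and that conversely every complement of $a$ is a fixed point of $c_a$ and hence lies in the image. Throughout I would abbreviate, for a fixed $x\in L$, the two join-summands as $u = x\land \neg(x\land a)$ and $v = \neg(x\lor a)$, so that $c_a(x) = u\lor v$.

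First I would verify the join condition $c_a(x)\lor a = 1$. The key observation is that $u$ is the relative orthocomplement of $x\land a$ inside the interval $[0,x]_L$; by the orthomodular law applied to $x\land a \leq x$ one obtains $(x\land a)\lor u = x$. Since $x\land a \leq a$, this yields $x \leq u\lor a$, and together with $u\leq x$ it gives $u\lor a = x\lor a$. Hence $c_a(x)\lor a = (u\lor a)\lor v = (x\lor a)\lor \neg(x\lor a) = 1$.

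The harder part is the meet condition $c_a(x)\land a = 0$, since meet does not distribute over join in a general orthomodular lattice. Here I would first record the two easy identities $u\land a = (x\land a)\land\neg(x\land a) = 0$ and $v\land a = \neg x\land(\neg a\land a) = 0$. To push these through the join I would establish the relevant commutations: from $v\leq\neg a$ one gets $vCa$, and from $u\leq x$ together with $v\leq\neg x\leq\neg u$ one gets $v\bot u$, whence $vCu$. Since $v$ commutes with both $u$ and $a$, the Foulis--Holland theorem guarantees that $\{u,v,a\}$ generates a distributive sublattice; consequently $(u\lor v)\land a = (u\land a)\lor(v\land a) = 0$. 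I expect this commutation bookkeeping and the appeal to Foulis--Holland (or, since that theorem is not recalled above, a direct derivation of the single needed distributive identity from the commutation laws stated in the excerpt) to be the main obstacle of the argument.

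Finally, for the reverse inclusion, I would simply evaluate $c_a$ on a complement. If $b$ satisfies $b\land a = 0$ and $b\lor a = 1$, then $\neg(b\land a) = 1$ and $\neg(b\lor a) = 0$, so $c_a(b) = (b\land 1)\lor 0 = b$. Thus every complement of $a$ is a value of $c_a$, and combined with the first part the image of $c_a$ is exactly the set of complements of $a$.
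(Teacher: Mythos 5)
Your proof is correct, but note that the paper offers no internal proof of this proposition at all: it is quoted directly from Kalmbach {\rm \cite[Proposition 6]{KAL}}, so the comparison is with the cited literature rather than with an argument in the text. Your route is the standard one and every step checks out. The identity $(x\land a)\lor u = x$ is exactly the orthomodular law applied to $x\land a\leq x$, which gives $u\lor a = x\lor a$ and hence $c_a(x)\lor a = 1$; the orthogonalities $v\leq \neg a$ and $v\leq \neg x\leq \neg u$ do yield $vCa$ and $vCu$ (orthogonal elements commute in any orthomodular lattice, and the paper's definition of $C$, stated in the dual form $a=(a\lor b)\land(a\lor\neg b)$, is equivalent to the usual one); and the Foulis--Holland theorem with $v$ as the pivot element gives precisely the single distributive law $(u\lor v)\land a=(u\land a)\lor(v\land a)=0$ that is needed. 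Together with the fixed-point computation $c_a(b)=b$ for any complement $b$ of $a$, both inclusions follow. Two remarks. First, your instinct that the commutation bookkeeping is the essential obstacle is right: the cheaper alternative of proving $u\lor v\leq\neg a$ is unavailable, since $u\leq\neg a$ fails in general (take $x$ and $a$ to be distinct, non-orthogonal lines in the subspace lattice of $\mathbb{R}^2$, so that $u=x\not\leq\neg a$, even though $c_a(x)=x$ is indeed a complement of $a$), so some Foulis--Holland-type argument is genuinely required. Second, since the paper never states that theorem, a self-contained version of your proof should either cite it (it appears in {\rm \cite{KAL}}, which the paper already references throughout) or include the short direct derivation of that one distributive identity; as written, that is the only external dependency in an otherwise complete argument.
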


Let $L$ be an orthomodular lattice and  $a,b \in L$. We say that {\it $a$ and $b$ are perspective} and we write $a \sim_p b$ iff they have a common complement, i.e. if there exists $c\in L$ such that $a\lor c = 1 = b\lor c$ and $a\land c = 0 = b\land c$.

\begin{prop}\label{EQPESPEC}
Let $L$ be an orthomodular lattice and $a,b \in L$. Then the following statement are equivalent:

\begin{enumerate}

\item
There exists $x\in L$ such that $a\lor x = b\lor x$ and $a\land x = b\land x$.

\item
$a \sim_p b$.

\end{enumerate}
\end{prop}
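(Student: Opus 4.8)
The plan is to establish the two implications separately. The implication $(2)\Rightarrow(1)$ is immediate: if $c$ is a common complement of $a$ and $b$, then $a\lor c=1=b\lor c$ and $a\land c=0=b\land c$, so the single element $x=c$ satisfies $a\lor x=b\lor x$ and $a\land x=b\land x$, which is exactly statement (1).

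The real content lies in $(1)\Rightarrow(2)$, and here the strategy is to let the explicit complementation operation $c_a$ of Proposition \ref{COMPLEMENTS} do the work. Starting from an $x$ with $a\lor x=b\lor x$ and $a\land x=b\land x$, I would set $p=a\lor x=b\lor x$ and $q=a\land x=b\land x$ and evaluate the operation of Eq.(\ref{COMPEQ}) at $x$. Since $x\land a=q$ and $x\lor a=p$, this yields
$$ c_a(x)=\big(x\land\neg(x\land a)\big)\lor\neg(x\lor a)=(x\land\neg q)\lor\neg p, $$
and the same computation for $b$ gives $c_b(x)=(x\land\neg q)\lor\neg p$ as well, because $x\land b=q$ and $x\lor b=p$. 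Hence $c_a(x)$ and $c_b(x)$ are literally the same element $c$. By Proposition \ref{COMPLEMENTS}, $c=c_a(x)$ belongs to the image of $c_a$ and is therefore a complement of $a$, while $c=c_b(x)$ belongs to the image of $c_b$ and is therefore a complement of $b$. Thus $c$ is a common complement of $a$ and $b$, witnessing $a\sim_p b$.

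The one point I would flag as the crux --- though it is a mild obstacle rather than a genuine difficulty --- is the observation that the value $c_a(x)$ depends on the pair $(a,x)$ only through the two derived quantities $x\land a$ and $x\lor a$. Once this is recognized, the equality $c_a(x)=c_b(x)$ is purely formal and requires no orthomodular manipulation; the entire argument reduces to reading off complements from Proposition \ref{COMPLEMENTS}, with no appeal to the orthomodular law beyond what is already packaged into that statement.
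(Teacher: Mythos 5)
Your proposal is correct and coincides with the paper's own argument: the paper likewise observes that $c_a(x) = \big(x\land \neg (x\land a)\big) \lor \neg(x\lor a) = \big(x\land \neg (x\land b)\big) \lor \neg(x\lor b) = c_b(x)$, since $x\land a = x\land b$ and $x\lor a = x\lor b$, and concludes via Proposition \ref{COMPLEMENTS} that this element is a common complement, the converse being immediate. Your explicit remark that $c_a(x)$ depends on $(a,x)$ only through $x\land a$ and $x\lor a$ is exactly the (implicit) crux of the paper's one-line computation.
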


\begin{proof}
If we assume that $a\lor x = b\lor x$ and $a\land x = b\land x$ then, by Proposition \ref{COMPLEMENTS}, $c_a(x) = (x\land \neg (x\land a) ) \lor \neg(x\lor a) = (x\land \neg (x\land b) ) \lor \neg(x\lor b) = c_b(x)$ is a common complement of $a$ and $b$. The other direction is immediate.

\qed
\end{proof}

\begin{prop}\label{PERSPECTCENTER}
Let $L$ be an orthomodular lattice and $a,b \in L$ such that $a \sim_p b$. Then,

\begin{enumerate}
\item
For each $z\in Z(L)$, $a\land z \sim_p b \land z $.

\item
If $z\in Z(L)$ and $z \leq a$ then $z\leq b$.

\item
$e_d(a) = e_d(b)$ whenever such elements exist in $L$.

\end{enumerate}

\end{prop}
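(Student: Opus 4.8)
The plan is to extract a single common complement $c$ of $a$ and $b$ (which exists by the definition of $a \sim_p b$) and then push it through all three items, exploiting the strong distributivity of central elements: since $z \in Z(L)$ we have $(a,b,z)T$ for all $a,b$, so $z$ distributes over every finite meet and join. All three parts reduce to short central-distribution computations once the right witness is in hand.

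For item 1, I would guess the common complement of $a\land z$ and $b\land z$ to be $c \lor \neg z$ and verify the two complement identities for $a\land z$ (the argument for $b\land z$ being word-for-word identical, since $c$ plays a symmetric role). For the join, centrality gives $(a\land z)\lor \neg z = (a\lor \neg z)\land(z\lor \neg z) = a\lor \neg z$, hence $(a\land z)\lor(c\lor \neg z) = (a\lor c)\lor \neg z = 1$. For the meet, centrality gives $z\land(c\lor \neg z) = (z\land c)\lor(z\land \neg z) = z\land c$, hence $(a\land z)\land(c\lor \neg z) = a\land(z\land c) = (a\land c)\land z = 0$. This establishes $a\land z \sim_p b\land z$. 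The only genuinely creative step is choosing the witness $c \lor \neg z$; the remainder is routine distribution, and I expect this choice to be the main (mild) obstacle.

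For item 2, I would specialize item 1. Since $z \leq a$ we have $a\land z = z$, so item 1 yields a common complement $c'$ of $z$ and $b\land z$; in particular $1 = (b\land z)\lor c'$. Now $b\land z \leq z$ and $z$ is central, so applying $z\land(\cdot)$ and distributing gives $z = z\land\big((b\land z)\lor c'\big) = (z\land(b\land z))\lor(z\land c') = (b\land z)\lor 0 = b\land z$, i.e. $z\leq b$. The essential point is that a central element perspective to something lying below it must coincide with it.

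Finally, for item 3, I would use the description $e_d(a) = \bigvee\{z\in Z(L): z\leq a\}$ from Eq.(\ref{DUALCENTRALCOVER}). Applying item 2 to the ordered pair $(a,b)$, and then to $(b,a)$—which is legitimate because perspectivity is symmetric in $a$ and $b$—shows that the index sets $\{z\in Z(L): z\leq a\}$ and $\{z\in Z(L): z\leq b\}$ are equal. Since the supremum of a set is unique when it exists, these coincident sets force $e_d(a) = e_d(b)$ whenever the suprema exist, completing the proof.
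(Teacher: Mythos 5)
Your proof is correct, and it follows the paper's overall skeleton (item 1 via the common complement $c$ and centrality, item 2 from item 1, item 3 from what precedes), but the execution differs at two points in ways worth noting. For item 1, you guess an explicit common complement $c\lor\neg z$ of $a\land z$ and $b\land z$ and verify the four identities by central distribution; the paper instead observes that the original $c$ already satisfies $(a\land z)\land c = 0 = (b\land z)\land c$ and $(a\land z)\lor c = z\lor c = (b\land z)\lor c$, and then invokes Proposition \ref{EQPESPEC}, which converts ``equal joins and equal meets with a single element'' into perspectivity. The paper's route needs no creative witness at all---$c$ itself does the job---while yours is self-contained and exhibits the common complement explicitly. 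Your item 2 is essentially the paper's, though slightly leaner: the paper identifies the common complement of the central $z$ as $\neg z$ (via uniqueness of complements of central elements) before distributing, whereas you only need $z\land c' = 0$ and one distribution. For item 3 the divergence is more substantial: the paper runs a second perspectivity argument, using item 1 to get $e_d(a) = a\land e_d(a) \sim_p b\land e_d(a)$, deducing that $b\land e_d(a)$ is central and equals $e_d(b)\land e_d(a)$, and extracting $e_d(a)\leq e_d(b)$ from a join computation (then arguing symmetrically); you instead apply item 2 in both directions (legitimate, since $\sim_p$ is symmetric) to conclude that the index sets $\{z\in Z(L): z\leq a\}$ and $\{z\in Z(L): z\leq b\}$ of Eq.(\ref{DUALCENTRALCOVER}) coincide, so their suprema agree whenever they exist. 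Your version of item 3 is more elementary than the paper's and makes transparent that it is a purely formal consequence of item 2.
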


\begin{proof}
Let us assume that $a \sim_p b$ in $L$.

1) Let $c \in L$ be a common complement of $a$ and $b$ and $z\in C(L)$. Note that $(a\land z) \land c  = 0 = a\land (c\land z)$ and $(a\land z) \lor c = (a\lor c) \land (z\lor c)  = (z\lor c) = (b\lor c) \land (z\lor c)$. Then, by Proposition \ref{EQPESPEC}, $a\land z \sim_p b\land z $.

2) By item 1 we have that  $z \sim_p b \land z $ because $z = a\land z$. Since $z$ is a central element then, necessarily,  $\neg z$ is the common complement of $z$ and $b \land z$. Thus, $1= \neg z \lor (b \land z) = \neg z \lor b$ and $z = z\land 1 = z \land (\neg z \lor b) = z\land b$. Hence $z\leq b$.

3) By item 1 we have that $e_d(a) = a\land  e_d(a) \sim_p b \land  e_d(a) $. Since $e_d(a)\in Z(L)$ then $\neg e_d(a)$ is the common complement of $e_d(a)$ and $b \land  e_d(a)$. Thus, $b \land  e_d(a) \in Z(L)$ and
$b \land  e_d(a) = e_d (b \land  e_d(a) ) = e_d (b) \land  e_d(a)$. In this way, $1= \neg e_d(a) \lor \big(e_d (b) \land  e_d(a) \big) = \neg e_d(a) \lor e_d (b)$ proving that $e_d(a) \leq e_d(b)$. Similarly we can prove that $e_d(b) \leq e_d(a)$. Hence, $e_d(a) = e_d(b)$.

\qed
\end{proof}

Let $L$ be an orthomodular lattice and $a,b \in L$. We say that $(a,b)$ is a {\it modular pair}, in symbols $(a,b)M$, iff for every $x \leq b$, $(x\lor a) \land b = x \lor (a\land b)$. By substituting $x$ with $x \land b$, $(a,b)M$ is equivalent to the following equation:
\begin{equation}\label{MODULARITYEQ}
(x\land b) \lor (a\land b) = ((x\land b) \lor a)\land b.
\end{equation}
The lattice $L$ is called {\it modular} iff $(a,b)M$ holds for all elements $a$ and $b$ of $L$. In other words, $L$ is modular iff it satisfies Eq.(\ref{MODULARITYEQ}). We can show that if $a\bot b$ then $(a,b)M$ holds. This property motivated Kaplansky to introduce the name ``orthomodular" as an abbreviation for ``orthogonal pairs are modular". In order to characterize the modularity we introduce the following lattice known as  $N_5$.

\begin{center}
\unitlength=1mm
\begin{picture}(60,20)(-5,7)
\put(18,-6.5){\line(-5,6){6}}
\put(18,20){\line(-5,-6){6}}
\put(18,-6){\line(5,6){11}}
\put(18,20){\line(5,-6){11}}

\put(12,1){\line(0,1){12}}

\put(18,-6){\circle*{1.5}}
\put(18,20){\circle*{1.5}}

\put(12,13){\circle*{1.5}}
\put(12,1){\circle*{1.5}}

\put(29,7){\circle*{1.5}}

\put(20.5,24){\makebox(-5,0){$x\lor z = y \lor z$}}
\put(35,7.5){\makebox(-5,0){$z$}}
\put(20.5,-10){\makebox(-5,0){$x\land z = y \land z$}}

\put(10,13){\makebox(-3,0){$ x$}}
\put(10,1){\makebox(-3,0){$ y$}}
\put(20,7){\makebox(-3,0){$N_5$}}

\end{picture}
\end{center}

\vspace{1.5cm}

\begin{prop}\label{EQUIVNONMOD}
Let $L$ be an orthomodular lattice. Then the following  statements are equivalent:

\begin{enumerate}
\item
$L$ is non modular.

\item
There exists $x,y \in L$ such that $x<y$ and $x\sim_p y$.

\item
$N_5$ is a sublattice $L$.

\end{enumerate}
\end{prop}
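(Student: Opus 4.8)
The plan is to establish the cycle of implications $(1) \Rightarrow (2) \Rightarrow (3) \Rightarrow (1)$. The pivotal tool throughout will be Proposition \ref{EQPESPEC}, which lets me trade the ``common complement'' formulation of perspectivity for the handier condition that there exists $z$ with $x \lor z = y \lor z$ and $x \land z = y \land z$; this is exactly the data displayed in the diagram of $N_5$, so statements $(2)$ and $(3)$ are essentially two readings of the same configuration. The only genuinely lattice-theoretic input is the standard fact that in $N_5$ the modular law fails.

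For $(1) \Rightarrow (2)$ I would unpack non-modularity. Since for $a \le c$ one always has $a \lor (b \land c) \le (a \lor b) \land c$, failure of the modular law supplies elements with $a \le c$ and $a \lor (b \land c) < (a \lor b) \land c$. Setting $x = a \lor (b \land c)$ and $y = (a \lor b) \land c$ gives $x < y$, and I would then check that $z = b$ witnesses perspectivity: a short computation, using $x \le c$ and $a \le y$, yields $x \lor b = y \lor b = a \lor b$ and $x \land b = y \land b = b \land c$. Proposition \ref{EQPESPEC} then gives $x \sim_p y$, establishing $(2)$.

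For $(2) \Rightarrow (3)$, given $x < y$ with $x \sim_p y$, Proposition \ref{EQPESPEC} furnishes $z$ with $x \lor z = y \lor z =: t$ and $x \land z = y \land z =: s$. I would verify that $\{s, x, y, z, t\}$ is a copy of $N_5$: the delicate point is showing that $z$ is incomparable to both $x$ and $y$, since each of the possible comparabilities forces $y \le x$ or $x = y$ through the two defining equalities, contradicting $x < y$. Once incomparability is in hand, distinctness of the five elements and closure under $\land$ and $\lor$ follow, reproducing the pentagon with bottom $s$, top $t$, chain $s < x < y < t$ and side element $z$.

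Finally $(3) \Rightarrow (1)$ is the classical observation: writing $N_5 = \{0, a, b, c, 1\}$ with $0 < a < b < 1$ and $c$ incomparable to $a, b$, the modular law fails on these elements since $(a \lor c) \land b = b$ whereas $a \lor (c \land b) = a$ and $a \neq b$; as a sublattice inherits the operations of $L$, this computation takes place in $L$ and shows that $L$ is non-modular. I expect the main obstacle to be the bookkeeping in $(2) \Rightarrow (3)$, namely systematically ruling out the degenerate comparabilities that would collapse the pentagon, together with the identity verifications in $(1) \Rightarrow (2)$; both are elementary once the right witnesses (the pair $x, y$ built from the failing modular instance, resp.\ the element $z = b$) have been isolated.
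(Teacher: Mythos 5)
Your proof is correct, but it is genuinely more self-contained than the paper's. The paper disposes of the proposition in one line: the equivalence $1 \Leftrightarrow 2$ is declared ``immediate'' (perspectivity of a strictly comparable pair being essentially the definition of a non-modular pair, via Proposition \ref{EQPESPEC}), and the equivalence $1 \Leftrightarrow 3$ is delegated wholesale to the classical pentagon theorem in Gr\"atzer \cite{GRATZER}. You instead prove the full cycle $(1) \Rightarrow (2) \Rightarrow (3) \Rightarrow (1)$ from scratch: your step $(1) \Rightarrow (2)$, taking $x = a \lor (b\land c)$, $y = (a\lor b)\land c$ and verifying that $z=b$ witnesses the hypotheses of Proposition \ref{EQPESPEC}, together with your step $(2) \Rightarrow (3)$, where the incomparability checks rule out collapse of the pentagon, in effect reconstructs the proof of Gr\"atzer's theorem while routing the perspectivity through Proposition \ref{EQPESPEC} (note that in $(2)\Rightarrow(3)$ you do not even need that proposition: a common complement already supplies $z$ with $x\lor z = y\lor z = 1$ and $x\land z = y\land z = 0$). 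All your computations check out, including the strictness and closure arguments. What your route buys is a proof that never leaves the paper; what the paper's route buys is brevity, at the cost of an external citation and of leaving the reader to see why $1 \Leftrightarrow 2$ is immediate.
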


\begin{proof}
$1 \Longleftrightarrow 2$) Immediate. $1 \Longleftrightarrow 3$) See {\rm \cite[\S2, Theorem 2]{GRATZER}}.

\qed
\end{proof}

The technical results provided by the following two propositions turn out to be useful in the rest of the paper.

\begin{prop}\label{SASAKINRG}
Let $L$ be an orthomodular lattice, $a\in L$ such that $a\not=0$ and let us consider a function $w:[0,a]_L \rightarrow L$. Then the following statement are equivalent:
\begin{enumerate}
\item
$w(\mu_a(\neg x)) = \neg w(x\land a)$.

\item
For each $x\in [0,a]$, $w(\neg_a x) = \neg w(x)$ i.e. $w$ preserves relatives complements related to $a$.

\end{enumerate}
\end{prop}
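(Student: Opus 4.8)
The plan is to reduce both implications to a single lattice identity relating the Sasaki projection to the relative orthocomplement. Specifically, I would first establish that for every $x \in L$,
\[
\mu_a(\neg x) = \neg_a(x \land a).
\]
This is a direct computation: unfolding the Sasaki projection gives $\mu_a(\neg x) = a \land (\neg a \lor \neg x)$, while unfolding the relative complement inside $[0,a]_L$ and applying De Morgan gives $\neg_a(x \land a) = \neg(x \land a) \land a = (\neg x \lor \neg a) \land a$. Commutativity of $\land$ and $\lor$ shows that these two expressions coincide. No appeal to orthomodularity is needed here, only the lattice-with-involution structure underlying $L$.

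Once this identity is in hand, statement (1) can be rewritten as $w(\neg_a(x \land a)) = \neg w(x \land a)$ for all $x \in L$, and the equivalence with (2) becomes a matter of tracking the range of the map $x \mapsto x \land a$. For the direction (2) $\Rightarrow$ (1) I would take an arbitrary $x \in L$ and set $y = x \land a$; since $y \in [0,a]$, applying (2) to $y$ yields precisely the rewritten form of (1). For (1) $\Rightarrow$ (2) I would take an arbitrary $y \in [0,a]$ and use $y = y \land a$ (valid because $y \leq a$), so that instantiating the rewritten (1) at $x = y$ delivers $w(\neg_a y) = \neg w(y)$. The observation making both directions go through is that $\{x \land a : x \in L\}$ is exactly $[0,a]$, which follows from $y = y \land a$ for $y \leq a$.

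The only genuinely substantive step is verifying the identity $\mu_a(\neg x) = \neg_a(x \land a)$, and even that is a short De Morgan computation; the remainder is a change of variable between the free parameter $x \in L$ occurring in (1) and the bound parameter $y \in [0,a]$ occurring in (2). I therefore expect no real obstacle. Morally, the proposition asserts that, modulo the above identity, the two formulations of ``$w$ reverses complements'' are literally the same condition, written once with the ambient complement pushed through the Sasaki projection and once with the relative complement taken internally in the interval $[0,a]_L$.
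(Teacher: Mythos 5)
Your proof is correct and takes essentially the same route as the paper's: both arguments reduce to the purely lattice-theoretic identity $\mu_a(\neg x) = \neg_a(x\land a)$ (which the paper verifies inline in each direction, and you factor out once) together with the observation that $x \mapsto x\land a$ maps $L$ onto $[0,a]$. The only cosmetic difference is that in the direction $1 \Rightarrow 2$ the paper uses the specialization $\neg a \lor \neg x = \neg x$ for $x \leq a$, whereas you obtain the same conclusion by instantiating your identity at $x = y = y \land a$.
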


\begin{proof}
$1 \Longrightarrow 2)$ Let us suppose that $x\leq a$. Since $\neg a\leq \neg x$ then we have that $\neg w(x) = \neg w(x\land a) = w(\mu_a(\neg x)) =  w(a \land (\neg a \lor \neg x)) = w(a \land \neg x) = w(\neg_a x)$.

$2 \Longrightarrow 1)$
Let $x\in L$. Since $x\land a \in [0,a]$ then $\neg w(x \land a) = w(\neg_a (x \land a)) = w(a \land \neg(x\land a)) = w(a \land ( \neg a \lor \neg x)) = w(\mu_a(\neg x))$.

\qed
\end{proof}

\begin{prop}\label{EXTEN}
Let $L$ be an orthomodular lattice and $a\in L$ such that $a\not=0$. Suppose that there exist two unary order preserving operations $w_a$ and $w_a^*$ on $L$ such that $w_a w_a^* = id_L$ and  $w_a^* w_a = \mu_a$. Then,

\begin{enumerate}
\item
$w_a^*:L \rightarrow [0,a]_L$ is an order isomorphism.

\item
The restriction $w_a\upharpoonright_{[0,a]_L} \rightarrow L$ is an order isomorphism.

\item
$w_a^*(a) = a$ iff $a=1$.

\end{enumerate}

\end{prop}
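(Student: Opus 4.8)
The plan is to exploit the two defining relations $w_a w_a^* = id_L$ and $w_a^* w_a = \mu_a$ in order to show that $w_a^*$ and the restriction $w_a\upharpoonright_{[0,a]_L}$ are mutually inverse order-preserving bijections between $L$ and $[0,a]_L$; items 1 and 2 then fall out at once, and item 3 is read off from the fact that an order isomorphism between bounded posets matches top elements.

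First I would pin down the codomain of $w_a^*$, which is the only nonroutine point. For an arbitrary $x\in L$ put $y = w_a^*(x)$. Applying $w_a$ gives $w_a(y) = w_a w_a^*(x) = x$, and applying $w_a^*$ once more yields $\mu_a(y) = w_a^* w_a(y) = w_a^*(x) = y$. By the fixpoint characterization of the Sasaki projection, Eq.(\ref{SASAKIPROP}), the identity $\mu_a(y) = y$ forces $y \leq a$. Hence $w_a^*$ maps $L$ into $[0,a]_L$, whose top element is $a$ by Eq.(\ref{internalalg}).

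Next I would establish the bijection. Injectivity of $w_a^*$ is immediate, since $w_a$ is a left inverse. For surjectivity onto $[0,a]_L$, take $z \leq a$; then $\mu_a(z) = z$ by Eq.(\ref{SASAKIPROP}), so $w_a^*(w_a(z)) = \mu_a(z) = z$, exhibiting $w_a(z)$ as a preimage. Combining $w_a^*(w_a(z)) = z$ for $z\in[0,a]_L$ with $w_a(w_a^*(x)) = x$ for $x\in L$ shows that $w_a^*$ and $w_a\upharpoonright_{[0,a]_L}$ are mutually inverse. As both $w_a$ and $w_a^*$ are order preserving by hypothesis, each is an order isomorphism between $L$ and $[0,a]_L$, which settles items 1 and 2 simultaneously.

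Finally, for item 3 I would use that an order isomorphism between bounded posets sends greatest element to greatest element; since the top of $L$ is $1$ and the top of $[0,a]_L$ is $a$, this gives $w_a^*(1) = a$. If $a = 1$ then $w_a^*$ is an order automorphism of $L$, whence $w_a^*(a) = w_a^*(1) = 1 = a$. Conversely, if $w_a^*(a) = a = w_a^*(1)$, the injectivity of $w_a^*$ yields $a = 1$. I expect no genuine obstacle here: the single idea that does the work is the identity $\mu_a(w_a^*(x)) = w_a^*(x)$ confining the image of $w_a^*$ to $[0,a]_L$, after which the statement reduces to elementary facts about mutually inverse order-preserving maps.
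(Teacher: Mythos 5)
Your proof is correct and follows essentially the same route as the paper: the image of $w_a^*$ is confined to $[0,a]_L$ via the identity $\mu_a(w_a^*(x)) = w_a^*(x)$ together with Eq.(\ref{SASAKIPROP}), bijectivity comes from the two composition laws, and item 3 reduces to matching top elements. If anything, your observation that $w_a^*$ and $w_a\upharpoonright_{[0,a]_L}$ are \emph{mutually inverse} order-preserving maps makes the order-isomorphism conclusion slightly more explicit than the paper's phrasing, which infers it from order-preservation plus bijectivity.
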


\begin{proof}
1) We first prove that $Imag(w_a^*) = [0,a]$. Let $x \in L$. Then $w_a^*(x) =  w_a^* \big( w_a w_a^*(x) \big) = \big( w_a^* w_a \big) w_a^*(x) = \mu_a(w_a^*(x)) \in [0,a]$. Let $y \in [0,a]_L$. By Eq.(\ref{SASAKIPROP}) and by hypothesis we have that $y = \mu_a(y) = w_a^* w_a(y)$ and then, $w_a(y)$ is a preimage of $y$ by $w_a^*$. Thus,  $Imag(w_a^*) = [0,a]$. Let $x,y \in L$ and let us assume that $w_a^*(x) = w_a^*(y)$. Then, by hypothesis, $x =  w_aw_a^*(x) = w_aw_a^*(y) = y$. Therefore,  $w_a^*$ is injective.  Since $w_a^*$ is an order preserving map, by the above conditions, $w_a^*:L \rightarrow [0,a]_L$ is an order isomorphism.

2) If $x\not = y$ in $[0,a]_L$ then, by hypothesis, $w_a^* w_a(x) = \mu_a(x) = x  \not = y = \mu_a(y) = w_a^* w_a(y)$. Thus, by item 1, $w_a(x) \not = w_a(y)$.  Therefore, the restriction $w_a\upharpoonright_{[0,a]_L}$ is an injective map. Let $y \in L$. If $x = w_a^*(y)$ then $w_a(x) = w_a w_a^*(y) = y$. Thus  $w_a$ is a surjective map. Since $w_a$ is an order preserving map, by the above conditions, the restriction $w_a\upharpoonright_{[0,a]_L} \rightarrow L$ is an order isomorphism.

3) Let us suppose that $w_a^*(a) = a$. Then $a= w_a w_a^*(a) = w_a(a)$ and, by item 2, $a= 1$. For the converse, if we suppose that $a = 1$ then, by item 1, $w_1^*(1) = 1$.

\qed
\end{proof}

\section{Murray-von Neumann dimension theory and von Neumann lattices} \label{VNlatticesDimension}

In this section we study some general properties about the lattice of projectors of a von Neumann algebra arising from the Murray-von Neumann classification of factors. In order to do this we first summarize some basic notions about the Murray-von Neumann dimension theory.

Let ${\cal H}$ be a Hilbert space with inner product $\langle \cdot, \cdot \rangle $ and norm $\Vert \cdot \Vert $. Two elements $x,y$ are said to be {\it orthogonal}  iff $\langle x, y\rangle = 0$ which can be written as $x\bot y$. For a set  $X \subseteq {\cal H}$ the {\it orthogonal complement } of $X$ is given by  $$X^\bot = \{y\in {\cal H}: \forall x \in X, \hspace{0.2cm} \langle x, y\rangle = 0 \}.$$  Two subspaces $X, Y$ are {\it orthogonal}, in symbols $X \bot Y$, iff $X \subseteq Y^\bot$ (equivalently $X^\bot \subseteq Y$).  If $X, Y$ are subspaces of ${\cal H}$ then the {\it sum of $X$ and $Y$} is given by the subspace $X+Y = \{ x+y : x\in X, y\in Y\}$.  In particular, ${\cal H}$ is said to be the {\it direct sum} of $X$ and $Y$ iff ${\cal H} = X + Y$ and $X\cap Y = \{0\}$. In this case we write ${\cal H} = X \oplus Y$. Let us remark that ${\cal H} = X \oplus Y$ iff any $a \in {\cal H}$  has a unique decomposition $a = a_{_X} + a_{_Y}$ where $a_{_X} \in X$ and $a_{_Y} \in Y$.  The following proposition provides well known results on Hilbert spaces (see {\rm \cite[\S 12]{HAL}}).

\begin{prop}\label{closed1}
Let ${\cal H}$ be a Hilbert space and $X, Y$ subspaces of ${\cal H}$. Then,

\begin{enumerate}
\item
$X^\bot$ is a closed subspace of ${\cal H}$.

\item
If $X$ is closed then ${\cal H} = X \oplus X^\bot$.

\item
If $X$ and $Y$ are closed and $X \bot Y$ then $X+Y$ is the closed subspace of ${\cal H}$ generated by $X \cup Y$.

\end{enumerate}

\qed
\end{prop}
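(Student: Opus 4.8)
The plan is to prove the three standard facts about closed subspaces of a Hilbert space by recalling the characterizations that make each statement essentially a routine verification. For item~1, I would show that $X^\bot$ is closed by observing that it is an intersection of kernels of continuous linear functionals: for each fixed $x\in X$, the map $y\mapsto\langle x,y\rangle$ is continuous, so $\{y:\langle x,y\rangle=0\}$ is closed, and $X^\bot=\bigcap_{x\in X}\{y:\langle x,y\rangle=0\}$ is an intersection of closed sets, hence closed. Alternatively, I would argue directly: if $y_n\to y$ with each $y_n\in X^\bot$, then by continuity of the inner product $\langle x,y\rangle=\lim_n\langle x,y_n\rangle=0$ for every $x\in X$, so $y\in X^\bot$.

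For item~2, assuming $X$ is closed, I would invoke the projection theorem for Hilbert spaces: a closed subspace $X$ admits, for each $a\in{\cal H}$, a unique nearest point $a_X\in X$, and the residual $a-a_X$ lies in $X^\bot$. This yields the decomposition $a=a_X+(a-a_X)$ with $a_X\in X$ and $a-a_X\in X^\bot$. Uniqueness follows because $X\cap X^\bot=\{0\}$ (any vector orthogonal to itself has norm zero), so by the remark in the excerpt characterizing direct sums via unique decomposition, ${\cal H}=X\oplus X^\bot$. The completeness of ${\cal H}$ is exactly what is needed to guarantee the nearest point exists, so this is the step where the Hilbert-space (rather than merely inner-product-space) hypothesis is genuinely used.

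For item~3, with $X,Y$ closed and $X\bot Y$, I would show $X+Y$ is already closed, so that it coincides with the closed subspace generated by $X\cup Y$. The key estimate is the orthogonal Pythagorean identity: for $x\in X$ and $y\in Y$ orthogonality gives $\Vert x+y\Vert^2=\Vert x\Vert^2+\Vert y\Vert^2$. To prove closedness, I would take a Cauchy sequence $x_n+y_n$ in $X+Y$; the Pythagorean identity forces $(x_n)$ and $(y_n)$ to be Cauchy separately, and since $X$ and $Y$ are closed (hence complete) they converge to limits $x\in X$, $y\in Y$, whence $x_n+y_n\to x+y\in X+Y$. Finally, any subspace containing $X\cup Y$ must contain all sums $x+y$, so $X+Y$ is the smallest closed subspace containing $X\cup Y$.

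The main obstacle, such as it is, lies in item~2: everything rests on the existence of orthogonal projections onto closed subspaces, which is the substantive analytic content (the minimizing-sequence argument using the parallelogram law and completeness). Items~1 and~3 are then comparatively soft, relying only on continuity of the inner product and the Pythagorean identity. Since the statement is flagged as collecting well-known results from \cite[\S 12]{HAL}, I would present the argument at the level of these invocations rather than reproving the projection theorem from scratch.

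\qed
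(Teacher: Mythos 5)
Your proof is correct: the three arguments you give (continuity of the inner product for item~1, the nearest-point/projection theorem for item~2, and the Pythagorean identity forcing the component sequences of a Cauchy sequence in $X+Y$ to be Cauchy separately for item~3) are complete and are exactly the standard ones, with the substantive analytic content correctly located in the projection theorem for item~2. The paper itself offers no proof of this proposition---it is stated as a known fact with a reference to \cite[\S 12]{HAL}---so your write-up simply supplies the textbook arguments that the citation delegates to, and it does so accurately.
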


An indexed subset $\{e_i\}_i$ of ${\cal H}$ is said to be {\it orthonormal} iff, $\Vert e_i \Vert = 1$ for all $i$ and $e_i\bot e_j$ for all $i\not = j$.
The space  ${\cal H}$ is {\it separable} iff  it has a countable basis, i.e., there exists a denumerable orthonormal set $\{e_i\}_{i=1}^\infty$ such that for each $x\in {\cal H}$, $x = \sum_{i=1}^\infty \langle x, e_i \rangle e_i$. \\

{\it From here on, we confine ourselves to separable Hilbert spaces}.\\

A linear operator $L: {\cal H} \rightarrow {\cal H}$, or operator for short, is said to be {\it bounded} iff $\sup_{\Vert x \Vert \leq 1}\{\Vert L(x) \Vert\} < \infty$. If $(e_i)_{i\in I}$ is a basis of ${\cal H}$ where $I \subseteq \mathbb{N}$ then the {\it trace} of $L$ is defined as $tr(L) = \sum_{i \in I } \langle L(e_i),e_i \rangle$. We can prove that the trace of $L$ is independent of the choice of basis. The {\it kernel} of $L$ is defined as the set
$Ker(L) = \{x\in {\cal H}: L(x)=0 \}$. Let us remark that $Ker(L)$ and the image of $L$, denoted by $Imag(L)$, are subspaces of ${\cal H}$. The set of all bounded operators on ${\cal H}$ is denoted by ${\cal B}({\cal H})$. The identity operator on ${\cal H}$ is denoted by $1_{\cal H}$ and immediately follows that it is a bounded operator. For each operator $L \in {\cal B}({\cal H})$ we denote by $L^*$ the {\it adjoint of $L$} i.e., the unique operator in ${\cal B}({\cal H})$ such that $\langle L(x), y \rangle = \langle x, L^*(y) \rangle$ for each $x,y \in {\cal H}$.

The relationship between  $Imag(L)$ and $Ker(L^*)$ is given by the following proposition.

\begin{prop}\label{KERRANG}
Let $L \in {\cal B}({\cal H})$. Then,
\begin{enumerate}
\item
$Ker(L) = Imag^\bot(L^*)$.

\item
$Ker^\bot(L) = \overline{Imag(L^*)}$ i.e. the closure of $Imag(L^*)$.

\end{enumerate}

\qed
\end{prop}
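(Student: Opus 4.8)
The plan is to establish the first item directly from the defining relation of the adjoint and the non-degeneracy of the inner product, and then to deduce the second item by taking orthogonal complements. For statement~1 I would begin with the elementary observation that, for a fixed $x$, one has $L(x)=0$ if and only if $\langle L(x), y\rangle = 0$ for every $y\in {\cal H}$: the forward direction is trivial, and the converse follows by taking $y = L(x)$, which forces $\Vert L(x)\Vert^2 = 0$. Rewriting $\langle L(x), y\rangle$ as $\langle x, L^*(y)\rangle$ by the adjoint relation, the membership $x\in Ker(L)$ becomes the condition that $\langle x, L^*(y)\rangle = 0$ for all $y\in {\cal H}$, i.e. that $x$ is orthogonal to every vector of $Imag(L^*)$. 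This is exactly $x\in Imag^\bot(L^*)$, yielding $Ker(L) = Imag^\bot(L^*)$.

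For statement~2 I would take orthogonal complements of the equality just proved, obtaining $Ker^\bot(L) = Imag^{\bot\bot}(L^*)$, and then reduce matters to the general fact that $S^{\bot\bot} = \overline{S}$ for any subspace $S$, applied to $S = Imag(L^*)$. The inclusion $\overline{S}\subseteq S^{\bot\bot}$ is immediate, since $S\subseteq S^{\bot\bot}$ and, by Proposition~\ref{closed1}-1, $S^{\bot\bot}$ is closed. For the reverse inclusion I would use the decomposition ${\cal H} = \overline{S}\oplus \overline{S}^\bot$ furnished by Proposition~\ref{closed1}-2 (legitimate because $\overline{S}$ is closed), together with the fact that $\overline{S}^\bot = S^\bot$, which holds by continuity of the inner product. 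Given $x\in S^{\bot\bot}$, I write $x = u + v$ with $u\in \overline{S}$ and $v\in S^\bot$; then $v = x - u$ lies in both $S^\bot$ and $S^{\bot\bot}$, so $\langle v, v\rangle = 0$ and hence $v = 0$, giving $x = u\in \overline{S}$.

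The only genuinely delicate point is the auxiliary identity $S^{\bot\bot} = \overline{S}$, whose proof rests entirely on the projection theorem packaged in Proposition~\ref{closed1}. It is precisely here that one must take account of the fact that $Imag(L^*)$ need not be closed, which is the reason a closure appears in statement~2 but not in statement~1; everything else reduces to routine manipulation of the adjoint relation and the non-degeneracy of $\langle\cdot,\cdot\rangle$.
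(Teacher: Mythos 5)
Your proof is correct. Note that the paper itself gives no argument for this proposition: it is stated as a known fact of elementary Hilbert space theory (with the \qed placed immediately after the statement, in the same manner as Proposition \ref{closed1}, which is attributed to Halmos), so there is no proof in the paper to compare yours against. Your argument is the standard one and is complete: item 1 follows from the adjoint identity $\langle L(x), y\rangle = \langle x, L^*(y)\rangle$ together with non-degeneracy of the inner product (taking $y = L(x)$), and item 2 reduces, after taking complements in item 1, to the identity $S^{\bot\bot} = \overline{S}$, which you correctly derive from the projection theorem packaged in Proposition \ref{closed1}, including the needed observation that $\overline{S}^\bot = S^\bot$ by continuity of the inner product. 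You are also right to flag that the closure in item 2 is unavoidable because $Imag(L^*)$ need not be closed; that is exactly why the paper later invokes Proposition \ref{WW1} (in the proof of Proposition \ref{PARCISOM0}) before it can replace $\overline{Imag(W^*)}$ by $Imag(W^*)$ for a partial isometry.
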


\begin{rema}\label{CLOSEDKERIMAG}
{\rm Combining Proposition \ref{closed1} and Proposition \ref{KERRANG} we see that for each $L \in {\cal B}({\cal H})$, $Ker(L)$ and $Ker^\bot(L)$ are closed subspaces of ${\cal H}$.
}
\end{rema}

The set ${\cal B}({\cal H})$ can be endowed with a linear space structure over $\mathbb{C}$ by considering the sum operation $(L+M)(x) = L(x) + M(x)$ and a product by complex scalars defined by $(\lambda L)(x) = \lambda L(x)$. In ${\cal B}({\cal H})$ we can also define the product operation $L\cdot M$ given by the composition $(L\cdot M)(x) = L(M(x))$. For the sake of simplicity the product operation $L\cdot M$ is written $L M$.

An operator $P$ on ${\cal H}$ is an {\it orthogonal projection} or simply a {\it projector} iff $P = P^* = P^2$. Note that if $P$ is a projector on ${\cal H}$, then $Imag(P)$ is a closed subspace of ${\cal H}$. Reciprocally, if $E$ is a closed subspace of ${\cal H}$ then there exists a projector $P_E$ such that $Imag(P_E) = E$ and $Ker(P_E) = E^\bot$.
In this way, for each closed subspace $E$ of ${\cal H}$ we denote by $P_E$ the projector onto $E$. Thus, the concepts of closed subspace and projector are interchangeable and, by the usual abuse of language, the identification
\begin{equation}\label{IDENTPROJECSPACE}
E \cong P_E
\end{equation}
will be used along this article.

We denote by ${\cal P}({\cal H})$ the set of all closed projectors on ${\cal H}$. We also remark that ${\cal P}({\cal H}) \subseteq {\cal B}({\cal H})$.   It is well known that the structure
\begin{equation}\label{HILBERTLATTICE}
\langle {\cal P}({\cal H}), \lor, \land, ^\bot, 0, 1_{\cal H}  \rangle
\end{equation}
where $P_X \lor P_Y = P_{X \lor Y}$ being $X \lor Y$ the smallest closed subspace of ${\cal H}$ containing $X$ and $Y$, $P_X \land P_Y = P_{X \cap Y}$, $0 \cong \{0\}$ and $1_{\cal H} \cong {\cal H}$, is a complete orthomodular lattice. These kind of structures are called {\it Hilbert lattices}.

\begin{prop}\label{SUMPROJ}{\rm \cite[\S 28]{HAL}}
Let ${\cal H}$ be a Hilbert space and $P_X$, $P_Y$ two projectors. Then, $P_X + P_Y$ is a projector iff $X\bot Y$. If this condition is satisfied then $P_{X\lor Y} = P_X + P_Y$

\qed
\end{prop}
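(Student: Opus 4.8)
The plan is to reduce the claim that $P_X+P_Y$ is a projector to a purely algebraic identity between the two projectors, and then translate that identity into the geometric condition $X\bot Y$. First I would observe that, since $P_X$ and $P_Y$ are self-adjoint, their sum is automatically self-adjoint: $(P_X+P_Y)^* = P_X^* + P_Y^* = P_X + P_Y$. Hence the only defining condition of a projector that can fail is idempotency. Expanding the square and using $P_X^2 = P_X$ and $P_Y^2 = P_Y$ gives
$$(P_X+P_Y)^2 = P_X + P_Y + (P_X P_Y + P_Y P_X),$$
so $P_X+P_Y$ is a projector if and only if the anticommutator $P_X P_Y + P_Y P_X$ vanishes.

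Next I would show that this anticommutator vanishes exactly when $P_X P_Y = 0$. One direction is immediate: if $P_X P_Y = 0$, taking adjoints yields $P_Y P_X = (P_X P_Y)^* = 0$, so the sum vanishes. For the converse, starting from $P_X P_Y + P_Y P_X = 0$, I would multiply this identity on the left by $P_X$ and, separately, on the right by $P_X$; using $P_X^2 = P_X$ these read $P_X P_Y + P_X P_Y P_X = 0$ and $P_X P_Y P_X + P_Y P_X = 0$. Subtracting them gives $P_X P_Y = P_Y P_X$, and substituting back into the anticommutator yields $2P_X P_Y = 0$, i.e.\ $P_X P_Y = 0$. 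Finally, since $Imag(P_Y) = Y$ and $Ker(P_X) = X^\bot$, the equality $P_X P_Y = 0$ says precisely that $Y \subseteq X^\bot$, which is the definition of $X\bot Y$; this closes the equivalence. I expect this converse implication to be the one requiring the most care, and the left/right multiplication trick is the cleanest route, as it avoids any appeal to positivity or to spectral arguments.

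For the final assertion, assuming $X\bot Y$, I would set $Q = P_X+P_Y$, which is now known to be a projector, and identify its range. By Proposition \ref{closed1}-3 the subspace $X+Y$ is closed and coincides with $X\lor Y$. For $x\in X$ one has $P_X x = x$ and, since $X\subseteq Y^\bot = Ker(P_Y)$, also $P_Y x = 0$, so $Qx = x$; symmetrically $Qy = y$ for $y\in Y$, whence $Q$ fixes every element of $X+Y$. For $v\in (X+Y)^\bot = X^\bot\cap Y^\bot$ one has $P_X v = P_Y v = 0$, so $Qv = 0$. Using the orthogonal decomposition ${\cal H} = (X+Y)\oplus(X+Y)^\bot$ supplied by Proposition \ref{closed1}-2, the projector $Q$ acts as the identity on $X+Y$ and as zero on its orthogonal complement, hence $Q = P_{X+Y} = P_{X\lor Y}$, as required.
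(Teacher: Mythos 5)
Your proof is correct and complete: the reduction to the vanishing of the anticommutator, the left/right multiplication trick yielding $P_XP_Y=0$, the translation $P_XP_Y=0 \Leftrightarrow Y\subseteq Ker(P_X)=X^\bot$, and the identification of the range of $P_X+P_Y$ via the decomposition ${\cal H}=(X+Y)\oplus(X+Y)^\bot$ all check out. Note that the paper offers no proof of its own here --- the proposition is simply cited from Halmos {\rm \cite[\S 28]{HAL}} --- so your argument fills that gap, and it is in fact the standard textbook argument the citation points to.
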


In \cite{SASAKI} Sasaki noticed that for two closed subspaces $E,X$ of ${\cal H}$ the orthogonal projection of $X$ to the subspace $E$ i.e., $P_E(X)$, can be expressed without the use of the inner
product, more precisely, by using only the lattice operations and orthocomplements of a Hilbert lattice. Formally, taking into account the Sasaki projection introduced in Eq.(\ref{SASAKIPROJECTION}), we have that
\begin{equation}\label{SASAKIPROJECTION2}
P_E(X) = \mu_E(X) = E \land (E^\bot \lor X).
\end{equation}
Thus, in a Hilbert lattice the Sasaki projection $\mu_E$ exactly encodes the action of the orthogonal projector $P_E$ that projects onto the subspace $E$.\\

Let $W$ be an operator in  ${\cal B}({\cal H})$. Then $W$ is said to be {\it unitary} iff $W^* W = W W^* = 1_{\cal H}$. The operator $W$ is an {\it isometry} iff $\Vert W(x) \Vert = \Vert x \Vert $ for every $x\in {\cal H}$. Let us notice that unitary operators are isometries.

\begin{prop}\label{WW1}
Let $W \in {\cal B}({\cal H})$ be an isometry and $X$ be a closed subspace of ${\cal H}$. Then $W(X)$ is a closed subspace of ${\cal H}$.

\end{prop}

\begin{proof}
We shall prove that $W(X)$ contains all its accumulation points. Let $\big(W(x_n)\big)_{n\in \mathbb{N}}$ be a sequence in $W(X)$ which is convergent to $y\in {\cal H}$. Since $\big(W(x_n)\big)_{n\in \mathbb{N}}$ is a Cauchy sequence and $W$ preserves distances we have that  $\Vert x_n - x_m \Vert = \Vert W(x_n) - W(x_m) \Vert \rightarrow 0$ whenever $n,m \rightarrow \infty$. Therefore, $(x_n)_{n\in \mathbb{N}}$ is a Cauchy sequence in $X$ and $x_n$ converges to $x\in {\cal H}$ because ${\cal H}$ is complete. Note that $x\in X$ because $X$ is closed. Then, by the continuity of $W$, we have that $W(x) = W(\lim_{n\rightarrow \infty} x_n) = \lim_{n\rightarrow \infty} W(x_n) = y$. Hence $y\in W(X)$ and $W(X)$ is a closed subspace of ${\cal H}$.

\qed
\end{proof}

An operator $W$ in ${\cal B}({\cal H})$ is a {\it partial isometry} iff the restriction $W\upharpoonright_{ Ker^\bot(W)}$ is an isometry.  There are many familiar examples of partial isometries: every isometry is one and every projection is one also. Partial isometries have been intensively studied since the early stages of the theory of operator on Hilbert spaces. In particular, they form the cornerstone of the dimension theory of von Neumann algebras. One of the earliest results in operator theory is the following characterization of partial isometries.

\begin{theo}\label{EQISOM} {\rm \cite[\S 2.2.8]{PENDERSEN}}
Let $W \in {\cal B}({\cal H})$. Then the following statements are equivalent,

\begin{enumerate}
\item  $W$ is a partial isometry,

\item  $WW^* = P_{_{Imag(W)}}$,

\item  $W^*W = P_{_{Ker(W)^\bot}}$,

\item  $WW^*W = W$,

\item  $W^*WW^* = W^*$,

\item  $W^*$  is a partial isometry.

\end{enumerate}

\qed
\end{theo}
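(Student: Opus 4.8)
The plan is to prove the core chain $(1)\Leftrightarrow(3)\Leftrightarrow(4)$ for an arbitrary $W\in{\cal B}({\cal H})$ and then to recover $(2)$, $(5)$ and $(6)$ by instantiating these equivalences at $W^*$ and taking adjoints. Throughout I would abbreviate $M = Ker^\bot(W)$, which by Proposition \ref{KERRANG}-2 equals $\overline{Imag(W^*)}$, and use the decomposition ${\cal H} = Ker(W)\oplus M$ from Proposition \ref{closed1}. A preliminary identity I would record first is $Ker(W^*W) = Ker(W)$: one inclusion is immediate, and if $W^*W(x)=0$ then $\Vert W(x)\Vert^2 = \langle W^*W(x),x\rangle = 0$ forces $W(x)=0$.

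For $(1)\Rightarrow(3)$, assuming $W$ is an isometry on $M$, I would polarize $\Vert W(x)\Vert = \Vert x\Vert$ to get $\langle W^*W(x),y\rangle = \langle x,y\rangle$ for all $x,y\in M$, so that $W^*W(x)-x\bot M$ for each $x\in M$. The delicate point is that this yields only orthogonality to $M$; to upgrade it to $W^*W(x)=x$ I would additionally note $W^*W(x)\in Imag(W^*)\subseteq \overline{Imag(W^*)} = M$, whence $W^*W(x)-x\in M\cap M^\bot = \{0\}$. Since $W^*W$ also vanishes on $Ker(W)$, this gives $W^*W = P_M$, i.e. $(3)$. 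The reverse $(3)\Rightarrow(1)$ is the direct computation $\Vert W(x)\Vert^2 = \langle W^*W(x),x\rangle = \langle x,x\rangle$ for $x\in M$. For $(3)\Rightarrow(4)$ I would right-multiply by $W$ and use $WP_M = W$ (the two maps agree on each summand of ${\cal H} = Ker(W)\oplus M$), obtaining $WW^*W = W$. Conversely, for $(4)\Rightarrow(3)$ I would check that $F = W^*W$ is a projector: it is self-adjoint and $F^2 = W^*(WW^*W) = W^*W = F$ by $(4)$; since $Imag(F)\subseteq Imag(W^*)\subseteq M$ and $Ker(F)=Ker(W)$, it must be $F = P_M$.

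It remains to weave in $(2)$, $(5)$ and $(6)$, which I would do by symmetry rather than by repeating the work. Taking adjoints in $WW^*W=W$ gives $W^*WW^* = W^*$ and vice versa, so $(4)\Leftrightarrow(5)$. Since the core equivalences hold for every operator, I would instantiate $(1)\Leftrightarrow(4)$ at $W^*$ (using $W^{**}=W$): condition $(1)$ for $W^*$ is exactly $(6)$ and condition $(4)$ for $W^*$ is exactly $(5)$, giving $(5)\Leftrightarrow(6)$. Likewise $(3)$ applied to $W^*$ reads $WW^* = P_{Ker(W^*)^\bot}$, and $Ker(W^*)^\bot = \overline{Imag(W)}$ by Proposition \ref{KERRANG}-2; since under these hypotheses $W$ maps the closed subspace $M$ isometrically onto $Imag(W)$, that image is closed by the argument of Proposition \ref{WW1}, so $P_{\overline{Imag(W)}} = P_{Imag(W)}$ and we recover $(2)$. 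The main obstacle is the single analytic step $(1)\Rightarrow(3)$: polarization alone gives only $W^*W(x)-x\bot M$, and the proof genuinely relies on the range inclusion $Imag(W^*W)\subseteq Ker^\bot(W)$ supplied by Proposition \ref{KERRANG} to convert this into the pointwise identity.
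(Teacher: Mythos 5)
Your argument is correct, but there is no proof in the paper to compare it against: Theorem \ref{EQISOM} is imported from Pedersen \cite{PENDERSEN} and stated without proof, so what you have produced is a genuinely self-contained derivation from the paper's own preliminaries, and it holds up. The delicate points are all handled properly. In $(1)\Rightarrow(3)$, polarization on $M=Ker^\bot(W)$ indeed gives only $W^*W(x)-x\perp M$, and you correctly use Proposition \ref{KERRANG}-2 to place $W^*W(x)\in Imag(W^*)\subseteq\overline{Imag(W^*)}=M$, so that the difference lies in $M\cap M^\bot=\{0\}$; this, plus the identity $Ker(W^*W)=Ker(W)$, is exactly what converts orthogonality into the pointwise identity $W^*W=P_M$. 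In $(4)\Rightarrow(3)$ the verification that $F=W^*W$ is a self-adjoint idempotent pinned down by $Ker(F)=Ker(W)$ is sound (in fact the extra inclusion $Imag(F)\subseteq M$ is redundant, since a projector automatically satisfies $Imag(F)=Ker(F)^\bot$). The symmetry device of proving the core equivalences for an arbitrary bounded operator and instantiating them at $W^*$ legitimately delivers $(5)$, $(6)$ and $(2)$ without repetition, and you treat the one real subtlety in $(2)$ correctly: condition $(3)$ for $W^*$ reads $WW^*=P_{\overline{Imag(W)}}$, and $Imag(W)=W(M)$ is closed because $W$ restricted to the complete subspace $M$ preserves distances (the Cauchy-sequence argument of Proposition \ref{WW1}); in the converse direction the paper's notation $P_{Imag(W)}$ already presupposes that $Imag(W)$ is closed, so $(2)$ coincides with condition $(3)$ for $W^*$ and implies the rest. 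One cosmetic slip: in $(3)\Rightarrow(4)$ you say ``right-multiply'' $W^*W=P_M$ by $W$, but the identity you then use, $WW^*W=WP_M=W$, comes from multiplying on the \emph{left}; right multiplication would give $W^*WW=P_MW$, which is not what is needed. This does not affect correctness, since the intended equation is written out. In substance your route is the standard operator-theoretic proof, which is presumably why the author chose to cite it rather than reproduce it; your version has the added value of resting only on Propositions \ref{closed1}, \ref{KERRANG} and \ref{WW1}, making this part of the paper self-contained.
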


Let ${\cal H}$ be a Hilbert space. A {\it $*$-subalgebra} of ${\cal B}({\cal H})$ is a linear subspace ${\cal N}$ of ${\cal B}({\cal H})$ closed by the operations $\cdot$ and $^*$. In particular ${\cal N}$ is said to be {\it unital} iff the identity operator $1_{\cal H} \in {\cal N}$. Let ${\cal N} \subseteq {\cal B}({\cal H})$. The {\it commutant of ${\cal N}$} is the set $${\cal N}' = \{X\in {\cal B}({\cal H}): \forall T \in {\cal N}, \hspace{0.2cm}  X T = T X \}.$$ In particular ${\cal N}'' = ({\cal N}')'$ is called the {\it bicommutant of ${\cal N}$}.

\begin{definition}
{\rm
Let ${\cal H}$ be a Hilbert space. A {\it von Neumann algebra} is a unital {\it $*$-subalgebra} ${\cal N}$ of ${\cal B}({\cal H})$ such that ${\cal N}'' = {\cal N}$.
}
\end{definition}

Let ${\cal N} \subseteq {\cal B}({\cal H})$ be a von Neumann algebra. We denote by ${\cal P}({\cal N})$ the set of all projectors in ${\cal N}$ i.e. ${\cal P}({\cal N}) = {\cal N} \cap {\cal P}({\cal H})$. An important fact is that ${\cal P}({\cal N})$ generates ${\cal N}$ in the sense that ${\cal P}({\cal N})'' = {\cal N}$. If $A \in {\cal N}$ then we define the {\it left annihilator} of $A$ as the set
\begin{equation}\label{ANN}
\mathit{Ann_L(A)} = \{S\in {\cal N}:SA =0 \}.
\end{equation}

\begin{prop}\label{RIGHTIDEAL}
Let ${\cal N} \subseteq {\cal B}({\cal H})$ be a von Neumann algebra and $A \in {\cal N}$. Then there exists a unique $A' \in {\cal P}({\cal N})$ such that $$\mathit{Ann_L(A)} =  {\cal N}A'  = \{NA' : N\in {\cal N} \}. $$

\end{prop}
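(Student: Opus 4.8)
The plan is to exhibit $A'$ as the orthogonal complement of the range projection of $A$, and then to identify $\mathit{Ann_L(A)}$ with ${\cal N}A'$ by a direct computation, the only genuinely nontrivial ingredient being that the relevant projection actually lies in ${\cal N}$.

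First I would reformulate membership in the annihilator geometrically: for $S \in {\cal N}$ the identity $SA = 0$ holds precisely when $S$ vanishes on $Imag(A)$, and by continuity of $S$ this is equivalent to $S$ vanishing on the closed subspace $E = \overline{Imag(A)}$. Writing $P$ for the projector onto $E$, one has $PA = A$ (since $Imag(A) \subseteq E$), which gives the clean equivalence $SA = 0 \iff SP = 0$: the forward direction is $SA = S(PA) = (SP)A = 0$, and conversely $SA = 0$ forces $S$ to vanish on $\overline{Imag(A)} = Imag(P)$, so $SP = 0$. Hence $\mathit{Ann_L(A)} = \{S \in {\cal N} : SP = 0\}$. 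Since $SP = 0$ is the same as $S = S(1_{\cal H} - P)$, setting $A' = 1_{\cal H} - P$ will yield $\mathit{Ann_L(A)} = {\cal N}A'$, provided $A'$ is a projection belonging to ${\cal N}$; as $1_{\cal H} \in {\cal N}$ and $P$ is a projection, $A'$ is automatically a projection once $P \in {\cal N}$ is established.

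The main obstacle is precisely to show $P \in {\cal N}$, and here I would invoke the defining bicommutant property ${\cal N} = {\cal N}''$. I would argue that $E = \overline{Imag(A)}$ is a reducing subspace for the commutant ${\cal N}'$: for any $U \in {\cal N}'$ and $x \in {\cal H}$ we have $U(Ax) = A(Ux) \in Imag(A)$, so $Imag(A)$, and therefore its closure $E$, is invariant under every $U \in {\cal N}'$. Since ${\cal N}'$ is itself a von Neumann algebra and hence closed under $^*$, the subspace $E$ is invariant under both $U$ and $U^*$, which forces $P$ to commute with every $U \in {\cal N}'$. Thus $P \in {\cal N}'' = {\cal N}$, and consequently $A' = 1_{\cal H} - P \in {\cal P}({\cal N})$, so the displayed identity $\mathit{Ann_L(A)} = {\cal N}A'$ is verified.

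Finally I would establish uniqueness at the purely algebraic level. If $A', B' \in {\cal P}({\cal N})$ satisfy ${\cal N}A' = {\cal N}B'$, then from $A' \in {\cal N}B'$ I obtain $A' = NB'$ for some $N \in {\cal N}$, whence $A'B' = NB'B' = NB' = A'$; symmetrically $B'A' = B'$. Taking the adjoint of $A'B' = A'$ and using $A'^* = A'$, $B'^* = B'$ gives $B'A' = A'$, so that $A' = B'A' = B'$. This shows the projection $A'$ with $\mathit{Ann_L(A)} = {\cal N}A'$ is unique, completing the argument.
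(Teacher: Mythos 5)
Your proof is correct, but it takes a genuinely different route from the paper: the paper gives no argument at all for this proposition, it simply cites \cite[Definition 37.3 and Remark 37.15]{MM}, i.e.\ it invokes the general fact that a von Neumann algebra is a Baer${}^*$-ring, in which left annihilators of elements are by definition principal ideals generated by projections. What you have done is supply the concrete Hilbert-space argument underlying that citation: you identify the generating projection as $A' = 1_{\cal H} - P$ with $P$ the projection onto $E = \overline{Imag(A)}$, reduce $SA = 0$ to $SP = 0$ by continuity, and---the only substantive step---prove $P \in {\cal N}$ via the bicommutant property ${\cal N} = {\cal N}''$, using that $E$ is invariant under every $U \in {\cal N}'$ and that ${\cal N}'$ is $*$-closed (invariance of $E$ under both $U$ and $U^*$ gives $UP = PUP = PU$). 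Your uniqueness argument (from ${\cal N}A' = {\cal N}B'$ deduce $A'B' = A'$ and $B'A' = B'$, then take adjoints) is the standard one and is airtight; note that it is purely ring-theoretic and valid in any unital $*$-ring, which is exactly in the spirit of the Baer${}^*$-ring viewpoint the paper leans on. What each approach buys: the paper's citation is brief and situates the proposition inside the abstract ring-theoretic framework it uses elsewhere, while your proof is self-contained and makes visible precisely where each defining feature of a von Neumann algebra (unit, $*$-closure, ${\cal N}'' = {\cal N}$) enters. One tiny presentational slip: what you label the ``forward direction'' of $SA = 0 \iff SP = 0$ is in fact the implication $SP = 0 \Rightarrow SA = 0$; since both implications are proved, nothing is missing.
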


\begin{proof}
See {\rm \cite[Definition 37.3 and Remark 37.15]{MM}}.
\qed
\end{proof}

\begin{prop}\label{VONNEUMANNLAT} {\rm \cite[\S D, p.72]{HOLAND}}
Let ${\cal N} \subseteq {\cal B}({\cal H})$ be a von Neumann algebra. Then, ${\cal P}({\cal N})$ is complete orthomodular lattice with respect to the lattice operations inherited from ${\cal P}({\cal H})$.
\qed
\end{prop}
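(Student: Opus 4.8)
The plan is to show that ${\cal P}({\cal N})$ is a complete sublattice of the complete orthomodular lattice ${\cal P}({\cal H})$ which is moreover closed under the orthocomplementation $^\bot$ and contains the bounds $0, 1_{\cal H}$. Once this is achieved, the orthomodular law is inherited automatically, since it is a universally quantified identity (an equation) that survives passage to a sub-ortholattice whose operations agree with those computed in ${\cal P}({\cal H})$, while completeness follows directly from closure under arbitrary meets and joins.

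First I would dispose of the easy closure conditions. Since ${\cal N}$ is a unital $*$-subalgebra it contains $1_{\cal H}$ and hence $0 = 1_{\cal H} - 1_{\cal H}$, which are the top and bottom of ${\cal P}({\cal H})$. For a projector $P \in {\cal P}({\cal N})$ its orthocomplement is $P^\bot = 1_{\cal H} - P$; this again lies in ${\cal N}$ because ${\cal N}$ is a linear subspace containing $1_{\cal H}$, and it is again a projector. Thus ${\cal P}({\cal N})$ contains $0, 1_{\cal H}$ and is closed under $^\bot$.

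The heart of the argument is closure under arbitrary infima, and here I would exploit the defining identity ${\cal N} = {\cal N}''$ to obtain the following characterization: a projector $P_E$, with range the closed subspace $E$, belongs to ${\cal N}$ if and only if $E$ is invariant under every operator of the commutant ${\cal N}'$. The nontrivial direction rests on the fact that ${\cal N}'$ is itself a $*$-algebra: if $E$ is ${\cal N}'$-invariant then, applying adjoints and using that ${\cal N}'$ is closed under $^*$, the complement $E^\bot$ is ${\cal N}'$-invariant as well, so $E$ reduces every $T \in {\cal N}'$, which is exactly the statement $P_E T = T P_E$; hence $P_E \in {\cal N}'' = {\cal N}$. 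The converse is immediate.

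With this characterization in hand the conclusion is quick. Given a family $\{P_{E_i}\}_{i\in I} \subseteq {\cal P}({\cal N})$, each $E_i$ is ${\cal N}'$-invariant, whence the intersection $\bigcap_{i\in I} E_i$ is plainly ${\cal N}'$-invariant, so by the characterization its projector, which is exactly the meet $\bigwedge_{i\in I} P_{E_i}$ computed in ${\cal P}({\cal H})$, lies in ${\cal N}$. Dualizing through the orthocomplement, already shown to preserve ${\cal P}({\cal N})$, the join $\bigvee_{i\in I} P_{E_i} = \big(\bigwedge_{i\in I} P_{E_i}^\bot\big)^\bot$ also lies in ${\cal P}({\cal N})$. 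Hence ${\cal P}({\cal N})$ is closed under the arbitrary meets and joins inherited from ${\cal P}({\cal H})$ and is therefore a complete sublattice; together with closure under $^\bot$ and the inherited bounds this makes ${\cal P}({\cal N})$ a complete orthomodular lattice. I expect the main obstacle to be the invariance characterization, specifically the $*$-closure step showing that ${\cal N}'$-invariance of $E$ forces $P_E$ to commute with all of ${\cal N}'$; the lattice-theoretic bookkeeping afterwards is routine.
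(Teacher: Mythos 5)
Your proposal is correct. Note that the paper itself does not prove this proposition: it is stated with a citation to Holland and closed immediately, so there is no internal argument to compare against. Your argument is the standard one from the operator-algebra literature and is essentially what the cited source does: the crucial step is exactly the invariance characterization you isolate, namely that for a self-adjoint family such as ${\cal N}'$ (the commutant of a $*$-closed set is itself a $*$-algebra), invariance of the closed subspace $E$ under ${\cal N}'$ forces invariance of $E^\bot$ as well, hence $P_E$ reduces every $T \in {\cal N}'$ and commutes with it, giving $P_E \in {\cal N}'' = {\cal N}$. Once that is in place, closure of ${\cal P}({\cal N})$ under arbitrary intersections is immediate, joins follow by De Morgan through the orthocomplement, and orthomodularity is inherited automatically since ${\cal P}({\cal N})$ is a subalgebra of ${\cal P}({\cal H})$ with respect to $\land$, $\lor$, $^\bot$, $0$, $1_{\cal H}$ and the orthomodular law is an equation. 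The one point you flag as the ``main obstacle'' is indeed the only nontrivial step, and your sketch of it (adjoints plus $*$-closure of ${\cal N}'$) is complete enough to count as a proof.
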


If  ${\cal N} \subseteq {\cal B}({\cal H})$ is a von Neumann algebra then the  orthomodular structure related to ${\cal P}({\cal N})$ is called {\it von Neumann lattice}.
By the above proposition we can also identify the projectors of ${\cal P}({\cal N})$ with the respective closed subspaces of ${\cal H}$.

A crucial relation between projectors in operator theory is the notion of unitary equivalence. Two projectors $P$ and $Q$ in a von Neumann algebra ${\cal N} \subseteq {\cal B}({\cal H})$ are said to be {\it unitary equivalent} iff there exists an unitary operator $W\in {\cal N}$ such that $Q = WPW^*$.

\begin{theo}\label{EQPERSPECUNIT}  {\rm \cite[Theorem 1]{FILLMORE}}.
Let ${\cal N} \subseteq {\cal B}({\cal H})$ be a von Neumann algebra and $P, Q \in {\cal P}({\cal N})$. Then $P$ and $Q$ are unitary equivalent iff $P\sim_p Q$ in the lattice ${\cal P}({\cal N})$.

\qed
\end{theo}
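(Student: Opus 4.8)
The plan is to route both relations through Murray--von Neumann equivalence of projections: write $P \approx Q$ when there is a partial isometry $V \in {\cal N}$ with $V^*V = P$ and $VV^* = Q$ (by Theorem~\ref{EQISOM} such a $V$ is characterised by $V = VV^*V$, and $\approx$ is an equivalence relation on ${\cal P}({\cal N})$). First I would record the elementary equivalence: $P$ and $Q$ are unitarily equivalent if and only if $P \approx Q$ and $1_{\cal H}-P \approx 1_{\cal H}-Q$. The forward implication is obtained by restriction, since if $Q = UPU^*$ with $U$ unitary then $V := UP$ and $W := U(1_{\cal H}-P)$ lie in ${\cal N}$ and satisfy $V^*V = P$, $VV^* = Q$, $W^*W = 1_{\cal H}-P$, $WW^* = 1_{\cal H}-Q$. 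For the converse I would set $U := V + W$ and verify, via the partial isometry identities of Theorem~\ref{EQISOM} (which force the cross terms $V^*W$ and $W^*V$ to vanish because $Imag(W) \subseteq Imag(1_{\cal H}-Q)$ and $V^* = V^*Q$), that $U^*U = UU^* = 1_{\cal H}$ and $UPU^* = Q$.

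To prove that perspectivity implies unitary equivalence, suppose $C \in {\cal P}({\cal N})$ is a common complement of $P$ and $Q$, that is $P \lor C = Q \lor C = 1_{\cal H}$ and $P \land C = Q \land C = 0$. The key sub-lemma is that any complementary pair is equivalent to the orthocomplement of its partner: if $X \land Y = 0$ and $X \lor Y = 1_{\cal H}$ then $X \approx 1_{\cal H}-Y$. I would prove this from the polar decomposition $(1_{\cal H}-Y)X = V|(1_{\cal H}-Y)X|$ of the operator $(1_{\cal H}-Y)X \in {\cal N}$: the partial isometry $V$ again lies in ${\cal N}$, its initial projection is the projector onto $Ker^{\bot}\big((1_{\cal H}-Y)X\big) = Imag(X)$ (because $Imag(X) \cap Imag(Y) = 0$) and its final projection is the projector onto $\overline{Imag\big((1_{\cal H}-Y)X\big)} = Imag(1_{\cal H}-Y)$ (because $Imag(X) + Imag(Y)$ is dense), so $V^*V = X$ and $VV^* = 1_{\cal H}-Y$. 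Applying this to the pairs $(P,C)$, $(Q,C)$, $(C,P)$ and $(C,Q)$ yields $P \approx 1_{\cal H}-C \approx Q$ and $1_{\cal H}-P \approx C \approx 1_{\cal H}-Q$; hence $P \approx Q$ and $1_{\cal H}-P \approx 1_{\cal H}-Q$, and the first paragraph produces the required unitary.

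For the reverse implication I would start from $Q = UPU^*$, which by restriction already gives $P \approx Q$ and $1_{\cal H}-P \approx 1_{\cal H}-Q$, and I must now manufacture a common complement $C$. Here I would invoke the Halmos two-subspace decomposition of the pair $(P,Q)$: the space ${\cal H}$ splits ${\cal N}$-invariantly into a part ${\cal H}_0$ on which $P$ and $Q$ commute and a ``generic'' part ${\cal H}_1 \cong {\cal K} \oplus {\cal K}$ on which $P = \left(\begin{smallmatrix} 1 & 0 \\ 0 & 0\end{smallmatrix}\right)$ and $Q = \left(\begin{smallmatrix} \cos^2\Theta & \cos\Theta\sin\Theta \\ \sin\Theta\cos\Theta & \sin^2\Theta \end{smallmatrix}\right)$ with $0 < \Theta < \pi/2$. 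On ${\cal H}_1$ one checks directly that $1_{\cal H}-P$ is already a common complement, since $\cos\Theta$ and $\sin\Theta$ are invertible there. On ${\cal H}_0$ the commuting projections split it into $A = Imag(P) \land Imag(Q)$, $B = Imag(P) \land Imag(Q)^{\bot}$, $D = Imag(P)^{\bot}\land Imag(Q)$ and $F = Imag(P)^{\bot}\land Imag(Q)^{\bot}$; on $A \oplus F$, where $P = Q$, any complement of $P$ (for instance the projector onto $F$) works, while on $B \oplus D$ I would produce a partial isometry $V_0 \in {\cal N}$ with $V_0^*V_0 = P_B$ and $V_0V_0^* = P_D$, and take the projector $C$ onto the graph $\{b + V_0 b : b \in B\}$, which is simultaneously complementary to $Imag(P)=B$ and $Imag(Q)=D$ there. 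Patching these pieces over the ${\cal N}$-reducing blocks gives a $C \in {\cal P}({\cal N})$ that is a common complement, whence $P \sim_p Q$ by Proposition~\ref{EQPESPEC}.

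The main obstacle is precisely this last construction: extracting the partial isometry $V_0$ pairing $B$ with $D$ requires these ``defect'' projections to be Murray--von Neumann equivalent, and that matching is exactly what must be squeezed out of the two equivalences $P \approx Q$ and $1_{\cal H}-P \approx 1_{\cal H}-Q$ by cancelling the common part $A$ and the complementary part $F$; in the properly infinite setting this cancellation is delicate and constitutes the technical heart of the argument in \cite{FILLMORE}. A secondary point requiring care throughout is to check that every operator produced---the polar-decomposition partial isometries, the assembled unitary $U$, and the graph projector $C$---indeed belongs to ${\cal N}$, which holds because von Neumann algebras are closed under polar decomposition and Borel functional calculus.
\qed
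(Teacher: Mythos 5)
The paper itself offers no proof of this statement; it is quoted directly from \cite{FILLMORE}, so your attempt can only be judged on its own merits. Your first two paragraphs are correct: the reduction of unitary equivalence to the pair of Murray--von Neumann equivalences $P \approx Q$ and $1_{\cal H}-P \approx 1_{\cal H}-Q$, and the sub-lemma that lattice complements satisfy $X \approx 1_{\cal H}-Y$ (via the polar decomposition of $(1_{\cal H}-Y)X$, which remains in ${\cal N}$), are sound, and together they do prove the implication ``perspective $\Rightarrow$ unitarily equivalent''.

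The converse direction, however, has a genuine gap that cannot be repaired along the lines you propose. Your construction of the common complement is blockwise with respect to the decomposition ${\cal H} = A \oplus B \oplus D \oplus F \oplus {\cal H}_1$, and on $B \oplus D$ it requires a partial isometry $V_0 \in {\cal N}$ with $V_0^*V_0 = P_B$ and $V_0V_0^* = P_D$, i.e. $P\wedge Q^\bot \sim P^\bot \wedge Q$. This does not follow from $P \approx Q$ and $1_{\cal H}-P \approx 1_{\cal H}-Q$: Murray--von Neumann equivalence does not cancel across infinite projections, and the needed equivalence is simply false in general. Concretely, take ${\cal N} = {\cal B}({\cal H})$ with ${\cal H} = A \oplus B \oplus D \oplus F$, $\dim A = \dim F = \aleph_0$, $\dim B = 1$, $\dim D = 2$, and $P = P_{A\oplus B}$, $Q = P_{A \oplus D}$. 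Here $P$ and $Q$ commute (the generic part ${\cal H}_1$ is trivial), both have infinite rank and infinite corank, so they are unitarily equivalent; but $P \wedge Q^\bot = P_B$ and $P^\bot \wedge Q = P_D$ have dimensions $1 \neq 2$, so no such $V_0$ exists. The theorem is nevertheless true for this pair, which shows that the common complement must genuinely mix the blocks: it is the closure of a ``tilted graph'' whose $A$-components are unboundedly large relative to its $B$- and $D$-components, and no complement of the block-diagonal form you describe exists at all. This is also why the argument in \cite{FILLMORE} cannot, and does not, rest on the cancellation you invoke; it proceeds instead through comparison theory, using sub-equivalence $\preceq$ of the defect projections after a central splitting (exactly the shape of hypothesis appearing in the lemma the paper does quote, Proposition \ref{FILLLEMMA}), and absorbs the mismatch into the infinite parts rather than matching $B$ with $D$ exactly. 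As written, your third paragraph proves the converse only under the additional, non-automatic hypothesis $P \wedge Q^\bot \sim P^\bot \wedge Q$.
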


The above theorem allows us to see the notion of perspectivity as a lattice order representation for the notion of unitary equivalence between projectors of a von Neumann algebra. \\

Let ${\cal N} \subseteq {\cal B}({\cal H})$ be a von Neumann algebra. The {\it center of ${\cal N}$}  is defined as the set ${\cal Z}({\cal N}) = {\cal N} \cap {\cal N}'$. Note that ${\cal Z}({\cal N})$ is a commutative  von Neumann  sub algebra of ${\cal N}$.  The algebra ${\cal N}$ is called {\it factor} iff its center is trivial, that is, ${\cal Z}({\cal N}) = \{\lambda I:   \lambda \in {\mathbb{C}} \}$. An example of a factor is ${\cal B}({\cal H})$ for any separable Hilbert space ${\cal H}$.

The notion of factor is closely related to the directly decomposability of the lattice ${\cal P}({\cal N})$. Indeed: Let us notice that each von Neumann algebra ${\cal N}$ has an underlying ring structure, more precisely, it is a  Baer$^*$-ring {\rm \cite[Remark 37.15]{MM}}. Then, ${\cal N}$ is a factor iff the central idempotent elements of the underlying ring structure of ${\cal N}$ are $\{0, 1_{\cal H}\}$ which is equivalent to $Z({\cal P}({\cal N})) = {\bf 2}$. Thus, by Proposition \ref{DIOML}-5, we can establish the following result.

\begin{prop}\label{CENTERFACTOR}
Let ${\cal N} \subseteq {\cal B}({\cal H})$ be a von Neumann algebra. Then the following statements are equivalent:

\begin{enumerate}
\item
${\cal N}$ is a factor.

\item
${\cal P}({\cal N})$ is a directly indecomposable lattice i.e. $Z({\cal P}({\cal N})) = {\bf 2}$.

\end{enumerate}
\qed
\end{prop}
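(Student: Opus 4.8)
The plan is to prove the proposition by identifying the lattice-theoretic center $Z({\cal P}({\cal N}))$ with the set of projectors lying in the algebraic center ${\cal Z}({\cal N}) = {\cal N} \cap {\cal N}'$, and then to invoke Proposition \ref{DIOML}-5, which already reduces direct indecomposability of an orthomodular lattice to the triviality of its center. Since statement $(2)$ literally reads $Z({\cal P}({\cal N})) = {\bf 2}$, once the bridge is in place the whole task reduces to proving that ${\cal N}$ is a factor iff the only central projectors are $0$ and $1_{\cal H}$.

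First I would establish the bridge $Z({\cal P}({\cal N})) = {\cal P}({\cal N}) \cap {\cal Z}({\cal N})$. The key observation is that for two projectors $P,Q$ the lattice-commutation relation $PCQ$ (introduced before Proposition \ref{eqcentro}) coincides with operator commutation $PQ = QP$, and that a central element of an orthomodular lattice is precisely one that commutes with every other element, a characterization consistent with the splitting condition of Proposition \ref{eqcentro}-2. Using the fact recorded in the excerpt that ${\cal P}({\cal N})$ generates ${\cal N}$, i.e. ${\cal P}({\cal N})'' = {\cal N}$, a projector commuting with all projectors commutes with all of ${\cal N}$ and hence belongs to ${\cal N} \cap {\cal N}' = {\cal Z}({\cal N})$; the converse inclusion is immediate. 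Thus $Z({\cal P}({\cal N}))$ is exactly the set of central projectors of ${\cal N}$.

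Next I would show that ${\cal N}$ is a factor iff its only central projectors are $0$ and $1_{\cal H}$. One direction is trivial: if ${\cal Z}({\cal N}) = \{\lambda 1_{\cal H} : \lambda \in \mathbb{C}\}$, then a self-adjoint idempotent scalar must satisfy $\lambda^2 = \lambda$, forcing $\lambda \in \{0,1\}$. For the converse I would use that ${\cal Z}({\cal N})$ is itself a commutative von Neumann algebra and appeal to the spectral theorem: every self-adjoint element of ${\cal Z}({\cal N})$ is a norm-limit of real linear combinations of its spectral projectors, which again lie in ${\cal Z}({\cal N})$; if those projectors are only $0$ and $1_{\cal H}$, then every self-adjoint element, and hence the whole algebra, collapses to $\mathbb{C}\,1_{\cal H}$, so ${\cal N}$ is a factor. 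Combining this equivalence with the bridge and with Proposition \ref{DIOML}-5 closes the argument.

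The main obstacle I expect is the converse direction of the factor characterization, since it is the only step that cannot be settled by the lattice identities alone: it genuinely requires the spectral representation of the self-adjoint elements of a commutative von Neumann algebra together with the fact that the relevant spectral projectors lie in ${\cal Z}({\cal N})$. Equivalently, one may route this step through the Baer$^*$-ring structure mentioned in the excerpt, reducing it to the statement that a factor is characterized by having only the trivial central idempotents $\{0,1_{\cal H}\}$; in either formulation this is the substantive operator-algebraic input, whereas the remaining steps are routine translations between lattice-commutation and operator-commutation.
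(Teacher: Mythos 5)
Your proof is correct, and its skeleton coincides with the paper's: the paper likewise reduces the statement to ``${\cal N}$ is a factor iff its central idempotents are $\{0,1_{\cal H}\}$, which is equivalent to $Z({\cal P}({\cal N}))={\bf 2}$'' and then invokes Proposition \ref{DIOML}-5. The difference lies in how that middle equivalence is justified. The paper disposes of it in one sentence by viewing ${\cal N}$ as a Baer$^*$-ring and citing \cite[Remark 37.15]{MM}, whereas you prove both halves directly: the bridge $Z({\cal P}({\cal N}))={\cal P}({\cal N})\cap{\cal Z}({\cal N})$ via the equivalence of lattice-commutation and operator-commutation for projectors together with ${\cal P}({\cal N})''={\cal N}$ (so a projector commuting with all projectors lies in ${\cal P}({\cal N})'={\cal N}'$, hence in ${\cal Z}({\cal N})$), and the characterization ``factor iff only trivial central projectors'' via the spectral theorem applied inside the commutative von Neumann algebra ${\cal Z}({\cal N})$. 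What your version buys is self-containedness at the level of standard operator theory (bicommutant plus spectral calculus) in place of a citation to Baer$^*$-ring theory; what the paper's version buys is brevity and alignment with the ring-theoretic framework it reuses elsewhere. You also correctly identified the spectral-theorem step as the only substantive operator-algebraic input --- it is exactly the content hidden inside the paper's citation, and your closing remark that one may instead route it through central idempotents of the Baer$^*$-ring structure is precisely the paper's own route.
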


Let ${\cal N} \subseteq {\cal B}({\cal H})$ be a von Neumann algebra. Two projectors $P,Q \in {\cal P}({\cal N})$ are said to be  {\it Murray-von Neumann equivalent} iff there exists a partial isometry $W \in {\cal N}$ such that $WW^* = P$ and  $W^*W = Q$. If  $P$ and $Q$ are  Murray-von Neumann equivalent then we write $P \sim Q$.

\begin{prop}\label{PERSPMURRAY}
Let ${\cal N} \subseteq {\cal B}({\cal H})$ be a von Neumann algebra and $P,Q \in {\cal P}({\cal N})$. If $P\sim_p Q$ then $P\sim Q$.

\end{prop}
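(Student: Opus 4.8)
The plan is to reduce Murray--von Neumann equivalence to unitary equivalence by means of Fillmore's theorem (Theorem \ref{EQPERSPECUNIT}), and then to exhibit an explicit partial isometry witnessing $P \sim Q$. Since $P \sim_p Q$ holds in the lattice ${\cal P}({\cal N})$, Theorem \ref{EQPERSPECUNIT} immediately furnishes a unitary operator $U \in {\cal N}$ with $Q = U P U^*$. The remaining task is to manufacture, out of $U$ and the projector $P$, a partial isometry $W \in {\cal N}$ such that $WW^* = P$ and $W^*W = Q$.

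First I would set $W = P U^*$. This operator belongs to ${\cal N}$ because ${\cal N}$ is a $*$-subalgebra, hence closed under products and adjoints, and both $P$ and $U$ lie in ${\cal N}$. Using $P = P^* = P^2$ together with the unitarity relations $U^* U = U U^* = 1_{\cal H}$, a direct computation gives $W W^* = P U^* U P = P$ and $W^* W = U P P U^* = U P U^* = Q$, so that $W$ realizes exactly the prescribed final projection $P$ and initial projection $Q$.

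Finally, to confirm that $W$ is genuinely a partial isometry, I would verify the identity $W W^* W = W$ and invoke the equivalence in Theorem \ref{EQISOM}: since $W W^* = P$ we get $W W^* W = P\,(P U^*) = P U^* = W$, whence $W$ is a partial isometry by item~4 of that theorem. Combined with the two computations above, this exhibits a partial isometry $W \in {\cal N}$ with $WW^* = P$ and $W^*W = Q$, which is precisely the definition of $P \sim Q$.

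The argument is essentially bookkeeping once Theorem \ref{EQPERSPECUNIT} is available; the only point demanding care is the orientation of the construction, namely choosing $W = P U^*$ rather than some other product of $P$ and $U^{\pm 1}$, so that the final projection $W W^*$ lands on $P$ and the initial projection $W^* W$ lands on $Q$ as the definition of $\sim$ requires. No substantive obstacle arises, since the entire content of passing from the order-theoretic relation $\sim_p$ to the operator-theoretic relation $\sim$ is already packaged in Fillmore's theorem.
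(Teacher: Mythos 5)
Your proof is correct and follows essentially the same route as the paper: both invoke Theorem \ref{EQPERSPECUNIT} to obtain a unitary $U$ with $Q=UPU^*$ and then check, via Theorem \ref{EQISOM}-4, that a product of $P$ and $U$ is a partial isometry implementing the equivalence. The paper takes $W=UP$ (the adjoint of your $W=PU^*$), so the two witnesses differ only by orientation; your choice in fact matches the stated definition of $P\sim Q$ literally, with $WW^*=P$ and $W^*W=Q$.
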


\begin{proof}
Let us suppose that $P\sim_p Q$. By Proposition \ref{EQPERSPECUNIT} there exists a unitary operator $U \in {\cal N} $ such that $Q = UPU^*$.
If we define $W = UP$ then $WW^*W = UP(UP)^*UP = UPP^*U^*UP = UPP(U^*U)P = UPP1_{{\cal H}}P = UP =W$. Thus, by Proposition \ref{EQISOM}-4, $W$ is a partial isometry. We also note that
$W^* = (UP)^* = P U^* = (U^*QU)U^* = U^*Q$. Hence $WW^* = Q$, $W^*W = P$ and $P\sim Q$.

\qed
\end{proof}

The relation $\sim$ defines an equivalence on ${\cal P}({\cal N})$  and the description of the quotient ${\cal P}({\cal N}) /_{\sim}$ is known as the {\it dimension theory} for ${\cal N}$.  We write $P \preceq Q$, and say that $P$ is {\it Murray-von Neumann sub-equivalent} to $Q$ if there exists a partial isometry $W \in {\cal N}$ such that $WW^* = P$ and  $W^*W \leq Q$. The projector $P$ is said to be {\it finite} iff, whenever  $P \sim Q$ and $Q \preceq P$ then $P=Q$. Otherwise $P$ is said to be {\it infinite}.

If ${\cal N} = {\cal B}({\cal H})$ then it is immediate to see that two projections are equivalent iff their image have the same dimension. Thus, the ordering in ${\cal P}({\cal H})/_{\sim}$ is isomorphic to $\{0,1,\ldots,n \}$ if $dim({\cal H}) = n$ and isomorphic to the ordinal number ${\mathbb{N}} \cup \{\infty\}$ if ${\cal H}$ is a infinite separable dimensional Hilbert space. Hence, the idea of equivalence of projectors via partial isometries represents an abstract notion of dimension for an arbitrary factor an the first result confirming this is the following theorem whose proof can be found in \cite{MV}.

\begin{prop} \label{TOTPRECEC}
Let  ${\cal N} \subseteq {\cal B}({\cal H})$ be a factor. Then $\langle {\cal P}({\cal N}) /_{\sim}, \preceq \rangle$ is a totally ordered set.
\qed
\end{prop}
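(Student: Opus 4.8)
The plan is to show that $\preceq$ induces a partial order on the quotient $\mathcal{P}(\mathcal{N})/_{\sim}$ and then that any two classes are comparable, the latter being the real substance of the statement. First I would check that $\preceq$ is compatible with $\sim$: if $P \sim P'$ via a partial isometry $V$ and $Q \sim Q'$ via $U$, and $W$ witnesses $P \preceq Q$ (so $WW^* = P$ and $W^*W \leq Q$), then the composite $U^* W V$ yields, after adjusting initial and final projections by Theorem \ref{EQISOM}, a partial isometry witnessing $P' \preceq Q'$. Reflexivity is clear and transitivity follows by composing partial isometries. The delicate point for the order structure is antisymmetry modulo $\sim$, namely that $P \preceq Q$ and $Q \preceq P$ force $P \sim Q$. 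I would establish this by the Cantor--Schr\"oder--Bernstein argument for projections: iterate the two sub-equivalences to produce descending chains of projections, take the telescoping sums of the successive differences, and assemble from them a single partial isometry implementing $P \sim Q$.

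The core is the comparison theorem. Fix $P, Q \in \mathcal{P}(\mathcal{N})$ and consider families $\{W_i\}$ of partial isometries in $\mathcal{N}$ whose final projections $W_i W_i^*$ are mutually orthogonal and below $P$ and whose initial projections $W_i^* W_i$ are mutually orthogonal and below $Q$, ordered by inclusion. By Zorn's lemma I would pick a maximal such family. Setting $P_0 = \bigvee_i W_i W_i^* \leq P$ and $Q_0 = \bigvee_i W_i^* W_i \leq Q$, the strongly convergent sum $W = \sum_i W_i$ is then a partial isometry with $WW^* = P_0$ and $W^*W = Q_0$ by Theorem \ref{EQISOM}, so that $P_0 \sim Q_0$.

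Finally I would exploit that $\mathcal{N}$ is a factor. By maximality of the family, no nonzero subprojection of $P - P_0$ is Murray--von Neumann equivalent to a subprojection of $Q - Q_0$. The key lemma is that two projections admitting no nonzero equivalent subprojections have orthogonal central covers, $e(P - P_0) \land e(Q - Q_0) = 0$ in the sense of Eq.(\ref{CENTRALCOVER}). Since $\mathcal{N}$ is a factor, Proposition \ref{CENTERFACTOR} gives $Z(\mathcal{P}(\mathcal{N})) = \{0, 1_{\cal H}\}$, whence the central cover of any nonzero projection equals $1_{\cal H}$; orthogonality of the two covers therefore forces $P - P_0 = 0$ or $Q - Q_0 = 0$. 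In the first case $P = P_0 \sim Q_0 \leq Q$, so $P \preceq Q$; in the second $Q \preceq P$. Thus $[P]$ and $[Q]$ are comparable, and combined with the order axioms this shows $\langle \mathcal{P}(\mathcal{N})/_{\sim}, \preceq \rangle$ is totally ordered. The main obstacle I anticipate is the central-cover lemma, together with the verification that the strong-operator sum $\sum_i W_i$ is a genuine partial isometry with the stated initial and final projections; the antisymmetry (Cantor--Schr\"oder--Bernstein) step is routine but lengthy.
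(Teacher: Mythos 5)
Your proposal is correct: it is the standard comparison-theorem argument (maximal family of partial isometries via Zorn's lemma, strong sum, then the orthogonal-central-covers lemma combined with triviality of the center from Proposition \ref{CENTERFACTOR}), which is essentially the proof in Murray--von Neumann \cite{MV} that the paper cites instead of proving the statement in-text. The auxiliary points you flag (well-definedness of $\preceq$ on the quotient, the Schr\"oder--Bernstein step for antisymmetry, and the fact that a strongly convergent sum of partial isometries with mutually orthogonal initial and final projections is again a partial isometry) are all standard and fill in exactly what the citation leaves implicit.
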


Let us notice that the order $\preceq$ is defined through the ring structure of a factor. Thus, two factors can not be isomorphic if the orderings of the corresponding quotient $/_{\sim}$ are different.

\begin{prop}\label{FILLLEMMA}{\rm \cite[Lemma 4]{FILLMORE}}
Let ${\cal N} \subseteq {\cal B}({\cal H})$ be a factor and $P,Q, R \in {\cal P}({\cal N})$ such that $Q\sim P \leq Q$ and $P \preceq R \leq Q^\bot$. Then $P$ and $Q$ are perspective in $[0, Q\lor R]_{{\cal P}({\cal N})}$.
\qed
\end{prop}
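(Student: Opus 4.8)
The plan is to reduce the assertion to a statement about unitary equivalence inside a reduced von Neumann algebra and then invoke Theorem \ref{EQPERSPECUNIT}. Set $e = Q\lor R$. Since $R\leq Q^\bot$ we have $Q\bot R$, so by Proposition \ref{SUMPROJ} the join is the orthogonal sum $e = Q + R$; consequently the interval orthomodular lattice $[0,e]_{{\cal P}({\cal N})}$ coincides with the projector lattice ${\cal P}(e{\cal N}e)$ of the corner von Neumann algebra $e{\cal N}e$, the relative orthocomplement $\neg_e x = \neg x\land e$ being exactly the orthocomplement $e-x$ computed in $e{\cal N}e$. First I would record the two ``sizes'' that control perspectivity in this interval: $e-Q = R$ and, writing $S = Q\land\neg P = Q-P$ (a projector because $P\leq Q$), $e-P = S\oplus R$. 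By Theorem \ref{EQPERSPECUNIT} read in the von Neumann algebra $e{\cal N}e$, it suffices to produce a unitary $W\in e{\cal N}e$ (that is, $W^*W = WW^* = e$) with $WPW^* = Q$.

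Such a $W$ exists as soon as one has both $P\sim Q$ and $e-P\sim e-Q$ realised by partial isometries supported under $e$. The equivalence $P\sim Q$ is given by hypothesis; its implementing partial isometry $U_1$ (with $U_1^*U_1 = P$, $U_1U_1^* = Q$) satisfies $U_1 = U_1 P$ and $U_1 = QU_1$ by Theorem \ref{EQISOM}, whence $eU_1e = U_1$ and $U_1\in e{\cal N}e$. The real content is the complementary equivalence $e-P\sim e-Q$, that is
\begin{equation}
(Q\land\neg P)\oplus R \;\sim\; R .
\end{equation}
To prove it I would exploit the identity $Q = P\oplus S \sim P$ furnished by the hypotheses $Q\sim P$ and $P\leq Q$, together with $P\preceq R$: the latter yields $P_0\leq R$ with $P\sim P_0$, so $R = P_0\oplus R_1$ where $R_1 = R-P_0$. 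Since $P,R_1,S$ are pairwise orthogonal (all of $S$ lies under $Q$, all of $P_0,R_1$ under $R\leq Q^\bot$), additivity of Murray--von Neumann equivalence over orthogonal sums (got by adding implementing partial isometries, whose cross terms vanish by Theorem \ref{EQISOM}) gives the chain $R\oplus S \sim (P\oplus R_1)\oplus S = (P\oplus S)\oplus R_1 \sim P\oplus R_1 \sim R$, the middle step being precisely $P\oplus S = Q\sim P$. The implementing $U_2$ (with $U_2^*U_2 = e-P$, $U_2U_2^* = e-Q$) then again lies in $e{\cal N}e$.

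Finally I would assemble the unitary by setting $W = U_1 + U_2$. Using $U_1 = QU_1P$ and $U_2 = (e-Q)U_2(e-P)$ (both consequences of Theorem \ref{EQISOM}), the cross terms vanish because $Q(e-Q)=0=P(e-P)$, yielding $W^*W = P + (e-P) = e = Q + (e-Q) = WW^*$ together with $WP = U_1$ and hence $WPW^* = U_1U_1^* = Q$. Theorem \ref{EQPERSPECUNIT}, applied to $e{\cal N}e$ whose lattice is $[0,Q\lor R]$, then delivers $P\sim_p Q$ in $[0,Q\lor R]$. I expect the main obstacle to be the absorption step $(Q\land\neg P)\oplus R\sim R$: this is where the hypotheses genuinely interact, $Q\sim P\leq Q$ being used to ``reabsorb'' the surplus $S$ into the room that $P\preceq R$ guarantees inside $R$, and it is also the point at which one must verify that every equivalence is witnessed by a partial isometry supported under $e$, so that the resulting $W$ is a unitary of the corner $e{\cal N}e$ and not merely of ${\cal N}$.
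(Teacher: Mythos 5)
Your argument is correct, but note that the paper itself contains no proof of this proposition: it is stated with a citation to Fillmore's Lemma 4 and closed immediately, so what you have produced is the missing argument rather than an alternative to one. Your route --- reduce perspectivity in $[0,Q\lor R]$ to unitary equivalence in the reduced algebra $e{\cal N}e$ with $e=Q+R$ (using Proposition \ref{SUMPROJ} to see $Q\lor R=Q+R$), and then build the unitary from the two equivalences $P\sim Q$ and $e-P\sim e-Q$ --- is sound, and it is very much in the spirit of Fillmore's own method, since his Theorem 1 (the paper's Theorem \ref{EQPERSPECUNIT}) is precisely the bridge between unitary equivalence and perspectivity. The genuine content is the absorption $(Q\land\neg P)\oplus R\sim R$, and your chain is correct: all the summands involved are pairwise orthogonal ($S=Q\land\neg P\leq Q$ while $P_0,R_1\leq R\leq Q^\bot$, and $S\leq\neg P$), each link uses only additivity of $\sim$ over orthogonal sums together with $Q=P\oplus S\sim P$ and $P\sim P_0\leq R$, and the cross-term computations for $W=U_1+U_2$ via $U_1=QU_1P$, $U_2=(e-Q)U_2(e-P)$ (consequences of Theorem \ref{EQISOM}) do give $W^*W=WW^*=e$ and $WPW^*=Q$. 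Two ingredients are external to the paper and should be flagged or proved: (i) that $e{\cal N}e$, viewed as acting on the Hilbert space $e{\cal H}$, is again a von Neumann algebra whose projection lattice, with relative orthocomplement $\neg_e x=\neg x\land e=e-x$, is exactly $[0,e]_{{\cal P}({\cal N})}$ --- this is needed because Theorem \ref{EQPERSPECUNIT} as stated applies to a von Neumann algebra with unit acting on a Hilbert space, and the unit of $e{\cal N}e$ is $e\neq 1_{\cal H}$; (ii) transitivity and finite additivity of Murray--von Neumann equivalence, which you sketch adequately. Both are standard facts about reduced algebras and equivalence of projections. Finally, observe that your proof never uses that ${\cal N}$ is a factor, so you have actually established the statement for arbitrary von Neumann algebras, consistent with Fillmore's original formulation.
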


The Murray-von Neumann classification of factors consists in determining the order type of ${\cal P}({\cal N}) /_{\sim}$ for a factor ${\cal N}$.
In order to do this, the following notion of dimension function introduced in {\rm \cite[Definition 8.2.1]{MV}}  plays a crucial role.

\begin{definition}\label{DIMFUNCTION}
{\rm Let  ${\cal N} \subseteq {\cal B}({\cal H})$ be a von Neumann algebra. A function of the form $D: {\cal P}({\cal N}) \rightarrow [0, \infty]$ is called {\it dimension function} if it satisfies the following
requirements:

\begin{enumerate}
\item
$D(P) = 0$ iff $P=0$,

\item
$P \sim Q$ iff $D(P) = D(Q)$,

\item
if $P \bot Q$ then $D(P + Q) = D(P) + D(Q)$.

\end{enumerate}
}
\end{definition}

\begin{theo}\label{TYPECLASSIFICATION} {\rm \cite[Theorem VII]{MV}}
Let ${\cal N} \subseteq {\cal B}({\cal H})$ be a factor. Then there exists a dimension function $D: {\cal P}({\cal N}) \rightarrow [0, \infty]$ uniquely determined up to positive constant multiple. Further, $Imag(D)$ falls into exactly one of four possible cases, depending on which of the following sets is the range of some scaling of $D$.

\begin{itemize}

\item[] Type ${\bf I}_n${\rm :} $Imag(D) = \{0,1 \ldots n\}$.

\item[] Type ${\bf I}_\infty${\rm :} $Imag(D) = \{0,1 \ldots _\infty \}$.

\item[] Type ${\bf II}_1${\rm :} $Imag(D) = [0,1]$.

\item[] Type ${\bf II}_\infty${\rm :} $Imag(D) = [0,\infty]$.

\item[] Type ${\bf III}${\rm :} $Imag(D) = \{0,\infty \}$.
\end{itemize}

\qed
\end{theo}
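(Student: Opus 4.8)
The plan is to construct the dimension function $D$ from the comparison theory already available for a factor and then to read off its range from the structure of the minimal projections of ${\cal P}({\cal N})$. The backbone is Proposition \ref{TOTPRECEC}, which guarantees that $\langle {\cal P}({\cal N})/_{\sim}, \preceq \rangle$ is totally ordered, together with the additivity of $\sim$ over orthogonal sums: if $P_1\bot P_2$, $Q_1\bot Q_2$, $P_1\sim Q_1$ and $P_2\sim Q_2$, then $P_1+P_2\sim Q_1+Q_2$, obtained by adding the implementing partial isometries and invoking Proposition \ref{SUMPROJ} to see that the relevant sums are again projectors. These two facts let us measure the ``size'' of a projection consistently. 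The governing dichotomy is whether ${\cal P}({\cal N})$ possesses a nonzero minimal projection.

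First I would treat the atomic case. If $e$ is a nonzero minimal projection, then for any other minimal projection $f$ the total order forces $e\preceq f$ or $f\preceq e$, and minimality upgrades the sub-equivalence to $e\sim f$; hence all minimal projections are mutually equivalent. Choosing a maximal orthogonal family of minimal projections below a given $P$ and using additivity, I would define $D(P)$ to be the cardinality of such a family, normalized so that a minimal projection has $D=1$. Maximality together with comparability shows the count is independent of the chosen family, and additivity over orthogonal sums is then automatic. The range is $\{0,1,\dots,n\}$ when $1_{\cal H}$ is a finite orthogonal sum of minimal projections (type $\mathbf{I}_n$) and $\{0,1,\dots,\infty\}$ when it is a countably infinite such sum (type $\mathbf{I}_\infty$).

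Next I would treat the non-atomic case, where no nonzero minimal projection exists. Here the crucial technical step is a halving lemma: every nonzero projection $P$ contains two orthogonal, nonzero, mutually equivalent subprojections $P_1\sim P_2$ with $P_1+P_2\leq P$. Granting this, iteration yields for each $n$ a decomposition of $1_{\cal H}$ into $2^n$ mutually equivalent orthogonal pieces; assigning each piece the value $2^{-n}$ defines $D$ on a dense set of dyadic values, and the total order of Proposition \ref{TOTPRECEC} lets me extend $D$ monotonically so that $P\preceq Q$ iff $D(P)\leq D(Q)$. The range is exactly $[0,1]$ when $1_{\cal H}$ is finite (type $\mathbf{II}_1$), $[0,\infty]$ when $1_{\cal H}$ is infinite but some nonzero projection is finite (type $\mathbf{II}_\infty$), and, when every nonzero projection is infinite, comparability and the very definition of an infinite projection force every nonzero projection to be equivalent to $1_{\cal H}$, so that $D$ takes only the values $0$ and $\infty$ (type $\mathbf{III}$).

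Finally, uniqueness up to a positive multiple is comparatively soft: condition $2$ of Definition \ref{DIMFUNCTION} forces any dimension function to be constant on $\sim$-classes, additivity pins down its values on the atomic or dyadic pieces once its value on a single reference projection is fixed, and monotonicity in the total order propagates this to all of ${\cal P}({\cal N})$. The main obstacle is the non-atomic construction: proving the halving lemma and, above all, verifying that the dyadic values are order-dense and that the extended $D$ is genuinely additive (indeed countably additive), so that the range fills the whole interval rather than a proper subset. This is precisely where the finiteness/infiniteness analysis of $1_{\cal H}$ and the exhaustion arguments over maximal orthogonal families carry the real weight.
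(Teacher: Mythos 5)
First, a point of comparison: the paper does not prove this theorem at all — it is quoted as a black box from Murray and von Neumann \cite[Theorem VII]{MV} — so your attempt can only be measured against the classical argument. Your roadmap (atomic versus non-atomic dichotomy, comparison via Proposition \ref{TOTPRECEC}, halving, uniqueness up to a scalar) is indeed the classical one, but as written it contains a genuine error in the case that matters most. Assigning the value $2^{-n}$ to the pieces of a $2^n$-fold halving of $1_{\cal H}$ is legitimate only when $1_{\cal H}$ is finite. If ${\cal N}$ is of type ${\bf II}_\infty$ or ${\bf III}$, each half of $1_{\cal H}$ is again infinite, and conditions 1--3 of Definition \ref{DIMFUNCTION} force $D(P)=\infty$ for every infinite projection $P$: if $P\sim Q\leq P$ with $Q\not=P$, then $D(P)=D(Q)$ and $D(P)=D(Q)+D(P\land Q^\bot)$ with $D(P\land Q^\bot)>0$, so $D(P)=\infty$. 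Hence your dyadic scale contradicts the axioms precisely in the ${\bf II}_\infty$ case for which it is supposed to produce the range $[0,\infty]$. The classical construction instead anchors the dyadic scale at a fixed \emph{finite} nonzero projection, builds $D$ inside it, and reaches arbitrary projections through maximal orthogonal families of equivalent finite pieces together with an exhaustion/countable-additivity argument; that is the real content of the theorem, and it is exactly what your sketch defers.

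Beyond this, the remaining load-bearing steps are asserted rather than proved: the halving lemma itself; the well-definedness of the cardinality count in the atomic case (that any two maximal orthogonal families of minimal projections under $P$ have the same cardinality); and, for type ${\bf III}$, the claim that ``comparability and the very definition of an infinite projection'' force every nonzero projection to be equivalent to $1_{\cal H}$. That last claim is precisely the Borchers condition of Proposition \ref{RELATED1}, which the paper also treats as a separate nontrivial cited result; it requires an absorption argument (decomposing an infinite projection into countably many mutually equivalent orthogonal pieces), not just the definition of infiniteness. Uniqueness likewise needs the comparison machinery to show that any dimension function is monotone with respect to $\preceq$ before values can be propagated from a reference projection. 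As it stands, your proposal is a correct table of contents for the Murray--von Neumann proof, not a proof.
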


The above theorem implies that if  ${\cal N} \subseteq {\cal B}({\cal H})$ is a factor then $\langle {\cal P}({\cal N}) /_{\sim}, \preceq \rangle$ is order isomorphic to the image of the uniquely determined dimension function $D: {\cal P}({\cal N}) \rightarrow [0, \infty]$.  In this way all von Neumann factor were found to belong to the classes type ${\bf I}$ or type ${\bf II}$ or type ${\bf III}$. The following proposition, sometime referred as Borchers condition \cite{BRO, DOP}, provides a crucial characterization of type III factors.

\begin{prop}\label{RELATED1}
Let ${\cal N} \subseteq {\cal B}({\cal H})$ be a factor such that ${\cal N} \not = {\cal Z}({\cal N})$ and $D: {\cal P}({\cal N}) \rightarrow [0, \infty]$ be the dimension function. Then the following statement are equivalent,

\begin{enumerate}
\item
${\cal N}$ is a type III factor.

\item
For all $P_1, P_2 \in {\cal P}({\cal N}) - \{0\}$, $P_1 \sim P_2$.

\item
For each $P \in  {\cal P}({\cal N}) -\{0\}$, $P \sim 1_{\cal H}$. \hspace*{\fill} {{\rm ({\it Borchers condition})} }

\end{enumerate}
\qed
\end{prop}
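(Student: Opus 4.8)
The plan is to prove the cycle of implications $(1) \Rightarrow (2) \Rightarrow (3) \Rightarrow (1)$, relying throughout on the defining properties of the dimension function in Definition \ref{DIMFUNCTION} together with the range classification of Theorem \ref{TYPECLASSIFICATION}. For $(1) \Rightarrow (2)$ I would start from the type III hypothesis, i.e. $Imag(D) = \{0, \infty\}$. If $P_1, P_2$ are nonzero then property 1 of the dimension function gives $D(P_1) \neq 0$ and $D(P_2) \neq 0$, and since $\infty$ is the unique nonzero element of the range we must have $D(P_1) = \infty = D(P_2)$; property 2 then yields $P_1 \sim P_2$. The step $(2) \Rightarrow (3)$ is immediate, since $1_{\cal H}$ is itself a nonzero projector, so that applying (2) to the pair $(P, 1_{\cal H})$ for arbitrary nonzero $P$ gives $P \sim 1_{\cal H}$.

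The substantive direction is $(3) \Rightarrow (1)$. From (3) and property 2 of $D$ I would first deduce that all nonzero projectors share the common value $D(P) = D(1_{\cal H})$, so that $Imag(D) = \{0, D(1_{\cal H})\}$; since $1_{\cal H} \neq 0$, property 1 forces $D(1_{\cal H}) \neq 0$ and hence this range has exactly two elements. Appealing to Theorem \ref{TYPECLASSIFICATION}, among the five admissible ranges the only two consisting of precisely two elements are $\{0,1\}$ (type ${\bf I}_1$) and $\{0,\infty\}$ (type ${\bf III}$); the cases ${\bf I}_n$ with $n \geq 2$, ${\bf I}_\infty$, ${\bf II}_1$ and ${\bf II}_\infty$ all yield at least three values. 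It then remains to exclude type ${\bf I}_1$, and this is exactly where the side hypothesis ${\cal N} \neq {\cal Z}({\cal N})$ enters: a type ${\bf I}_1$ factor is commutative, hence equals its own center, contradicting ${\cal N} \neq {\cal Z}({\cal N})$. Therefore $D(1_{\cal H}) = \infty$, so $Imag(D) = \{0,\infty\}$ and ${\cal N}$ is a type III factor.

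I expect the main obstacle to be precisely the correct deployment of the hypothesis ${\cal N} \neq {\cal Z}({\cal N})$ in that last step. Without it the degenerate two-element case ${\bf I}_1$ also satisfies conditions (2) and (3) vacuously, since there the only nonzero projector is $1_{\cal H}$ itself, yet ${\bf I}_1$ is not type III; so the equivalence would genuinely break down. Recognizing that ${\cal N} \neq {\cal Z}({\cal N})$ is equivalent to ruling out the type ${\bf I}_1$ factor is the one conceptual point, while the remaining arguments amount to routine bookkeeping with the three axioms for $D$.
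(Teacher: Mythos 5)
The paper itself gives no proof of Proposition \ref{RELATED1}: it is recorded as a known result (the Borchers condition) with citations to the literature, so your argument cannot be compared against an internal proof and must be judged on its own terms. On those terms it is correct, and it has the virtue of being derivable almost entirely from the paper's stated apparatus: $(1)\Rightarrow(2)$ and $(2)\Rightarrow(3)$ are exactly the bookkeeping you describe with conditions 1 and 2 of Definition \ref{DIMFUNCTION}, and $(3)\Rightarrow(1)$ correctly reduces, via Theorem \ref{TYPECLASSIFICATION}, to the observation that $Imag(D)=\{0,D(1_{\cal H})\}$ is a two-element set, and that among the admissible ranges only the type ${\bf I}_1$ range $\{0,1\}$ (up to scaling) and the type ${\bf III}$ range $\{0,\infty\}$ have two elements. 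Your reading of the side hypothesis ${\cal N}\neq{\cal Z}({\cal N})$ as existing precisely to exclude the degenerate type ${\bf I}_1$ case, in which (2) and (3) hold trivially because $1_{\cal H}$ is the only nonzero projector, is the right conceptual point.

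The one step you assert without justification is that a type ${\bf I}_1$ factor is commutative. This is a standard fact, but it can be closed using only tools already present in the paper, which makes the whole proof self-contained. In the putative type ${\bf I}_1$ case one has $D(1_{\cal H})<\infty$, and for any nonzero $P\in{\cal P}({\cal N})$ condition (3) together with condition 2 of Definition \ref{DIMFUNCTION} gives $D(P)=D(1_{\cal H})$, while condition 3 of Definition \ref{DIMFUNCTION} (applied as in Proposition \ref{INFINITEDIM}) gives
$$D(1_{\cal H}) = D(P) + D(P^\bot) = D(1_{\cal H}) + D(P^\bot),$$
so that $D(P^\bot)=0$ by finiteness of $D(1_{\cal H})$, hence $P^\bot=0$ and $P=1_{\cal H}$. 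Thus ${\cal P}({\cal N})=\{0,1_{\cal H}\}$, and since ${\cal P}({\cal N})''={\cal N}$ (as recalled in Section \ref{VNlatticesDimension}) we obtain ${\cal N}=\{0,1_{\cal H}\}''={\mathbb{C}}1_{\cal H}={\cal Z}({\cal N})$, contradicting the hypothesis directly. With that micro-step filled in, your cycle of implications is complete and entirely internal to the paper's framework.
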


\begin{prop} \label{INFINITEDIM}
Let ${\cal N} \subseteq {\cal B}({\cal H})$ be a factor and $P_X,P_Y \in {\cal P}({\cal N}) - \{0, 1_{\cal H}\}$ such that $ X \subseteq Y$. Then

\begin{enumerate}

\item
$D(P_Y) = D(P_X) + D(P_{X^\bot \cap Y})$.

\item
If $P_X \sim_p P_Y$ and $X \not= Y$ then $D(P_X) = D(P_Y) = D(1_{\cal H}) = \infty$.

\end{enumerate}

\end{prop}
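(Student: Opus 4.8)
The plan is to reduce both items to the additivity axiom of the dimension function, Definition~\ref{DIMFUNCTION}-3, by exhibiting $Y$ as an orthogonal direct sum built from $X$.

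For item 1, I would first note that $X^\bot \cap Y \subseteq X^\bot$, so that $X \bot (X^\bot \cap Y)$. Regarding $X$ as a closed subspace of the Hilbert space $Y$ and applying Proposition~\ref{closed1}-2 inside $Y$, the orthogonal complement of $X$ relative to $Y$ is precisely $X^\bot \cap Y$; hence $Y = X \oplus (X^\bot \cap Y)$. Since the two summands are orthogonal, Proposition~\ref{closed1}-3 identifies their algebraic sum with the generated closed subspace, giving $Y = X \lor (X^\bot \cap Y)$. The projector $P_{X^\bot \cap Y}$ lies in the von Neumann lattice ${\cal P}({\cal N})$, being the meet of $P_{X^\bot} = P_X^\bot$ and $P_Y$; and because $X \bot (X^\bot \cap Y)$, Proposition~\ref{SUMPROJ} yields $P_Y = P_X + P_{X^\bot \cap Y}$. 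Feeding this orthogonal sum into Definition~\ref{DIMFUNCTION}-3 produces $D(P_Y) = D(P_X) + D(P_{X^\bot \cap Y})$.

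For item 2, I would start from the hypothesis $P_X \sim_p P_Y$, which by Proposition~\ref{PERSPMURRAY} gives $P_X \sim P_Y$, hence $D(P_X) = D(P_Y)$ by Definition~\ref{DIMFUNCTION}-2. As $X \subsetneq Y$, the decomposition $Y = X \oplus (X^\bot \cap Y)$ obtained above forces $X^\bot \cap Y \neq \{0\}$, so $D(P_{X^\bot \cap Y}) > 0$ by Definition~\ref{DIMFUNCTION}-1. Substituting into the identity of item 1 and arguing by contradiction settles the finiteness: if $D(P_X)$ were finite, cancelling it in $D(P_X) = D(P_X) + D(P_{X^\bot \cap Y})$ would force $D(P_{X^\bot \cap Y}) = 0$, which is impossible. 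Thus $D(P_X) = D(P_Y) = \infty$. Finally, writing $1_{\cal H} = P_Y + P_{Y^\bot}$ via Proposition~\ref{closed1}-2 and Proposition~\ref{SUMPROJ}, additivity gives $D(1_{\cal H}) = D(P_Y) + D(P_{Y^\bot}) \geq D(P_Y) = \infty$, so $D(1_{\cal H}) = \infty$.

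I expect the main obstacle to be twofold. The geometric identity $Y = X \lor (X^\bot \cap Y)$ in item 1, though intuitively transparent, is the linchpin that transports the subspace decomposition into projector additivity and must be justified carefully against the lattice operations of ${\cal P}({\cal N})$. The other delicate point is the $\infty$-arithmetic in item 2: since $D$ takes values in $[0,\infty]$ one cannot subtract $D(P_X)$ blindly, so the finiteness of $D(P_X)$ must be refuted by contradiction rather than by direct cancellation.
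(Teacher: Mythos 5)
Your proof is correct and follows essentially the same route as the paper: the orthogonal decomposition $Y = X \lor (X^\bot \cap Y)$, Proposition~\ref{SUMPROJ}, additivity of $D$, Proposition~\ref{PERSPMURRAY}, and a contradiction against the finiteness of $D(P_Y)$ (with the non-triviality of $X^\bot \cap Y$ coming from $X \subsetneq Y$, exactly as in the paper's complement argument in $[0,Y]_{{\cal P}({\cal N})}$). The only divergence is cosmetic: the paper obtains $Y = X \lor (X^\bot \cap Y)$ in one stroke from the orthomodular law, whereas you derive it from the Hilbert-space decomposition inside $Y$ via Proposition~\ref{closed1}; both justifications are valid.
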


\begin{proof}
1) By the orthomodular law $Y = X \lor (X^\bot \cap Y)$ and $X \bot (X^\bot \cap Y)$. Thus, by Proposition \ref{SUMPROJ}, $P_Y = P_{X \lor (X^\bot \cap Y)} = P_X + P_{X^\bot \land Y}$. Hence, by condition 3 in Definition \ref{DIMFUNCTION}, we have that $D(P_Y) = D(P_X) + D(P_{X^\bot \cap Y})$.

2) By Proposition \ref{PERSPMURRAY} we have that $D(P_X) = D(P_Y)$ because $X \sim_p Y$. Let us suppose that $D(P_Y) < \infty$.  Thus, by item 1, we have that $$D(P_Y) = D(P_X) + D(P_{X^\bot \land Y}) = D(P_Y) + D(P_{X^\bot \land Y})$$ and consequently, $D(P_{X^\bot \land Y}) = 0$. Then, by Condition 1 in Definition \ref{DIMFUNCTION}, $X^\bot \cap Y = X^\bot \land Y = \{0\}$. Let us notice that $X^\bot \cap Y = \neg_Y X$ is the complement of $X$ in the interval lattice $[0,Y]_{{\cal P}({\cal N})} \approx [0,P_Y]_{{\cal P}({\cal N})}$. Since $X \lor \neg_Y X = Y$ and $X < Y$ then $X^\bot \cap Y = \neg_Y X \not= 0$ which is a contradiction. Hence, $D(P_Y) = D(P_X) = \infty$ and, by item 1,
$D(1_{\cal H}) = D(P_Y + P_{Y^\bot}) = D(P_Y) + D(P_{Y^\bot}) = \infty + D(P_{Y^\bot}) = \infty$.

\qed
\end{proof}

\section{Motivating $LQF$-logic:  Logics from Murray-von Neumann dimension theory and related open questions}\label{RINGOP}

In order to motivate our logical system we first need to introduce a brief description of the AQFT whose formalism is supported by two theories: the algebraic approach to quantum mechanics and general relativity.

On one side, the standard operator algebra formulation of quantum mechanics, or {\it algebraic quantum mechanics} for short, starts with the concept of separable complex Hilbert space. This formalism was described in a kind of postulates by J. von Neumann in his celebrated 1932 book \cite{vNlibro}. In this approach we can understand quantum mechanics as a theory whose primitive concepts are {\it quantum system}, {\it states}, {\it observables}  and {\it measurement} subject to the following basic interpretations:

\begin{itemize}
\item
To every quantum system, we associate a complex separable Hilbert space ${\cal H}$ called the {\it state space of the system}.
\end{itemize}

Intuitively speaking,  a physical system  consists of a region of  spacetime and all the entities   contained within it.

\begin{itemize}
\item
The {\it state of the system} is completely specified by a density operator $\rho$ on ${\cal H}$, that is, a self-adjoint ($\rho = \rho^*$), positive ($\langle -, \rho(-) \rangle \geq 0$) and unit-trace ($Tr(\rho) = 1$) operator.
\end{itemize}

The state of a quantum system encodes all of the physical information that we know about the system. Traditional quantum mechanics distinguishes between pure states and mixed states. If our knowledge of the state of the system is complete we say that the system is in a {\it pure state}. This special case is represented by a density operator degenerates to a projector on a closed $1$-dimensional subspace of ${\cal H}$. Thus, a pure state is equally well characterized by a unit vector which defines this $1$-dimensional subspace. Otherwise, if our knowledge of the state is incomplete but the statistical ensemble associated to the system is known, we say that the system is in a {\it mixed state}.

\begin{itemize}
\item
The {\it observables} of a quantum system are represented by self-adjoint operators on the space ${\cal H}$.
\end{itemize}

Intuitively, an observable represents a physical property ${\cal A}$ as energy, position, momentum, etc. If ${\cal A}$ is represented by  the self-adjoint operator $A$ then the eigenvalues of $A$ are real numbers representing all the possible values of the physical property ${\cal A}$. \\

There are other postulates, involving measurement process and time evolution of a quantum system, that we do not mention because they do not play any role in the logical systems  treated here.  \\

Let us notice that a projector is a self-adjoint operator having only the two eigenvalues $0$ and $1$. Along these lines, von Neumann proposed a correspondence between projectors and logical propositions in his 1932 book \cite{vNlibro}. We refer to such propositions as {\it quantum propositions} or {\it q-propositions} for short. In order to clarify this concept, let us consider the spectral decomposition of a self-adjoint operator $A$ representing an observable ${\cal A}$ i.e.
\begin{equation}\label{SPECTRALDEC}
A= \sum_i a_i P_{a_i}, \hspace{1cm} \sum_i P_{a_i} = 1_{\cal H}
\end{equation}
where $(a_i)_i$ are the eigenvalues of $A$ and $(P_{a_i})_i$ is the family of projectors over the respective eigenspaces. In this way, each projector $P_{a_i}$ represents the proposition ``{\it the value of the observable ${\cal A}$ is $a_i$}". Thus, by the spectral decomposition, each observable can be decomposed into elementary true/false propositions.
We can also analyze under which conditions these kind of propositions are true or false. Indeed: Let us consider  an arbitrary state of the system represented by a density operator $\rho$. In this case we only know that, in the state $\rho$, the measurement of ${\cal A}$ will yield one of the values $a_i$, but we don't know which one. Then, for each projector $P_{a_i}$ of the spectral decomposition, mentioned in Eq.(\ref{SPECTRALDEC}), the number in the real interval $[0,1]$ given by the trace $Tr(\rho P_{a_i})$ represents the probability that the proposition ``{\it the value of the observable ${\cal A}$ is $a_i$}" is true in the state $\rho$.
In the particular case where $Tr(\rho P_{a_i}) = 1$, for example when $\rho = P_{a_i}$, we can regard the proposition ``{\it the value of the observable ${\cal A}$ is $a_i$}" as true in the state $\rho$. But, if $Tr(\rho P_{a_i}) = 0$, as is the case of $\rho = P_{a_i}^\bot$, a measurement of ${\cal A}$ will never provide result $a_i$. In this case we can consider the proposition ``{\it the value of the observable ${\cal A}$ is $a_i$}" as a false proposition in the state $\rho$. Therefore, projectors correspond to true/false propositions but they are not organized
in a Boolean structure. For example, the distributivity condition between projectors fails. In this way, the successive work in logic related to quantum systems, developed by Birkhoff and von Neumann in their seminal article in 1936 \cite{BvN}, substituted Boolean algebras with the lattice of closed subspaces of a Hilbert space, or Hilbert lattice, for encoding the structure of q-propositions. This structure was successively named {\it quantum logic}.

Soon after the publication of von Neumann book in 1932, his interest in ergodic theory, group representations and quantum mechanics contributed significantly to von Neumann realization that a theory of operator algebras was the next important stage in the development of quantum mechanics \cite{REDEI}. Operator algebras can trace its origin to the appearing of the four ``{\it Rings of Operators}" papers of F. J. Murray and J. von Neumann, where the first one \cite{MV} appeared in 1936. Those operator algebras, originally known as rings of operators or  $W^*$-algebras, were later  renamed von Neumann algebras by J. Dixmier and J. Dieudonn\'e. That was when the orthomodular lattice theory grew out of the theory of von Neumann algebras as a general framework to include all projection lattices of these algebras. More precisely, the Murray-von Neumann classification of factors suggests that the projector lattices of each factor can be quite different from one on another giving rise to a wide family of lattices whose common characteristic is the orthomodularity. Indeed:

Type $I_n$ and type $I_\infty$  factors always correspond to the whole algebra of bounded operators on a separable Hilbert space.  This is the case usually considered for describing quantum systems with a state space having finite or infinite dimension, respectively. The standard model of type $I$ factor is the algebra ${\cal B}({\cal H})$  where ${\cal P}({\cal B}({\cal H})) = {\cal P}({\cal H})$. Thus, the logic associated to this factor is the usual Birkhoff and von Neumann quantum logic.  If $Dim({\cal H}) = n$ then ${\cal P}({\cal B}({\cal H}))$ is an atomic modular orthomodular lattice. Differently, if $Dim({\cal H}) = \infty$ then ${\cal P}({\cal B}({\cal H}))$ loses the modularity  {\rm \cite[\S 10.3.8]{BELT}}.

Type $II_1$ and type $II_\infty$ factors play an important role in non-relativistic statistical quantum mechanics. The projector lattice of a $II_1$ factor is a modular orthomodular lattice  and it has no atoms. While, the projector lattice of a type $II_\infty$ factor is non modular orthomodular lattice and it has no atoms too.

Type $III$ factors was the most mysterious case, however,  they became relevant in relativistic quantum field theories.  At the beginning, von Neumann regarded the type $III$ factors as a kind of pathological class of operator algebras. Indeed, it took four years after the discovery of the classification of factors, in 1936,  to construct the first example of type III factor \cite{vonNeumannIII}. Moreover, it was only in the mid of the sixties that the existence of a continuous number of non isomorphic type $III$ factors was proven \cite{POWER}. The projector lattice of a type $III$ factor is a non modular orthomodular lattice and it has no atoms. In the next section Proposition \ref{RELATED2} provides an alternative proof of these lattice order properties about type $III$ factors.

The other theory which AQFT is based on is the  theory of relativity. In general relativity, the space of events is represented by a {\it Lorentzian manifold}, i.e. a pair $\langle {\cal M}, g \rangle$ where ${\cal M}$ is a smooth $(n + 1)$-dimensional manifold and $g$ is a {\it Lorentzian tensor metric}, that is, $g$ associates to each point $p\in {\cal M}$ the following inner product $g_p(-,-)$ on the tangent space $T_p{\cal M}$: chosen a basis $\{e_0 \ldots e_n  \}$ in $T_p{\cal M}$ and $x,y \in T_p{\cal M}$, $g_p(x,y) = -x^0y^0 + \sum^n_{i=1}x^iy^i $ where $(x^0\ldots x^n)$ and $(y^0\ldots y^n)$ are the components of  $x$, $y$ with respect to the chosen basis. For each $x\in T_p{\cal M}$ the tensor metric defines the following classification:
\begin{equation}
x \hspace{0.2cm} {\rm is} \hspace{0.2cm} \begin{cases}{\it spacelike} &\text{iff} \hspace{0.2cm} g_p(x,x) >0 \\
{\it null} & \text{iff}\hspace{0.2cm} g_p(x,x)= 0\\
{\it timelike} &\text{iff} \hspace{0.2cm} g_p(x,x) <0. \end{cases}
\end{equation}
The {\it causal cone at $p \in {\cal M}$} is defined as $C_p = \{x\in T_p{\cal M}: g_p(x,x) \leq 0 \}$. A  {\it causal curve} is a smooth curve $\gamma: I \rightarrow {\cal M}$ such that for each $s\in I$, $\dot{\gamma}(s) \in C_{\gamma(s)}$. An important principle of general relativity states that observers can move only on causal curves.  Let ${\cal O}$ be an open bounded region in ${\cal M}$ and $x\in {\cal M}$. We said that $x$ lies in the causal cone of ${\cal O}$, denoted by $C({\cal O})$, iff there exists $p\in {\cal O}$ and a causal curve $\gamma:[0,1] \rightarrow {\cal M}$ such that $\gamma(0) = p$ and $\gamma(1) = x$.  The {\it causal complement} of the open bounded region ${\cal O}$ is defined as ${\cal O}^c = {\cal M} \smallsetminus \overline{C({\cal O})}$. We say that an open bounded region ${\cal O}_1$ is {\it spacelike separated from} the open bounded region ${\cal O}_2$ iff ${\cal O}_1 \subseteq {\cal O}_2^c$. In general relativity events happening at spacelike separated regions cannot influence each other. A {\it space time} or {\it timelike oriented} Lorentzian manifold is a $3$-tuple $\langle {\cal M}, g, {\bf t} \rangle$ such that $\langle {\cal M}, g \rangle$ is a Lorentzian manifold and ${\bf t}$ is a smooth vectorial field on ${\cal M}$ i.e.,  $p \mapsto {\bf t}(p) \in T_p{\cal M}$, such that $g_p({\bf t}(p),{\bf t}(p)) < 0$. The vector field ${\bf t}$ determines a time orientation on ${\cal M}$. More precisely, if $\gamma:I \rightarrow M$ is a causal curve then we say that $\gamma$ is  {\it future directed} (resp. {\it past directed}) provided for each $s\in I$, $g_{\gamma(s)}({\bf t}(\gamma(s)),\dot{\gamma}(s) ) >0$ (resp. $g_{\gamma(s)}({\bf t}(\gamma(s)),\dot{\gamma}(s) ) <0$). \\

These geometrical notions on relativity and the basic concepts of the algebraic formulation of quantum mechanics introduced above allow us to briefly describe the Haag-Araki formalism of AQFT and a related logical systems.

Let ${\cal M}$ be a spacetime represented by a Lorentzian manifold, $\mathfrak{R}({\cal M})$ be the set of open bounded region and ${\cal H}$ be a separable Hilbert space. The basic object of the Haag-Araki model for AQFT is a net of von Neumann algebras, called {\it net of observables over the spacetime ${\cal M}$} defined as  $$\mathfrak{R}({\cal M})\ni {\cal O} \mapsto {\cal N}({\cal O}) $$ where ${\cal N}({\cal O}) \subseteq {\cal B}({\cal H}) $ is a Type $III$ factor and the following basic conditions are satisfied
\begin{enumerate}
\item[]  If ${\cal O}_1 \subseteq {\cal O}_2$ then ${\cal N}({\cal O}_1) \subseteq {\cal N}({\cal O}_2)$. \hspace*{\fill} {({\it Isotony})}

\item[]  If ${\cal O}_1 \subseteq {\cal O}_2^c $ then ${\cal N}({\cal O}_1) \subseteq {\cal N}({\cal O}_2)'$. \hspace*{\fill} {({\it Locality})}

\end{enumerate}

Let us remark that there are other conditions imposed on the net of observables \cite{HAG, HALV, HORRU} that we do not mention because they do not play any role in the argument treated here.

Each factor ${\cal N}({\cal O})$ is called {\it local observable algebra of the net} and it represents the observables in the region ${\cal O}$ of the spacetime. In this perspective, projectors in ${\cal P}\big( {\cal N}({\cal O})\big)$ represent q-propositions related to ${\cal O}$. A {\it state of the net} is a density operator of the von Neumann algebra generated by $\bigcup_{{\cal O} \in \mathfrak{R}({\cal M}) } {\cal N}({\cal O}) \subseteq {\cal B}({\cal H}) $.

The role of type III factors in the net of observables turns out to be a consequence of general physical requirements compatible with the relativity. For example, one of these is the fact that measurements of observables happening in space-like separated regions are not allowed to influence each other. The next proposition formally expresses the mentioned physical requirement.

\begin{prop}\label{SEPARATEDPORP}
Let $\mathfrak{R}({\cal M})\ni {\cal O} \mapsto {\cal N}({\cal O})$ be a net of observables in a Hilbert space ${\cal H}$, $\rho$ be a state of the net and $P$ be a projector of the local algebra ${\cal N}({\cal O})$. Then there exists a state $\rho_{_P}$ of the net such that

\begin{enumerate}
\item
$tr(\rho_{_P} P) = 1$.

\item
If ${\cal O} \subseteq {\cal O}_1^c$ and $A$ is a selfadjoint operator in ${\cal N}({\cal O}_1)$ then $tr(\rho_{_P} A) = tr(\rho A)$.

\end{enumerate}

\end{prop}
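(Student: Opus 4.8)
The plan is to construct $\rho_{_P}$ by transporting the given state $\rho$ through a partial isometry supplied by the type III structure of the local algebra ${\cal N}({\cal O})$. Since ${\cal N}({\cal O})$ is a type III factor and $P\neq 0$ (as it must be for item 1 to be satisfiable), the Borchers condition (Proposition \ref{RELATED1}, item 3) gives $P \sim 1_{\cal H}$; that is, there is a partial isometry $W \in {\cal N}({\cal O})$ with $WW^* = P$ and $W^*W = 1_{\cal H}$. In particular $W$ is an isometry whose range is exactly $Imag(P)$. I would then set
\[
\rho_{_P} = W \rho W^*.
\]

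First I would verify that $\rho_{_P}$ is a legitimate state of the net. It is self-adjoint, since $(W\rho W^*)^* = W\rho^* W^* = W\rho W^*$; it is positive, since $\langle x, W\rho W^* x\rangle = \langle W^*x, \rho W^* x\rangle \geq 0$ for all $x\in {\cal H}$; and it has unit trace, because by cyclicity of the trace together with $W^*W = 1_{\cal H}$,
\[
tr(\rho_{_P}) = tr(W\rho W^*) = tr(\rho W^* W) = tr(\rho) = 1.
\]
Thus $\rho_{_P}$ is again a density operator, and since $W\in {\cal N}({\cal O})$ belongs to the algebra generated by the net, it is a state of the net.

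For item 1, I would use $W^*P = W^*WW^* = (W^*W)W^* = W^*$, whence
\[
tr(\rho_{_P}P) = tr(W\rho W^* P) = tr(W\rho W^*) = tr(\rho W^*W) = tr(\rho) = 1.
\]
For item 2 the decisive ingredient is the locality axiom: if ${\cal O}\subseteq {\cal O}_1^c$ then ${\cal N}({\cal O}) \subseteq {\cal N}({\cal O}_1)'$, so both $W$ and $W^*$ commute with every $A \in {\cal N}({\cal O}_1)$. Using $W^*A = AW^*$ and then cyclicity with $W^*W = 1_{\cal H}$,
\[
tr(\rho_{_P}A) = tr(W\rho W^* A) = tr(W\rho A W^*) = tr(W^*W \rho A) = tr(\rho A).
\]

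The conceptual heart of the argument, and the step I expect to be decisive, is the appeal to the Borchers condition: it is precisely the type III hypothesis that forces every nonzero projector of ${\cal N}({\cal O})$ to be Murray-von Neumann equivalent to the full identity, producing an isometry $W\in {\cal N}({\cal O})$ mapping ${\cal H}$ onto $Imag(P)$. In a type I or type II factor no such $W$ exists for a proper projector $P$, so the construction genuinely relies on the type III condition; the locality axiom then performs the remaining work of leaving observables in spacelike-separated regions untouched.
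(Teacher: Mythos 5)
Your proposal is correct and takes essentially the same route as the paper: both invoke the Borchers condition (Proposition \ref{RELATED1}-3) to obtain a partial isometry implementing $P \sim 1_{\cal H}$, define $\rho_{_P}$ by conjugating $\rho$ with it (your $W$ is the paper's $W^*$, so $W\rho W^*$ and $W^*\rho W$ are the same operator), and then use cyclicity of the trace for item 1 and the locality axiom for item 2. The only differences are cosmetic: your positivity check via $\langle x, W\rho W^*x\rangle = \langle W^*x, \rho W^*x\rangle \geq 0$ is a more direct computation than the paper's decomposition of $x$ into $Ker(W)$ and $Ker(W)^\bot$ components, and you explicitly note the (implicitly assumed) nondegeneracy $P \neq 0$, which the paper leaves tacit.
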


\begin{proof}
Let $\rho$ be a state of the net and $P$  be a  projector in the local algebra ${\cal N}({\cal O})$. Since ${\cal N}({\cal O})$ is a type III factor then, by Proposition \ref{RELATED1}-3, there exists a partial isometry $W \in  {\cal N}({\cal O})$ such that $W W^* = 1_{\cal H}$ and $W^* W  = P$. Let us define $\rho_{_P} = W^* \rho W$. We first show that $\rho_{_P}$ is a density operator. Indeed: By a straightforward calculations we can see that  $\rho_{_p}^* = \rho_{_p}$.  In order to prove that $\rho_p$ is positive, let us consider $x\in {\cal H}$ and, by Proposition \ref{closed1}, the orthogonal decomposition $x = x_{_{W}} + x_{_{W^\bot}}$ where $x_{_{W}} \in Ker(W)$ and $x_{_{W^\bot}} \in Ker(W)^\bot$.
Since  $W^* \rho W(x_{_{W^\bot}}) \in ker(W)^\bot $ and $W$ is an isometry over $ker(W)^\bot$ then we have that  $\langle x, \rho_{_P}(x) \rangle = \langle x, W^* \rho W(x) \rangle = \langle x_{_{W^\bot}}, W^* \rho W(x_{_{W^\bot}}) \rangle + 0 = \langle W(x_{_{W^\bot}}), W W^* \rho W(x_{_{W^\bot}}) \rangle = \langle W(x_{_{W^\bot}}), \rho W(x_{_{W^\bot}}) \rangle$. Thus we also have that $\langle x, \rho_{_P}(x) \rangle = \langle W(x_{_{W^\bot}}), \rho W(x_{_{W^\bot}}) \rangle \geq 0 $ since $\rho$ is a positive operator. Note that $\rho_{_P}$ is a unit-trace operator because $tr(\rho_{_P}) = tr(W^* \rho W) = tr(WW^* \rho ) = tr(\rho) = 1$. Hence, $\rho_{_p}$ is a density operator.

Now we prove that $tr(\rho_{_P} P) = 1$. Indeed:
\begin{eqnarray*}
tr(\rho_{_P} P) & = & tr(W^* \rho W P) =  tr(P W^* \rho W) = tr(W^* W W^* \rho W) = \\
& = & tr(W^* \rho W) = tr(\rho_{_P}) = 1.
\end{eqnarray*}
Let us suppose that  ${\cal O} \subseteq {\cal O}_1^c$ and $A$ is a selfadjoint operator in ${\cal N}({\cal O}_1)$. Note that $WA = A W$ since, by the postulate of locality in the net of observables, $A$ belongs to the commutant of ${\cal N}({\cal O})$. Thus, $Tr(\rho_{_P} A) = Tr(W^* \rho W A) = Tr(W^* \rho A W ) = Tr(WW^* \rho A) = Tr(\rho A)$. Hence our claim.

\qed
\end{proof}

As we have seen in the above proof, the type III factor condition imposed to each algebra ${\cal N}({\cal O})$ implies that every projector $P \in {\cal N}({\cal O})$ can be written as $W^* W  = P$ where  $W \in {\cal N}({\cal O})$ is a partial isometry. Consequently, if $\rho$ is a state of the net then the transformation  $\rho \mapsto \rho_{_P} = W^* \rho W$ change the state $\rho$ into the eigenstate $\rho_{_P}$ of $P$ by the local operation $W$ without disturbing the causal complement of the region we are dealing with. This is precisely the motivation of the type III factor in order to show that the relativity is not violated. For more details regarding the role of type III factors in quantum field theory we refer to \cite{YNG}.

Proposition \ref{SEPARATEDPORP} also provides an interesting insight into the local character of the logic of q-propositions related to each region ${\cal O}$ of the Lorentzian spacetime. Indeed: Let us consider a q-proposition, represented by the projector $P \in {\cal N}({\cal O})$. Since every state $\rho$ of the net can be changed into an eigenstate $\rho_{_P}$ of $P$ via a local operation, the q-proposition $P$ can be evaluated as true in the state $\rho_{_P}$ i.e., $tr(\rho_{_P} P) = 1$ without affecting the truth values of the q-propositions of the spacelike separated regions of ${\cal O}$. In other words, the type III condition of each algebra of the net defines a local logical system such that in spacelike separated regions of the spacetime the respective notions of truth are independent each other. In this way, the goal of the $LQF$-logic is to capture the type III factor condition of the local observable algebras.

In the aim to develop a such logical system, unavoidably, we deal with a family of open questions related to the problem of characterizing those logical systems which can be identified with the lattice of projections of factors of the Murray-von Neumann classification. Indeed: Since the beginning of the Murray-von Neumann classification, in the 30's, it has been noticed that each factor of the classification is associated to a particular kind of von Neumann lattice. The subject aroused great interest among the logical community giving rise to the study of logical structures emerging from the factors of the classification.
At first, von Neumann introduced the concept of continuous geometry in the mid of the thirties as a lattice theoretical generalization of the projective geometry. More precisely, continuous geometries define a subclass of modular lattices. These basic results were published in \cite{vonNeumannGEO} after his death from his Princeton lecture notes during 1935-1937. In this work it was shown that in a directly indecomposable continuous geometry the quotient by the perspectivity relation $\sim_p$ univocally defines a notion of dimension function onto the real interval $[0,1]$.

\begin{rema}\label{REMAMOD}
{\rm  It is interesting to note that in a factor ${\cal N} \subseteq {\cal B}({\cal H})$ the relations $\sim$ and $\sim_p$ coincide whenever $1_{\cal H}$ is a finite element \cite{FILLMORE, KAPL}. Moreover, the condition of the finiteness of $1_{\cal H}$ it can be equivalently formulated in equational terms through the equation of modularity (see Eq.(\ref{MODULARITYEQ})). Thus, the modularity imposed to the lattice of projectors of the factor ${\cal N}$ forcing it to be either a type $I_n$ factor or a type $II_1$ factor (see {\rm \cite[\S 10.3.8]{BELT}} for more detail). In this way, the notion of continuous geometries provides a natural common framework for type $I_n$ and $II_1$ factors.
}
\end{rema}

From the remark above, naturally arises the question whether it might be possible to establish lattice theoretical conditions in order to characterize each factor of the Murray-von Neumann classification  \cite{KAPL2, MAEDA}.
In this way and in separated works Loomis \cite{LOOMIS} and Maeda \cite{MAEDA2}, assuming the existence of a particular equivalence relation on an orthomodular lattice, derived several properties related to the Murray-von Neumann dimension theory in a purely algebraic way. This kind of structure is known as {\it dimension lattices}. Although the dimension lattices theory provides a general theoretical framework for von Neumann lattices, from a logical algebraic point of view, to establish an equational theory able to capture the Murray-von Neumann dimension equivalence remains an undone tasks of orthomodular lattice theory. The great difficult in order to deal with this problem is that, in general, we can not transplant the Murray-von Neumann equivalence to a lattice since it involves elements of the theory of operators on Hilbert spaces. However, as we will see in the next section, this is possible to do for the special case of type III factors. Therefore, and in a similar way as is mentioned in Remark \ref{REMAMOD}, $LQF$-logic will be based on an equational system that, when imposed on the projector lattice of a factor it determines univocally the type III factor condition. This result will be proved in Theorem \ref{CHARACTTYPEIII}. In this way, the logical system studied here, that  describes the propositional structure of the local observable algebras in the Haag-Araki model of AQFT, is closely related to the question about whether it might be possible to establish lattice theoretical conditions in order to characterize the factors of the Murray-von Neumann classification.

\section{An equational characterization for the type $III$ factor} \label{EQTYPEIII}

In this section we establish a set of equations able to capture, in a purely algebraic way, the dimension function and the Murray-von Neumann equivalence of a type III factor.
More precisely, the mentioned equational system, imposed to a von Neumann lattice,  univocally determines the type $III$ factor. In order to do this, we first need to establish a representation of the dimension function into the projector lattice of a type $III$ factor.

\begin{definition}\label{INTERNALDIMENTION0}
{\rm Let ${\cal N} \subseteq {\cal B}({\cal H})$ be a type $III$ factor. Then we define the {\it internal dimension function} as the unary operation $w_0:{\cal P}({\cal N}) \rightarrow \{0,1_{{\cal H}}\}$ such that
$$
w_0(X) = \begin{cases}0, & X = 0, \\
1_{{\cal H}}, & otherwise. \end{cases}
$$
}
\end{definition}

The following proposition  can be easily proved.

\begin{prop}\label{INTERNALDIMENTION}
Let ${\cal N} \subseteq {\cal B}({\cal H})$ be a type $III$ factor, $D:{\cal P}({\cal N}) \rightarrow \{0,\infty\}$ be the dimension function and $w_0$ be the internal dimension function. Then
\begin{enumerate}
\item
$w_0(X) = 0$ iff $D(X) = 0$,

\item
$X\sim Y$ iff $w_0(X) = w_0(Y)$,

\item
$w_0(X\lor Y) = w_0(X) \lor w_0(Y)$,

\item
$w_0(w_0(X)) = w_0(X) $,

\item
$w_0(X\land w_0(Y)) = w_0(X) \land w_0(Y)$,

\end{enumerate}
\qed
\end{prop}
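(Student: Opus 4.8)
The plan is to prove all five identities by a uniform case analysis keyed on whether each projector argument is the zero projector or not. The key observation is that $w_0$ collapses the entire lattice ${\cal P}({\cal N})$ onto the two-element set $\{0, 1_{\cal H}\}$ according to precisely the distinction ``$X = 0$'' versus ``$X \neq 0$'', so every claim reduces to checking finitely many cases on these two alternatives for each argument.

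First I would dispatch item~1: both sides are equivalent to the single condition $X = 0$. By the definition of $w_0$ we have $w_0(X) = 0$ iff $X = 0$, and by condition~1 of Definition~\ref{DIMFUNCTION} we have $D(X) = 0$ iff $X = 0$; composing these two equivalences gives item~1 at once. Item~2 is where the type~III hypothesis genuinely enters, and I regard it as the only nonroutine step. Here I would invoke condition~2 of Definition~\ref{DIMFUNCTION}, namely $X \sim Y$ iff $D(X) = D(Y)$. Since ${\cal N}$ is a type~III factor, $D$ takes values in $\{0,\infty\}$ with $D(X) = 0$ exactly when $X = 0$, so $D(X) = D(Y)$ holds precisely when $X$ and $Y$ are simultaneously zero or simultaneously nonzero; by the definition of $w_0$, the equality $w_0(X) = w_0(Y)$ holds under exactly the same condition, which yields the equivalence. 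Alternatively one can feed in Proposition~\ref{RELATED1}-2 directly, which states that any two nonzero projectors of a type~III factor are Murray–von Neumann equivalent, thereby handling the nontrivial direction when both arguments are nonzero.

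The remaining items~3, 4, and~5 are purely formal consequences of the two-valued definition of $w_0$ and require no further structural input. For item~3 I note that $X \lor Y = 0$ iff $X = Y = 0$, so the two sides vanish together and otherwise both equal $1_{\cal H}$, since at least one of $w_0(X), w_0(Y)$ is then $1_{\cal H}$. For item~4, since $w_0(X) \in \{0, 1_{\cal H}\}$ and $w_0$ fixes both $0$ and $1_{\cal H}$ (the latter because $1_{\cal H} \neq 0$), idempotency is immediate. For item~5 I split on the value of $w_0(Y)$: if $Y = 0$ then both sides are $0$, whereas if $Y \neq 0$ then $X \land w_0(Y) = X \land 1_{\cal H} = X$, so both sides reduce to $w_0(X)$.

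I anticipate no genuine obstacle: the whole argument is a short exhaustive case check, and the only place where the deep structure of type~III factors is used is item~2, via the Borchers condition of Proposition~\ref{RELATED1}. This is consistent with the paper's own assessment that the proposition ``can be easily proved''.
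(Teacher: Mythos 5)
Your proof is correct and is exactly the routine verification the paper intends: the paper states this proposition without proof (``can be easily proved''), and your case analysis on $X=0$ versus $X\neq 0$, together with conditions 1 and 2 of Definition \ref{DIMFUNCTION} for items 1 and 2, is the argument being alluded to. You also correctly identify item 2 as the only place where the type III structure enters, via $Imag(D)=\{0,\infty\}$ (or equivalently Proposition \ref{RELATED1}).
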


\begin{rema}
{\rm Let us notice that items 1 and 2 in the above proposition show that $w_0$ exactly describes the dimension function and the Murray-von Neumann equivalence within the ordered structure of the projectors of a
type $III$ factor. Furthermore, if we consider the category whose objects are the von Neumann lattices of the type $III$ factors expanded by the internal dimension function and whose
arrows are internal dimension function preserving ${\cal OML}$-homomorphisms, then it is equivalent to the category of von Neumann lattices of type $III$ factors whose arrows are the following commutative triangles
\begin{center}
\unitlength=1mm
\begin{picture}(20,20)(0,0)
\put(8,16){\vector(3,0){5}}
\put(3,12){\vector(1,-2){4}}
\put(18,12){\vector(-1,-2){4}}

\put(2,10){\makebox(17,0){$\equiv$}}

\put(0,16){\makebox(0,0){${\cal P}({\cal N}_1)$}}
\put(21,16){\makebox(0,0){${\cal P}({\cal N}_2)$}}
\put(11,0){\makebox(0,0){$\{0,\infty\}$}}
\put(2,20){\makebox(17,0){$f$}}
\put(2,8){\makebox(-6,0){$D_1$}}
\put(24,8){\makebox(-6,0){$D_2$}}
\end{picture}
\end{center}
In other words, the preservation of the internal dimension function is equivalent to the preservation of the Murray-von Neumann equivalence through  ${\cal OML}$-homomorphisms between
von Neumann lattices of type $III$ factors.
}
\end{rema}

\begin{prop}\label{PARCISOM0}
Let $W \in {\cal B}({\cal H})$ be a partial isometry. Then the following  statements are equivalent:

\begin{enumerate}
\item
$WW^* = 1_{{\cal H}}$.

\item
The restriction $ W\upharpoonright_{Ker^\bot(W)}$ defines an isometry onto ${\cal H}$.

\item
$W^*$ defines a surjective isometry of the form $W^*:{\cal H} \rightarrow Ker^\bot(W)$.

\end{enumerate}

\end{prop}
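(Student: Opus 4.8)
The plan is to establish the two equivalences $(1)\Leftrightarrow(2)$ and $(1)\Leftrightarrow(3)$, using throughout that $W$ is a partial isometry, so that all the equivalent characterizations of Theorem \ref{EQISOM} are available together with the defining property that $W\upharpoonright_{Ker^\bot(W)}$ is an isometry. The guiding idea is that each of the three conditions expresses, in a different guise, that the isometric part of $W$ reaches all of ${\cal H}$.

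For $(1)\Leftrightarrow(2)$ I would start from Theorem \ref{EQISOM}-2, which gives $WW^* = P_{Imag(W)}$; in particular $Imag(W)$ is closed, being the range of the projector $WW^*$. Hence condition (1) is equivalent to $P_{Imag(W)} = 1_{\cal H}$, that is, to $Imag(W) = {\cal H}$. Next I would observe that, since ${\cal H} = Ker(W) \oplus Ker^\bot(W)$ (by Remark \ref{CLOSEDKERIMAG} and Proposition \ref{closed1}-2) and $W$ annihilates $Ker(W)$, one has $Imag(W) = W({\cal H}) = W(Ker^\bot(W))$, which is exactly the image of the restriction $W\upharpoonright_{Ker^\bot(W)}$. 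As this restriction is already an isometry by hypothesis, its being onto ${\cal H}$ is equivalent to $Imag(W) = {\cal H}$, and this establishes $(1)\Leftrightarrow(2)$.

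For $(1)\Leftrightarrow(3)$ I would argue as follows. Assuming (1), for every $x\in{\cal H}$ we have $\|W^*x\|^2 = \langle WW^*x, x\rangle = \langle x, x\rangle = \|x\|^2$, so $W^*$ is an isometry on all of ${\cal H}$; then $Imag(W^*) = W^*({\cal H})$ is closed by Proposition \ref{WW1}, and by Proposition \ref{KERRANG}-2 we have $\overline{Imag(W^*)} = Ker^\bot(W)$, whence $Imag(W^*) = Ker^\bot(W)$ and (3) follows. Conversely, assuming in (3) that $W^*$ is an isometry on ${\cal H}$, the same computation yields $\langle (WW^* - 1_{\cal H})x, x\rangle = 0$ for every $x$; since $WW^* - 1_{\cal H}$ is self-adjoint, this forces $WW^* = 1_{\cal H}$, giving (1).

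I expect the only genuinely delicate points to be bookkeeping rather than conceptual: first, justifying that the relevant images $Imag(W)$ and $Imag(W^*)$ are closed, so that the identity ``projector equal to $1_{\cal H}$'' can be upgraded to ``subspace equal to ${\cal H}$'', which rests on the isometry of the restriction together with Proposition \ref{WW1}; and second, in the converse of $(1)\Leftrightarrow(3)$, invoking that a self-adjoint operator with identically vanishing quadratic form is zero, which is valid over the complex Hilbert space ${\cal H}$ by polarization. Everything else reduces to the characterizations already recorded in Theorem \ref{EQISOM} and in Propositions \ref{KERRANG} and \ref{closed1}.
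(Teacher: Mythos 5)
Your proof is correct, and it takes a partly different route from the paper's. The paper proves the cycle $1\Rightarrow 2\Rightarrow 3\Rightarrow 1$, leaning twice on the kernel--image duality of Proposition \ref{KERRANG}-1 ($Ker(W^*)=Imag^\bot(W)$): it gets $2\Rightarrow 3$ by showing $W^*$ is injective and then invoking that $W^*$ is itself a partial isometry, and it gets $3\Rightarrow 1$ by deducing $Imag(W)={\cal H}$ from $Ker(W^*)=\{0\}$ and applying Theorem \ref{EQISOM}-2. Your $(1)\Leftrightarrow(2)$ is essentially the paper's argument (both reduce to $Imag(W)={\cal H}$ via Theorem \ref{EQISOM}-2 and the decomposition ${\cal H}=Ker(W)\oplus Ker^\bot(W)$), but your $(1)\Leftrightarrow(3)$ is genuinely different: the forward direction comes from the one-line computation $\Vert W^*x\Vert^2=\langle WW^*x,x\rangle$, and the converse from the fact that a self-adjoint operator with identically vanishing quadratic form is zero. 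What each buys: the paper's cycle stays entirely inside its own recorded lemmas (Propositions \ref{closed1}, \ref{KERRANG}, \ref{WW1}, Theorem \ref{EQISOM}), never needing any fact beyond them; your route is more elementary and symmetric --- it avoids both the injectivity argument and the appeal to $W^*$ being a partial isometry --- at the cost of importing one standard operator-theoretic fact (vanishing quadratic form plus self-adjointness forces the zero operator, or polarization on a complex space) that the paper does not state, though it is valid here since ${\cal H}$ is complex. Both proofs use Proposition \ref{WW1} and Proposition \ref{KERRANG}-2 identically to identify $Imag(W^*)$ with $Ker^\bot(W)$.
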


\begin{proof}
$1 \Longrightarrow 2)$ Since $W$ is a partial isometry we only need to prove that  $W\upharpoonright_{Ker^\bot(W)}$ is a surjective map. Let $y \in {\cal H}$. Then, by hypothesis, $y = WW^*(y) = W(x)$ where $x = W^*(y)$. By Proposition \ref{closed1} and Remark \ref{CLOSEDKERIMAG} $x$ can be decomposed as $x = x_{_{Ker(W)}} + x_{_{ker^\bot(W)}}$ where $x_{_{Ker(W)}} \in Ker(W)$ and $x_{_{ker^\bot(W)}} \in Ker^\bot(W)$. Then, $y = W(x) = W(x_{_{Ker(W)}} + x_{_{ker^\bot(W)}}) = W(x_{_{Ker(W)}}) + W(x_{_{ker^\bot(W)}}) = 0 + W(x_{_{ker^\bot(W)}}) = W(x_{_{ker^\bot(W)}})$. Thus, $ W\upharpoonright_{Ker^\bot(W)}$ is a surjective map.

$2 \Longrightarrow 3)$  By hypothesis we have that $Imag(W) = {\cal H}$. Then, by Proposition \ref{KERRANG}-1, $Ker(W^*) = Imag^\bot(W^{**}) = Imag^\bot(W) = {\cal H}^\bot = \{0\}$ i.e.,  $W^*$ is injective. Therefore, $W^*$ is an isometry on ${\cal H}$ and, by Proposition \ref{WW1}, $Imag(W^*)$ is a closed subspace of ${\cal H}$.  Then, by Proposition \ref{KERRANG}-2, $Imag(W^*) = \overline{Imag(W^*)} = Ker^\bot(W)$. In this way $W^*$ defines a surjective isometry of the form $W^*:{\cal H} \rightarrow Ker^\bot(W)$.

$3 \Longrightarrow 1)$
By hypothesis $Ker(W^*) = \{0\}$. Thus, by Proposition \ref{KERRANG}-1, $Imag^\bot(W) = Imag^\bot(W^{**}) = Ker(W^*) = \{0\}$ and then $Imag(W) = {\cal H}$. Hence, by Theorem \ref{EQISOM}-2, $WW^* = P_{_{Imag(W)}} = P_{{\cal H}} = 1_{{\cal H}}$.

\qed
\end{proof}

\begin{prop}\label{PARCISOM1}
Let $W \in {\cal B}({\cal H})$ be a partial isometry such that $WW^* = 1_{{\cal H}}$. Then for each $P_{_X} \in {\cal P}({\cal H})$
\begin{enumerate}

\item
$W(X)$ and $W^*(X)$ are closed subspaces of ${\cal H}$.

\item
$W^*P_{_X}$ is a  partial isometry.

\item
$P_{_{W^*(X)}} = W^*P_{_X}W$.

\end{enumerate}
\end{prop}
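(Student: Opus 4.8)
The plan is to read off the structural content of the hypothesis $WW^* = 1_{\cal H}$ from Proposition \ref{PARCISOM0}: item 3 tells us that $W^*$ is a \emph{global} isometry on ${\cal H}$ (with closed range $Ker^\bot(W)$), while item 2 tells us that $W\!\upharpoonright_{Ker^\bot(W)}$ is an isometric bijection onto ${\cal H}$. Item 1 will rest on these two facts together with Proposition \ref{WW1}, whereas items 2 and 3 will be obtained by applying the partial-isometry characterisation of Theorem \ref{EQISOM} to the single operator $V = W^*P_{_X}$.

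For item 1, the closedness of $W^*(X)$ is immediate: $W^*$ is an isometry on all of ${\cal H}$ by Proposition \ref{PARCISOM0}-3, so Proposition \ref{WW1} applies verbatim and $W^*(X)$ is closed. For $W(X)$ the situation is more delicate, because $W$ is \emph{not} globally isometric. I would decompose each $x\in X$ as $x = u+v$ with $u\in Ker(W)$ and $v\in Ker^\bot(W)$ (Proposition \ref{closed1}), so that $W(x)=W(v)$ and hence $W(X) = (W\!\upharpoonright_{Ker^\bot(W)})\big(P_{_{Ker^\bot(W)}}(X)\big)$. Since $W\!\upharpoonright_{Ker^\bot(W)}$ is an isometric homeomorphism onto ${\cal H}$, closedness of $W(X)$ reduces to closedness of the orthogonal projection $P_{_{Ker^\bot(W)}}(X)$. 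Concretely, given a convergent sequence $W(x_n)\to y$, the components $v_n = P_{_{Ker^\bot(W)}}(x_n)$ satisfy $\|v_n - v_m\| = \|W(x_n) - W(x_m)\|$, hence form a Cauchy sequence, so $v_n\to v\in Ker^\bot(W)$ and $y=W(v)$. The point that must be secured is that $v$ is the $Ker^\bot(W)$-component of some element of $X$. I expect this to be the main obstacle of the whole proposition, as the behaviour of the projection of the closed subspace $X$ onto $Ker^\bot(W)$ is the delicate point that must be controlled.

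For item 2, set $V = W^*P_{_X}$, so that $V^* = P_{_X}W$ (using $P_{_X}^*=P_{_X}$ and $W^{**}=W$). A direct computation using $P_{_X}^2 = P_{_X}$ gives $VV^* = W^*P_{_X}P_{_X}W = W^*P_{_X}W$, and then, invoking the hypothesis $WW^* = 1_{\cal H}$, $VV^*V = (W^*P_{_X}W)(W^*P_{_X}) = W^*P_{_X}(WW^*)P_{_X} = W^*P_{_X}P_{_X} = W^*P_{_X} = V$. By Theorem \ref{EQISOM}-4 the identity $VV^*V=V$ shows that $V=W^*P_{_X}$ is a partial isometry; this step is purely algebraic and requires nothing from item 1.

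Finally, item 3 follows by applying Theorem \ref{EQISOM}-2 to the partial isometry $V$, which gives $VV^* = P_{_{Imag(V)}}$. From the computation above $VV^* = W^*P_{_X}W$, while $Imag(V) = Imag(W^*P_{_X}) = W^*(X)$, which is a genuine closed subspace by item 1. Hence $P_{_{Imag(V)}} = P_{_{W^*(X)}}$, and comparing the two expressions for $VV^*$ yields $P_{_{W^*(X)}} = W^*P_{_X}W$, as required.
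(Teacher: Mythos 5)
Your items 2 and 3 are correct, and item 2 takes a genuinely different route from the paper's: you establish the single algebraic identity $VV^*V=V$ for $V=W^*P_{_X}$ (using $WW^*=1_{\cal H}$) and invoke Theorem \ref{EQISOM}-4, whereas the paper argues geometrically, deducing $Ker(W^*P_{_X})=Ker(P_{_X})=X^\bot$ from the injectivity of $W^*$ and then checking that $W^*P_{_X}$ restricted to $Ker^\bot(W^*P_{_X})=X$ is an isometry. Your version is shorter, purely algebraic, and, as you note, independent of item 1. Item 3 and the closedness of $W^*(X)$ are handled exactly as in the paper (Theorem \ref{EQISOM}-2, respectively Proposition \ref{PARCISOM0}-3 together with Proposition \ref{WW1}).

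The point you could not settle --- closedness of $P_{_{Ker^\bot(W)}}(X)$, hence of $W(X)$ --- is a genuine gap in your argument, but your suspicion about it is exactly right, and you should know that the paper does not settle it either: its proof simply asserts that $P_{_{Ker^\bot(W)}}(X)$ is a closed subspace of $Ker^\bot(W)$, with no justification. In fact this step cannot be repaired in the stated generality, because the orthogonal projection of a closed subspace onto a closed subspace need not be closed when $Ker(W)$ is infinite dimensional. Concretely, let ${\cal H}=\ell^2\oplus\ell^2$, let $U:\ell^2\rightarrow{\cal H}$ be any unitary, and put $W(x,y)=U(x)$; then $W$ is a partial isometry with $WW^*=1_{\cal H}$, $Ker(W)=0\oplus\ell^2$ and $Ker^\bot(W)=\ell^2\oplus 0$. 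Take $A:\ell^2\rightarrow\ell^2$ the diagonal operator $A(e_n)=\frac{1}{n}e_n$ (bounded, injective, with dense non-closed range $Imag(A)$) and $X=\{(Ay,y):y\in\ell^2\}$. Then $X$ is closed because $A$ is bounded, but $P_{_{Ker^\bot(W)}}(X)=Imag(A)\oplus 0$ and $W(X)=U(Imag(A))$ are not closed. So in your Cauchy-sequence argument the limit $v$ genuinely need not be the $Ker^\bot(W)$-component of any element of $X$, and the $W(X)$ half of item 1 is false as stated. What does survive is the special case $X\subseteq Ker^\bot(W)$, where $P_{_{Ker^\bot(W)}}(X)=X$ and $W(X)$ is the isometric image of a closed subspace, hence closed by the argument of Proposition \ref{WW1}; this is the case relevant to the restriction $w_Z\upharpoonright_{[0,Z]_{{\cal P}({\cal N})}}$ in Proposition \ref{PARCISOMID}-2, and the full statement also holds whenever $Ker(W)$ is finite dimensional.
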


\begin{proof}
1) We first prove that $W(X)$ is a closed subset. Since $Ker(W)$ is a closed subspace of ${\cal H}$, by Proposition \ref{closed1}-2, ${\cal H} = Ker(W) \oplus Ker^\bot(W)$. Then each $x \in X$
can be written as $x = x_{_{Ker(W)}} + x_{_{Ker^\bot(W)}}$ where $x_{_{Ker(W)}} \in Ker(W)$ and $x_{_{Ker^\bot(W)}} \in Ker^\bot(W)$. Let us notice that $$X_{_{Ker^\bot(W)}} = \{x_{_{Ker^\bot(W)}} \in {\cal H}: x\in X \} =  P_{_{Ker^\bot(W)}}(X)$$ is a closed subspace of $Ker^\bot(W)$ and $W(X) = W(X_{_{Ker^\bot(W)}})$. Consequently, by Proposition \ref{PARCISOM0}-2, $W(X)$ is a closed subspace of ${\cal H}$. By Proposition \ref{PARCISOM0}-3 and Proposition \ref{WW1} immediately follows that $W^*(X)$ is also a closed subspace of ${\cal H}$.

2) By Proposition \ref{PARCISOM0}-3, $W^*$ is an isometry onto $Ker^\bot(W)$ and therefore $Ker(W^*) = \{0\}$. Thus,  $Ker(W^* P_{_X}) = Ker(P_{_X}) = X^\bot$. Consequently, $ W^* P_{_X}\upharpoonright_{Ker^\bot(W^* P_{_X})} =  W^* P_{_X} \upharpoonright_X = W^*\upharpoonright_X$ is an isometry too. In this way $W^* P_{_X}$ is a partial isometry.

3) By item 2 $W^*P_{_X}$ is a  partial isometry. Then, by Proposition \ref{EQISOM}-2, $P_{_{W^*(X)}} =  P_{Imag(W^*P_{_X})} =  W^*P_{_X}(W^*P_{_X})^* =  W^*P_{_X} P_{_X} W = W^*P_{_X}W$.

\qed
\end{proof}

\begin{prop}\label{PARCISOMCLOSED}
Let ${\cal N} \subseteq  {\cal B}({\cal H})$ be a von Neumann algebra and $W \in {\cal N}$ be a partial isometry such that $WW^* = 1_{{\cal H}}$. Then for each $P_{_X} \in {\cal P}({\cal N})$
$$P_{_{W^*(X)}}\in {\cal N} \hspace{0.2cm} \mbox{and} \hspace{0.2cm} P_{_{W(X)}}\in {\cal N}.$$

\end{prop}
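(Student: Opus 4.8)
The plan is to split the statement according to the two projections, since they are of quite different difficulty. For $P_{W^*(X)}$ everything is already in place: Proposition \ref{PARCISOM1}-3 provides the identity $P_{W^*(X)} = W^* P_X W$, and since ${\cal N}$ is a unital $*$-subalgebra of ${\cal B}({\cal H})$ it is closed under the product $\cdot$ and the adjoint $^*$. Hence, from $W \in {\cal N}$ and $P_X \in {\cal P}({\cal N}) \subseteq {\cal N}$, the operator $W^* P_X W$ belongs to ${\cal N}$, which settles the first claim with no extra work.

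The assertion $P_{W(X)} \in {\cal N}$ is the genuine content, and here the symmetric guess $P_{W(X)} = W P_X W^*$ is not available: that operator is idempotent only when $W^* W = 1_{\cal H}$, whereas in our situation $W$ is a coisometry with $W^* W = P_{Ker^\bot(W)} \neq 1_{\cal H}$ in general. I would therefore stop trying to exhibit $P_{W(X)}$ as a conjugate $V P_X V^*$ and instead realize it as a range projection. Concretely, set $A = W P_X$, which lies in ${\cal N}$ as a product of members of ${\cal N}$; then $Imag(A) = W\big(Imag(P_X)\big) = W(X)$, and this subspace is closed by Proposition \ref{PARCISOM1}-1, so $\overline{Imag(A)} = W(X)$. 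Thus it suffices to show that the range projection $P_{\overline{Imag(A)}}$ of an element $A \in {\cal N}$ always lies in ${\cal N}$.

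This last fact I would extract from Proposition \ref{RIGHTIDEAL}. Let $A' \in {\cal P}({\cal N})$ be the unique projection with $Ann_L(A) = {\cal N}A'$. Since $A' \in {\cal N}A' = Ann_L(A)$ we get $A'A = 0$, so $A'$ annihilates $\overline{Imag(A)}$ and hence $A' \leq P_{Ker(A^*)}$, where $Ker(A^*) = Imag^\bot(A)$ by Proposition \ref{KERRANG}-1; conversely $P_{Ker(A^*)} A = 0$ shows $P_{Ker(A^*)} \in Ann_L(A) = {\cal N}A'$, which forces $P_{Ker(A^*)} \leq A'$. Therefore $A' = P_{Ker(A^*)} \in {\cal N}$, and since ${\cal N}$ is unital and linearly closed, $P_{\overline{Imag(A)}} = 1_{\cal H} - A' \in {\cal N}$ as well. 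Applying this to $A = W P_X$ gives $P_{W(X)} = P_{\overline{Imag(A)}} \in {\cal N}$, completing the argument. The main obstacle, and the reason the two halves are not proved the same way, is exactly this asymmetry between $W$ and $W^*$: the clean algebraic formula holds only on the $W^*$ side, so for $W(X)$ one is forced to pass through the range-projection description and to invoke the annihilator structure of ${\cal N}$ recorded in Proposition \ref{RIGHTIDEAL}.
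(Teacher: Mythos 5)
Your proof is correct and is essentially the paper's own argument: the first half is handled identically via $P_{W^*(X)} = W^*P_X W$ (Proposition \ref{PARCISOM1}-3), and the second half rests on the same key lemma, Proposition \ref{RIGHTIDEAL} applied to $A = WP_X$, identifying the annihilator projection $A'$ with the projection onto $W(X)^\bot$ and concluding $P_{W(X)} = 1_{\cal H} - A' \in {\cal N}$. The only divergence is stylistic: the paper verifies ${\cal N}P_{W(X)^\bot} = Ann_L(WP_X)$ by a vector-level double inclusion and then invokes the uniqueness in Proposition \ref{RIGHTIDEAL}, whereas you reach the same identification through the projection inequalities $A' \leq P_{Ker(A^*)}$ and $P_{Ker(A^*)} \leq A'$, packaged as a general range-projection lemma.
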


\begin{proof}
By Proposition \ref{PARCISOM1}-3 it is immediate to see that $P_{_{W^*(X)}}\in {\cal N}$. In order to prove that $P_{_{W(X)}}\in {\cal N}$, let us consider the left annihilator of the operator $WP_{_X}$ in the algebra ${\cal N}$, that is, $$Ann_L(WP_{_X}) = \{S\in {\cal N}: S(WP_{_X}) = 0 \}. $$ By Proposition \ref{RIGHTIDEAL} there exists unique projector $(WP_{_X})' \in {\cal P}({\cal N})$ such that $Ann_L(WP_{_X}) = {\cal N}(WP_{_X})'$. We shall prove that $(WP_{_X})' = P_{_{W(X)^\bot}}$.
Indeed:

Let $N P_{_{W(X)^\bot}} \in {\cal N}P_{_{W(X)^\bot}}$ and $x\in {\cal H}$. Then $(N P_{_{W(X)^\bot}})(WP_{_X})(x) = (N P_{_{W(X)^\bot}})(y)$ where $y = (WP_{_X})(x) \in W(X)$. Thus $P_{_{W(X)^\bot}}(y) = 0$ and $(N P_{_{W(X)^\bot}})(WP_{_X})(x) = 0$. It proves that ${\cal N}P_{_{W(X)^\bot}} \subseteq Ann_R(WP_{_X})$.  For the other inclusion let us consider $S \in Ann_R(WP_{_X})$. In this way $S\upharpoonright_{W(X)} = 0$.  By Proposition \ref{PARCISOM1}-1, $W(X)$ is a closed subspace of ${\cal H}$ and then, for each $x\in {\cal H}$ we can decompose $x$ as $x = x_{_{W(X)}} + x_{_{W(X)^\bot}}$ where $x_{_{W(X)}} \in W(X)$ and $x_{_{W(X)^\bot}} \in W^\bot(X)$. Thus, $S(x) = S(x_{_{W(X)}}) + S(x_{_{W(X)^\bot}}) = 0 + S(x_{_{W(X)^\bot}}) = SP_{_{W(X)^\bot}}(x)$ and then $S = SP_{_{W(X)^\bot}} \in {\cal N}P_{_{W(X)^\bot}}$.  It proves that  $Ann_R(WP_X) \subseteq {\cal N}P_{_{W(X)^\bot}}$. Hence, $P_{_{W(X)^\bot}} = (WP_X)' \in {\cal P}({\cal N})$ and, by Proposition \ref{SUMPROJ}, $P_{_{W(X)}} = 1_{{\cal H}} - P_{_{W(X)^\bot}} \in {\cal P}({\cal N})$.

\qed
\end{proof}

Let us notice that Proposition \ref{PARCISOMCLOSED} suggests that each partial isometry $W$ in a von Neumann algebra ${\cal N}\subseteq  {\cal B}({\cal H})$ satisfying $WW^* = 1_{{\cal H}}$ defines, in a natural way,
two operations on the lattice ${\cal P}({\cal N})$ given by
\begin{equation}\nonumber
{\cal P}({\cal N}) \ni P_{_X} \longmapsto P_{_{W(X)}} \hspace{0.3cm} \mbox{and} \hspace{0.3cm} {\cal P}({\cal N}) \ni P_{_X} \longmapsto P_{_{W^*(X)}}.
\end{equation}
For the sake of simplicity and in order to study these operations will be more convenient to work with closed subspaces rather than projectors.
Then, by considering the usual identification $P_{_X} \cong X $ for elements of ${\cal P}({\cal N})$ the following definition formally introduce the above operations.

\begin{definition}\label{WOPERATIONS}
{\rm Let ${\cal N} \subseteq  {\cal B}({\cal H})$ be a von Neumann algebra and $W \in {\cal N}$ be a partial isometry such that $WW^* = 1_{{\cal H}}$. By denoting $Z = Ker^\bot(W)$ we define the pair of {\it Borchers operations associated} to $W$ as the operations $\{w_Z, w^*_Z\}$ on ${\cal P}({\cal N})$ given by
\begin{equation}\label{OPW}
{\cal P}({\cal N}) \ni P_{_X} \cong X \longmapsto w_Z(X) = W(X) \cong  P_{_{W(X)}}.
\end{equation}
\begin{equation}\label{OPWSTAR}
{\cal P}({\cal N}) \ni P_{_X} \cong X \mapsto w^*_Z(X) = W^*(X) \cong P_{_{W^*(X)}}
\end{equation}
}
\end{definition}

\begin{prop}\label{PARCISOMID}
Let ${\cal N} \subseteq {\cal B}({\cal H})$ be a von Neumann algebra and $W \in {\cal N}$ be a partial isometry such that $WW^* = 1_{\cal H}$ and $Z = Ker^\bot(W)$. Then, the Borchers operations $w_Z$ and $w^*_Z$ associated to $W$ satisfy the following:
\begin{enumerate}
\item
$w_Z (w_Z^* (X)) = X$ and $w_Z^*(w_Z (X)) = \mu_{_Z}(X)$.

\item
$w_Z$ defines an order preserving operation on ${\cal P}({\cal N})$ such that the restriction $w_Z\upharpoonright_{[0, Z]_{{\cal P}({\cal N})}}$ is a ${\cal OML}$-isomorphism onto ${\cal P}({\cal N})$.

\item
$w^*_Z$ is a ${\cal OML}$-isomorphism of the form $w^*_Z: {\cal P}({\cal N}) \rightarrow [0, Z]_{{\cal P}({\cal N})}$.

\end{enumerate}

\end{prop}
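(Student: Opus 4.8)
The plan is to establish item 1 by a direct operator computation and then derive items 2 and 3 from Proposition \ref{EXTEN}, upgrading the order isomorphisms it provides to ${\cal OML}$-isomorphisms.

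First I would prove item 1. For $w_Z(w_Z^*(X)) = X$, translating the subspace images into operator applications gives $w_Z(w_Z^*(X)) = W(W^*(X)) = \{(WW^*)(x) : x\in X\}$, which equals $X$ since $WW^* = 1_{\cal H}$ by hypothesis. For $w_Z^*(w_Z(X)) = \mu_Z(X)$, the same translation gives $w_Z^*(w_Z(X)) = W^*(W(X)) = (W^*W)(X)$; by Theorem \ref{EQISOM}-3 we have $W^*W = P_{Ker^\bot(W)} = P_Z$, so $(W^*W)(X) = P_Z(X)$, and by Eq.(\ref{SASAKIPROJECTION2}) this is exactly $\mu_Z(X)$.

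Next I would verify the hypotheses of Proposition \ref{EXTEN} with $L = {\cal P}({\cal N})$ and $a = Z$. The maps $w_Z$ and $w_Z^*$ are well-defined unary operations on ${\cal P}({\cal N})$ by Proposition \ref{PARCISOMCLOSED}; they are order preserving because $W$ and $W^*$ are linear, so they send a smaller subspace to a smaller one; and $Z\neq 0$ because $WW^* = 1_{\cal H}$ forces $W\neq 0$, hence $Ker(W)\neq {\cal H}$. Together with item 1, which furnishes precisely $w_Z w_Z^* = id_{{\cal P}({\cal N})}$ and $w_Z^* w_Z = \mu_Z$, Proposition \ref{EXTEN} then yields that $w_Z^*: {\cal P}({\cal N}) \rightarrow [0,Z]_{{\cal P}({\cal N})}$ and the restriction $w_Z\upharpoonright_{[0,Z]_{{\cal P}({\cal N})}} \rightarrow {\cal P}({\cal N})$ are order isomorphisms.

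The main obstacle is that Proposition \ref{EXTEN} delivers only order isomorphisms, whereas items 2 and 3 claim ${\cal OML}$-isomorphisms; since the orthocomplement of an orthomodular lattice is not determined by its order, I must verify preservation of $^\bot$ separately, working at the Hilbert space level. By Proposition \ref{PARCISOM0}, $W\upharpoonright_Z$ is an isometry of $Z$ onto ${\cal H}$ and $W^*$ is a surjective isometry of ${\cal H}$ onto $Z$, so both preserve orthogonality. Hence for closed $X$ the orthogonal decomposition ${\cal H} = X \oplus X^\bot$ (Proposition \ref{closed1}-2) is carried by $W^*$ to an orthogonal decomposition $Z = W^*(X) \oplus W^*(X^\bot)$ into closed subspaces (closedness by Proposition \ref{PARCISOM1}-1, the sum closed by Proposition \ref{closed1}-3), which forces $W^*(X^\bot) = (W^*(X))^\bot \cap Z = \neg_Z w_Z^*(X)$; thus $w_Z^*$ preserves orthocomplements. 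Symmetrically, for $X \in [0,Z]$ the decomposition $Z = X \oplus (X^\bot \cap Z)$ is carried by the isometry $W\upharpoonright_Z$ onto ${\cal H} = W(X) \oplus W(X^\bot\cap Z)$, giving $w_Z(\neg_Z X) = (w_Z(X))^\bot$. Combined with the order isomorphisms already obtained, this establishes that $w_Z^*$ and $w_Z\upharpoonright_{[0,Z]}$ are ${\cal OML}$-isomorphisms, proving items 2 and 3.
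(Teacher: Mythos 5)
Your proposal is correct, and its skeleton is the same as the paper's: item 1 follows from $WW^*=1_{\cal H}$ together with Theorem \ref{EQISOM}-3 and Eq.(\ref{SASAKIPROJECTION2}), and items 2 and 3 are obtained by feeding item 1 into Proposition \ref{EXTEN} and then verifying preservation of orthocomplements separately. The divergence is in how that verification is carried out. The paper establishes $w_Z(\neg_{_Z} X)=w_Z(X)^\bot$ by an element-wise inner-product argument (two inclusions, chasing vectors of the form $W(x)$ and using Proposition \ref{PARCISOM0}-2), and then for item 3 it never returns to the Hilbert space: the identity $w_Z^*(X^\bot)=\neg_{_Z} w_Z^*(X)$ is deduced purely lattice-theoretically from items 1 and 2, by computing $w_Z\big(w_Z^*(X^\bot)\big)=X^\bot=w_Z\big(\neg_{_Z} w_Z^*(X)\big)$ and cancelling the injective map $w_Z\upharpoonright_{[0,Z]}$. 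You instead handle both maps by a single geometric principle: a surjective isometry carries the orthogonal decomposition ${\cal H}=X\oplus X^\bot$ (respectively $Z=X\oplus(X^\bot\cap Z)$) to an orthogonal decomposition of its image into closed subspaces, which pins down the relative orthocomplement. Both routes are sound. Yours is more uniform and geometrically transparent, treating $w_Z$ and $w_Z^*$ symmetrically; the paper's derivation of item 3 from item 2 has the merit of being a formal order-theoretic consequence of the identities in item 1, in the same spirit as the abstract $LQF$-algebra development later in the paper, where analogous facts must be derived without any Hilbert-space geometry. One cosmetic point: your appeal to Proposition \ref{closed1}-3 for closedness of the sum is superfluous, since the sum in question is already identified with $Z$ (respectively ${\cal H}$).
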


\begin{proof}
1) They follow by the condition  $WW^* = 1_{\cal H}$ and by  Proposition \ref{EQISOM}-3 respectively.

2) We first note that $w_Z$ is an order preserving operation on ${\cal P}({\cal N})$. Then, by Proposition \ref{EXTEN}-2, the restriction $w_Z\upharpoonright_{[0, Z]_{{\cal P}({\cal N})}}$ is an order isomorphism onto ${\cal P}({\cal N})$. We have to prove that $w_Z$ preserves orthogonal complements i.e.,  for each $X \in [0, Z]_{{\cal P}({\cal N})}$, $w_Z(\neg_{_Z}X) = w_Z(X)^\bot$. Indeed:
Let $y \in w_Z(\neg_{_Z}X)$. Then  there exists $x \in   \neg_{_Z}X = X^\bot \cap Z $ such that $y = W(x)$. Let $y_1 \in w_Z(X)$. Then there exists $x_1 \in X$ such that $y_1 = W(x_1)$. Let us notice that $\langle x, x_1 \rangle = 0$ and $x, x_1 \in Z = Ker^\bot(W)$. Thus, by Proposition \ref{PARCISOM0}-2, $\langle y, y_1 \rangle = \langle W(x), W(x_1) \rangle = \langle x, x_1 \rangle = 0$. It proves that $y \bot w_Z(X)$ and consequently $y\in w_Z(X)^\bot$. Therefore, $w_Z(\neg_{_Z}X) \subseteq w_Z(X)^\bot$. For the converse, let $y \in  w_Z(X)^\bot$. Then, for each $z\in X$, we have that $\langle y, W(z) \rangle = 0$.
By Proposition \ref{PARCISOM0}-2 there exists $x \in Z$ such that $y = W(x)$. Let us notice that $x \bot X $ because $\langle x,z \rangle = \langle W(x), W(z) \rangle =  \langle y, W(z) \rangle = 0$ for each $z\in X$.
Thus, $x\in X^\bot \cap Z$ and $y = W(x) \in w_Z(X^\bot \cap Z) = w_Z(\neg_{_Z}X) $. It proves that $w_Z(X)^\bot \subseteq  w_Z(\neg_{_Z}X)$. Hence,
the restriction $w_Z\upharpoonright_{[0, Z]_{{\cal P}({\cal N})}}$ preserves orthogonal complements and then, $w_Z\upharpoonright_{[0, Z]_{{\cal P}({\cal N})}}$ is a ${\cal OML}$-isomorphism onto ${\cal P}({\cal N})$.

3) Let us notice that $w^*_z$ defines an order preserving operation on ${\cal P}({\cal N})$. Then, by Proposition \ref{EXTEN}-1, $w^*_Z$ is an order isomorphisms onto $[0, Z]_{{\cal P}({\cal N})}$. We have to prove that $w_Z^*$ preserves orthogonal complements i.e., for each $X \in {\cal P}({\cal N})$, $w_Z^*(X^\bot) = \neg_{_Z} w_Z^*(X)$. Indeed: We first note that $w_Z\big(w_Z^*(X^\bot)\big) = X^\bot$ because of item 1.
We also note that $ w_Z^*(X) \in [0, Z]_{{\cal P}({\cal N})}$. Then, by item 2, $w_Z \big(\neg_{_Z} w_Z^*(X) \big) = \big( w_Z w_Z^*(X)\big)^\bot = X^\bot$. Thus, $w_Z\big(w_Z^*(X^\bot)\big) = w_Z \big(\neg_{_Z} w_Z^*(X) \big)$ and
, since the restriction $w_Z\upharpoonright_{[0, Z]_{{\cal P}({\cal N})}}$ is bijective, we have that $w_Z^*(X^\bot) = \neg_{_Z} w_Z^*(X)$.
Hence $w_Z^*$ preserves orthogonal complements and it is a ${\cal OML}$-isomorphism from ${\cal P}({\cal N})$ onto $[0, Z]_{{\cal P}({\cal N})}$.

\qed
\end{proof}

As we will see below, Proposition \ref{PARCISOMID} provides a useful result in order to establish an alternative proof of the well known properties of non-modularity and non-atomicity of the lattice of projectors of a type $III$ factor.

\begin{prop}\label{RELATED2}
Let ${\cal N} \subseteq {\cal B}({\cal H})$ be a  type III factor. Then

\begin{enumerate}
\item
If $0< Y < X < 1_{{\cal H}}$ then $Y \sim_p X$.

\item
${\cal P}({\cal N})$ is a non modular lattice.

\item
${\cal P}({\cal N})$ has no atoms.

\end{enumerate}

\end{prop}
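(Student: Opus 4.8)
The plan is to prove item 1 first, since items 2 and 3 will both follow from it once I produce the right witnesses via the interval isomorphisms coming from the Borchers operations. Throughout I use that a type III factor satisfies ${\cal N} \neq {\cal Z}({\cal N})$ (otherwise ${\cal N} = {\cal Z}({\cal N}) = {\mathbb C}\,1_{\cal H}$ would be type $I_1$), so Proposition \ref{RELATED1} applies: every pair of nonzero projectors is Murray--von Neumann equivalent, and in particular there exists a projector $X$ with $0 < X < 1_{\cal H}$.

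For item 1 the idea is to feed the Fillmore lemma (Proposition \ref{FILLLEMMA}) data chosen so that the perspectivity it produces is computed in the whole lattice rather than in a proper interval. Given $0 < Y < X < 1_{\cal H}$, I set $P = Y$, $Q = X$ and $R = X^\bot$. The hypotheses of Proposition \ref{FILLLEMMA} are verified using the Borchers condition: $Q \sim P$ since $X$ and $Y$ are both nonzero, $P \leq Q$ is immediate from $Y < X$, $R \leq Q^\bot$ is trivial, and $P \preceq R$ holds because $Y \sim X^\bot$ (both nonzero) forces $Y \preceq X^\bot$. The lemma then asserts that $Y$ and $X$ are perspective in $[0, Q\lor R]_{{\cal P}({\cal N})} = [0, X\lor X^\bot]_{{\cal P}({\cal N})} = [0, 1_{\cal H}]_{{\cal P}({\cal N})} = {\cal P}({\cal N})$. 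The whole point of taking $R = X^\bot$ is that $X \lor X^\bot = 1_{\cal H}$, so perspectivity in the interval coincides with perspectivity in ${\cal P}({\cal N})$, giving $Y \sim_p X$.

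For items 2 and 3 I exploit the interval isomorphism of Proposition \ref{PARCISOMID}-3. Fix a nontrivial $X$ with $0 < X < 1_{\cal H}$. The Borchers condition gives $X \sim 1_{\cal H}$, hence a partial isometry $W \in {\cal N}$ with $WW^* = 1_{\cal H}$ and $W^*W = X$; by Theorem \ref{EQISOM}-3 the subspace $Z = Ker^\bot(W)$ equals $X$, and Proposition \ref{PARCISOMID}-3 makes $w_Z^*$ an ${\cal OML}$-isomorphism ${\cal P}({\cal N}) \to [0,X]_{{\cal P}({\cal N})}$. For item 2, applying this order isomorphism (which preserves the bounds) to $0 < X < 1_{\cal H}$ yields $0 < w_Z^*(X) < X$; writing $Y = w_Z^*(X)$ gives a chain $0 < Y < X < 1_{\cal H}$, so by item 1 we have $Y \sim_p X$ with $Y < X$, and Proposition \ref{EQUIVNONMOD} forces ${\cal P}({\cal N})$ to be non-modular. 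For item 3, suppose some atom $A$ existed; since ${\cal N} \neq {\cal Z}({\cal N})$ rules out ${\cal P}({\cal N}) = \{0, 1_{\cal H}\}$ we have $0 < A < 1_{\cal H}$, and the same construction with $X$ replaced by $A$ produces an ${\cal OML}$-isomorphism ${\cal P}({\cal N}) \cong [0,A]_{{\cal P}({\cal N})} = \{0, A\}$, a two-element lattice, contradicting $0 < A < 1_{\cal H}$. Hence ${\cal P}({\cal N})$ has no atoms.

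The main obstacle is item 1: exhibiting a genuine common complement of $Y$ and $X$ in the full lattice. Perspectivity transfers easily into Murray--von Neumann equivalence in one direction only (Proposition \ref{PERSPMURRAY} goes from $\sim_p$ to $\sim$), so the reverse passage has to be engineered, and the Fillmore lemma delivers perspectivity merely inside an interval. The decisive manoeuvre is the choice $R = X^\bot$, which collapses that interval to all of ${\cal P}({\cal N})$; once item 1 is secured, items 2 and 3 are short applications of the interval isomorphism $w_Z^*$ of Proposition \ref{PARCISOMID}-3.
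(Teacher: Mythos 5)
Your proof is correct, and its overall architecture matches the paper's: Fillmore's lemma (Proposition \ref{FILLLEMMA}) combined with the Borchers condition for item 1, and the interval isomorphism of Proposition \ref{PARCISOMID}-3 for items 2 and 3. The genuine difference is in item 1, where your instantiation is more economical than the paper's. You feed the Fillmore lemma $R = X^\bot$: the hypotheses $X \sim Y \leq X$ and $Y \preceq X^\bot \leq X^\bot$ hold by Proposition \ref{RELATED1}-2 (all of $Y$, $X$, $X^\bot$ being nonzero), and the conclusion is perspectivity in $[0, X \lor X^\bot]_{{\cal P}({\cal N})} = {\cal P}({\cal N})$ itself, so there is nothing left to transfer. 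The paper instead takes $R = w^*_{X^\bot}(X^\bot)$, which costs an invocation of Proposition \ref{PARCISOMID} to manufacture the Borchers operation $w^*_{X^\bot}$, yields perspectivity only in the possibly proper interval $[0, X \lor R]_{{\cal P}({\cal N})}$, and therefore requires a further appeal to Proposition \ref{EQPESPEC} to promote this to $Y \sim_p X$ in the full lattice. Your choice of $R$ eliminates both of these steps; the paper's longer route has no logical payoff here beyond exercising the Borchers-operation machinery that is central elsewhere in the paper. Items 2 and 3 are essentially the paper's arguments: you make explicit the chain $0 < w_Z^*(X) < X < 1_{\cal H}$ that the paper's ``immediately follows from item 1'' leaves tacit, and you phrase non-atomicity as a cardinality contradiction (${\cal P}({\cal N}) \cong [0,A]_{{\cal P}({\cal N})} = \{0,A\}$, impossible when $0 < A < 1_{\cal H}$) where the paper directly exhibits $0 < w_Z^*(Z) < Z$; you are also more careful than the paper in checking the hypothesis ${\cal N} \neq {\cal Z}({\cal N})$ before citing Proposition \ref{RELATED1}. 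One minor wording quibble, not a gap: the existence of a projector strictly between $0$ and $1_{\cal H}$ follows from ${\cal N} \neq {\cal Z}({\cal N}) = {\mathbb C}\,1_{\cal H}$ together with ${\cal P}({\cal N})'' = {\cal N}$, not from Proposition \ref{RELATED1} itself, as your ``in particular'' might suggest.
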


\begin{proof}
1) By Proposition \ref{RELATED1}-2, we have that  $X \sim Y < X $. Moreover, by Proposition \ref{PARCISOMID}, there exists a partial isometry defining an order isomorphism  $w^*_{X^\bot}: {\cal P}({\cal N}) \rightarrow [0, {X^\bot}]_{{\cal P}({\cal N})}$. Therefore, $w^*_{X^\bot}(X^\bot) \leq X^\bot$ and, by Proposition \ref{RELATED1}-2 again, we also have that $Y \preceq w^*_{X^\bot}(X^\bot)$.  In this way $Y \preceq w^*_{X^\bot}(X^\bot) \leq X^\bot$. Then, by Proposition \ref{FILLLEMMA}, $Y$ and $X$ are perspective in $[0, X \lor w^*_{X^\bot}(X^\bot) ]_{{\cal P}({\cal N})}$. Thus, by Proposition \ref{EQPESPEC}, $Y \sim_p X$.

2) Immediately follows from item 1.

3) Let $Z \in{\cal P}({\cal N}) - \{0, 1_{{\cal H}}\}$. By Proposition \ref{RELATED1} there exists a partial isometry $W$ such that $WW^* = 1_{\cal H}$ and $W^*W = P_{Z^\bot}$. Thus, by Proposition \ref{PARCISOMID}-3, we have that $0 < w^*_Z(Z) < Z$ and, consequently, $Z$ is not an atom. Hence ${\cal P}({\cal N})$ has no atoms.

\qed
\end{proof}

The following theorem provides a purely lattice order characterization of type III factors.

\begin{theo}\label{CHARACTTYPEIII}
Let ${\cal N} \subseteq {\cal B}({\cal H})$ be factor. Then the following  statements are equivalent:

\begin{enumerate}

\item
${\cal N}$ is a a non trivial type III factor.

\item
${\cal P}({\cal N})$ admits two binary operations $w(-, -)$ and $w^*(-,-)$ such that, upon defining $w_{_Z}(X) = w(Z,X)$ and $w_{_Z}^*(X) = w^*(Z,X)$, the following conditions are satisfied:

\begin{enumerate}[\hspace{18pt}{\rm {\footnotesize III}}1.]

\item
$w_0(0) = 0$,

\item
$X\leq w_0(X)$,

\item
$Y = (Y\land w_0(X)) \lor (Y\land w_0(X)^\bot)$,

\item
$w_{_Z}(X\land Y) \leq w_{_Z}(X) $,

\item
$w_0(Z) \land w_{_Z}^* (X \land Y)  \leq w_{_Z}^* (X) $,

\item $\begin{aligned}[t]
    w_0^*(Z) \land Z &= w_0^*(Z) \land w_{_Z}^*(Z) \\
              &= \Big(w_0(Z^\bot)^\bot \land  w_0^*(Z) \Big) \lor \Big(w_0(Z^\bot) \land w_0\big( w_0^*(Z) \land Z \big) \Big),
  \end{aligned}$

\item
$w_0^*(Z) \lor Z =  w_0^*(Z) \lor w_{_Z}^*(Z) = w_0\big( w_0^*(Z) \lor Z \big) $,

\item
$w_0(Z) \leq w_{_Z}\big(w_{_Z}^* (X)\big) R X  $,

\item
$w_0(Z) \leq  w_{_Z}^*\big( w_{_Z}(X)\big) R \mu_{_Z}(X)  $,

\item
$w_0\big( w_0^*(1)\big) = w_0\big( w_0^*(1)^\bot \big)$.

\end{enumerate}

In this case  $w_0$ is the internal dimension function on ${\cal P}({\cal N})$ and $w_0^*$ satisfies the following conditions
\begin{equation}\label{w_0^*(0)}
w_0^*(0) = 0.
\end{equation}
\begin{equation}\label{w_0^*(1)}
0< w_0^*(1_{\cal H}) <1_{\cal H}.
\end{equation}
\begin{equation}\label{w_0^*(Z)common}
\mbox{$w_0^*(Z)$ is a common complement of $\{Z, w_{_Z}^*(Z) \}$ for $Z \not = 0, 1_{\cal H}$.}
\end{equation}

\end{enumerate}

\end{theo}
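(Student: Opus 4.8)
The plan is to read the ten equations III1--III10 as a lattice-theoretic transcription of three facts about type III factors, and to prove the two implications separately. First I record the role played by each block of conditions. By Proposition \ref{BAAZ}, in the directly indecomposable lattice ${\cal P}({\cal N})$ (directly indecomposable by Proposition \ref{CENTERFACTOR}), conditions III1--III3 force $w(0,-)=w_0$ to be exactly the internal dimension function; in particular $w_0(A)=A$ iff $A\in\{0,1_{\cal H}\}$, and this is the device that lets the remaining equations act as case distinctions. Using $w_0$ as an indicator, III4 and III5 say that $w_{_Z}$ and $w_{_Z}^*$ are order preserving (the factor $w_0(Z)$ collapsing the clause to triviality when $Z=0$); III8 and III9, together with the pointwise fact that in any orthomodular lattice $aRb=1$ iff $a=b$ (immediate from Eq.(\ref{EQCONNECTIVE})), say that $w_{_Z}w_{_Z}^*=\mathrm{id}$ and $w_{_Z}^*w_{_Z}=\mu_{_Z}$; finally III6 and III7, read through $w_0$, assert that $w_0^*(Z)$ is a common complement of $Z$ and $w_{_Z}^*(Z)$ whenever $0<Z<1_{\cal H}$, while III10 asserts $0<w_0^*(1_{\cal H})<1_{\cal H}$.

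For $1\Rightarrow 2$ I would build the operations explicitly. Set $w(0,-)=w_0$, the internal dimension function, which satisfies III1--III3 by Proposition \ref{BAAZ}. For $Z\neq 0$ the Borchers condition (Proposition \ref{RELATED1}) gives $1_{\cal H}\sim Z$, hence a partial isometry $V_Z$ with $V_ZV_Z^*=1_{\cal H}$ and $Ker^\bot(V_Z)=Z$; let $w_{_Z},w_{_Z}^*$ be its Borchers operations of Definition \ref{WOPERATIONS}. Proposition \ref{PARCISOMID} then yields III4, III5, III8, III9 directly, and Proposition \ref{EXTEN} gives $w_{_Z}^*(Z)<Z$ for $Z\neq 1_{\cal H}$. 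It remains to define $w_0^*=w^*(0,-)$: I set $w_0^*(0)=0$, choose $w_0^*(1_{\cal H})$ to be any proper nonzero projector, and for $0<Z<1_{\cal H}$ choose $w_0^*(Z)$ to be a common complement of $Z$ and $w_{_Z}^*(Z)$, which exists since $0<w_{_Z}^*(Z)<Z<1_{\cal H}$ forces $w_{_Z}^*(Z)\sim_p Z$ by Proposition \ref{RELATED2}. A direct check, using $w_0(A)=A\Leftrightarrow A\in\{0,1_{\cal H}\}$, then confirms III6, III7 and III10, including the boundary cases $Z\in\{0,1_{\cal H}\}$.

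For $2\Rightarrow 1$ the aim is to compute the range of the dimension function $D$ of Theorem \ref{TYPECLASSIFICATION}. Fix any $Z$ with $0<Z<1_{\cal H}$ (such $Z$ exists by III10). From III4, III5, III8, III9 and $aRb=1\Leftrightarrow a=b$, the hypotheses of Proposition \ref{EXTEN} hold, so $w_{_Z}^*\colon{\cal P}({\cal N})\to[0,Z]_{{\cal P}({\cal N})}$ is an order isomorphism with $0<w_{_Z}^*(Z)<Z$. From III6 and III7, interpreted through the internal dimension function, $w_0^*(Z)$ is a common complement of $Z$ and $w_{_Z}^*(Z)$, whence $w_{_Z}^*(Z)\sim_p Z$. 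Since $0<w_{_Z}^*(Z)<Z<1_{\cal H}$, Proposition \ref{INFINITEDIM} gives $D(Z)=D(w_{_Z}^*(Z))=D(1_{\cal H})=\infty$. As $Z$ was arbitrary strictly between $0$ and $1_{\cal H}$, and $D(1_{\cal H})=\infty$ as well, we conclude $D(A)=\infty$ for every $A\neq 0$ and $D(0)=0$, i.e. $Imag(D)=\{0,\infty\}$. By Theorem \ref{TYPECLASSIFICATION} this is precisely the type III case, and $0<w_0^*(1_{\cal H})<1_{\cal H}$ shows ${\cal P}({\cal N})\neq{\bf 2}$, so ${\cal N}$ is a non trivial type III factor.

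The main obstacle is the bookkeeping in $1\Rightarrow 2$: the data are packaged as two \emph{binary} operations whose equations must hold uniformly in the first argument, so one must verify that the $w_0$-indicator trick makes III4--III10 reduce to the intended statements not only for $0<Z<1_{\cal H}$ but also at the degenerate values $Z=0$ and $Z=1_{\cal H}$, where $w_0(Z)\in\{0,1_{\cal H}\}$ either trivializes or rigidifies several clauses. The conceptual content, by contrast, is light: once the indicator reading is fixed, $1\Rightarrow 2$ is Proposition \ref{PARCISOMID} together with a choice of common complements, and $2\Rightarrow 1$ is Proposition \ref{EXTEN} feeding Proposition \ref{INFINITEDIM}.
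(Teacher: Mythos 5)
Your proposal is correct and follows essentially the same route as the paper's own proof: for $1\Rightarrow 2$ you build $w$ and $w^*$ from the Borchers operations of the partial isometries supplied by Proposition \ref{RELATED1} (using Proposition \ref{PARCISOMID}) and choose $w_0^*(Z)$ as a common complement obtained via Proposition \ref{RELATED2}, and for $2\Rightarrow 1$ you combine Proposition \ref{EXTEN} with the common-complement reading of III6--III7 and Proposition \ref{INFINITEDIM} to force $D(Z)=\infty$, exactly as the paper does. The differences (your explicit ``indicator'' reading of $w_0$ and the separated bookkeeping of the degenerate cases $Z\in\{0,1_{\cal H}\}$) are presentational, not mathematical.
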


\begin{proof}
$1 \Longrightarrow 2)$ Let us suppose that ${\cal N}$ is a non trivial type III factor. Let $w_0$ be the internal dimension function on ${\cal P}({\cal N})$. Then, by Proposition \ref{BAAZ}, conditions {\footnotesize III}{\it 1},  {\footnotesize III}{\it 2} and {\footnotesize III}{\it 3} are satisfied.
By the Borchers condition, mentioned in Proposition \ref{RELATED1}, for each $Z \in {\cal P}({\cal N}) - \{0\}$ there exists a partial isometry $W_Z$ such that $W_ZW^*_Z = 1_{\cal H}$ and $W^*_Z W_Z = P_{_Z}$, thereby defining a pair of Borchers operations $\{w_Z, w^*_Z\}$ on $ {\cal P}({\cal N})$ associated to $W_Z$. By Proposition \ref{PARCISOMID}, $w_Z$ and $w^*_Z$ are order preserving operations such that
\begin{equation}\label{EQWZN0}
w_Z (w_Z^* (X)) = X, \hspace{0.3cm} w_Z^*(w_Z (X)) = \mu_{_Z}(X).
\end{equation}
In this way, by Proposition \ref{PARCISOMID}-3, for each $Z \in {\cal P}({\cal N}) -\{0, 1_{\cal H}\}$ we have that $0 < w_{_Z}^*(Z) < Z$ and, by Proposition \ref{RELATED2}-1, $w_{_Z}^*(Z) \sim_p Z$. This allow us to define
an unary operation $w_0^*(-)$ on ${\cal P}({\cal N})$ such that
\begin{itemize}
\item[]
$w_0^*(0) = 0$,
\item[]
$0< w_0^*(1_{\cal H}) <1_{\cal H}$,
\item[]
$w_0^*(Z)$ is a common complement of $\{Z, w_{_Z}^*(Z) \}$ for $Z \not = 0, 1_{\cal H}$.
\end{itemize}
In this way $w_0^*$ satisfies Eq.(\ref{w_0^*(0)}), Eq.(\ref{w_0^*(1)}) and Eq.(\ref{w_0^*(Z)common}). Moreover, $w_0\big( w_0^*(1)\big) = w_0\big( w_0^*(1)^\bot \big)$ because $w_0^*(1) \not = 0, 1_{\cal H}$. Thus, condition {\footnotesize III}{\it 10} is also satisfied.

{\footnotesize III}{\it 4}. Let us notice that $w_0$ is an order preserving map. Then, $w_0(X\land Y) \leq w_0(X)$. If $Z\not=0$ then the partial isometry $W_Z$ preserves the inclusion of closed subspaces. Thus, $w_{_Z}(X\land Y) \leq w_{_Z}(X) $.

{\footnotesize III}{\it 5}. If $Z=0$ then $w_0(0) \land w_0^* (X \land Y) = 0 \leq  w_0^* (X)$. Let us suppose that $Z\not=0$. Then $W_Z^*$ preserves the inclusion of closed subspaces because it is also a partial isometry.
Thus, $w_0(Z) \land w_{_Z}^* (X \land Y) = 1_{\cal H} \land w_{_Z}^* (X \land Y)   \leq w_{_Z}^* (X) $.

{\footnotesize III}{\it 6}. By straightforward calculation we can see that condition {\footnotesize III}{\it 6} is satisfied for $Z \in \{0, 1_{\cal H}\} $. Let us suppose that $0 < Z < 1_{\cal H}$. By definition of $w_0^*$,
$w_0^*(Z)$ is a common complement of $\{Z, w_{_Z}^*(Z) \}$ and then, $w_0^*(Z) \land Z = w_0^*(Z) \land w_{_Z}^*(Z) = 0$. Let also notice that $0 < Z^\bot < 1_{\cal H}$. Then, $\Big(w_0(Z^\bot)^\bot \land  w_0^*(Z) \Big) \lor \Big(w_0(Z^\bot) \land w_0\big( w_0^*(Z) \land Z \big) \Big) = \big(0 \land  w_0^*(Z)\big) \lor \big(1 \land w_0\big( w_0^*(Z) \land Z \big)\big) = 0 \lor (1 \land w_0 (0)) = 0 $. Thus {\footnotesize III}{\it 6} is also satisfied in this case.

{\footnotesize III}{\it 7}. If $Z \in \{0, 1_{\cal H}\} $ then, by straightforward calculation, {\footnotesize III}{\it 7} is satisfied. The case $0 < Z < 1_{\cal H}$ follows along the same lines as the proof of condition
{\footnotesize III}{\it 6}.

{\footnotesize III}{\it 8}, {\footnotesize III}{\it 9}. If $Z=0$ then {\footnotesize III}{\it 8} and {\footnotesize III}{\it 9} are trivially satisfied. If $Z \not = 0$ then, by Eq.(\ref{EQWZN0}), $w_{_Z}(w_{_Z}^* (X)) R X = 1$ and $w_{_Z}^*( w_{_Z}(X)) R \mu_{_Z}(X) = 1$. Thus {\footnotesize III}{\it 8} and {\footnotesize III}{\it 9} are also satisfied in this case.

Finally, if we define the binary operations $w(Z, X) = w_Z(X)$ and $w^*(Z,X) = w^*_Z(X)$ then our claim is established.

$2 \Longrightarrow 1)$ Let us suppose that ${\cal P}({\cal N})$ admits two binary operations $w(-, -)$ and $w^*(-,-)$ such that, upon defining $w_{_Z}(X) = w(Z,X)$ and $w_{_Z}^*(X) = w^*(Z,X)$, conditions {\footnotesize III}{\it 1} $\ldots$ {\footnotesize III}{\it 10}  are satisfied. Since  ${\cal N}$ is a factor, combining conditions {\footnotesize III}{\it 1}, {\footnotesize III}{\it 2}, {\footnotesize III}{\it 3} and Proposition \ref{BAAZ},  $w_0(0) = 0$ and $w_0(X) = 1_{\cal H}$ whenever $X \not= 0 $.  By condition {\footnotesize III}{\it 10} we have that $0< w_0^*(1_{\cal H}) <1_{\cal H}$. Thus, Eq.(\ref{w_0^*(1)}) is satisfied and ${\cal N}$ is a non trivial factor.
By condition {\footnotesize III}{\it 6} we have that $0 = w_0^*(0) \land 0 = w_0^*(0) \land w_0^*(0) = w_0^*(0)$. Therefore, Eq.(\ref{w_0^*(0)}) is satisfied. Let us suppose that $0 < Z < 1_{\cal H}$. Then $w_0(Z) = w_0(Z^\bot) = 1$ and, by condition {\footnotesize III}{\it 6},  $w_0^*(Z) \land Z = w_0^*(Z) \land w_{_Z}^*(Z) = w_0\big( w_0^*(Z) \land Z \big) \in \{ 0, 1_{\cal H} \}$. If $w_0^*(Z) \land Z = 1_{\cal H}$ then $Z = 1_{\cal H}$ which is a contradiction. Then,
\begin{equation}\label{w_0^*(Z)common1}
w_0^*(Z) \land Z = w_0^*(Z) \land w_{_Z}^*(Z) = 0.
\end{equation}
By condition {\footnotesize III}{\it 7}, $w_0^*(Z) \lor Z =  w_0^*(Z) \lor w_{_Z}^*(Z) = w_0\big( w_0^*(Z) \lor Z \big) \in \{ 0, 1_{\cal H} \}$. If $w_0^*(Z) \lor Z = 0$ then $Z=0$ which is a contradiction. Then,
\begin{equation}\label{w_0^*(Z)common2}
w_0^*(Z) \lor Z =  w_0^*(Z) \lor w_{_Z}^*(Z) = 1.
\end{equation}
Thus, by Eq.(\ref{w_0^*(Z)common1}) and Eq.(\ref{w_0^*(Z)common2}), $w_0^*(Z)$ is a common complement of $\{Z, w_{_Z}^*(Z) \}$ whenever $Z \not = 0, 1_{\cal H}$. Consequently, Eq.(\ref{w_0^*(Z)common}) is also satisfied.

By Theorem \ref{TYPECLASSIFICATION}, there exists a dimension function $D:{\cal P}({\cal N}) \rightarrow [0,\infty]$ uniquely determined up to positive
constant multiple. Then we shall prove that for each $Z \in {\cal P}({\cal N}) -\{0, 1_{\cal H} \}$, $D(Z) = \infty$. Indeed:

Since $Z\not=0$ we have that $w_0(Z) = 1$. Then, by conditions {\footnotesize III}{\it 8} and {\footnotesize III}{\it 9}, $ w_{_Z}(w_{_Z}^* (X)) R X = 1$ and $w_{_Z}^*( w_{_Z}(X)) R \mu_Z(X) = 1$ respectively.
Therefore, by Eq.(\ref{ECMOL}), we have that
\begin{equation}\label{W1}
w_{_Z}(w_{_Z}^* (X)) = X \hspace{0.2cm} and \hspace{0.2cm} w_{_Z}^*( w_{_Z}(X)) = \mu_Z(X).
\end{equation}
We also note that, by condition {\footnotesize III}{\it 4} and {\footnotesize III}{\it 5}, $w_{_Z}$ and $w_{_Z}^*$ are order preserving maps in ${\cal P}({\cal N})$. Thus, by Eq.(\ref{W1}) and  Proposition \ref{EXTEN}, we have that
\begin{equation}\label{W2}
0 < w_{_Z}^*(Z) <Z <1.
\end{equation}
We also note that, by Eq.(\ref{w_0^*(Z)common1}) and Eq.(\ref{w_0^*(Z)common2}),  $ w_{_Z}^*(Z) \sim_p Z$. Then, by Eq.(\ref{W2}) and  by Proposition \ref{INFINITEDIM}-2, we have that $D(Z) = \infty$. It proves that ${\cal N}$ is a type III factor and $w_0$ is the internal dimension function.

\qed
\end{proof}

Let us notice that conditions  {\footnotesize III}{\it 1} $\ldots$  {\footnotesize III}{\it 10} can be rephrased as an equational system because ${\cal P}({\cal N})$ is a lattice.

\begin{prop}\label{PRECONGCOND}
Let ${\cal N} \subseteq {\cal B}({\cal H})$ be a type III factor and let us consider two binary operations $w(-, -)$ and $w^*(-,-)$ on ${\cal P}({\cal N})$
satisfying the conditions {\footnotesize III}1 $\ldots$ {\footnotesize III}10 of the Theorem \ref{CHARACTTYPEIII}. Then,

\begin{enumerate}

\item
$w_{_X}(y\land w_0(Z)) = w_{_{X\land w_0(Z)}}(Y\land w_0(Z)) = w_{_X}(Y) \land w_0(Z)$.

\item
$w^*_{_X}(Y\land w_0(Z)) = w^*_{_{X\land w_0(Z)}}(Y\land w_0(Z)) = w^*_{_X}(Y) \land w_0(Z)$.

\end{enumerate}

\end{prop}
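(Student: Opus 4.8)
The plan is to exploit that a factor has a directly indecomposable projection lattice. Since ${\cal N}$ is a factor, ${\cal P}({\cal N})$ is directly indecomposable (Proposition \ref{CENTERFACTOR}), so by Theorem \ref{CHARACTTYPEIII} (equivalently, combining {\footnotesize III}{\it 1}, {\footnotesize III}{\it 2}, {\footnotesize III}{\it 3} with Proposition \ref{BAAZ}) the operation $w_0$ is the internal dimension function; in particular $w_0(Z) \in \{0, 1_{\cal H}\}$ for every $Z \in {\cal P}({\cal N})$. This dichotomy reduces both statements to a short case analysis on the value of $w_0(Z)$.

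First I would dispose of the trivial case $w_0(Z) = 1_{\cal H}$. Here $Y\land w_0(Z) = Y$ and $X\land w_0(Z) = X$, so the three expressions of statement 1 all reduce to $w_{_X}(Y)$ (the last one since $w_{_X}(Y)\land 1_{\cal H} = w_{_X}(Y)$), and symmetrically the three expressions of statement 2 all equal $w_{_X}^*(Y)$; both chains of equalities are then immediate.

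The case $w_0(Z) = 0$ is where the argument has content. Now $Y\land w_0(Z) = 0$ and $X\land w_0(Z) = 0$, so the three terms of statement 1 become $w_{_X}(0)$, $w_0(0)$ and $w_{_X}(Y)\land 0 = 0$, while those of statement 2 become $w_{_X}^*(0)$, $w_0^*(0)$ and $0$. The middle and right terms are settled by {\footnotesize III}{\it 1} (giving $w_0(0) = 0$), by Eq.(\ref{w_0^*(0)}) (giving $w_0^*(0) = 0$) and by absorption. The one nontrivial point, and the main, albeit mild, obstacle, is to establish $w_{_X}(0) = 0$ and $w_{_X}^*(0) = 0$ for an \emph{arbitrary} $X$.

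To close this gap I would argue by an auxiliary dichotomy on $X$. If $X = 0$ the identities are again {\footnotesize III}{\it 1} and Eq.(\ref{w_0^*(0)}). If $X\not= 0$ then $w_0(X) = 1_{\cal H}$, so conditions {\footnotesize III}{\it 8} and {\footnotesize III}{\it 9} together with the characterization Eq.(\ref{ECMOL}) yield $w_{_X}(w_{_X}^*(V)) = V$ and $w_{_X}^*(w_{_X}(V)) = \mu_{_X}(V)$ for all $V$, while {\footnotesize III}{\it 4} and {\footnotesize III}{\it 5} make $w_{_X}$ and $w_{_X}^*$ order preserving. Proposition \ref{EXTEN}, applied with $a = X$, then gives that $w_{_X}^* : {\cal P}({\cal N}) \rightarrow [0,X]_{{\cal P}({\cal N})}$ and $w_{_X}\upharpoonright_{[0,X]_{{\cal P}({\cal N})}}$ are order isomorphisms. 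Since an order isomorphism preserves least elements, and $0$ is the bottom both of ${\cal P}({\cal N})$ and of $[0,X]_{{\cal P}({\cal N})}$, we obtain $w_{_X}^*(0) = 0$ and $w_{_X}(0) = 0$. Substituting these back into the case $w_0(Z) = 0$ closes both chains of equalities and establishes the proposition.
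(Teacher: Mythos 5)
Your proof is correct and follows essentially the same route as the paper: the paper also reduces everything to a dichotomy (phrased as $Z=0$ versus $Z\not=0$, which is equivalent to your split on $w_0(Z)\in\{0,1_{\cal H}\}$ once $w_0$ is identified as the internal dimension function) and settles the nontrivial case using {\footnotesize III}{\it 1}, Eq.(\ref{w_0^*(0)}) and Proposition \ref{EXTEN}. Your explicit sub-dichotomy on $X$ to justify $w_{_X}(0)=0$ and $w^*_{_X}(0)=0$ is a welcome refinement of a step the paper states without comment in item 1 and only cites briefly (via Proposition \ref{EXTEN}-1) in item 2.
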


\begin{proof}
1) Let us suppose that $Z=0$. Then, $w_{_X}(Y\land w_0(0)) = w_{_X}(0) = 0$, $w_{_{X\land w_0(0)}}(Y\land w_0(0)) = w_0(Y\land 0) = 0$ and $w_{_X}(Y) \land w_0(0) = w_{_X}(Y) \land 0 = 0$. Thus, the equations are satisfied in this case. Let us suppose that $Z \not= 0$. Then $w_{_X}(y\land w_0(Z)) = w_{_X}(Y\land 1) = w_{_X}(Y)$, $w_{_{X\land w_0(Z)}}(Y\land w_0(Z)) = w_{_{X\land 1}}(Y\land 1) = w_{_X}(Y)$ and $ w_{_X}(Y) \land w_0(Z) = w_{_X}(Y) \land 1 = w_{_X}(Y)$. Thus the equations are also satisfied in this case.

2) Let us suppose that $Z=0$. Then, by Eq.(\ref{w_0^*(0)}) and Proposition \ref{EXTEN}-1, $w^*_{_X}(Y\land w_0(0)) = w^*_X(0) = 0$, $w^*_{_{X\land w_0(0)}}(Y\land w_0(0)) = w^*_0(0) = 0$ and  $w^*_{_X}(Y) \land w_0(0) = 0$. Thus, the equations are satisfied in this case. Let us suppose that $Z \not= 0$. Then, $w^*_X(Y\land w_0(Z)) = w^*_X(Y\land 1) = w^*_X(Y)$,  $w^*_{X\land w_0(Z)}(Y\land w_0(Z)) = w^*_{X\land 1}(Y\land 1) = w^*_X(Y) $ and $w^*_X(Y) \land w_0(Z) = w^*_X(Y) \land 1 = w^*_X(Y)$. Thus the equations are also satisfied in this case.

\qed
\end{proof}

\section{The algebraic model of $LQF$-logic}\label{ALMODELQF}

Let us consider a net of observables $\mathfrak{R}({\cal M})\ni {\cal O} \mapsto {\cal N}({\cal O}) $ over the spacetime ${\cal M}$. As we have already mentioned in Section \ref{RINGOP}, each local observable algebra ${\cal N}({\cal O})$ defines a local propositional system encoded in the von Neumann lattice ${\cal P}({\cal N}({\cal O}))$. Furthermore, as shown in Theorem \ref{CHARACTTYPEIII}, a set of equation formulated in an expanded language of ${\cal P}({\cal N}({\cal O}))$ is able to capture the fundamental requirement of the net of observables, namely, the type III factor condition of each algebra of the net. Thus, we are led to the following extension of the orthomodular structure that defines the algebraic model for the $LQF$-logic.

\begin{definition}\label{LQFlatticeDef}
{\rm A {\it $LQF$-algebra} is an algebra $\langle A, \land, \lor, w, w^*, \neg, 0,1 \rangle$ of type  $\langle 2,2,2,2,1,0,0 \rangle$ such that, upon defining $w_z(x) = w(z,x)$ and $w_z^*(x) = w^*(z,x)$, the following conditions are satisfied:
\begin{enumerate}[\hspace{18pt}{\rm LQF}1.]

\item[{\rm LQF}0.]
$\langle A, \land, \lor, \neg, 0,1 \rangle$ is an orthomodular lattice,

\item
$w_0(0) = 0$,

\item
$x\leq w_0(x)$,

\item
$y = (y\land w_0(x)) \lor (y\land \neg w_0(x))$,

\item
$w_z(x\land y) \leq w_z(y)$,

\item
$w_0(z) \land w_z^* (x \land y)  \leq w_z^* (x) $,

\item $\begin{aligned}[t]
    w_0^*(z) \land z &= w_0^*(z) \land w_z^*(z) \\
              &= \Big(\neg w_0( \neg z) \land  w_0^*(z) \Big) \lor \Big(w_0(\neg z) \land w_0\big( w_0^*(z) \land z \big) \Big),
  \end{aligned}$

\item
$w_0^*(z) \lor z =  w_0^*(z) \lor w_z^*(z) = w_0\big( w_0^*(z) \lor z \big) $,

\item
$w_0(z) \leq w_z\big(w_z^* (x)\big) R x  $,

\item
$w_0(z) \leq  w_z^*\big( w_z(x)\big) R \mu_z(x)  $,

\item
$w_0\big( w_0^*(1)\big) = w_0\big(\neg w_0^*(1) \big)$,

\item
$w_x(y\land w_0(z)) = w_{x\land w_0(z)}(y\land w_0(z)) = w_x(y) \land w_0(z)$,

\item
$w^*_x(y\land w_0(z)) = w^*_{x\land w_0(z)}(y\land w_0(z)) = w^*_x(y) \land w_0(z)$.

\end{enumerate}
}
\end{definition}

In the same way as in Theorem \ref{CHARACTTYPEIII} conditions {\rm LQF}0 $\ldots$ {\rm LQF}12 can be rephrased as an equational system. Thus, the class of $LQF$-algebras defines a variety of algebras that we denote by ${\cal LQF}$. By a $LQF$-homomorphism we mean a $\langle \land, \lor, w, w^*, \neg, 0,1 \rangle$-preserving function between $LQF$-algebras.

\begin{rema}
{\rm Let us notice that LQF8 and LQF9 allow us to represent the Borchers condition, that characterize the type III factor condition, in a purely algebraic way. }
\end{rema}

\begin{example}
{\rm Let ${\cal N} \subseteq {\cal B}({\cal H})$ be a non trivial type $III$ factor. From Theorem \ref{CHARACTTYPEIII} and Proposition  \ref{PRECONGCOND}, it immediately follows that the lattice of projector ${\cal P}({\cal N})$
defines a $LQF$-algebra.
}
\end{example}

\begin{prop}\label{ARITLQF}
Let $A$ be a $LQF$-algebra. Then

\begin{enumerate}
\item
$w_0 (1) = 1$.

\item
$w_0^* (0) = 0$.

\item
$w_0 (x) \in Z(A)$.

\item
If $x\leq y$ then $w_z(x) \leq w_z(y)$.

\item
If $z<1$ and $w_0(z) = 1$ then $0 < w_z^*(z) < z$.

\end{enumerate}
\end{prop}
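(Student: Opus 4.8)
The plan is to obtain items 1--4 directly from the axioms and to reduce item 5 to the extension lemma, Proposition \ref{EXTEN}. Item 1 is immediate: LQF2 gives $1 \leq w_0(1)$, and since $1$ is the greatest element we get $w_0(1) = 1$. For item 2, instantiate LQF6 at $z = 0$; its first equality reads $w_0^*(0) \land 0 = w_0^*(0) \land w_0^*(0)$, where the left side is $0$ and the right side is $w_0^*(0)$, so $w_0^*(0) = 0$. For item 3, observe that LQF3, asserting $y = (y \land w_0(x)) \lor (y \land \neg w_0(x))$ for every $y$, is precisely the centrality criterion of Proposition \ref{eqcentro}-2, whence $w_0(x) \in Z(A)$. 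For item 4, if $x \leq y$ then $x \land y = x$, so LQF4 yields $w_z(x) = w_z(x \land y) \leq w_z(y)$.

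The substantive part is item 5, and the key observation is that the hypothesis $w_0(z) = 1$ collapses the conditional axioms LQF5, LQF8, LQF9 into exactly the hypotheses of Proposition \ref{EXTEN}. First note $z \neq 0$, since by LQF1 we have $w_0(0) = 0 \neq 1$. Substituting $w_0(z) = 1$ into LQF8 and LQF9 yields $w_z(w_z^*(x)) R x = 1$ and $w_z^*(w_z(x)) R \mu_z(x) = 1$; by the characterization recalled in Eq.(\ref{ECMOL}) these upgrade to the operator identities $w_z w_z^* = id_A$ and $w_z^* w_z = \mu_z$. Moreover $w_z$ is order preserving by item 4, and LQF5 with $w_0(z) = 1$ gives $w_z^*(x \land y) \leq w_z^*(x)$, so that $w_z^*$ is order preserving as well.

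Having verified the hypotheses, I would apply Proposition \ref{EXTEN} with $a = z \neq 0$. Its item 1 makes $w_z^* \colon A \to [0,z]_A$ an order isomorphism, so $w_z^*(z) \leq z$, and being an order isomorphism it is injective and sends $0$ to $0$. Since $z < 1$, item 3 of that proposition gives $w_z^*(z) \neq z$, hence $w_z^*(z) < z$; and since $z \neq 0$, injectivity gives $w_z^*(z) \neq w_z^*(0) = 0$, hence $0 < w_z^*(z)$. This establishes $0 < w_z^*(z) < z$. The only point requiring care is the passage from the $R$-inequalities in LQF8 and LQF9 to genuine equalities, which rests on the elementary orthomodular fact that $a R b = 1$ forces $a = b$; once the identities $w_z w_z^* = id_A$ and $w_z^* w_z = \mu_z$ are in hand, items 1--3 of Proposition \ref{EXTEN} deliver both the containment and the two strict inequalities, so I expect no genuine obstacle beyond correctly marshalling these ingredients.
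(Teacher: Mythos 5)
Your proof is correct and follows essentially the same route as the paper's: items 1, 3 and 4 are the same one-line appeals to LQF2, to LQF3 together with Proposition \ref{eqcentro}-2, and to LQF4, while item 5 is exactly the paper's reduction to Proposition \ref{EXTEN}, using LQF5, LQF8, LQF9 and Eq.(\ref{ECMOL}) to upgrade the $R$-statements to the identities $w_z w_z^* = id_A$ and $w_z^* w_z = \mu_z$. The only (harmless) deviations are in item 2, where you evaluate the first equality of LQF6 at $z=0$ instead of the paper's combination of LQF1 with LQF12 (both are valid one-line computations), and your explicit verification that $z \neq 0$ in item 5, a hypothesis of Proposition \ref{EXTEN} that the paper leaves implicit.
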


\begin{proof}
1) It follows from LQF2. 2) If $z= 0$ then, by LQF1 and LQF12, we have that $w_0^* (0) = w^*_0(0\land w_0(0)) = w^*_0(0) \land w_0(0) = w^*_0(0) \land 0 = 0$.  3) It follows from LQF3 and Proposition \ref{eqcentro}-2.
4) If $x\leq y$ then, by LQF4, $w_z(x) = w_z(x\land y) \leq w_z(y)$. 5) Let us suppose that $z<1$ and $w_0(z) = 1$. By LQF5 $w_z^*$ is an order preserving map and, by LQF8 and LQF9, $w_z w_z^*  = id_A  $ and $ w_z^*  w_z = \mu_z$ respectively. Thus, by Proposition \ref{EXTEN}-1, $w_z^*: A \rightarrow [0,z]$ is an order isomorphism. Consequently, $0 < w_z^*(z) < w_z^*(1) = z$ because $0< z <1$. 6) It follows from LQF7 and item 3.

\qed
\end{proof}

\begin{prop}\label{LQFFACTOR}
Let $A$ be a $LQF$-algebra. Then, for each $z\in A$, the relation $\theta_{w_0(z)}$ on $A$ given by $$(a,b) \in \theta_{w_0(z)} \hspace{0.3cm} \mbox{iff} \hspace{0.3cm} a\land w_0(z) = b\land w_0(z)$$ is a factor congruence and $A/_{\theta_{w_0(z)}} \simeq_{_{LQF}} [0, w_0(z)]_A$.

\end{prop}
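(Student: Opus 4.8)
The plan is to reduce everything to the central idempotent $e:=w_0(z)$ and to the orthomodular machinery of Proposition~\ref{DIOML}, upgrading it from the orthomodular reduct to the full $LQF$-signature by means of the ``projection'' axioms LQF11 and LQF12. First I would record that $e=w_0(z)\in Z(A)$ by Proposition~\ref{ARITLQF}-3, whence $\neg e\in Z(A)$ as well, since $Z(A)$ is a Boolean subalgebra (Proposition~\ref{eqcentro}-1). Proposition~\ref{DIOML} then gives, at the level of the orthomodular reduct, that $\theta_e$ and $\theta_{\neg e}$ are ${\cal OML}$-congruences, that $(\theta_e,\theta_{\neg e})$ is a pair of factor congruences, and that $f\colon A/\theta_e\to[0,e]_A$ with $f([a]_e)=a\land e$ is an ${\cal OML}$-isomorphism. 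The remaining task is to promote these facts to $Con_{{\cal LQF}}(A)$ and to a $LQF$-isomorphism.

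The only genuinely new input is the compatibility of $\theta_e$ and $\theta_{\neg e}$ with $w$ and $w^*$. For $\theta_e$ this is immediate from LQF11 and LQF12: if $a\land e=b\land e$ and $c\land e=d\land e$, then $w_a(c)\land e=w_{a\land e}(c\land e)=w_{b\land e}(d\land e)=w_b(d)\land e$, and likewise for $w^*$. For $\theta_{\neg e}$ the axioms control only the interaction of $w,w^*$ with $w_0(z)$, not with $\neg w_0(z)$, so I would first establish the two ``complementary'' identities $w_x(y)\land\neg w_0(z)=w_{x\land\neg w_0(z)}(y\land\neg w_0(z))$ and $w^*_x(y)\land\neg w_0(z)=w^*_{x\land\neg w_0(z)}(y\land\neg w_0(z))$ as equations valid throughout ${\cal LQF}$. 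By Eq.(\ref{EQDISI}) it suffices to verify them on directly indecomposable $LQF$-algebras; there $w_0(z)\in Z(A)=\{0,1\}$ by Proposition~\ref{ARITLQF}-3 and Proposition~\ref{DIOML}-5, so $\neg w_0(z)\in\{0,1\}$ and both identities collapse to trivialities (using LQF1, $w_0(0)=0$). With these equations in hand the compatibility of $\theta_{\neg e}$ follows exactly as for $\theta_e$, so both $\theta_e$ and $\theta_{\neg e}$ lie in $Con_{{\cal LQF}}(A)$. This middle step is where I expect the main obstacle: the device that removes it is precisely the observation that the needed complementary identities are equations that are trivial on directly indecomposable algebras, whence Birkhoff's subdirect representation propagates them to every $LQF$-algebra.

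For the factor-congruence claim I would then argue that the relations $\theta_e\cap\theta_{\neg e}=\Delta_A$ and $\theta_e\circ\theta_{\neg e}=\theta_{\neg e}\circ\theta_e=\nabla_A$ hold purely relationally by Proposition~\ref{DIOML}-3, independently of the signature. Since $\theta_e,\theta_{\neg e}\in Con_{{\cal LQF}}(A)$, meet (intersection), permutability, and join (which equals composition for permuting congruences) are computed in $Con_{{\cal LQF}}(A)$ just as sets, so $(\theta_e,\theta_{\neg e})$ is a pair of factor congruences of the $LQF$-algebra $A$, and in particular $\theta_{w_0(z)}$ is a factor congruence.

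Finally, for the isomorphism I note that LQF11 forces $w_x(y)\le e$ whenever $x,y\le e$ (and LQF12 forces $w^*_x(y)\le e$), so $[0,e]_A$ is closed under the restricted operations. Then $f$ preserves $w$, because $f([w_a(b)])=w_a(b)\land e=w_{a\land e}(b\land e)$ by LQF11, which is exactly the restricted operation applied to $f([a]),f([b])$; and $f$ preserves $w^*$ by LQF12. Being an ${\cal OML}$-isomorphism that preserves $w$ and $w^*$, $f$ is a $LQF$-isomorphism $A/\theta_{w_0(z)}\simeq_{_{LQF}}[0,w_0(z)]_A$, the restricted structure on $[0,e]_A$ being a bona fide $LQF$-algebra as the bijective operation-preserving image of one.
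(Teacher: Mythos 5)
Your architecture coincides with the paper's: centrality of $w_0(z)$ from Proposition~\ref{ARITLQF}-3, the orthomodular-level facts from Proposition~\ref{DIOML}, compatibility of $\theta_{w_0(z)}$ with $w,w^*$ via LQF11--LQF12, and the transport of structure along $f([a])=a\land w_0(z)$; those parts, including your closing paragraph on the isomorphism, are exactly the paper's computations. You go beyond the paper in demanding that the partner $\theta_{\neg w_0(z)}$ also lie in $Con_{_{LQF}}(A)$ --- a requirement the paper's proof silently skips but which the definition of factor congruence does impose. The problem is that your justification of the two ``complementary'' identities is circular. After reducing them via Eq.(\ref{EQDISI}) to directly indecomposable $LQF$-algebras, you assert that there $Z(A)=\{0,1\}$ ``by Proposition~\ref{ARITLQF}-3 and Proposition~\ref{DIOML}-5''. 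Proposition~\ref{DIOML}-5, however, characterizes direct indecomposability of the \emph{orthomodular lattice}; an algebra that is directly indecomposable in ${\cal LQF}$ could a priori possess a nontrivial central element $c$ whose associated $OML$ factor pair $(\theta_c,\theta_{\neg c})$ simply fails to be compatible with $w,w^*$. Ruling that out --- i.e.\ proving that $LQF$-indecomposability forces $Z(A)=\{0,1\}$ --- is exactly Proposition~\ref{DIRECTID}, which the paper deduces \emph{from} Proposition~\ref{LQFFACTOR}. So the step you yourself flag as the ``main obstacle'' is discharged by assuming a consequence of the very statement being proved.

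The gap can be closed purely equationally, with no appeal to indecomposable algebras. Put $x:=0$ in LQF11 to get $w_0(y\land w_0(z))=w_0(y)\land w_0(z)$; then $y:=\neg w_0(z)$ together with LQF1 gives $w_0(\neg w_0(z))\land w_0(z)=w_0(0)=0$. Since $w_0(z)\in Z(A)$, Proposition~\ref{eqcentro}-2 turns this into $w_0(\neg w_0(z))\leq\neg w_0(z)$, and LQF2 supplies the reverse inequality, so $w_0(\neg w_0(z))=\neg w_0(z)$. Now substitute the term $\neg w_0(z)$ for the variable $z$ in LQF11 and LQF12: because $w_0(\neg w_0(z))=\neg w_0(z)$, these axioms become literally your complementary identities $w_x(y\land\neg w_0(z))=w_{x\land\neg w_0(z)}(y\land\neg w_0(z))=w_x(y)\land\neg w_0(z)$, and the $w^*$ analogue, now valid in every $LQF$-algebra. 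With that replacement, your remaining steps --- compatibility of $\theta_{\neg w_0(z)}$, the set-theoretic verification that $(\theta_{w_0(z)},\theta_{\neg w_0(z)})$ is a factor pair in $Con_{_{LQF}}(A)$, and the $LQF$-isomorphism onto $[0,w_0(z)]_A$ --- are all sound.
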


\begin{proof}
By Proposition \ref{ARITLQF}-3 we have that $w_0(z) \in Z(A)$. Then $\theta_{w_0(z)}$ is a factor ${\cal OML}$-congruence and, by Proposition \ref{DIOML}, $A/_{\theta_{w_0(z)}} \simeq_{_{OML}} [0, w_0(z)]_A$.  We have to prove that $w(-,-)$ and $w^*(-,-)$ are compatible operations with respect to $\theta_{w_0(z)}$. Indeed: Let us suppose that $$a\land w_0(z) = b\land w_0(z).$$
In order to prove the compatibility of $w$ let us notice that, by Axiom LFQ11, for each $x\in A$  we have that

$\begin{aligned}[t]
    w(x, a) \land w_0(z)  &= w_x(a)\land w_0(z) = w_x(a\land w_0(z)) = w_x(b\land w_0(z)) \\
                          &= w_x(b)\land w_0(z) = w(x, b)\land w_0(z)
\end{aligned}$

\noindent and

$\begin{aligned}[t]
    w(a, x) \land w_0(z)  &=   w_a(x)\land w_0(z) =  w_{a\land w_0(z)}(x \land w_0(z))  \\
                          &=  w_{b\land w_0(z)}(x \land w_0(z)) = w_b(x) \land w_0(z) \\
                          &=  w(b, x) \land w_0(z).
\end{aligned}$

\noindent
The above equations then give immediately the compatibility of $w$ with respect to $\theta_{w_0(z)}$.

For the compatibility of $w^*$ let us notice that, by Axiom LFQ12, for each $x\in A$  we have that

$\begin{aligned}[t]
    w^*(x, a) \land w_0(z)  &= w^*_x(a)\land w_0(z) = w^*_x(a\land w_0(z)) = w^*_x(b\land w_0(z)) \\
                          &= w^*_x(b)\land w_0(z) = w^*(x, b)\land w_0(z)
\end{aligned}$

\noindent and

$\begin{aligned}[t]
    w^*(a, x) \land w_0(z)  &=   w^*_a(x)\land w_0(z) =  w^*_{a\land w_0(z)}(x \land w_0(z)) \\
                            &=  w^*_{b\land w_0(z)}(x \land w_0(z)) =  w^*_b(x) \land w_0(z) \\
                            &= w^*(b, x) \land w_0(z).
\end{aligned}$

\noindent
The above equations then give immediately the compatibility of  $w^*$ with respect to $\theta_{w_0(z)}$.

Thus, the orthomodular lattice $[0, w_0(z)]_A$ endowed with the operations $$[0, w_0(z)]_A \ni (x,y) \mapsto w(x,y)_{/\theta_{w_0(z)}} = w(x,y) \land w_0(z)$$
$$[0, w_0(z)]_A \ni (x,y) \mapsto w^*(x,y)_{/\theta_{w_0(z)}} = w^*(x,y) \land w_0(z)$$ is a $LQF$-algebra and $A/_{\theta_{w_0(z)}} \simeq_{_{LQF}} [0, w_0(z)]_A$.

\qed
\end{proof}

\begin{prop}\label{DIRECTID}
Let $A$ be a $LQF$-algebra. Then the following conditions are equivalent:

\begin{enumerate}
\item
$A$ is directly indecomposable algebra.

\item
$w_0(z) = 1$ iff $z\not= 0$.

\item
For each $z \in A - \{0,1\}$, $w_0^*(z)$ is a common complement of  $\{w_z^*(z), z\}$.

\item
$Z(A) = \{0,1\}$.

\end{enumerate}

\end{prop}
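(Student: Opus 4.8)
The plan is to establish the four conditions equivalent through the implications $1\Rightarrow 2$, $4\Rightarrow 1$, $2\Rightarrow 3$, $2\Rightarrow 4$ and $3\Rightarrow 2$, which together close all the loops. The guiding principle is that the factor congruences of a $LQF$-algebra which the axioms actually \emph{guarantee} are exactly those of the form $\theta_{w_0(z)}$ furnished by Proposition \ref{LQFFACTOR}; hence direct indecomposability is controlled by the range of the internal dimension term $w_0$, while that range is in turn linked to the whole center through the isometry-type axioms LQF8 and LQF9. Every step must therefore push a given central element back through $w_0$, $w_0^*$ or $w_z^*$ before the orthomodular facts of Section \ref{BASICNOTION} can be applied.

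The two structural implications are short. For $1\Rightarrow 2$, assuming $A$ directly indecomposable, Proposition \ref{LQFFACTOR} makes each $\theta_{w_0(z)}$ a factor congruence, so it must be $\Delta_A$ or $\nabla_A$; since directly from the definition $\theta_v=\Delta_A$ precisely when $v=1$ and $\theta_v=\nabla_A$ precisely when $v=0$, we get $w_0(z)\in\{0,1\}$ for all $z$, and LQF1 together with LQF2 then force $w_0(z)=1$ exactly when $z\neq 0$. For $4\Rightarrow 1$, if $Z(A)=\{0,1\}$ then the orthomodular reduct of $A$ is directly indecomposable by Proposition \ref{DIOML}-5, and since any $LQF$-decomposition $A\cong B\times C$ with $B,C$ nontrivial would restrict to a nontrivial ${\cal OML}$-decomposition of the reduct, no such decomposition can exist.

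For the computational implications I would use conditions LQF6 and LQF7 directly. Under condition 2 one has $w_0(z)=w_0(\neg z)=1$ for every $z\neq 0,1$, so the right-hand sides of LQF6 and LQF7 collapse (the summands weighted by $\neg w_0(\neg z)=0$ vanish and the $w_0$-values lie in $\{0,1\}$), yielding $w_0^*(z)\land z=w_0^*(z)\land w_z^*(z)=0$ and $w_0^*(z)\lor z=w_0^*(z)\lor w_z^*(z)=1$; this exhibits $w_0^*(z)$ as a common complement of $\{z,w_z^*(z)\}$, which is $2\Rightarrow 3$. The same data give $2\Rightarrow 4$: a hypothetical central $c$ with $c\neq 0,1$ satisfies $w_0(c)=1$, so Proposition \ref{ARITLQF}-5 forces $0<w_c^*(c)<c$, while the common-complement computation gives $c\sim_p w_c^*(c)$, whence Proposition \ref{PERSPECTCENTER}-2 yields $c\le w_c^*(c)<c$, a contradiction; thus $Z(A)=\{0,1\}$.

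The delicate direction, and the main obstacle, is $3\Rightarrow 2$. A naive attempt to refute a nontrivial central $c$ stalls: condition 3 only forces $w_0^*(c)$ to be \emph{a} complement of the central element $c$, hence $w_0^*(c)=\neg c$ by uniqueness of complements of central elements, and one extracts merely $w_c^*(c)=c$, which is not yet absurd. The resolution is to specialize to $c=w_0(z_0)$ for a putative $z_0\neq 0$ with $w_0(z_0)\neq 1$, so that $c$ is central, $c\neq 0,1$, and $w_0(c)=c$ by the idempotency $w_0w_0=w_0$ (LQF11 with $x=0$ gives $w_0(y\land w_0(z))=w_0(y)\land w_0(z)$, and $y=w_0(z)$ yields $w_0(w_0(z))\le w_0(z)$, which with LQF2 gives equality). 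Now LQF12 with $x=0$, $y=1$ pins down $w_0^*(c)=w_0^*(1)\land c\le c$; comparing with $w_0^*(c)=\neg c$ gives $\neg c\le c$, i.e. $c=1$, the desired contradiction. Hence condition 2 holds, and the equivalences are complete. The crux throughout is that indecomposability of the $LQF$-algebra is governed by $w_0$ and the Borchers-type axioms rather than by the center of the bare orthomodular reduct, so LQF12 is exactly the ingredient that makes condition 3 as strong as the others.
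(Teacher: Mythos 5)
Your proof is correct, and three of your five implications coincide with the paper's own argument: $1\Rightarrow 2$ via the factor congruences $\theta_{w_0(z)}$ of Proposition \ref{LQFFACTOR}, $2\Rightarrow 3$ by collapsing LQF6 and LQF7 under $w_0(z)=w_0(\neg z)=1$, and $4\Rightarrow 1$ via Proposition \ref{DIOML}-5. The genuine divergence is in how you exit condition 3. The paper closes a single cycle with a direct implication $3\Rightarrow 4$: for a putative central $z$ with $0<z<1$ it derives $w_0^*(z)=\neg z$ and $w_z^*(z)=z$ exactly as you do, but then simply asserts ``$w_0(z)=1$'' before invoking LQF8, LQF9 and Proposition \ref{EXTEN}-1 to contradict $w_z^*(z)=z$; inside the cycle condition 2 is not yet available, and condition 3 alone does not visibly yield $w_0(z)=1$, so this is precisely the point where, as you put it, the naive argument ``stalls''. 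Your replacement route --- proving $3\Rightarrow 2$ by applying condition 3 to the central element $c=w_0(z_0)$ and then using the compatibility axiom LQF12 with $x=0$, $y=1$ to pin down $w_0^*(c)=w_0^*(1)\land c\le c$, which against $w_0^*(c)=\neg c$ forces $\neg c\le c$ and $c=1$ --- supplies exactly the justification the paper's step lacks; in fact it shows that condition 3 implies $w_0(z)=1$ for every $z\neq 0$, which retroactively legitimizes the paper's $3\Rightarrow 4$. Your remaining implication $2\Rightarrow 4$ (Proposition \ref{ARITLQF}-5 together with Proposition \ref{PERSPECTCENTER}-2) is a sound variant of the paper's final contradiction, which instead uses uniqueness of complements of central elements. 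So your route costs one extra implication in the graph, but it buys a deduction in which every step is licensed by the axioms, and it isolates the correct structural point: it is LQF12, not only the Borchers-type axioms LQF8--LQF9, that makes condition 3 as strong as the other three.
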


\begin{proof}
$1 \Longrightarrow 2$) If $A$ is directly indecomposable algebra then, by Proposition \ref{LQFFACTOR}, $w_0(z) \in \{0,1\}$. Thus, by Axiom LFQ1 and LQF2, we have that  $w_0(z) = 1$ iff $x\not= 0$.

$2 \Longrightarrow 3$) Let $z \in A - \{0,1\}$. Then $0 < \neg z <1$ and, by hypothesis, $w_0(z) = w_0(\neg z) = 1$. Therefore, by LFQ6, we have that
$$\begin{aligned}[t]
    w_0^*(z) \land z &= w_0^*(z) \land w_z^*(z) = \Big(\neg 1 \land  w_0^*(z) \Big) \lor \Big(1 \land w_0\big( w_0^*(z) \land z \big) \Big) \\
              &= w_0( w_0^*(z) \land z) \in \{0,1\}.
  \end{aligned}$$
In this way $w_0^*(z) \land z = 0$; otherwise $z=1$, which is a contradiction. Thus,
\begin{equation}\label{1complement0}
w_0^*(z) \land z = w_0^*(z) \land w_z^*(z) = 0.
\end{equation}
By hypothesis and LFQ7 we also have that $w_0^*(z) \lor z =  w_0^*(z) \lor w_z^*(z) = w_0\big( w_0^*(z) \lor z \big) \in \{0,1\}$. In this way $w_0^*(z) \lor z = 1$; otherwise $z=0$, which is a contradiction. Thus,
\begin{equation}\label{1complement1}
w_0^*(z) \lor z = w_0^*(z) \lor w_z^*(z) = 1.
\end{equation}
Hence, by Eq.(\ref{1complement0}) and Eq.(\ref{1complement1}), $w_0^*(z)$ is a common complement of $\{z, w_z^*(z) \}$.

$3 \Longrightarrow 4$) Let us suppose that there exists $z \in Z(A)$ such that $0 < z <1$. By hypothesis we have that $w_0^*(z)$ is a common complement of $\{z, w_z^*(z) \}$. Let us notice that
$\neg z$ is the unique complement of $z$ because $z \in Z(A)$. Therefore $w_0^*(z) = \neg z$ and
\begin{equation}\label{1complement2}
z =w_z^*(z).
\end{equation}
Next, let us observe that $w_0(z) = 1$ then, by LFQ8 and LFQ9,  $w_z\big(w_z^* (x)\big) R x = 1$ i.e., $w_z\big(w_z^* (x)\big) = x$  and $w_z^*\big( w_z(x)\big) R \mu_z(x) = 1$ i.e., $w_z^*\big( w_z(x)\big) = \mu_z(x)$ respectively. Thus, by Proposition \ref{EXTEN}-1, $w_z^*(z) < z$ contradicting  Eq.(\ref{1complement2}). Hence, $Z(A) = \{0,1\}$.

$4 \Longrightarrow 1$) If $Z(A) = \{0,1\}$ then $A$ is directly indecomposable as orthomodular lattice. Then $A$ is directly indecomposable as $LQF$-algebra.

\qed
\end{proof}

\begin{prop}\label{ABSTRACTFACTOR}
Let $A$ be a $LQF$-algebra. Then:

\begin{enumerate}

\item
$A$ has no atoms.

\item
$Card(A) \geq \aleph_0$.

\item
$A$ is a non modular lattice.

\end{enumerate}
\end{prop}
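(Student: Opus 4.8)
The plan is to establish the three items in the order (1), (2), (3), treating $A$ as nontrivial (the statement being degenerate for the one-element algebra). Two facts do most of the work. The first is a localization principle: for any $z$, Proposition \ref{LQFFACTOR} tells us that $B:=[0,w_0(z)]_A$ is again a $LQF$-algebra, in which the relativized operation sends $x\mapsto w_0(x)\land w_0(z)$; hence the element $z$ satisfies $w_0^{B}(z)=w_0(z)$, i.e. $z$ lands under an element that is the \emph{top} of $B$. This lets me apply Proposition \ref{ARITLQF}-5 (which demands the hypothesis $w_0(z)=1$) inside $B$ even when $w_0(z)\neq 1$ in $A$, the typical situation in a non-directly-indecomposable algebra. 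The second fact is that the two-element orthomodular lattice $\mathbf 2$ cannot carry a $LQF$-structure: on $\mathbf 2$ one has $w_0=\mathrm{id}$, so axiom LQF10 collapses to $w_0^*(1)=\neg w_0^*(1)$, forcing $0=1$. Since intervals $[0,c]_A$ are sublattices of $A$, properties of $B$ transfer upward.

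For item (1), I suppose $a$ is an atom and split on $a$ versus $w_0(a)$. If $a=w_0(a)$, then by Proposition \ref{LQFFACTOR} the interval $[0,a]_A=[0,w_0(a)]_A$ is a $LQF$-algebra, and since $a$ is an atom this interval is $\{0,a\}\cong\mathbf 2$, contradicting the observation above. If $a<w_0(a)=:c$, I pass to the $LQF$-algebra $B=[0,c]_A$, where $a$ is still an atom, $a<1_B=c$, and $w_0^{B}(a)=c=1_B$; Proposition \ref{ARITLQF}-5 applied in $B$ then produces an element $w_a^{*}(a)$ with $0<w_a^{*}(a)<a$, contradicting that $a$ is an atom. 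Hence $A$ has no atoms. Item (2) is then immediate: since $1\neq 0$ and $A$ has no atoms, $1$ is not an atom, so there is $a_1$ with $0<a_1<1$; iterating (each $a_n$ is nonzero, hence not an atom) yields a strictly decreasing chain $1>a_1>a_2>\cdots>0$ of pairwise distinct elements, whence $Card(A)\geq\aleph_0$.

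For item (3) the decisive point is that perspectivity is essentially built into the axioms: axioms LQF6 and LQF7 assert that $w_0^*(z)$ has the \emph{same} meet and the \emph{same} join with both $z$ and $w_z^*(z)$, so by Proposition \ref{EQPESPEC} one gets $z\sim_p w_z^*(z)$ in \emph{any} $LQF$-algebra. To convert this into non-modularity via Proposition \ref{EQUIVNONMOD} I need a \emph{strict} instance, i.e. $w_z^*(z)<z$. I first secure a witness $z$ with $0<z<w_0(z)$: if none existed, then LQF2 would force $w_0=\mathrm{id}$, so by Proposition \ref{ARITLQF}-3 every element would be central and $A$ Boolean, and then LQF10 would again force $0=1$, contradicting nontriviality. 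Fixing such a $z$ and setting $c:=w_0(z)\in Z(A)$, I work in $B=[0,c]_A$, where $w_0^{B}(z)=1_B$ and $z<1_B$: Proposition \ref{ARITLQF}-5 gives $0<w_z^{*}(z)<z$ (computed in $B$), while LQF6, LQF7 and Proposition \ref{EQPESPEC} give $z\sim_p w_z^{*}(z)$ in $B$. Thus $B$ contains a strict perspective pair, so by Proposition \ref{EQUIVNONMOD} the lattice $N_5$ is a sublattice of $B$, and since $B$ is a sublattice of $A$, $N_5$ embeds in $A$; hence $A$ is non-modular.

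The main obstacle I anticipate is precisely the passage from the directly indecomposable intuition (where $w_0(z)=1$ for all $z\neq0$, as in Proposition \ref{DIRECTID}, mirroring the type III argument of Proposition \ref{RELATED2}) to an \emph{arbitrary} $LQF$-algebra, where $w_0(z)$ is only a central element. Everything hinges on the localization $[0,w_0(z)]_A$ being a genuine $LQF$-algebra so that Proposition \ref{ARITLQF}-5 becomes available, and on the fact that the needed structure ($N_5$ as a sublattice, atom-freeness) genuinely transfers along interval sublattices; verifying that the relativized operations $w_z,w_z^{*}$ behave correctly under this localization (so that the hypotheses of Propositions \ref{ARITLQF} and \ref{EQPESPEC} are met in $B$) is the delicate bookkeeping step.
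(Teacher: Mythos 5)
Your proof is correct, and it takes a genuinely different route from the paper's. The paper argues globally: via Birkhoff's theorem (Theorem \ref{Birkhoff}) it represents $A$ subdirectly by directly indecomposable algebras, invokes Proposition \ref{DIRECTID} there (where $w_0(z)=1$ for $z\neq 0$ and $w_0^*(z)$ is a common complement of $z$ and $w_z^*(z)$), and then pulls atoms and failed modular identities back through the projections $\pi_i f$, which costs a case analysis on preimages. You argue locally: Proposition \ref{LQFFACTOR} makes $B=[0,w_0(z)]_A$ an $LQF$-algebra whose relativized operation satisfies $w_0^B(x)=w_0(x)\land w_0(z)$, so indeed $w_0^B(z)=1_B$ and Proposition \ref{ARITLQF}-5 becomes available with no indecomposability hypothesis; the residual case $w_0(a)=a$ for an atom $a$ is handled by your observation that $\mathbf{2}$ carries no $LQF$-structure (LQF1 and LQF2 force $w_0=\mathrm{id}$ there, and then LQF10 fails) --- essentially the computation behind the paper's opening step $0<w_0^*(1)<1$, but exploited locally. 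Your bookkeeping checks out, and for non-modularity you isolate a fact the paper never states: LQF6, LQF7 and Proposition \ref{EQPESPEC} yield $z\sim_p w_z^*(z)$ in \emph{every} $LQF$-algebra, whereas the paper obtains perspectivity only in the directly indecomposable case through Proposition \ref{DIRECTID}-3; strictness $w_z^*(z)<z$ then comes from the localization, and $N_5$ transfers from the interval sublattice to $A$ by Proposition \ref{EQUIVNONMOD}. What each approach buys: yours avoids Birkhoff's theorem and Proposition \ref{DIRECTID} entirely, so it is more elementary and self-contained; the paper's is the standard universal-algebraic reduction, recycles the description of ${\cal DI}({\cal LQF})$ that it needs elsewhere anyway, and runs parallel to the operator-algebraic argument of Proposition \ref{RELATED2}. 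A final point in your favour: you state the nontriviality caveat explicitly, and rightly so --- the one-element algebra is an $LQF$-algebra (the axioms are equational), items 2 and 3 fail for it, and the paper's proof silently assumes $0\neq 1$ as well.
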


\begin{proof}
1) We first prove that $1\in A$ is not an atom. Let us notice that $w_0^*(1) \not \in \{0,1\}$; otherwise, by Proposition \ref{DIRECTID}-2,  $w_0(w_0^*(1)) \not = w_0(\neg w_0^*(1))$, that contradicts LQF10. Thus, $0 < w_0^*(1) < 1$ and $1$ cannot be an atom. In other words $A \not = \{0,1\} $. Let $a \in A-\{0,1\}$.

First suppose that $A$ is a directly indecomposable $LQF$-algebra.  By Proposition \ref{DIRECTID}-2, $w_0(a) = 1$. Then, by Proposition \ref{ARITLQF}-5, we have that  $0 < w_a^*(a) < a$, so $a$ is not an atom.

For the general case, suppose that $a$ is an atom. Let us consider a subdirect representation $f: A \hookrightarrow \prod_{i\in I} A_i$ in the variety $\cal LQF$. Since $a>0$ and $f$ is injective then $f(a) = (a_i)_{i\in I} > 0$. Thus, there exists $j\in I$ such that $a_j = \pi_jf(a) >0$. Since $A_j$ is directly indecomposable algebra, $a_j$ is not an atom, so there exists $x_j \in A_j$ such that
\begin{equation} \label{xjlesaj}
0< x_j < a_j.
\end{equation}
Since $\pi_jf$ is a surjective $LQF$-homomorphism then there exists $x\in A-\{0\}$ such that $\pi_jf(x) = x_j$. Note that $x\not < a$ since we assumed that $a$ is an atom. Then we have to consider two possible cases:

If $a \leq x$ then $a_j = \pi_jf(a) \leq \pi_jf(x) = x_j$ that contradicts Eq.(\ref{xjlesaj}). Otherwise, if $a$ is not comparable to $x$ then, $a\land x = 0$ since $a$ is an atom. Thus, $0 = \pi_jf(a \land x) = \pi_jf(a) \land \pi_jf(x) = a_j \land x_j = x_j$ that also contradicts Eq.(\ref{xjlesaj}). Hence, $A$ has no atoms.

2) It immediately follows from the above item.

3) First let us suppose that $A$ is a directly indecomposable $LQF$-algebra.  By item 1 there exists $z \in A$ such that $0<z<1$ and then, by Proposition \ref{DIRECTID}-2, $w_0(z) = 1$. Therefore, by Proposition \ref{ARITLQF}-5, we have that $0 < w_z^*(z) < z$. Furthermore, by Proposition \ref{DIRECTID}-3, $w_0^*(z)$ is a common complement of $\{w_z^*(z), z\}$ and, consequently, $w_z^*(z) \sim_p z$.  Hence, by Proposition \ref{EQUIVNONMOD}-2, $A$ is a non modular lattice.

For the general case let us suppose that $A$ is a modular lattice. Let us consider a subdirect representation $f: A \hookrightarrow \prod_{i\in I} A_i$ in the variety $\cal LQF$. For $i\in I$ let us consider $z_i \in A_i$ such that $0<z_i<1$. Thus $0 < w_{z_i}^*(z) < z_i$ where
 $w_0^*(z_i)$ is a common complement of $\{w_{z_i}^*(z), z_i\}$ because $A_i$ is a directly indecomposable $LQF$-algebra. From this it follows that
\begin{equation}\label{NONMODDI1}
(w_{z_i}^*(z_i) \land z_i) \lor (w_0^*(z_i) \land z_i) \not = \big((w_{z_i}^*(z_i) \land z_i) \lor w_0^*(z_i) \Big) \land z_i.
\end{equation}
By the subjectivity of $\pi_if$ there exists $z\in A$ such that $\pi_if(z) = z_i$. Moreover, $(w_{z}^*(z) \land z) \lor (w_0^*(z) \land z) = \big((w_{z}^*(z) \land z) \lor w_0^*(z) \big) \land z$ since we assumed that
$A$ is a modular lattice. Thus,
$$\begin{aligned}[t]
   (w_{z_i}^*(z_i) \land z_i) \lor (w_0^*(z_i) \land z_i) &= \pi_if\big((w_{z}^*(z) \land z) \lor (w_0^*(z) \land z)\big) \\
              &= \pi_if\Big(\big((w_{z}^*(z) \land z) \lor w_0^*(z) \big) \land z \Big)\\
              &= \big((w_{z_i}^*(z_i) \land z_i) \lor w_0^*(z_i) \Big) \land z_i
  \end{aligned}$$
that contradicts Eq.(\ref{NONMODDI1}). Hence, $A$ is a non modular lattice.

\qed
\end{proof}

\begin{rema}\label{REMAABSTDI}
{\rm Combining  Theorem \ref{DIRECTID} and Proposition \ref{ABSTRACTFACTOR} we can observe that the class of $LQF$-algebras successfully captures the most
remarkable properties of the projector lattices of the type $III$ factors mentioned in Section \ref{RINGOP}.
}
\end{rema}

\begin{prop}\label{CENTRALCOVERGEN}
Let $A$ be a $LQF$-algebra. Then:

\begin{enumerate}
\item
For each $z\in Z(A)$, $w_0(z) = z$

\item
For each $a\in A$, $w_0(a) = e(a)$ and $\neg w_0( \neg a) = e_d(a)$.

\end{enumerate}

\end{prop}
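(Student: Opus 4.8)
The plan is to prove item 1 first and then derive item 2 from it together with the monotonicity of $w_0$. The crux of the whole proposition is item 1, the assertion that $w_0$ fixes central elements. I do not expect a direct equational derivation of the inequality $w_0(z)\leq z$ from the axioms, and I would instead argue through the directly indecomposable factors of a subdirect representation, exactly in the spirit of the proof of Proposition \ref{ABSTRACTFACTOR}.

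For item 1, I would first dispatch the directly indecomposable case. If $A$ is directly indecomposable then, by Proposition \ref{DIRECTID}, $Z(A)=\{0,1\}$, so a central $z$ is either $0$ or $1$; now LQF1 gives $w_0(0)=0$ and Proposition \ref{ARITLQF}-1 gives $w_0(1)=1$, so $w_0(z)=z$ in both cases. For the general case I would fix $z\in Z(A)$ and take a subdirect representation $f:A\hookrightarrow \prod_{i\in I}A_i$ into subdirectly irreducible, hence directly indecomposable, $LQF$-algebras, which exists by Theorem \ref{Birkhoff} and Eq.(\ref{SIincDI}). Writing $z_i=\pi_i f(z)$, the key observation is that each $\pi_i f$ is a surjective $LQF$-homomorphism and that surjective ${\cal OML}$-homomorphisms carry central elements to central elements: indeed, if $g$ is onto and $z\in Z(A)$, then for every $b=g(a)$ one has $b=(b\land g(z))\lor(b\land\neg g(z))$ by applying $g$ to the identity of Proposition \ref{eqcentro}-2, so $g(z)$ is central. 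Hence $z_i\in Z(A_i)=\{0,1\}$ and, by the directly indecomposable case, $w_0(z_i)=z_i$. Since $f$ preserves $w$ and the operations of a product are computed componentwise, $\pi_i f(w_0(z))=w_0(z_i)=z_i=\pi_i f(z)$ for all $i$, whence $f(w_0(z))=f(z)$ and, by injectivity, $w_0(z)=z$.

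For item 2, I would note that $w_0(a)\in Z(A)$ by Proposition \ref{ARITLQF}-3 and $a\leq w_0(a)$ by LQF2, so $w_0(a)$ is a central element lying above $a$; consequently $w_0(a)$ belongs to the set $\{z\in Z(A):a\leq z\}$ whose infimum defines $e(a)$ in Eq.(\ref{CENTRALCOVER}). To see that it is actually the least such element, I would use that $w_0$ is order preserving (Proposition \ref{ARITLQF}-4 with parameter $0$): for any central $z$ with $a\leq z$ one gets $w_0(a)\leq w_0(z)=z$ by item 1. Thus $w_0(a)$ is simultaneously a member of and a lower bound for $\{z\in Z(A):a\leq z\}$, so this infimum exists and equals $w_0(a)$, giving $w_0(a)=e(a)$. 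Applying this to $\neg a$ and using the definition $e_d(a)=\neg e(\neg a)$ of Eq.(\ref{DUALCENTRALCOVER}) then yields $\neg w_0(\neg a)=\neg e(\neg a)=e_d(a)$.

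The main obstacle, as indicated, is the inequality $w_0(z)\leq z$ for central $z$ in item 1. A purely equational manipulation inside the Boolean algebra $Z(A)$ only reduces it to showing that the central ``difference'' $w_0(z)\land\neg z$ vanishes, and this does not seem reachable without decomposing $A$ into directly indecomposable pieces; this is precisely why the subdirect representation, rather than an axiom chase, carries the real weight of the argument.
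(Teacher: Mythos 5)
Your proposal is correct and follows essentially the same route as the paper: a Birkhoff subdirect representation, the observation (via Proposition \ref{eqcentro}-2) that surjective homomorphisms preserve centrality so each component of $z$ lies in $Z(A_i)=\{0,1\}$, injectivity of the embedding to conclude $w_0(z)=z$, and then item 2 via LQF2, Proposition \ref{ARITLQF}-3 and -4 together with item 1. The only differences are cosmetic: you make explicit the directly indecomposable base case and the citation of Eq.(\ref{SIincDI}), which the paper leaves implicit.
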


\begin{proof}
1) Let $A$ be a $LQF$-algebra and, by Theorem \ref{Birkhoff}, let us consider a subdirect representation $f: A \hookrightarrow \prod_{i\in I} A_i$ in the variety $\cal LQF$.  Let $z\in Z(A)$. We first claim that, for each $i\in I$, $\pi_i f(z) \in Z(A_i)$. Indeed: Let $a_i \in A_i$. Since $\pi_i f$ is surjective then there exists $a\in A$ such that $\pi_i f(a) = a_i$. By Proposition \ref{eqcentro}-2 we have that $a = (a\land z) \lor (a\land \neg z)$ and therefore, $a_i = \pi_i(a) = \big(\pi_i(a) \land \pi_i(z) \big) \lor \big(\pi_i(a)\land \pi_i(\neg z)\big) = (a_i \land \pi_i(z)) \lor (a_i \land \neg \pi_i(z))$. Thus, again by Proposition \ref{eqcentro}-2, $\pi_i f(z) \in Z(A_i)$ as claimed.  Since $A_i$ is a directly indecomposable $LQF$-algebras, so $Z(A_i) = {\bf 2}$, then $f(z) = (z_i)_{i\in I}$ where $z_i \in {\bf 2}$. Thus, $f(w_0(z)) = w_0 (f(z)) = (w_0(z_i))_{i\in I} = (z_i)_{i\in I} = f(z)$. Hence, $z = w_0(z)$ because $f$ is an injective map.

2) Let $a\in A$ and let us consider the set $Z^\uparrow(a) = \{z\in Z(A): a\leq z \}$. By LFQ2 and Proposition \ref{ARITLQF}-3 we have that $w_0(a) \in Z^\uparrow(a)$. Let $z\in Z^\uparrow(a)$. Then, by Proposition \ref{ARITLQF}-4 and item 1, $w_0(a) \leq w_0(z) = z$. It proves that $w_0(a) = \bigwedge Z^\uparrow(a) = e(a)$. Finally, by Eq.(\ref{DUALCENTRALCOVER}), $\neg w_0( \neg a) = e_d(a)$.

\qed
\end{proof}

\begin{prop}\label{BOOLEANCOVERAUX}
Let $A$ be a $LQF$-algebra. Then,

\begin{enumerate}
\item
$e_d(e_d(x)) = e_d(x)$.

\item
If $x\leq y$ then $e_d(x) \leq e_d(y)$.

\item
$e_d(x \land y) = e_d(x) \land e_d(y)$.

\item
$e_d (x_1 R x_2) \land e_d (y_1 R y_2) \leq w(x_1, y_1) R w(x_2, y_2)$.

\item
$e_d (x_1 R x_2) \land e_d (y_1 R y_2) \leq w^*(x_1, y_1) R w^*(x_2, y_2)$.

\end{enumerate}
\end{prop}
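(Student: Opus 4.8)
The plan is to reduce everything to the identity $e_d(a) = \neg w_0(\neg a)$ supplied by Proposition \ref{CENTRALCOVERGEN}, which immediately gives two facts I will use repeatedly. First, $e_d(a) \in Z(A)$, since $w_0(\neg a) \in Z(A)$ by Proposition \ref{ARITLQF}-3 and $Z(A)$ is closed under $\neg$ by Proposition \ref{eqcentro}-1. Second, $e_d(a) \leq a$, which follows directly from LQF2 (from $\neg a \leq w_0(\neg a)$ one gets $\neg w_0(\neg a) \leq a$). I will also record that $e_d$ fixes central elements: if $c \in Z(A)$ then $w_0(\neg c) = \neg c$ by Proposition \ref{CENTRALCOVERGEN}-1, so $e_d(c) = \neg\neg c = c$.

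With these in hand, items 1--3 are short. For item 1, $e_d(x)$ is central, so $e_d(e_d(x)) = e_d(x)$ by the fixed-point remark. For item 2, from $x \leq y$ I get $\neg y \leq \neg x$, then $w_0(\neg y) \leq w_0(\neg x)$ by monotonicity of $w_0$ (Proposition \ref{ARITLQF}-4 with base point $0$), and negating reverses the order to give $e_d(x) \leq e_d(y)$. For item 3, the inequality $e_d(x\land y) \leq e_d(x) \land e_d(y)$ is item 2 applied to $x\land y \leq x$ and $x\land y \leq y$; conversely $e_d(x)\land e_d(y)$ is central (a meet of central elements) and lies below $x\land y$, so the fixed-point remark together with item 2 gives $e_d(x)\land e_d(y) = e_d\big(e_d(x)\land e_d(y)\big) \leq e_d(x\land y)$.

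The substance lies in items 4 and 5, and the key device will be a bridge lemma: \emph{for $c \in Z(A)$ and any $a,b$, one has $c \leq aRb$ if and only if $a\land c = b\land c$.} To prove it I would use that the map $h_c : A \to [0,c]_A$, $x \mapsto x\land c$, is a surjective ${\cal OML}$-homomorphism (Proposition \ref{DIOML}-2 composed with the quotient map for $\theta_c$). Since $R$ is an ${\cal OML}$-term and $h_c(1) = c = 1_{[0,c]}$, the condition $c \leq aRb$ is equivalent to $h_c(aRb) = 1_{[0,c]}$, i.e. to $h_c(a)\,R\,h_c(b) = 1_{[0,c]}$; and in any orthomodular lattice $uRv = 1$ holds iff $u = v$ (if $(u\land v)\lor(\neg u\land \neg v)=1$ with the two joinands orthogonal, then $\neg u\land\neg v = \neg(u\land v)$, forcing $u=v$), so this is equivalent to $h_c(a) = h_c(b)$, that is, to $a\land c = b\land c$.

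Finally I would assemble item 4 as follows. Put $c = e_d(x_1 R x_2)\land e_d(y_1 R y_2)$, which is central. Since $c \leq e_d(x_1Rx_2)\leq x_1Rx_2$ and $c\leq e_d(y_1Ry_2)\leq y_1Ry_2$, the bridge lemma yields $x_1\land c = x_2\land c$ and $y_1\land c = y_2\land c$. Because $c$ is central, $w_0(c) = c$ (Proposition \ref{CENTRALCOVERGEN}-1), so applying LQF11 with $w_0(z)=c$ gives $w_{x_i}(y_i)\land c = w_{x_i\land c}(y_i\land c)$ for $i=1,2$; the right-hand sides coincide since $x_1\land c = x_2\land c$ and $y_1\land c = y_2\land c$, whence $w(x_1,y_1)\land c = w(x_2,y_2)\land c$. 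One more application of the bridge lemma gives $c \leq w(x_1,y_1)\,R\,w(x_2,y_2)$, which is exactly item 4; item 5 is the verbatim argument with $w^*$ and LQF12 in place of $w$ and LQF11. The main obstacle I anticipate is the bridge lemma itself, namely pinning down the equivalence $c\leq aRb \iff a\land c = b\land c$ and correctly invoking LQF11/LQF12 with $w_0(z)$ specialized to the central element $c$; everything else is bookkeeping with monotonicity and centrality.
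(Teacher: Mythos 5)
Your proof is correct, and while items 1--3 follow essentially the paper's own route (both rest on Proposition \ref{CENTRALCOVERGEN}: $e_d(a)=\neg w_0(\neg a)$ is central, lies below $a$, and fixes central elements; your monotonicity argument for item 2 differs only cosmetically from the paper's use of the supremum characterization of $e_d$), your treatment of items 4 and 5 is genuinely different. The paper does not argue element-wise there: it notes that the inequalities can be rewritten as equations, invokes Eq.(\ref{EQDISI}) to reduce to directly indecomposable algebras, and then uses Proposition \ref{DIRECTID}-4 so that $e_d(x_1Rx_2)\land e_d(y_1Ry_2)\in Z(A)=\{0,1\}$, after which both cases are immediate via Eq.(\ref{ECMOL}). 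You instead prove the inequality directly in an arbitrary $LQF$-algebra: your bridge lemma (for central $c$, $c\leq aRb$ iff $a\land c=b\land c$) is the factor-congruence correspondence of Proposition \ref{DIOML} read through the term $R$, and the rest is LQF11/LQF12 instantiated at $z=c$, which is legitimate since $w_0(c)=c$ by Proposition \ref{CENTRALCOVERGEN}-1. Both arguments are sound; the paper's is shorter because the universal-algebraic machinery (Birkhoff's subdirect representation plus the classification of the directly indecomposable members) does the heavy lifting, whereas yours stays inside a single algebra, avoids Proposition \ref{DIRECTID} and the reduction to ${\cal DI}({\cal LQF})$ altogether, and makes explicit that axioms LQF11/LQF12 encode precisely the compatibility of $w$ and $w^*$ with relativization to central elements --- the same mechanism the paper exploits later in Proposition \ref{LQFFACTOR} and Proposition \ref{FILTEQUIV}.
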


\begin{proof}
1) It immediately follows from Proposition \ref{CENTRALCOVERGEN}.

2) If $x\leq y$ then  $e_d(x) \leq y$. Since $e_d(x) \in Z(A)$ then, $e_d(x) \leq \bigvee \{z\in Z(A): z\leq y \} = e_d(y)$.

3) We first note that $e_d(x) \land e_d(y) \in Z(A)$ and $e_d(x) \land e_d(y) \leq x\land y$. Let $z\in Z(A)$ such that $z\leq x\land y \leq x,y$. Then, by item 2 and Proposition \ref{CENTRALCOVERGEN}-1, $z = e_d(z) \leq e_d(x), e_d(y)$, so $z\leq e_d(x) \land e_d(y)$. Hence, $e_d(x) \land e_d(y) = \bigvee\{z\in Z(A): z\leq x\land y \} = e_d(x\land y)$.

4,5) Let us remark that the inequalities of these items can be equivalently formulated by equations. Then, by Eq.(\ref{EQDISI}), it is enough to study these inequalities in the class $\cal{DI}(\cal{LQF})$. Let $A$ be a directly indecomposable $LQF$-algebra. By Proposition \ref{ARITLQF}-3 and Proposition \ref{DIRECTID}-4 we have that $e_d (x_1 R x_2) \land e_d (y_1 R y_2) \in Z(A) = \{0,1\}$. Let us assume that $e_d (x_1 R x_2) \land e_d (y_1 R y_2) = 1$. Then, by Eq.(\ref{ECMOL}), $x_1 = x_2$ and $y_1 = y_2$ and so $w(x_1, y_1) = w(x_2, y_2)$, $w^*(x_1, y_1) = w^*(x_2, y_2)$. Again by Eq.(\ref{ECMOL}), we have that $w(x_1, y_1) R w(x_2, y_2) = 1$ and  $w^*(x_1, y_1) R w^*(x_2, y_2) = 1$. Hence our claim.

\qed
\end{proof}

\begin{prop}\label{DISC1}
Let $A$ be a directly indecomposable $LQF$-algebra. Then the term $$t(x,y,z) =  \Big(x \land \neg e_d (xRy)\Big) \lor  \Big(z \land e_d(xRy) \Big)$$ is a discriminator term for $A$ and ${\cal LQF}$ is a discriminator variety.
\end{prop}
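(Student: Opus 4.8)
The plan is to read the symbol $e_d$ occurring in $t$ as the genuine ${\cal LQF}$-term $e_d(a) = \neg w_0(\neg a)$, whose identification with the dual central cover is supplied by Proposition~\ref{CENTRALCOVERGEN}(2); this guarantees that $t(x,y,z)$ is a bona fide term of the language and not merely a meta-notation. First I would show that on a directly indecomposable $A$ the operation $e_d$ behaves as the characteristic map of $\{1\}$, that is, $e_d(a)=1$ when $a=1$ and $e_d(a)=0$ otherwise. Indeed, by Proposition~\ref{DIRECTID}(2) one has $w_0(c)=1$ exactly when $c\neq 0$ (and $w_0(c)=0$ exactly when $c=0$), so $e_d(a)=\neg w_0(\neg a)=1$ iff $\neg a = 0$ iff $a=1$, and $e_d(a)=0$ iff $a\neq 1$.

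Next I would reduce the discriminator property to the pointwise statement that $xRy=1$ iff $x=y$ in every orthomodular lattice. The forward implication is immediate, since $xRx = x\lor\neg x = 1$. The converse is the only place that genuinely uses orthomodularity, and I expect it to be the main obstacle. Suppose $xRy = (x\land y)\lor(\neg x\land\neg y)=1$ and write $u = x\land y$, $v = \neg x\land\neg y$; then $u\leq x\leq\neg v$, so $u\bot v$, and by De Morgan $\neg u\land\neg v = \neg(u\lor v)=\neg 1 = 0$. Applying the orthomodular law to $v\leq\neg u$ gives $\neg u = v\lor(\neg u\land\neg v)=v$. From $\neg u = v\leq\neg x$ and $\neg u = v\leq\neg y$ one extracts $x\leq u = x\land y$ and $y\leq u = x\land y$, which forces $x=y$. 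Combining this with the first step yields $e_d(xRy)=1$ precisely when $x=y$ and $e_d(xRy)=0$ precisely when $x\neq y$.

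A case split then settles the first claim. If $x=y$, then $e_d(xRy)=1$ and $t(x,y,z)=(x\land 0)\lor(z\land 1)=z$; if $x\neq y$, then $e_d(xRy)=0$ and $t(x,y,z)=(x\land 1)\lor(z\land 0)=x$. Hence $t$ is a discriminator term for every directly indecomposable $A\in{\cal LQF}$. Finally, to conclude that ${\cal LQF}$ is a discriminator variety I would take ${\cal K}={\cal SI}({\cal LQF})$: by Eq.(\ref{SIincDI}) every subdirectly irreducible $LQF$-algebra is directly indecomposable, so $t$ is a \emph{common} discriminator term for all of ${\cal K}$; and by Birkhoff's subdirect representation theorem (Theorem~\ref{Birkhoff}), since every member of ${\cal LQF}$ embeds as a subdirect product of members of ${\cal SI}({\cal LQF})$, we have ${\cal LQF}={\cal V}({\cal SI}({\cal LQF}))={\cal V}({\cal K})$. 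This exhibits ${\cal LQF}$ as a discriminator variety and, in particular, makes the structural consequences of Theorem~\ref{BULMAN} available for ${\cal LQF}$.
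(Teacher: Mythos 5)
Your proposal is correct and follows essentially the same route as the paper: show that on a directly indecomposable $LQF$-algebra $e_d(xRy)$ is the characteristic function of equality (so $t(x,y,z)$ evaluates to $z$ when $x=y$ and to $x$ otherwise), and then conclude that ${\cal LQF}$ is a discriminator variety because it is generated by its directly indecomposable (equivalently, via Birkhoff, its subdirectly irreducible) members. The only notable difference is that you explicitly prove the pointwise fact that $aRb=1$ implies $a=b$ in an orthomodular lattice, where the paper simply invokes Eq.(\ref{ECMOL}); this is a welcome refinement, since Eq.(\ref{ECMOL}) as stated concerns equations between terms rather than elements, but it does not change the structure of the argument.
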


\begin{proof}
Let $A$ be a directly indecomposable $LQF$-algebra. We first note that, by Proposition \ref{DIRECTID}, $Z(A) = \{0,1\}$. Moreover, by Proposition \ref{CENTRALCOVERGEN}, $e_d(x) = 0$ for each $x \not=1$. Let $x,y,z \in A$. Suppose that
$x = y$.  Therefore, by Eq.(\ref{ECMOL}), $xRy = 1$ and then, $e_d(xRy) = 1$. Thus $t(x,y,z) = z$. Now let us suppose that $x\not =y$. Therefore, by Eq.(\ref{ECMOL}),
$xRy < 1$ and then $e_d(xRy) = 0$. Thus $t(x,y,z) = x$. Hence, $t(x,y,z)$ is a discriminator term in $A$. Since ${\cal DI}({\cal LQF})$ generates ${\cal LQF}$ it is a discriminator variety.

\qed
\end{proof}

Combining Proposition \ref{DISC1} and Proposition \ref{BULMAN} we can establish the following result.

\begin{prop}\label{DISC12}
${\cal LQF}$ is an arithmetical semisimple variety. Therefore, $${\cal DI}({\cal LQF}) = {\cal SI}({\cal LQF}) = {\cal S}im({\cal LQF}).$$
\qed
\end{prop}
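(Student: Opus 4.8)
The plan is to obtain Proposition \ref{DISC12} as a direct corollary of the discriminator structure established in Proposition \ref{DISC1} together with the general metatheorem on discriminator varieties recorded in Theorem \ref{BULMAN}. First I would fix the generating class ${\cal K} = {\cal DI}({\cal LQF})$ and recall that, by Proposition \ref{DISC1}, the term $t(x,y,z) = \big(x\land \neg e_d(xRy)\big) \lor \big(z\land e_d(xRy)\big)$ is a common discriminator term for every algebra in ${\cal K}$. This is the substantive input, and it has already been verified, so no recomputation is needed here.

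The one point that deserves an explicit mention is that ${\cal K}$ genuinely generates the whole variety, i.e. ${\cal LQF} = {\cal V}({\cal DI}({\cal LQF}))$. This is where I would invoke Birkhoff's subdirect representation theorem (Theorem \ref{Birkhoff}): every $A \in {\cal LQF}$ is a subdirect product of subdirectly irreducible $LQF$-algebras, and by Eq.(\ref{SIincDI}) each such subdirect factor is directly indecomposable, hence lies in ${\cal K}$. Consequently ${\cal DI}({\cal LQF})$ generates ${\cal LQF}$, exactly as already noted at the end of Proposition \ref{DISC1}.

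With these two facts in hand, the hypotheses of Theorem \ref{BULMAN} are met by taking the class ${\cal K} = {\cal DI}({\cal LQF})$ with its common discriminator term $t(x,y,z)$ and the generated variety ${\cal A} = {\cal V}({\cal K}) = {\cal LQF}$. Part 1 of that theorem then yields at once that ${\cal LQF}$ is an arithmetical semisimple variety, and Part 2 yields the chain of equalities ${\cal DI}({\cal LQF}) = {\cal SI}({\cal LQF}) = {\cal S}im({\cal LQF})$, which is precisely the assertion.

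Since all the real work, namely building the discriminator term out of the dual central cover $e_d$ and checking its behaviour on directly indecomposable algebras, was carried out in Proposition \ref{DISC1}, I do not anticipate any genuine obstacle: the proof is a bookkeeping application of a known metatheorem. The only mild care needed is to ensure that the generating class fed into Theorem \ref{BULMAN} is ${\cal DI}({\cal LQF})$ rather than ${\cal LQF}$ itself, since $t$ is shown to be a discriminator only on directly indecomposable algebras; selecting the correct ${\cal K}$ is what makes the appeal to Theorem \ref{BULMAN} legitimate.
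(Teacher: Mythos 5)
Your proposal is correct and follows exactly the paper's route: the paper derives Proposition \ref{DISC12} by combining Proposition \ref{DISC1} with Theorem \ref{BULMAN}, and your explicit justification that ${\cal DI}({\cal LQF})$ generates ${\cal LQF}$ (via Theorem \ref{Birkhoff} and Eq.(\ref{SIincDI})) is precisely the point already noted at the end of Proposition \ref{DISC1}. Nothing further is needed.
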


\section{Filters and congruences in $LQF$-algebras} \label{FILTERCONGRUENCE}

In this section we develop the filter theory for $LQF$-algebras. In order to do this, we first recall some basic results about filter theory for orthomodular lattices and Boolean algebras. Let $L$ be an orthomodular lattice. An {\it increasing set} in $L$ is a subset $F$ of $L$ such that  if $a\in F$ and $a \leq x$ then $x\in F$. An {\it $OML$-filter} (also called {\it perspective filter} \cite{KAL}) in $L$ is a subset $F\subseteq L$ satisfying the following conditions:

\begin{enumerate}
\item
$F$ is an increasing set.

\item
If $a,b \in F$ then $a\land b \in F$.

\item
If $a\in F$ and $a\sim_p b$ then $b\in F$.

\end{enumerate}

We denote by $Filt_{_{OML}}(L)$ complete lattice of $OML$-filters in $L$. A $OML$-filter in $L$ is called {\it proper} iff $F \not= L$, or equivalently, iff $0\not \in F$. A {\it maximal $OML$-filter} is a proper $OML$-filter maximal with respect to inclusion. In the particular case in which $L$ is a Boolean algebra the notion of filter, referred as  {\it $BA$-filter}, is characterized by the first two conditions  above. If $L$ is a Boolean algebra then we denote by $Filt_{_{BA}}(L)$  the complete lattice of $BA$-filter of $L$. It is well known that a $BA$-filter $F$ in the Boolean algebra $L$ is maximal iff, for each $x \in L$, $x\in F$ or $\neg x\in F$.

Let $L$ be an orthomodular lattice. If we denote by $Con_{_{OML}}(L)$ the congruence lattice of $L$ then the map
\begin{equation}\label{alpha}
Con_{_{OML}}(L) \ni \theta {\mapsto} F_\theta = \{x \in L: (x,1) \in \theta \}
\end{equation}
is a lattice order isomorphism from $Con_{_{OML}}(L)$ onto $Filt_{_{OML}}(L)$ (see {\rm \cite[\S 2 Theorem 6]{KAL}})  whose inverse is given by
\begin{equation}\label{alpha-1}
Filt_{_{OML}}(L) \ni F {\mapsto}  \theta_F = \{(x,y) \in L^2 : xRy \in F \}.
\end{equation}

\begin{definition}
{\rm Let $A$ be a $LQF$-algebra. A {\it $LQF$-filter} in $A$ is an $OML$-filter of $A$ which is closed under $\epsilon_d$.
}
\end{definition}

\begin{example}
{\rm Let $f:A \rightarrow B$ be a $LQF$-homomorphism. Clearly $Ker(f) = \{x\in A: f(x)=1\}$ is a $OML$-filter which is closed under $e_d$. Hence,  $Ker(f)$ is a $LQF$-filter.
}
\end{example}

We denote by $Filt_{_{LQF}}(A)$ the set of all $LQF$-filters in the  $LQF$-algebra $A$. Recall that $Filt_{_{LQF}}(A)$ is a closure system, hence it is also a complete lattice under the set inclusion. The notion of {\it maximal $LQF$-filter} is defined in an analogous way to the maximal $OML$-filters. By $Con_{_{LQF}}(A)$, we denote the congruence lattice of $A$.

\begin{prop} \label{FILTEQUIV}
Let $A$ be a $LQF$-algebra. Then the maps $\theta \mapsto F_\theta$ and $F \mapsto \theta_F$, defined in Eq.(\ref{alpha}) and Eq.(\ref{alpha-1}) respectively, are mutually inverse lattice order isomorphisms between $Con_{{LQF}}(A)$ and
$Filt_{_{LQF}}(A)$.
\end{prop}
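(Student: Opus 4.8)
The plan is to leverage the orthomodular correspondence already recalled in Eq.(\ref{alpha}) and Eq.(\ref{alpha-1}): we know that $\theta \mapsto F_\theta$ and $F \mapsto \theta_F$ are mutually inverse order isomorphisms between $Con_{_{OML}}(A)$ and $Filt_{_{OML}}(A)$. Since every $LQF$-congruence is in particular an $OML$-congruence (being compatible with $\land,\lor,\neg$) and every $LQF$-filter is in particular an $OML$-filter, we have the inclusions $Con_{_{LQF}}(A)\subseteq Con_{_{OML}}(A)$ and $Filt_{_{LQF}}(A)\subseteq Filt_{_{OML}}(A)$. It therefore suffices to show that the two mutually inverse bijections restrict to maps $Con_{_{LQF}}(A)\to Filt_{_{LQF}}(A)$ and $Filt_{_{LQF}}(A)\to Con_{_{LQF}}(A)$; the inverse relation and the preservation of order are then inherited automatically from the $OML$-level isomorphisms, so that the restricted maps are again mutually inverse order isomorphisms.

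First I would verify that $\theta\in Con_{_{LQF}}(A)$ forces $F_\theta\in Filt_{_{LQF}}(A)$. We already know $F_\theta$ is an $OML$-filter, so only closure under $e_d$ remains. Recall from Proposition \ref{CENTRALCOVERGEN} that $e_d(x)=\neg w_0(\neg x)=\neg w(0,\neg x)$. Let $x\in F_\theta$, i.e. $(x,1)\in\theta$. Since $\theta$ is compatible with $\neg$ and with $w$, we obtain $(\neg x,0)\in\theta$ and hence $(w_0(\neg x),w_0(0))\in\theta$. By LQF1 we have $w_0(0)=0$, so $(w_0(\neg x),0)\in\theta$, and applying $\neg$ once more yields $(e_d(x),1)\in\theta$, that is $e_d(x)\in F_\theta$. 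Thus $F_\theta$ is closed under $e_d$ and is a $LQF$-filter.

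Next I would check the converse, namely that $F\in Filt_{_{LQF}}(A)$ forces $\theta_F\in Con_{_{LQF}}(A)$. Again $\theta_F$ is already an $OML$-congruence, so it only remains to verify compatibility with the binary operations $w$ and $w^*$. Suppose $(x_1,x_2),(y_1,y_2)\in\theta_F$, i.e. $x_1Rx_2\in F$ and $y_1Ry_2\in F$. Since $F$ is closed under $e_d$, both $e_d(x_1Rx_2)$ and $e_d(y_1Ry_2)$ lie in $F$, and since an $OML$-filter is closed under $\land$, so does their meet. Here the crucial ingredient is Proposition \ref{BOOLEANCOVERAUX}, items 4 and 5, which supply the inequalities $e_d(x_1Rx_2)\land e_d(y_1Ry_2)\leq w(x_1,y_1)R w(x_2,y_2)$ and $e_d(x_1Rx_2)\land e_d(y_1Ry_2)\leq w^*(x_1,y_1)R w^*(x_2,y_2)$. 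As $F$ is an increasing set, it follows that $w(x_1,y_1)R w(x_2,y_2)\in F$ and $w^*(x_1,y_1)R w^*(x_2,y_2)\in F$, that is $(w(x_1,y_1),w(x_2,y_2))\in\theta_F$ and $(w^*(x_1,y_1),w^*(x_2,y_2))\in\theta_F$. Hence $\theta_F$ is compatible with $w$ and $w^*$, so it is a $LQF$-congruence.

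Having established that both restrictions are well defined, the maps are mutually inverse order isomorphisms between $Con_{_{LQF}}(A)$ and $Filt_{_{LQF}}(A)$, as claimed. The only genuinely load-bearing step is the compatibility of $\theta_F$ with $w$ and $w^*$; this is exactly where the inequalities of Proposition \ref{BOOLEANCOVERAUX}-4,5 do the work, so the substantive content of the argument has effectively been arranged in advance, and what remains is essentially the bookkeeping of transferring closure under $e_d$ back and forth across the $OML$-level bijection.
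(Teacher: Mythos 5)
Your proof is correct and follows essentially the same route as the paper: both directions reduce to the known $OML$-level bijection, closure of $F_\theta$ under $e_d$ handles one direction (you unfold $e_d$ as a term in $w$ and $\neg$, while the paper invokes compatibility with the term operation $e_d$ directly, which is the same thing), and Proposition \ref{BOOLEANCOVERAUX}-4,5 gives compatibility of $\theta_F$ with $w$ and $w^*$ exactly as in the paper's proof.
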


\begin{proof}
Let us  suppose that $\theta \in Con_{{LQF}}(A)$. Since $\theta_F$ is a $OML$-congruence, $F_\theta = \{x\in L: (x,1)\in \theta_F \}$ is $OML$-filter. Let $x\in F_\theta $, that is, $(x,1)\in \theta$. Since $\theta$ is compatible with $\epsilon_d$ then $(\epsilon_d(x), 1 ) = (\epsilon_d(x), \epsilon_d(1)) \in \theta$. Thus, $\epsilon_d(x) \in F_\theta$ and $F_\theta \in Filt_{_{LQF}}(A)$.

For the converse, let us suppose that $F \in Filt_{_{LQF}}(A)$. We first note that $\theta_F$ is a $OML$-congruence. Thus, we only need to look at the compatibility of $\theta_F$ with respect to the binary operations $w$ and $w^*$.
Indeed: Let $(x_1, x_2) \in \theta_F$ and $(y_1, y_2) \in \theta_F$, that is, $x_1 R x_2 \in F$ and $y_1 R y_2 \in F$ respectively. Note that $\epsilon_d(x_1 R x_2) \in F$ and $\epsilon_d(y_1 R y_2) \in F$ because $F$ is closed by $\epsilon_d$. Therefore, $e_d (x_1 R x_2) \land e_d (y_1 R y_2) \in F$ because $F$ is also closed under the infimum. Since $F$ is an increasing set, by Proposition \ref{BOOLEANCOVERAUX}-4 and 5, we have that
\begin{itemize}
\item[]
$e_d (x_1 R x_2) \land e_d (y_1 R y_2) \leq w(x_1, y_1) R w(x_2, y_2) \in F$,

\item[]
$e_d (x_1 R x_2) \land e_d (y_1 R y_2) \leq w^*(x_1, y_1) R w^*(x_2, y_2) \in F$.

\end{itemize}
It proves that $(w(x_1, y_1), w(x_2, y_2)) \in \theta_F$ and $(w^*(x_1, y_1), w^*(x_2, y_2)) \in \theta_F$, so $\theta_F$ is compatible with $w$ and $w^*$. Therefore, $\theta_F \in Con_{{LQF}}(A)$.

Summarizing, by the above results, we have that
\begin{equation}\label{SubFILTEQUIV}
F \in Filt_{_{LQF}}(A) \hspace{0.2cm} \mbox{iff} \hspace{0.2cm} \theta_F \in Con_{_{LQF}}(A).
\end{equation}

Since the maps $F \mapsto \theta_F$ and $\theta \mapsto F_\theta$  are mutually inverse lattice-isomorphisms
between $Con_{_{OML}}(A)$ and $Filt_{_{OML}}(A)$ with respect to the $OML$-reduct
$\langle A, \lor, \land, \neg, 0,1 \rangle$, by Eq.(\ref{SubFILTEQUIV}), we have that $Filt_{_{LQF}}(A)$ and $Con_{_{LQF}}(A)$
are lattice-order isomorphic.

\qed
\end{proof}

Let $A$ be a $LQF$-algebra and $M\subseteq A$. Since $Filt_{_{LQF}}(A)$ is a complete lattice we define the {\it $LQF$-filter generated by $M$ } as
$$F_{_{LQF}}(M) = \bigcap \{F\in Filt_{_{LQF}}(A): M\subseteq F \}$$ that is, the smallest $LQF$-filter containing $M$. In particular, if $M=\{a\}$ then $F_{_{LQF}}(a)$ is called the {\it principal filtyer associated to $a$}.

\begin{prop}\label{LQFGENERATEDFILTER}
Let $A$ be $LQF$-algebra, $M\subseteq A$ and $a\in A$. Then:
\begin{enumerate}
\item
$F_{_{LQF}}(M) = \{x\in L: \exists x_1 \ldots x_n \in M \hspace{0.2cm} \mbox{such that} \hspace{0.2cm} e_d(x_1 \land \ldots \land x_n) \leq x  \}$.

\item
$F_{_{LQF}}(a) = [e_d(a),1]$.

\item
$F_{_{LQF}}(M)$ is a proper filter iff each finite subset $\{x_1 \ldots x_n\} \subseteq M$ satisfies  $e_d(x_1 \land \ldots \land x_n) > 0$.
\end{enumerate}

\end{prop}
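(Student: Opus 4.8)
The plan is to prove item 1 directly by showing that the set
\[
S = \{x\in A : \exists\, x_1,\ldots,x_n \in M \text{ such that } e_d(x_1 \land \cdots \land x_n) \leq x \}
\]
coincides with the smallest $LQF$-filter containing $M$. To this end I would establish three facts: that $S$ is itself an $LQF$-filter, that $M \subseteq S$, and that $S$ is contained in every $LQF$-filter containing $M$. Items 2 and 3 will then fall out as easy specializations of item 1.

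For the first fact, I would verify the defining closure conditions of an $LQF$-filter for $S$. The increasing property is immediate from the definition of $S$. Closure under finite meets follows by concatenating witnesses and using Proposition \ref{BOOLEANCOVERAUX}-3, since $e_d(x_1 \land \cdots \land x_n \land y_1 \land \cdots \land y_m) = e_d(x_1 \land \cdots \land x_n) \land e_d(y_1 \land \cdots \land y_m) \leq x \land y$ whenever $x,y \in S$. Closure under $e_d$ uses the idempotency $e_d(e_d(w)) = e_d(w)$ of Proposition \ref{BOOLEANCOVERAUX}-1 together with the monotonicity of Proposition \ref{BOOLEANCOVERAUX}-2: if $e_d(x_1 \land \cdots \land x_n) \leq x$ then $e_d(x_1 \land \cdots \land x_n) = e_d(e_d(x_1 \land \cdots \land x_n)) \leq e_d(x)$, so $e_d(x) \in S$. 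The delicate point, and what I expect to be the main obstacle, is closure under perspectivity. Here I would exploit that $e_d(x_1 \land \cdots \land x_n)$ is a central element (by Proposition \ref{CENTRALCOVERGEN}-2 and Proposition \ref{ARITLQF}-3) lying below $x$; then Proposition \ref{PERSPECTCENTER}-2 guarantees that it also lies below any $y$ with $x \sim_p y$, whence $y \in S$. This is precisely where the interplay between the dual central cover and the perspectivity relation is needed, and it is what makes $S$ closed in the orthomodular sense.

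To finish item 1, the inclusion $M \subseteq S$ follows by taking $n=1$ and using $e_d(a) \leq a$, which holds directly from the definition of the dual central cover in Eq.(\ref{DUALCENTRALCOVER}). For minimality, given any $LQF$-filter $F$ with $M \subseteq F$ and any $x \in S$ with witnesses $x_1, \ldots, x_n \in M$, closure of $F$ under finite meets gives $x_1 \land \cdots \land x_n \in F$, closure under $e_d$ gives $e_d(x_1 \land \cdots \land x_n) \in F$, and the increasing property then yields $x \in F$; hence $S \subseteq F$ and $S = F_{_{LQF}}(M)$. Item 2 is the case $M = \{a\}$, for which the only relevant witness is $a$ itself, so $F_{_{LQF}}(a) = \{x : e_d(a) \leq x\} = [e_d(a),1]$. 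Item 3 follows by reading off properness as $0 \notin F_{_{LQF}}(M)$: by item 1 this fails exactly when some finite meet of elements of $M$ satisfies $e_d(x_1 \land \cdots \land x_n) = 0$, so $F_{_{LQF}}(M)$ is proper if and only if $e_d(x_1 \land \cdots \land x_n) > 0$ for every finite subset of $M$.
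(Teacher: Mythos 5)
Your proposal is correct and follows essentially the same route as the paper's proof: you verify that the candidate set is an $LQF$-filter using Proposition \ref{BOOLEANCOVERAUX} for meets and $e_d$-closure and Proposition \ref{PERSPECTCENTER}-2 (via centrality of $e_d(x_1\land\ldots\land x_n)$) for perspectivity, then establish minimality and read off items 2 and 3 as specializations. If anything, you supply details the paper leaves implicit, notably the monotonicity step in $e_d$-closure and the explicit treatment of items 2 and 3, which the paper dismisses as ``handled similarly.''
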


\begin{proof}
 We first prove that $M_0 = \{x\in L: \exists x_1 \ldots x_n \in M \hspace{0.2cm} \mbox{such that} \hspace{0.2cm} e_d(x_1 \land \ldots \land x_n) \leq x  \}$ is a $LQF$-filter. Let us notice that  $M_0$ is an increasing set and, by Proposition \ref{BOOLEANCOVERAUX}-1 it is closed under $e_d$. Then it remains to prove that $M_0$ is closed under $\land$ and closed under perspectivity. Indeed:

To see that $M_0$ is closed under $\land$, let $x,y \in M_0$. Then $e_d(x_1 \land \ldots \land x_n) \leq x$ and $e_d(y_1 \land \ldots \land y_m) \leq y$ where $x_1, \ldots, x_n, y_1,  \ldots y_n \in M$. Thus, by Proposition \ref{BOOLEANCOVERAUX}-3, $e_d(x_1 \land \ldots \land x_n \land y_1,  \ldots y_n) = e_d(x_1 \land \ldots \land x_n) \land e_d(y_1 \land \ldots \land y_m) \leq x\land y$, so $x\land y \in M_0$.

To show that $M_0$ is closed under perspectivity, let $x\in M_0$ i.e., there exists $\{x_1 \ldots x_n\} \subseteq M$ such that $e_d(x_1 \land \ldots \land x_n) \leq x$, and $y\in A$ such that $x\sim_p y$. Then, by Proposition \ref{PERSPECTCENTER}-2, $e_d(x_1 \land \ldots \land x_n) \leq y$ because $e_d(x_1 \land \ldots \land x_n) \in Z(A)$. Thus $y\in M_0$, so $M_0$ is closed under perspectivity.

Therefore, $M_0$ is a $LQF$-filter. Let us notice that $M\subseteq M_0$ and, for each $F\in Filt_{_{LQF}}(A)$ containing $M$, we have that $M_0 \subseteq F$. Thus, $M_0$ is the smallest $LQF$-filter containing $M$ i.e., $M_0 = F_{_{LQF}}(M)$. This proves item 1. Item 2 and item 3 are handled similarly.

\qed
\end{proof}

\begin{prop}\label{BOOLEANFILTER}
Let $A$ be $LQF$-algebra and $F$ be a $LQF$-filter. Then:

\begin{enumerate}
\item
$F\cap Z(A) \in Filt_{_{BA}}(Z(A))$.

\item
$F= F_{_{LQF}}(F\cap Z(A))$.

\item
The map $Filt_{_{BA}}(Z(A)) \ni G \mapsto  F_{_{LQF}}(G)$ defines a lattice order isomorphism between $Filt_{_{BA}}(Z(A))$ and $Filt_{_{LQF}}(A)$.

\end{enumerate}
\end{prop}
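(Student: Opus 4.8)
The plan is to prove the three items in order, relying on two levers established earlier: the explicit description of the generated $LQF$-filter in Proposition \ref{LQFGENERATEDFILTER}, and the fact that the dual central cover fixes central elements. I would first record this last fact as a preliminary observation: for every $z \in Z(A)$ one has $e_d(z) = z$. Indeed, $\neg z \in Z(A)$, so by Proposition \ref{CENTRALCOVERGEN}-1 we get $w_0(\neg z) = \neg z$, whence $e_d(z) = \neg w_0(\neg z) = \neg \neg z = z$. This identity is what lets every computation with $e_d$ collapse on central elements, and it is the real engine behind all three items.

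For item 1, recall that $Z(A)$ is a Boolean sublattice of $A$ by Proposition \ref{eqcentro}-1 and that a $BA$-filter is exactly an increasing set closed under $\land$. I would then check directly that $F \cap Z(A)$ has these two properties: if $a \in F \cap Z(A)$ and $a \leq x$ with $x \in Z(A)$, then $x \in F$ since $F$ is increasing, so $x \in F \cap Z(A)$; and if $a, b \in F \cap Z(A)$, then $a \land b \in F$ since $F$ is a filter, while $a \land b \in Z(A)$ since $Z(A)$ is a sublattice. For item 2 I would apply the formula $F_{_{LQF}}(M) = \{x : e_d(x_1 \land \cdots \land x_n) \leq x \text{ for some } x_1, \ldots, x_n \in M\}$ with $M = F \cap Z(A)$. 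For the inclusion $\supseteq$, any finite meet $x_1 \land \cdots \land x_n$ of elements of $F \cap Z(A)$ again lies in $F \cap Z(A)$ (by item 1), so by the preliminary observation $e_d(x_1 \land \cdots \land x_n) = x_1 \land \cdots \land x_n \in F$, and since $F$ is increasing every $x$ above it lies in $F$. For $\subseteq$, given $x \in F$, closure of the $LQF$-filter $F$ under $e_d$ yields $e_d(x) \in F$, and $e_d(x) \in Z(A)$, so $e_d(x) \in F \cap Z(A)$; since $e_d(e_d(x)) = e_d(x) \leq x$ by Proposition \ref{BOOLEANCOVERAUX}-1, the single generator $e_d(x)$ witnesses $x \in F_{_{LQF}}(F \cap Z(A))$.

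For item 3, I would set $\Phi(G) = F_{_{LQF}}(G)$ and $\Psi(F) = F \cap Z(A)$. By item 1, $\Psi$ lands in $Filt_{_{BA}}(Z(A))$, and both maps are evidently monotone with respect to inclusion. It then suffices to show they are mutually inverse, since two mutually inverse monotone maps between lattices automatically constitute a lattice order isomorphism. Item 2 gives $\Phi \Psi = \mathrm{id}$ directly. For $\Psi \Phi = \mathrm{id}$, the inclusion $G \subseteq F_{_{LQF}}(G) \cap Z(A)$ is immediate; conversely, if $z \in F_{_{LQF}}(G) \cap Z(A)$ then there exist $x_1, \ldots, x_n \in G \subseteq Z(A)$ with $e_d(x_1 \land \cdots \land x_n) \leq z$, and since these $x_i$ are central the meet satisfies $x_1 \land \cdots \land x_n = e_d(x_1 \land \cdots \land x_n) \leq z$, belongs to $G$ because $G$ is closed under $\land$, and then $z \in G$ because $G$ is increasing in $Z(A)$ and $z \in Z(A)$.

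None of these steps involves a genuine obstacle; the argument is essentially a bookkeeping exercise once the preliminary identity $e_d(z) = z$ for $z \in Z(A)$ is in place. The one point requiring care is the $\subseteq$ direction of item 2, where the hypothesis that $F$ is a \emph{$LQF$}-filter (hence closed under $e_d$), rather than merely an $OML$-filter, is exactly what permits the descent from an arbitrary $x \in F$ to the central generator $e_d(x)$; without that closure the correspondence with $BA$-filters of the center would break down.
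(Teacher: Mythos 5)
Your proof is correct and follows essentially the same route as the paper's: both rest on the explicit description of $F_{_{LQF}}(M)$ from Proposition \ref{LQFGENERATEDFILTER}, the identity $e_d(z)=z$ for $z\in Z(A)$ (via Proposition \ref{CENTRALCOVERGEN} and Proposition \ref{BOOLEANCOVERAUX}), and closure of $F$ under $e_d$ to get $F\subseteq F_{_{LQF}}(F\cap Z(A))$. The only organizational difference is in item 3, where you exhibit $G\mapsto F_{_{LQF}}(G)$ and $F\mapsto F\cap Z(A)$ as mutually inverse monotone maps (which, incidentally, handles the order-isomorphism claim a touch more cleanly), whereas the paper proves surjectivity from item 2 and injectivity by contradiction; the underlying computation, namely $F_{_{LQF}}(G)\cap Z(A)\subseteq G$, is identical in both.
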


\begin{proof}
1) Since $F$ is a $LQF$-filter it follows that, $F\cap Z(A)$ is closed under $\land$ and it is also an increasing set  in $Z(A)$.  Thus $F\cap Z(A) \in Filt_{_{BA}}(Z(A))$.

2) Let us notice that $F_{_{LQF}}(F\cap Z(A)) \subseteq F$ because $F\cap Z(A) \subseteq F$. To prove equality of these two sets, let $x \in F$. Then, $e_d(x) \in F\cap Z(A)$ and $e_d(x) \leq x$. Thus, $x \in F_{_{LQF}}(F\cap Z(A))$ and $F \subseteq F_{_{LQF}}(F\cap Z(A))$. Hence, $F= F_{_{LQF}}(F\cap Z(A))$.

3) We first note that $G \mapsto  F_{_{LQF}}(G)$ is an order inclusion preserving map and, by the above items, it is also surjective. Then we prove that the map $G \mapsto  F_{_{LQF}}(G)$ is injective. Let $G_1, G_2 \in Filt_{_{BA}}(Z(A))$ such that $G_1 \not= G_2$. Then, there exists $z\in Z(A)$ such that $z\in G_1$ and $z\not \in G_2$. Let us suppose that
$F_{_{LQF}}(G_1) = F_{_{LQF}}(G_2)$. Then $z \in F_{_{LQF}}(G_1)$ and $z\in F_{_{LQF}}(G_2)$. By Proposition \ref{LQFGENERATEDFILTER}-1 there exists $z_1 \ldots z_n \in G_2 \subseteq Z(A)$ such that $e_d(z_1 \land \ldots \land z_n) \leq z$. Note that $z_1 \land \ldots \land z_n \in G_2$ because $G_2$ is closed under $\land$ and, by Proposition \ref{BOOLEANCOVERAUX}-3,  $z_1 \land \ldots \land z_n = e_d(z_1) \land \ldots \land e_d(z_n) = e_d(z_1 \land \ldots \land z_n) \leq z$. Since $G_2$ is an increasing set then $z \in G_2$ which is a contradiction. Thus, $F_{_{LQF}}(G_1) \not = F_{_{LQF}}(G_2)$ and the map is injective. Hence, the map $Filt_{_{BA}}(Z(A)) \ni G \mapsto  F_{_{LQF}}(G)$ defines a lattice order isomorphism between $Filt_{_{BA}}(Z(A))$ and $Filt_{_{LQF}}(A)$.

\qed
\end{proof}

The above proposition shows that the filter theory of a  $LQF$-algebra is completely determined by the $BA$-filter theory of its center.

\begin{prop}\label{MAXLQFFILTER}
Let $A$ be $LQF$-algebra and $F$ be a $LQF$-filter. Then the following statement are equivalent

\begin{enumerate}
\item
$F$ is maximal.

\item
$F\cap Z(A)$ is a maximal $BA$-filter in $Z(A)$.

\item
For each $x\in A$, $x\in F$ or $\neg e_d(x) \in F$.

\end{enumerate}
\end{prop}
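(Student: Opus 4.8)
The plan is to push every membership question from the whole algebra $A$ down into its Boolean center $Z(A)$, exploiting the lattice isomorphism $Filt_{_{BA}}(Z(A))\cong Filt_{_{LQF}}(A)$ of Proposition \ref{BOOLEANFILTER}-3, and then to invoke the classical description of maximal $BA$-filters recalled just after the definition of $BA$-filter. The device that makes the transfer work is the auxiliary equivalence
\[
x\in F \quad\Longleftrightarrow\quad e_d(x)\in F\cap Z(A),
\]
which I would establish first for every $x\in A$ and every $F\in Filt_{_{LQF}}(A)$. For the nontrivial direction, Proposition \ref{BOOLEANFILTER}-2 gives $F=F_{_{LQF}}(F\cap Z(A))$, so by Proposition \ref{LQFGENERATEDFILTER}-1 the membership $x\in F$ yields central elements $z_1,\dots,z_n\in F\cap Z(A)$ with $e_d(z_1\land\cdots\land z_n)\leq x$. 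Since the $z_i$ are central, their meet is central (Proposition \ref{eqcentro}-1) and $e_d$ fixes it (Proposition \ref{CENTRALCOVERGEN}), so $z_1\land\cdots\land z_n\in F\cap Z(A)$ lies below $x$; by Eq.(\ref{DUALCENTRALCOVER}) it therefore lies below $e_d(x)$, and as $F$ is increasing with $e_d(x)\in Z(A)$ we conclude $e_d(x)\in F\cap Z(A)$. The converse is immediate from $e_d(x)\leq x$ and the fact that $F$ is an increasing set.

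With this equivalence recorded, the implication $1\Longleftrightarrow 2$ becomes purely order-theoretic. The map $G\mapsto F_{_{LQF}}(G)$ of Proposition \ref{BOOLEANFILTER}-3 is a lattice order isomorphism sending the improper filter $Z(A)$ to the improper filter $A$, hence it carries proper filters to proper filters and, preserving the covering relation below the top, carries coatoms to coatoms. A maximal (proper) filter is precisely a coatom in the respective filter lattice, so $F$ is a maximal $LQF$-filter if and only if its preimage $F\cap Z(A)$ is a maximal $BA$-filter of $Z(A)$.

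For $2\Longleftrightarrow 3$, I would use that $F\cap Z(A)$ is maximal in $Z(A)$ if and only if for every $z\in Z(A)$ either $z\in F\cap Z(A)$ or $\neg z\in F\cap Z(A)$. Because $e_d(z)=z$ for central $z$ (Proposition \ref{CENTRALCOVERGEN}), the set $\{e_d(x):x\in A\}$ is exactly $Z(A)$, so this dichotomy over all central elements is equivalent to requiring, for every $x\in A$, that $e_d(x)\in F\cap Z(A)$ or $\neg e_d(x)\in F\cap Z(A)$. Now the auxiliary equivalence identifies $e_d(x)\in F\cap Z(A)$ with $x\in F$, while $\neg e_d(x)$ is already central so $\neg e_d(x)\in F\cap Z(A)$ is the same as $\neg e_d(x)\in F$; this is precisely condition 3. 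The step I expect to demand the most care is the auxiliary equivalence itself, since it is the bridge translating membership of arbitrary elements of $A$ into membership inside the Boolean center; once it is in place, both chains of equivalences reduce to bookkeeping.
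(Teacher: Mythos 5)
Your proof is correct, but it takes a genuinely different route from the paper's. The paper proves the cycle $1 \Rightarrow 2 \Rightarrow 3 \Rightarrow 1$ by hand: for $1 \Rightarrow 2$ it notes that $z \notin F$ forces $F_{_{LQF}}(F\cup\{z\}) = A$, extracts via Proposition \ref{LQFGENERATEDFILTER}-1 finitely many $x_1,\ldots,x_n \in F$ with $e_d(x_1)\land\cdots\land e_d(x_n)\land z \leq 0$, and concludes $\neg z \in F$; for $2 \Rightarrow 3$ it uses the contrapositive of $e_d(x)\leq x$; and for $3 \Rightarrow 1$ it argues by contradiction that a proper extension $F_{_{LQF}}(F\cup\{x\})$ would contain both $e_d(x)$ and $\neg e_d(x)$. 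You instead route everything through the lattice isomorphism $Filt_{_{BA}}(Z(A)) \cong Filt_{_{LQF}}(A)$ of Proposition \ref{BOOLEANFILTER}-3 --- which the paper establishes but never invokes in this proof --- so that $1 \Leftrightarrow 2$ becomes the purely order-theoretic fact that an order isomorphism fixing the top carries coatoms to coatoms, and $2 \Leftrightarrow 3$ reduces to the classical Boolean ultrafilter dichotomy via your auxiliary equivalence $x \in F \Leftrightarrow e_d(x) \in F\cap Z(A)$. Both arguments are sound; yours is more structural and makes the ``everything lives in the center'' philosophy explicit, while the paper's is more elementary in that it does not presuppose the isomorphism. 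One simplification you missed: the forward half of your auxiliary equivalence needs no detour through generated filters, since an $LQF$-filter is by definition closed under $e_d$ and $e_d(x)\in Z(A)$ by Proposition \ref{CENTRALCOVERGEN}, so $x\in F$ gives $e_d(x)\in F\cap Z(A)$ immediately. Finally, like the paper, you implicitly read ``maximal'' as ``maximal proper'' (condition 3 holds vacuously for the improper filter $F=A$); since the paper's statement of the Boolean ultrafilter criterion carries the same convention, this is not a gap.
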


\begin{proof}
$1 \Longrightarrow 2$) Let us assume that $F$ is maximal. Let $z \in Z(A)$ such that $z\not \in F\cap Z(A)$, i.e. $z\not \in F$. Then, $A = F_{_{LQF}}(F\cup \{z\})$ because $F$ is maximal. Therefore, $0\in F_{_{LQF}}(F\cup \{z\})$ and, by Proposition \ref{LQFGENERATEDFILTER}-1, there exists $x_1 \ldots x_n \in F$ such that $e_d(x_1) \land \ldots \land e_d(x_n) \land z \leq 0$ where $e_d(x_1) \ldots e_d(x_n) \in F\cap Z(A)$. In this way,
$e_d(x_1) \land \ldots \land e_d(x_n)  \leq \neg z$ and $\neg z \in F$ because $F$ is an increasing set. Hence, $\neg z \in F\cap Z(A)$ and $F\cap Z(A)$ is a maximal $BA$-filter in $Z(A)$.

$2 \Longrightarrow 3$) Let $x\in A$ such that $x \not \in F$. Then, $e_d(x) \not \in F\cap Z(A)$ and, consequently, $\neg e_d(x) \in F\cap Z(A)$ because $F\cap Z(A)$ is a maximal $BA$-filter in $Z(A)$. Hence our claim.

$3 \Longrightarrow 1$) Let us suppose that $F$ is not maximal. Then, there exists $x\in A$ such that $x\not \in F$ and  $F_{_{LQF}}(F\cup \{x\})$ is a proper $LQF$-filter. Thus, $e_d(x) \in F_{_{LQF}}(F\cup \{x\})$ and, by hypothesis,
$\neg e_d(x) \in F \subseteq F_{_{LQF}}(F\cup \{x\})$. Consequently, $F_{_{LQF}}(F\cup \{x\})$ is not proper which is a contradiction. Hence, $F$ is maximal.

\qed
\end{proof}

\begin{prop}\label{CEPHOOP}
${\cal LQF}$ satisfies CEP.
\end{prop}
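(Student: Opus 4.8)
The plan is to exploit the filter/congruence correspondence of Proposition~\ref{FILTEQUIV} together with the explicit description of generated filters in Proposition~\ref{LQFGENERATEDFILTER}, thereby turning CEP into a statement about extending $LQF$-filters from a subalgebra to the whole algebra. Fix a $LQF$-algebra $A$, a subalgebra $B$ of $A$, and a congruence $\theta \in Con_{_{LQF}}(B)$. By Proposition~\ref{FILTEQUIV}, $\theta$ corresponds to the $LQF$-filter $F = F_\theta = \{b \in B : (b,1)\in\theta\}$ of $B$. The candidate extension will be the congruence $\phi = \theta_G$ determined by the $LQF$-filter $G = F_{_{LQF}}(F)$ generated by $F$ inside $A$. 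The whole problem then reduces to the single identity $G \cap B = F$: indeed, for $x,y\in B$ one has $xRy\in B$, whence $(x,y)\in \phi \cap B^2$ iff $xRy\in G\cap B = F$ iff $(x,y)\in\theta_F=\theta$, which is exactly $\theta = \phi\cap B^2$.

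The linchpin of the argument is that the dual central cover is a term operation. By Proposition~\ref{CENTRALCOVERGEN}-2 we have $e_d(x) = \neg w_0(\neg x) = \neg\, w(0,\neg x)$, built solely from the basic operations $\neg$, $w$ and the constant $0$. Consequently every subalgebra is closed under $e_d$, and, crucially, $e_d$ computed in $B$ agrees with $e_d$ computed in $A$ on every element of $B$; that is, $e_d^{A}\upharpoonright_B = e_d^{B}$. This is what lets the description of the generated filter pass cleanly through the inclusion $B\hookrightarrow A$.

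Granting this, I would verify $G\cap B = F$ as follows. The inclusion $F\subseteq G\cap B$ is immediate since $F\subseteq B$ and $F\subseteq G$. For the reverse inclusion, take $b\in G\cap B$. By Proposition~\ref{LQFGENERATEDFILTER}-1 there exist $x_1,\dots,x_n\in F$ with $e_d(x_1\land\cdots\land x_n)\le b$, the cover computed in $A$. Since $x_1,\dots,x_n\in F\subseteq B$ and $F$ is a $LQF$-filter of $B$, the element $x_1\land\cdots\land x_n$ lies in $F$ by closure under meets, and hence so does $e_d(x_1\land\cdots\land x_n)$ by closure under $e_d$ in $B$. But by the term-operation remark this element equals the cover computed in $A$, so $e_d(x_1\land\cdots\land x_n)\in F$ with $e_d(x_1\land\cdots\land x_n)\le b$ and $b\in B$; as $F$ is an increasing set in $B$, we conclude $b\in F$. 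Thus $G\cap B = F$, $\phi$ extends $\theta$, and $A$ has the CEP; since $A$ was arbitrary, ${\cal LQF}$ satisfies CEP.

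The step I expect to require the most care is the identification $e_d^{A}\upharpoonright_B = e_d^{B}$, namely that the dual central cover is unchanged when recomputed in the larger algebra. I would emphasize that this is \emph{not} a matter of the center behaving well under subalgebras (which in general it need not), but a direct consequence of $e_d$ being expressible by the term $\neg\, w(0,\neg x)$; everything else is routine manipulation of the filter description. As a sanity check and alternative route, one may bypass the filter computation entirely by invoking Proposition~\ref{DISC1}: ${\cal LQF}$ is a discriminator variety, and discriminator varieties are well known to enjoy the CEP, so the filter argument above is in effect a concrete realization of this general fact.
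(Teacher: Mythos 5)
Your proof is correct and takes essentially the same route as the paper: the paper also extends an $LQF$-filter $F$ of $B$ to the generated $LQF$-filter of $A$ and uses Proposition \ref{LQFGENERATEDFILTER}-1 to show its trace on $B$ is exactly $F$, concluding CEP via the filter--congruence correspondence of Proposition \ref{FILTEQUIV}. You merely make explicit two steps the paper leaves implicit --- the congruence-to-filter translation and the fact that $e_d(x) = \neg\, w(0,\neg x)$ is a term operation, hence computed identically in $B$ and $A$ --- both of which are exactly the right points to flag.
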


\begin{proof}
Let $A$ be a  $LQF$-algebra and let $B$ be a sub $LQF$-algebra of $A$. For each $F\in Filt_{_{LQF}}(B)$, let $F_{_{LQF}}^A(F)$ be the $LQF$-filter of $A$ generated by $F$. Clearly $F \subseteq B\cap F_{_{LQF}}^A(F)$.
To prove equality of these two sets, let $x\in B\cap F_{_{LQF}}^A(F)$. By Proposition \ref{LQFGENERATEDFILTER}-1 there exist $x_1,\cdots ,x_n \in F$ such that $e_d(x_1 \land \cdots \land x_n) \leq x$. Since $x\in B$ and $F$ is an $LQF$-filter of $B$, hence upward closed, it follows that $x\in F$, so $B \cap F_{_{LQF}}^A(F) \subseteq F$. Thus, ${\cal LQF}$ satisfies CEP.

\qed
\end{proof}

\section{A Hilbert style calculus for ${\cal LQF}$} \label{HILBERTSTYLE}

In this section we give a Hilbert-style presentation for $LQF$-logic and we prove strong completeness with respect to the variety ${\cal LQF}$.

Let $X$ be a denumerable set of variable. The language of the calculus is given by the absolutely free algebra $Term_{_{{\cal LQF}}}(X)$. In this case, valuations are homomorphisms of the form $v: Term_{_{{\cal LQF}}}(X) \rightarrow A$ where $A\in {\cal LQF}$. A term $t\in Term_{_{{\cal LQF}}}(X)$ is said to be a {\it tautology} iff for each valuation $v$, $v(t) = 1$. In this framework we regard $LQF$-terms as propositions and valid equations of the form $t=1$ as tautologies. Each subset $T \subseteq Term_{_{{\cal LQF}}}(X)$ is referred as a {\it theory}. If $v$ is a valuation then $v(T) = 1$ means that $v(t)= 1$ for each $t \in T$.
Let $t \in Term_{_{{\cal LQF}}}(X)$ and $T$ be a theory. We use $T \models_{_{\cal LQF}} t$, read $t$ is a {\it semantic consequence} of $T$, in the case in which when $v(T) = 1$ then $v(t) = 1$ for each valuation $v$.
In order to establish a Hilbert style calculus for ${\cal LQF}$ let us again consider the following notation
\begin{eqnarray*}
w_t (s) & \mbox{for} & w(t,s), \\
w^*_t (s) & \mbox{for} & w^*(t,s), \\
t R s & \mbox{for} & (t \land s) \lor (\neg t \land \neg s), \\
e_d(t)& \mbox{for} & \neg w_0(\neg t).
\end{eqnarray*}

\begin{definition}\label{OMLCALCULUS}
{\rm The calculus $\langle \mathit{Term}_{_{\cal LQF}}(X), \vdash \rangle$ is given by the
following axioms:

\begin{enumerate}[\hspace{18pt}{\rm A}1.]

\item[A0.]
$(t \lor \neg t)R1$ and $(t \land \neg t)R0 $,

\item
$t R t$,

\item
$\neg(t R s) \lor (\neg(s R r)\lor (t R
r))$,
\item
$\neg(t R s) \lor (\neg t R \neg s) $,

\item
$\neg(t R s) \lor ((t \land r) R (s \land r))$,

\item
$(t \land s) R (s \land t)$,

\item
$(t \land (s \land r)) R ((t \land s) \land r)$,

\item
$(t \land (t \lor s)) R t$,

\item
$(\neg t \land t) R ((\neg t \land t)\land s)$,

\item
$t R \neg \neg t$,

\item
$\neg(t \lor s)R(\neg t \land \neg s)$,

\item
$(t \lor (\neg t \land (t \lor s)) R (t \lor s)$,

\item
$(t R s) R (s R t)$,

\item
$\neg(t R s) \lor (\neg t \lor s)$,

\item
$w_0(0) R 0$,

\item
$x R (x \land w_0(x))$,

\item
$y R \big( (y\land w_0(x)) \lor (y\land \neg w_0(x))  \big)$,

\item
$\big( w_z(x\land y) \lor w_z(y) \big) R w_z(y)$,

\item
$\Big( \big( w_0(z) \land w_z^*(x\land y)\big) \lor w_z^*(y)\Big) R w_z^*(y)$,

\item
$\big( w_0^*(z) \land z \big) R \big( w_0^*(z) \land w_z^*(z) \big)$,

\item
$\big( w_0^*(z) \land z \big) R \Big(\neg w_0( \neg z) \land  w_0^*(z) \Big) \lor \Big(w_0(\neg z) \land w_0\big( w_0^*(z) \land z \big) \Big)$,

\item
$\big( w_0^*(z) \lor z \big) R \big( w_0^*(z) \lor w_z^*(z) \big)$,

\item
$\big( w_0^*(z) \lor z \big) R  w_0\big( w_0^*(z) \lor z \big) $,

\item
$w_0(z) R \Big( w_0(z) \land \big( w_z(w_z^* (x)) R x \big) \Big)  $,

\item
$w_0(z) R \Big( w_0(z) \land \big( w_z^*\big( w_z(x)\big) R \mu_z(x) \big)  \Big)$,

\item
$w_0\big( w_0^*(1)\big) R w_0\big(\neg w_0^*(1) \big)$,

\item
$w_x(y\land w_0(z)) R w_{x\land w_0(z)}(y\land w_0(z))$,

\item
$w_x(y\land w_0(z)) R \big( w_x(y) \land w_0(z) \big)$,

\item
$w^*_x(y\land w_0(z)) R w^*_{x\land w_0(z)}(y\land w_0(z)) $,

\item
$w^*_x(y\land w_0(z)) R \big( w^*_x(y) \land w_0(z) \big)$,

\item
$\neg e_d(tRs) \lor \big(w_r(t) R\hspace{0.1cm} w_r(s)\big)$,

\item
$\neg e_d(tRs) \lor \big(w_t(r)R\hspace{0.1cm} w_s(r)\big)$,

\item
$\neg e_d(tRs) \lor \big(w^*_r(t) R\hspace{0.1cm} w^*_r(s)\big)$,

\item
$\neg e_d(tRs) \lor \big( w^*_t(r) R\hspace{0.1cm} w^*_s(r)\big)$.

\end{enumerate}

\noindent and the following inference rules:

$$
\displaylines{ \hfill {t, \neg t \lor s \over s},
\hfill \llap{\it disjunctive syllogism (DS)} \cr\cr \hfill {t
\over e_d(t)}. \hfill \llap{\it necessitation (N)} }
$$
}
\end{definition}

Let $T$ be a theory in $\mathit{Term}_{_{\cal LQF}}(X)$.  A {\it proof}  from $T$ is a sequence
$t_1,...,t_n$ in $\mathit{Term}_{_{\cal LQF}}(X)$ such that each member is
either an axiom or a member of $T$ or follows from some preceding
member of the sequence using {\it DS} or  {\it N}. If $t \in \mathit{Term}_{_{\cal LQF}}(X)$,  $T \vdash t$
means that  $t$ is provable from $T$, that is, $t$ is the
last element of a proof from $T$. If $T = \emptyset$ then we use the
notation $\vdash t$ and in this case we will say that $t$
is a theorem of the calculus $\langle \mathit{Term}_{_{\cal LQF}}(X), \vdash \rangle$. The theory $T$ is {\it
inconsistent} if and only if $T\vdash t$ for each $t \in \mathit{Term}_{_{\cal LQF}}(X)$; otherwise it is {\it consistent}. \\

\begin{prop} \label{PRESOUDOML}

\begin{enumerate}
\item
Axioms of  $\langle \mathit{Term}_{_{\cal LQF}}(X), \vdash \rangle$ are tautologies.

\item
Let $t, s \in  \mathit{Term}_{_{\cal LQF}}(X)$ and $v$ be a valuation such that  $v(t) = v(\neg t \lor s) = 1$. Then $v(s) = 1$ i.e., $DS$ preserves $1$-valuations.

\item
Let $t \in  \mathit{Term}_{_{\cal LQF}}(X)$ and $v$ be a valuation such that  $v(t) = 1$. Then $v(e_d(t)) = 1$ i.e., $N$ preserves $1$-valuations.

\end{enumerate}
\end{prop}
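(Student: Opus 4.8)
The plan is to read this proposition as the soundness half of the calculus and to prove the three items separately: items 2 and 3 are one-line evaluations, while item 1 requires sorting the axioms by syntactic shape. The unifying tool for item 1 is the equational characterization Eq.(\ref{ECMOL}) together with Remark \ref{EQOML1}: since the $\langle\lor,\land,\neg,0,1\rangle$-reduct of every ${\cal LQF}$-algebra is an orthomodular lattice, a formula of the form $t\,R\,s$ is a tautology precisely when the underlying equation $t=s$ holds throughout ${\cal LQF}$.

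For item 1 I would first dispatch the orthomodular fragment. Axioms A0–A14 are the $R$-encoded laws of Kalmbach's orthomodular calculus; their soundness over ${\cal OML}$, and hence over the orthomodular reduct of any ${\cal LQF}$-algebra, is established in \cite{KAL} (and for those literally of the form $t\,R\,s$ it is also immediate from Eq.(\ref{ECMOL}), each reducing to a valid orthomodular identity). Axioms A15–A30 are the $R$-encodings of the defining equations LQF0–LQF12 of the variety. Here I would record the two encoding conventions being used, namely that an inequality $a\leq b$ is written either as $(a\lor b)\,R\,b$ (the equation $a\lor b=b$) or as $a\,R\,(a\land b)$ (the equation $a=a\land b$); with this in hand, each of A15–A30 is a tautology by Eq.(\ref{ECMOL}) exactly because the matching defining equation holds in ${\cal LQF}$.

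The genuinely new cases are the conditional axioms A31–A34, of the form $\neg e_d(t\,R\,s)\lor u$. The key observation is that $e_d(t\,R\,s)=\neg w_0(\neg(t\,R\,s))$ is central, since $w_0$ takes values in $Z(A)$ by Proposition \ref{ARITLQF}-3 and $Z(A)$ is closed under $\neg$. I would then prove the auxiliary fact that for a central $z$ and arbitrary $u$ one has $\neg z\lor u=1$ iff $z\leq u$: the reverse direction is immediate, and the forward direction uses centrality via $z=z\land(\neg z\lor u)=(z\land\neg z)\lor(z\land u)=z\land u$. Applying this with $z=v(e_d(t\,R\,s))$ reduces each of A31–A34 to an inequality such as $e_d(t\,R\,s)\leq w_r(t)\,R\,w_r(s)$, which follows from Proposition \ref{BOOLEANCOVERAUX}-4 and -5 by equating one pair of arguments: for A31 one specializes to $e_d(r\,R\,r)\land e_d(t\,R\,s)\leq w(r,t)\,R\,w(r,s)$ and uses $r\,R\,r=1$, so $e_d(r\,R\,r)=1$ and the bound collapses to the desired inequality; the variants A32–A34 are handled by the symmetric specializations.

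Finally, items 2 and 3 are direct. For DS, if $v(t)=1$ then $v(\neg t\lor s)=\neg 1\lor v(s)=0\lor v(s)=v(s)$, so $v(\neg t\lor s)=1$ forces $v(s)=1$. For N, if $v(t)=1$ then $v(e_d(t))=\neg w_0(\neg 1)=\neg w_0(0)=\neg 0=1$ by axiom LQF1. The main obstacle is item 1, and within it the conditional axioms A31–A34: the subtlety is that $\neg a\lor b=1$ is \emph{not} equivalent to $a\leq b$ in a general orthomodular lattice, so the reduction to the inequalities of Proposition \ref{BOOLEANCOVERAUX} works only because the antecedent $e_d(t\,R\,s)$ is central. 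Isolating and exploiting that centrality is the crux of the whole argument.
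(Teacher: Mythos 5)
Your proposal is correct and takes essentially the same route as the paper: Kalmbach's results for the orthomodular axioms, Eq.(\ref{ECMOL}) for the $R$-encoded $LQF$-axioms, Proposition \ref{BOOLEANCOVERAUX}-4 and 5 (specialized via $rRr=1$, $e_d(1)=1$) for the conditional axioms, and direct evaluation for $DS$ and $N$; you simply spell out details the paper leaves implicit. Two small caveats: your axiom labels are shifted by one throughout (the paper's groups are A0--A13, A14--A29, A30--A33), and the only implication your reduction actually needs, namely $z\leq u \Rightarrow \neg z\lor u=1$, holds in any ortholattice, so the centrality of $e_d(tRs)$ is not what makes that step work --- it is needed only for the converse (and inside the proof of Proposition \ref{BOOLEANCOVERAUX} itself), so your closing claim that centrality is the crux of the reduction is overstated even though every step you write is valid.
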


\begin{proof}
1) For axioms A0 $\ldots$ A13 we refer to {\rm \cite[\S 15]{KAL}}. Since axioms A14 $\ldots$ A29 are rephrased forms of the axioms of $LQF$-algebras, by Eq.(\ref{ECMOL}), they are tautologies. Lastly,
by Proposition \ref{BOOLEANCOVERAUX}-4 and 5,  A30 $\ldots$ A33 are also tautologies,

2,3) If $v(t) = v(\neg t \lor s) = 1$ then $v(s) = 0 \lor v(s) = \neg v(t) \lor v(s) = v(\neg t \lor s) = 1$. Thus, $DS$ preserves $1$-valuations. The preservation of $1$-valuations across the inference rule $N$ is immediate.

\qed
\end{proof}

An immediate consequence of the last proposition is the following.

\begin{theo} \label{SOUDMOD} {\rm [Soundness]} Let $T$ be a theory in $\mathit{Term}_{_{\cal LQF}}(X)$ and $t \in  \mathit{Term}_{_{\cal LQF}}(X)$. Then:
$$T\vdash t \Longrightarrow T\models_{{\cal LQF}} t.$$
\qed
\end{theo}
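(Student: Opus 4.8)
The plan is to prove soundness by a routine induction on the length of a proof from $T$, leveraging the three facts already assembled in Proposition \ref{PRESOUDOML}: that every axiom is a tautology, and that both inference rules $DS$ and $N$ preserve $1$-valuations. Since these are precisely the building blocks of any derivation, the argument is essentially bookkeeping once the right induction is set up.

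Concretely, I would fix an arbitrary $A \in {\cal LQF}$ and an arbitrary valuation $v : \mathit{Term}_{_{\cal LQF}}(X) \rightarrow A$ satisfying $v(T) = 1$, and prove the claim that for every proof $t_1, \ldots, t_n$ from $T$ one has $v(t_i) = 1$ for all $1 \le i \le n$. The induction is on $i$, equivalently on the position of $t_i$ in the sequence. In the inductive step there are exactly four ways $t_i$ can enter the sequence. If $t_i$ is an axiom, then $v(t_i) = 1$ by Proposition \ref{PRESOUDOML}-1. If $t_i \in T$, then $v(t_i) = 1$ directly from the hypothesis $v(T) = 1$. If $t_i$ is obtained by $DS$ from two earlier members $t_j$ and $t_k = \neg t_j \lor t_i$, then the induction hypothesis gives $v(t_j) = v(t_k) = 1$, and Proposition \ref{PRESOUDOML}-2 yields $v(t_i) = 1$. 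Finally, if $t_i = e_d(t_j)$ is obtained by $N$ from an earlier $t_j$, the induction hypothesis gives $v(t_j) = 1$ and Proposition \ref{PRESOUDOML}-3 yields $v(t_i) = 1$.

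Taking $i = n$ gives $v(t) = v(t_n) = 1$, since $t$ is by definition the last line of the proof. As $v$ was an arbitrary valuation into an arbitrary $LQF$-algebra subject only to $v(T) = 1$, this is exactly the assertion $T \models_{{\cal LQF}} t$, completing the argument.

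I do not anticipate a genuine obstacle here: the entire semantic content, namely the tautologousness of the axioms and the preservation of $1$-valuations by the rules, has already been discharged in Proposition \ref{PRESOUDOML}, so the only care required is to enumerate exhaustively the cases by which a line can appear in a derivation and to phrase the induction on proof length rather than on term complexity, which would not track rule applications correctly. The one point worth stating precisely is the matching of the $DS$ premises $t_j$ and $\neg t_j \lor t_i$ to the schematic form of the rule, so that Proposition \ref{PRESOUDOML}-2 applies verbatim.
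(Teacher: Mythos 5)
Your proof is correct and is exactly the argument the paper intends: the paper states the theorem as ``an immediate consequence'' of Proposition \ref{PRESOUDOML}, and your induction on proof length, with the four cases (axiom, member of $T$, $DS$, $N$), is precisely the bookkeeping that makes that immediacy explicit. Nothing differs in substance; you have merely written out what the paper leaves to the reader.
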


\begin{prop}\label{COR}
Let $T$ be a theory in $\mathit{Term}_{_{\cal LQF}}(X)$ and $t, s, r \in   \mathit{Term}_{_{\cal LQF}}(X)$.
Then:

\begin{enumerate}

\item
$ \vdash t \lor \neg t$,

\item
$ \vdash 1$,

\item
$T \vdash t R s \Longrightarrow  T \vdash s  R t $,

\item
$T \vdash t R s \hspace{0.2cm} and \hspace{0.2cm} T \vdash
s R r  \hspace{0.2cm} \Longrightarrow \hspace{0.2cm} T
\vdash t  R r $,

\item
$T \vdash t R s \hspace{0.2cm} \Longrightarrow
\hspace{0.2cm} T \vdash \neg t R \neg s$,

\item
$T \vdash t R s \hspace{0.2cm} and \hspace{0.2cm} T \vdash
t \land r  \hspace{0.2cm} \Longrightarrow \hspace{0.2cm} T
\vdash s  \land r $,

\item
$T \vdash t R s \hspace{0.2cm} and \hspace{0.2cm} T \vdash
t \lor s  \hspace{0.2cm} \Longrightarrow \hspace{0.2cm} T
\vdash s  \lor r $,

\item
$T \vdash tRs \hspace{0.2cm} \Longrightarrow \hspace{0.2cm}  T \vdash w_r(t) R\hspace{0.1cm} w_r(s)$,

\item
$T \vdash tRs \hspace{0.2cm} \Longrightarrow \hspace{0.2cm} T \vdash w_t(r) R\hspace{0.1cm} w_s(r)$,

\item
$T \vdash tRs \hspace{0.2cm} \Longrightarrow \hspace{0.2cm} T \vdash w^*_r(t) R\hspace{0.1cm} w^*_r(s)$,

\item
$T \vdash tRs \hspace{0.2cm} \Longrightarrow \hspace{0.2cm} T \vdash w^*_t(r) R\hspace{0.1cm} w^*_s(r)$.

\end{enumerate}
\end{prop}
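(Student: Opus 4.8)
The plan is to split the eleven claims into two groups: items 1--7, which involve only the orthomodular connectives $\land,\lor,\neg,0,1$ together with the derived connective $R$, and items 8--11, which are the genuinely new assertions about $w$ and $w^*$. The workhorse behind everything is a single derived meta-rule, \emph{$R$-detachment}: ``if $T\vdash a$ and $T\vdash aRb$ then $T\vdash b$''. I would obtain it by two applications of DS. Axiom A13 gives $\neg(aRb)\lor(\neg a\lor b)$, so from $T\vdash aRb$ a first DS yields $T\vdash\neg a\lor b$, and then from $T\vdash a$ a second DS yields $T\vdash b$. Note this uses only A13 and DS, so it is available before any of the eleven items is proved.

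With $R$-detachment in hand, items 3--6 are each one step: symmetry (item 3) is DS applied to $T\vdash tRs$ against A12; transitivity (item 4) is DS against A2 followed by a second DS; the negation rule (item 5) is DS against A3; and the meet rule (item 6) is DS against A4, producing $T\vdash(t\land r)R(s\land r)$, after which $R$-detachment against the hypothesis $T\vdash t\land r$ gives $T\vdash s\land r$. Item 7 is the join analogue of item 6, which I would derive by rewriting $t\lor r=\neg(\neg t\land\neg r)$ and transporting the meet rule through the negation rule (item 5) and double negation A9, using the De Morgan axiom A10 and transitivity. Items 1 and 2 are the standard bootstrap: $\vdash 1$ and $\vdash t\lor\neg t$ follow from A0, A1 and the $R$-reduction axioms exactly as in \cite[\S 15]{KAL}; concretely, once $\vdash 1$ is available, $\vdash t\lor\neg t$ comes from A0 (which gives $(t\lor\neg t)R1$), symmetry, and $R$-detachment against $\vdash 1$. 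Since A0--A13 together with DS are a verbatim copy of Kalmbach's orthomodular calculus, this whole block may simply be quoted from \cite[\S 15]{KAL}.

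The new content is items 8--11, and here the point is that the necessitation rule N supplies precisely the antecedent that the compatibility axioms A30--A33 require. For item 8, given $T\vdash tRs$, rule N yields $T\vdash e_d(tRs)$; axiom A30 is $\neg e_d(tRs)\lor\big(w_r(t)Rw_r(s)\big)$, so a single DS delivers $T\vdash w_r(t)Rw_r(s)$. Items 9, 10 and 11 are identical, replacing A30 by A31, A32 and A33 respectively. These axioms are exactly the syntactic counterparts of Proposition \ref{BOOLEANCOVERAUX}-4,5 specialized to the case in which only one argument varies, so that the conjunctive side condition $e_d(\cdot)\land e_d(\cdot)$ collapses to the single $e_d(tRs)$ that N produces.

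I do not expect a genuine obstacle here. The only step needing care is the purely orthomodular bootstrap of $\vdash 1$ and $\vdash t\lor\neg t$ (items 1--2), where one must chain the $R$-reduction axioms in the correct order; but this is routine and already carried out in \cite[\S 15]{KAL}. The real conceptual observation, used for items 8--11, is that N feeds the term $e_d(tRs)$ to the axioms A30--A33, turning substitutivity of $w$ and $w^*$ into $R$-equalities into a one-line disjunctive syllogism.
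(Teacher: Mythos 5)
Your proposal is correct and follows essentially the same route as the paper: your ``$R$-detachment'' meta-rule is exactly the A13-plus-double-DS pattern the paper applies inline for items 3--7, and items 8--11 are handled verbatim as in the paper's proof (rule N to obtain $e_d(tRs)$, then a single DS against A30--A33). The only difference is cosmetic but worth noting: the paper bootstraps in the opposite order, deriving $\vdash t\lor\neg t$ first (from A1, A13 and DS) and then $\vdash 1$ from it via A0 and A13, whereas you derive item 1 from item 2 and defer $\vdash 1$ itself to Kalmbach --- harmless given the citation, but if made concrete inside this calculus your order would be circular, since detaching $1$ from the A0 instance $(t\lor\neg t)R1$ already presupposes $\vdash t\lor\neg t$.
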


\begin{proof}
\noindent 1) It follows from A1 and A13.

\noindent 2)
\begin{enumerate}
\item[(1)]
$\vdash t \lor \neg t$ \hspace*{\fill} {\footnotesize by item 1}

\item[(2)]
$\vdash (t \lor \neg t)R1$ \hspace*{\fill} {\footnotesize by A0}

\item[(3)]
$\vdash \neg((t \lor \neg t) R 1) \lor (\neg (t \lor \neg t) \lor 1)$ \hspace*{\fill} {\footnotesize by A13}

\item[(4)]
$\vdash \neg (t \lor \neg t) \lor 1 $ \hspace*{\fill} {\footnotesize by $DS$ 2,3}

\item[(5)]
$\vdash 1 $ \hspace*{\fill} {\footnotesize by $DS$ 4,1}

\end{enumerate}

\noindent 3)
\begin{enumerate}
\item[(1)]
$T \vdash t R s$

\item[(2)]
$T \vdash (t R s) R (s R t)$ \hspace*{\fill} {\footnotesize by A12}

\item[(3)]
$T \vdash \neg ((t R s) R (s R t)) \lor (\neg
(t R s) \lor (s R t)) $ \hspace*{\fill} {\footnotesize by A13}

\item[(4)]
$T \vdash (\neg (t R s) \lor (s R t)) $
\hspace*{\fill} {\footnotesize by $DS$ 2,2}

\item[(5)]
$T \vdash s R t $ \hspace*{\fill} {\footnotesize by $DS$
1,4}

\end{enumerate}

\noindent 4) It easily follows from A2 and two application of the ${\it DS}$.

\noindent 5) It follows from A3. \hspace{0.2 cm}

\noindent 6)
\begin{enumerate}
\item[(1)]
$T \vdash t R s$

\item[(2)]
$T \vdash t \land r$

\item[(3)]
$T \vdash \neg (t R s) \lor ((t \land r)R(s
\land r))  $ \hspace*{\fill} {\footnotesize by A4}

\item[(4)]
$T \vdash (t \land r)R(s \land r)$ \hspace*{\fill} {\footnotesize by $DS$ 1,2}

\item[(5)]
$\vdash \neg ((t \land r)R(s \land r)) \lor (\neg (t
\land r) \lor (s \land r))$ \hspace*{\fill} {\footnotesize by A13}

\item[(6)]
$T \vdash s \land r$ \hspace*{\fill} {\footnotesize by $DS$
5,4,2}

\end{enumerate}

\noindent 7) It follows by item 4, A9 and A10. \hspace{0.2 cm}

\noindent 8)

\begin{enumerate}
\item[(1)]
$T \vdash t R s$

\item[(2)]
$T \vdash e_d(t R s)$ \hspace*{\fill} {\footnotesize by $N$ 1}

\item[(3)]
$\vdash \neg e_d(tRs) \lor \big(w_r(t) R\hspace{0.1cm} w_r(s)\big)$ \hspace*{\fill} {\footnotesize by A30}

\item[(4)]
$T \vdash w_r(t) R\hspace{0.1cm} w_r(s)$ \hspace*{\fill} {\footnotesize by $DS$ 2,3}

\end{enumerate}

\noindent 9,10,11) These items can be proved in an exact way as the item 8 by taking into account axioms A31, A32, A32 respectively.

\qed
\end{proof}

\begin{prop}\label{LINDENBAUMOML}
Let $T$ be a theory in $\mathit{Term}_{_{\cal LQF}}(X)$ and let us consider the binary relation in $\mathit{Term}_{_{\cal LQF}}(X)$ given by
$$t \equiv_{_T} s \hspace{0.3cm} \mbox{iff} \hspace{0.3cm} T\vdash tRs.$$ Then $\equiv_{_T}$ is an equivalence in $\mathit{Term}_{_{\cal LQF}}(X)$. Moreover if we define the following operations on ${\cal L}_T(X) = \mathit{Term}_{_{\cal LQF}}(X) /_{\equiv_T}$
\begin{align}
[t]_{_T}\land [s]_{_T} &= [t \land s]_{_T}, & \neg[t]_{_T} &=  [\neg t]_{_T}, \nonumber \\
[t]_{_T}\lor [s]_{_T} &= [t \lor s]_{_T}, & 0 &= [0]_{_T},\nonumber \\
w([t]_{_T}, [s]_{_T}) &= [w(t,s)]_{_T}, & 1 &= [1]_{_T},\nonumber \\
w^*([t]_{_T}, [s]_{_T}) &= [w^*(t,s)]_{_T}. & \nonumber
\end{align}
Then we have
\begin{enumerate}
\item
$ \langle {\cal L}_T(X), \land, \lor, w, w^*, \neg,  0, 1 \rangle$ is a $LQF$-algebra.

\item
$T\vdash t$ if and only if $[t]_{_T} = 1$.

\end{enumerate}
\end{prop}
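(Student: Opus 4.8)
The plan is to verify the three claims in turn: that $\equiv_{_T}$ is an equivalence relation, that the listed operations descend to well-defined operations on ${\cal L}_T(X)$ and endow the quotient with an $LQF$-algebra structure, and finally the provability characterization of item 2, which is where the real work lies. For the first claim, reflexivity of $\equiv_{_T}$ is immediate from A1 (since $T\vdash tRt$), symmetry is Proposition \ref{COR}-3, and transitivity is Proposition \ref{COR}-4. To make the operations well defined I must check that $\equiv_{_T}$ is a congruence for every connective. Compatibility with the orthomodular connectives $\neg,\land,\lor$ is the standard fact that $\equiv_{_T}$ is a congruence of the orthomodular fragment governed by A0--A13 (cf. \cite[\S 15]{KAL}); for instance $\neg$-compatibility is Proposition \ref{COR}-5, while $\land$-compatibility follows from A4 together with A5 and transitivity. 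Compatibility of $w$ in each of its two arguments is Proposition \ref{COR}-8 and \ref{COR}-9, and compatibility of $w^*$ is Proposition \ref{COR}-10 and \ref{COR}-11. Hence each operation on $\equiv_{_T}$-classes is independent of the choice of representatives.

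That ${\cal L}_T(X)$ is an $LQF$-algebra then follows from the design of the axiom set: axioms A0--A13 are the $R$-rephrasings of the orthomodular lattice axioms and A14--A29 are the $R$-rephrasings of LQF1--LQF12. Concretely, any defining identity $p=q$ of the variety ${\cal LQF}$ corresponds to an axiom $pRq$, so $T\vdash pRq$ and therefore $[p]_{_T}=[q]_{_T}$ holds in the quotient by the very definition of $\equiv_{_T}$. Since every defining equation of ${\cal LQF}$ arises in this way, ${\cal L}_T(X)$ satisfies all of them under the given operations and is an $LQF$-algebra.

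The main obstacle is item 2. By definition $[t]_{_T}=1=[1]_{_T}$ is precisely $T\vdash tR1$, so it suffices to prove that $T\vdash t$ and $T\vdash tR1$ are interderivable. The engine for both directions is axiom A13, $\neg(pRq)\lor(\neg p\lor q)$: combined with two applications of $DS$ it turns $T\vdash pRq$ and $T\vdash p$ into $T\vdash q$, a modus ponens through $R$. I also use the orthomodular theorem $\vdash tR(tR1)$ (and, by symmetry via Proposition \ref{COR}-3, $\vdash (tR1)Rt$), provable already in the fragment A0--A13 since $tR1$ and $t$ denote the same element. Assuming $T\vdash t$, I instantiate A13 at $p=t$, $q=tR1$, discharge the hypothesis $tR(tR1)$ by $DS$ and then discharge $t$ by $DS$ to obtain $T\vdash tR1$. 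Conversely, assuming $T\vdash tR1$, I instantiate A13 at $p=tR1$, $q=t$, discharge $(tR1)Rt$ and then $tR1$ by $DS$ to obtain $T\vdash t$. This yields $T\vdash t$ iff $[t]_{_T}=1$ and completes the proof.
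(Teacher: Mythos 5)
Your proposal is correct and follows essentially the same route as the paper: equivalence via A1 and Proposition \ref{COR}-3,4, well-definedness of the operations via Proposition \ref{COR}-5 through \ref{COR}-11 (with A4/A5 for $\land$), item 1 by observing that A14--A29 are the $R$-rephrasings of the $LQF$ identities, and item 2 by identifying $t$ with $tR1$ and shuttling provability across via A13 and $DS$. The only cosmetic difference is that you justify $\vdash tR(tR1)$ by appeal to completeness of the orthomodular fragment A0--A13 (Kalmbach), whereas the paper gets the same fact more internally by computing $[tR1]_{_T}=[t]_{_T}R[1]_{_T}=[t]_{_T}$ in the quotient, which item 1 has just shown to be an orthomodular lattice; your explicit A13/$DS$ argument is exactly what the paper's terse ``Thus'' leaves implicit.
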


\begin{proof}
By Axiom A1 and Proposition \ref{COR}-3 and 4, $\equiv_{_T}$ is an equivalence in $\mathit{Term}_{_{\cal LQF}}(X)$. We first prove that the operations $ \langle \land, \lor, w, w^*, \neg,  0, 1 \rangle$ are well defined on ${\cal L}_T(X)$ i.e, they are compatible operations with respect to $\equiv_{_T}$. Let us suppose that $$T \vdash t_1 R t_2 \hspace{0.3cm} \mbox{and} \hspace{0.3cm} T \vdash s_1 R s_2.$$

We prove that $T\vdash (t_1 \land t_2)R(s_1 \land s_2)$.

\begin{enumerate}
\item[(1)]
$T \vdash  t_1 R t_2$ \hspace*{\fill} {\footnotesize hypothesis}

\item[(2)]
$\vdash \neg(t_1 R t_2) \lor ((t_1 \land s_1) R (t_2 \land s_1))$ \hspace*{\fill}
{\footnotesize  A4}

\item[(3)]
$T \vdash (t_1 \land s_1) R (t_2 \land s_1)$ \hspace*{\fill}
{\footnotesize  by DS 1,2}

\item[(4)]
$T \vdash  s_1 R s_2$ \hspace*{\fill} {\footnotesize hypothesis}

\item[(5)]
$\vdash \neg(s_1 R s_2) \lor ((s_1 \land t_2) R (s_2 \land t_2))$ \hspace*{\fill}
{\footnotesize  A4}

\item[(6)]
$T \vdash (s_1 \land t_2) R (s_2 \land t_2)$ \hspace*{\fill}
{\footnotesize  by DS 4,5}

\item[(7)]
$T \vdash (t_1 \land s_1) R (t_2 \land s_2)$ \hspace*{\fill}
{\footnotesize  by 3,6, A5, Proposition \ref{COR}-4 }

\end{enumerate}

It proves that $\land$ is well defined on ${\cal L}_T(X)$. By Proposition \ref{COR}-5 we can easily show that $\neg$ is well defined on ${\cal L}_T(X)$. Next, combining  A10 with the previous results
it is easily seen that $T \vdash (t_1 \lor t_2)R(s_1 \lor s_2)$. Thus $\lor$ is also well defined on ${\cal L}_T(X)$.    \\

We prove that $T\vdash w(t_1, s_1)Rw(t_2, s_2)$.

\begin{enumerate}
\item[(1)]
$T \vdash  s_1 R s_2$ \hspace*{\fill} {\footnotesize hypothesis}

\item[(2)]
$ T \vdash w(t_1, s_1) R\hspace{0.1cm} w(t_1, s_2)$ \hspace*{\fill}
{\footnotesize  Proposition \ref{COR}-8}

\item[(3)]
$T \vdash  t_1 R t_2$ \hspace*{\fill} {\footnotesize hypothesis}

\item[(4)]
$ T \vdash w(t_1, s_2) R\hspace{0.1cm} w(t_2, s_2)$ \hspace*{\fill}
{\footnotesize  Proposition \ref{COR}-9}

\item[(5)]
$T\vdash w(t_1, s_1)Rw(t_2, s_2)$ \hspace*{\fill} {\footnotesize  by 3,5 and Proposition \ref{COR}-4}

\end{enumerate}
It proves that $w$ is well defined on ${\cal L}_T(X)$. Analogously, by using Proposition \ref{COR}-10 and 11, we can also prove that $w^*$ is well defined on ${\cal L}_T(X)$.

1) By straightforward calculation it can be seen that the reduct
$ \langle {\cal L}_T(X), \land, \lor, \neg,  0, 1 \rangle$ is an orthomodular lattice (for more details we refer to {\rm \cite[\S 15, 1. Proposition]{KAL}}).
Since Axioms A14 $\ldots$ A33 are the axioms of $LQF$-algebras rephrased in terms of the secondary connective $tRs$, we have that $ \langle {\cal L}_T(X), \land, \lor, w, w^*, \neg,  0, 1 \rangle$ is a $LQF$-algebra.

2) Let us notice that $[tR1]_{_T} = [t]_{_T}R[1]_{_T} = [t]_{_T}$. Thus, $T\vdash t$ if and only if $T\vdash tR1$ iff  $[t]_{_T} = 1$.

\qed
\end{proof}

\begin{theo} \label{COM} {\rm [Strong Completeness]} Let $t \in  \mathit{Term}_{_{\cal LQF}}(X)$ and $T$ be a theory in $\mathit{Term}_{_{\cal LQF}}(X)$.  Then,
$$T\models_{{\cal LQF}} t  \Longrightarrow    T \vdash t.  $$
\end{theo}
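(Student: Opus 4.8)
The plan is to establish the contrapositive via the standard Lindenbaum--Tarski construction, relying entirely on the quotient algebra already built in Proposition \ref{LINDENBAUMOML}. First I would assume $T \not\vdash t$ and aim to produce a single valuation witnessing $T \not\models_{{\cal LQF}} t$. The natural candidate is the canonical valuation into the Lindenbaum algebra ${\cal L}_T(X) = \mathit{Term}_{_{\cal LQF}}(X)/_{\equiv_T}$: by Proposition \ref{LINDENBAUMOML}-1 this quotient is a genuine $LQF$-algebra, hence it belongs to the class over which the semantic consequence relation $\models_{{\cal LQF}}$ is quantified.

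Next I would define $v : \mathit{Term}_{_{\cal LQF}}(X) \rightarrow {\cal L}_T(X)$ by $v(s) = [s]_{_T}$. Because the operations on ${\cal L}_T(X)$ were defined precisely so that the quotient map respects $\land, \lor, \neg, w, w^*, 0, 1$, the map $v$ is an ${\cal LQF}$-homomorphism from the absolutely free algebra, and therefore a valuation in the required sense. The two remaining checks both reduce to the bridge equivalence of Proposition \ref{LINDENBAUMOML}-2, namely that $T \vdash s$ if and only if $[s]_{_T} = 1$.

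For the first check, every $s \in T$ satisfies $T \vdash s$ trivially as a member of the theory, so $v(s) = [s]_{_T} = 1$ and thus $v(T) = 1$. For the second, the assumption $T \not\vdash t$ gives, again by Proposition \ref{LINDENBAUMOML}-2, that $v(t) = [t]_{_T} \neq 1$. Hence $v$ is a valuation with $v(T) = 1$ and $v(t) \neq 1$, which is exactly a counterexample to $T \models_{{\cal LQF}} t$. This proves the contrapositive, and therefore the theorem; together with the Soundness Theorem \ref{SOUDMOD} it yields the equivalence $T \vdash t$ iff $T \models_{{\cal LQF}} t$.

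I expect no serious obstacle at this final stage. The genuine difficulty, namely verifying that $\equiv_{_T}$ is a congruence compatible with the expanded operations $w$ and $w^*$ (which hinges on axioms A30--A33 together with closure under $e_d$ supplied by the necessitation rule $N$) and that the resulting quotient satisfies all of the conditions {\rm LQF}0--{\rm LQF}12, has already been discharged in Proposition \ref{LINDENBAUMOML}. Consequently the completeness argument is the routine concluding step of an algebraizability proof, and the only care needed is to confirm that the canonical $v$ is total and well defined, which is immediate from the congruence property of $\equiv_{_T}$.
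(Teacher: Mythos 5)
Your proof is correct and takes essentially the same route as the paper: both arguments use the canonical valuation $v(s) = [s]_{_T}$ into the Lindenbaum--Tarski algebra ${\cal L}_T(X)$ together with Proposition \ref{LINDENBAUMOML} to produce the required (counter)model, the paper merely phrasing it as a proof by contradiction where you phrase it as a contrapositive. Your version is in fact slightly more explicit, since you spell out the verification that $v(T) = 1$, a step the paper leaves implicit.
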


\begin{proof}
If $T$ is inconsistent, this result is trivial. Let us assume that  $T$ is consistent and that $T\models_{{\cal LQF}} t$. Let us suppose that $T$ does not prove $t$.
If we consider the valuation $v: \mathit{Term}_{_{\cal LQF}}(X) \rightarrow {\cal L}_T(X)$ such that $v(s) = [s]_{_T}$ then, by
Proposition \ref{LINDENBAUMOML}-2, $[t]_{_T} \neq 1$ which is a contradiction. Hence $T \vdash t$.

\qed
\end{proof}

\begin{coro} \label{COMPACTMOD} {\rm (Compactness)}
Let $t \in  \mathit{Term}_{_{\cal LQF}}(X)$ and  $T$ be a theory in $\mathit{Term}_{_{\cal LQF}}(X)$. Then,
$T\models_{{\cal LQF}} t$ iff there exists a finite subset
$T_0 \subseteq T$  such that $T_0 \models_{{\cal LQF}} t$.
\end{coro}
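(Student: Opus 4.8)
The plan is to derive compactness as an immediate corollary of the Soundness Theorem (Theorem \ref{SOUDMOD}) and the Strong Completeness Theorem (Theorem \ref{COM}), together with the purely syntactic observation that a proof is a \emph{finite} sequence. I would treat the two implications separately, since they are of quite different character.

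For the nontrivial direction, I would assume $T\models_{{\cal LQF}} t$. First I would invoke Strong Completeness to pass from the semantic consequence to the syntactic one, obtaining $T\vdash t$. By definition, this means there is a proof, i.e. a finite sequence $t_1,\ldots,t_n$ with $t_n = t$ in which each $t_i$ is either an axiom of $\langle \mathit{Term}_{_{\cal LQF}}(X),\vdash\rangle$, a member of $T$, or obtained from earlier terms by $DS$ or $N$. The key point is that, since the sequence has finitely many entries, only finitely many elements of $T$ actually occur in it; I would collect exactly those into a finite set $T_0 \subseteq T$. The very same sequence $t_1,\ldots,t_n$ is then a proof of $t$ from $T_0$, so $T_0 \vdash t$. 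Finally I would apply Soundness to this finite theory to conclude $T_0 \models_{{\cal LQF}} t$, which yields the desired finite subset.

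For the converse direction I would argue directly at the semantic level, with no appeal to the calculus. Suppose some finite $T_0 \subseteq T$ satisfies $T_0 \models_{{\cal LQF}} t$. Given any $LQF$-algebra $A$ and any valuation $v$ with $v(T)=1$, monotonicity of satisfaction under $T_0 \subseteq T$ gives $v(T_0)=1$, whence $v(t)=1$ by hypothesis. Since $v$ was arbitrary, $T\models_{{\cal LQF}} t$. This half is trivial and does not even use the finiteness of $T_0$.

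The only step that carries any content is the finiteness of the support of a proof, and that is forced by the definition of proof as a finite sequence; so I do not expect a genuine obstacle here. The result is essentially the abstract algebraic-logic packaging of ``proofs use finitely many premises'', and the two substantive theorems doing the work are Theorems \ref{SOUDMOD} and \ref{COM}.
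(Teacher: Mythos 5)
Your proposal is correct and follows essentially the same route as the paper: Strong Completeness to obtain $T \vdash t$, extraction of the finitely many premises of $T$ actually used in the (finite) proof sequence to form $T_0$, and Soundness applied to $T_0$ to conclude $T_0 \models_{{\cal LQF}} t$. The only difference is that you spell out the trivial converse direction, which the paper leaves implicit.
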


\begin{proof}
Let us suppose that $T\models_{{\cal LQF}} t$. Then, by Theorem \ref{COM}, $T \vdash t$ and we can suppose that $t_1, \cdots t_m, t$ is a
proof of $t$ from $T$. If we consider the finite set $T_0 = T \cap \{t_1, \cdots t_n\}$ then $T_0 \vdash t$ and, by Theorem \ref{SOUDMOD},
we have that $T_0 \models_{{\cal LQF}} t $.

\qed
\end{proof}

We can also establish a kind of deduction theorem for $\langle \mathit{Term}_{_{\cal LQF}}(X), \vdash \rangle$.

\begin{theo} \label{DED2}
Let $s, t \in \mathit{Term}_{_{\cal LQF}}(X)$ and $T$ be a theory in $\mathit{Term}_{_{\cal LQF}}(X)$. Then we
have that: $$T \cup \{s \} \vdash t \hspace{0.3cm} iff
\hspace{0.3cm} T \vdash \neg e_d(s) \lor t.$$
\end{theo}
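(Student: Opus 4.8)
The plan is to reduce this syntactic equivalence to a semantic one and to exploit the fact that $e_d(s)$ always denotes a \emph{central} element. Since Soundness (Theorem \ref{SOUDMOD}) and Strong Completeness (Theorem \ref{COM}) are already available, $T'\vdash r$ is equivalent to $T'\models_{\cal LQF} r$ for every theory $T'$ and every term $r$. Hence it suffices to establish the semantic deduction theorem
$$T\cup\{s\}\models_{\cal LQF} t \quad\Longleftrightarrow\quad T\models_{\cal LQF}\neg e_d(s)\lor t.$$
Throughout I will use, for any valuation $v$ into an $LQF$-algebra $A$, that the element $\sigma:=v(e_d(s))=e_d(v(s))$ lies in $Z(A)$ (since $e_d(a)=\neg w_0(\neg a)$ with $w_0(\neg a)\in Z(A)$ by Proposition \ref{ARITLQF}, and $Z(A)$ is closed under $\neg$), that $e_d(a)\le a$ for all $a$ (condition LQF2 applied to $\neg a$), and that $e_d(1)=\neg w_0(0)=1$ by LQF1.

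For the implication from right to left I would argue directly and routinely. Let $v$ satisfy $v(T\cup\{s\})=1$, so $v(s)=1$; then $\sigma=e_d(1)=1$ and $\neg\sigma=0$. From $v(T)=1$ and the hypothesis $T\models_{\cal LQF}\neg e_d(s)\lor t$ we obtain $v(\neg e_d(s)\lor t)=\neg\sigma\lor v(t)=0\lor v(t)=v(t)=1$, whence $T\cup\{s\}\models_{\cal LQF}t$.

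The substantive direction is from left to right, and here the centrality of $\sigma$ is decisive. Given $v$ with $v(T)=1$, set $\sigma=e_d(v(s))\in Z(A)$; since $\sigma$ is central we have $w_0(\sigma)=\sigma$ (Proposition \ref{CENTRALCOVERGEN}), so by Proposition \ref{LQFFACTOR} together with Proposition \ref{DIOML} the map $p\colon A\to[0,\sigma]_A$, $p(a)=a\land\sigma$, is an $LQF$-homomorphism onto the $LQF$-algebra $[0,\sigma]_A\in\cal LQF$. The composite $v_1:=p\circ v$ is therefore again a valuation. Because $\sigma\le v(s)$ we get $v_1(s)=v(s)\land\sigma=\sigma=1_{[0,\sigma]_A}$, and $v_1(\tau)=v(\tau)\land\sigma=\sigma$ for each $\tau\in T$, so $v_1(T\cup\{s\})=1$. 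Applying the hypothesis $T\cup\{s\}\models_{\cal LQF}t$ to $v_1$ yields $v(t)\land\sigma=v_1(t)=\sigma$, i.e.\ $\sigma\le v(t)$, whence $\neg\sigma\lor v(t)\ge\neg\sigma\lor\sigma=1$ and $v(\neg e_d(s)\lor t)=1$. As $v$ was arbitrary, $T\models_{\cal LQF}\neg e_d(s)\lor t$.

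The main obstacle is exactly this left-to-right step: the hypothesis only constrains valuations that already validate $s$, so it cannot be applied to $v$ itself. The device of projecting onto the factor $[0,\sigma]_A$ repairs this, and its legitimacy rests on two points that must be invoked with care — that $e_d(s)$ denotes a central element, so that $[0,\sigma]_A$ is an $LQF$-quotient and $p$ a homomorphism (Proposition \ref{LQFFACTOR}), and that $e_d(v(s))\le v(s)$, so that $s$ becomes the top element after projection. A purely syntactic proof by induction on the length of a derivation is also possible, but it would have to reconstruct precisely these centrality properties of $e_d$ inside the calculus; the semantic route through Theorems \ref{SOUDMOD} and \ref{COM} is therefore considerably shorter.
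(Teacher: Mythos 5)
Your proof is correct, and at the top level it follows the same strategy as the paper: both arguments reduce the syntactic claim to a semantic deduction theorem via Soundness (Theorem \ref{SOUDMOD}) and Strong Completeness (Theorem \ref{COM}), both dispose of the right-to-left direction by the trivial computation $e_d(1)=1$, and both hinge on the centrality of $v(e_d(s))$. The difference lies in how the substantive left-to-right direction is carried out. The paper first applies Compactness (Corollary \ref{COMPACTMOD}) to replace $T$ by a single term $r=t_1\land\cdots\land t_n$, and then manipulates consequence statements, asserting in particular that $r\land s\models_{{\cal LQF}}t$ implies $(r\land s)\lor\neg e_d(s)\models_{{\cal LQF}}\neg e_d(s)\lor t$; this step is stated without proof, and it is not an instance of a generally valid rule for the $1$-preserving consequence relation (in an orthomodular setting one cannot in general pass from $A\models B$ to $A\lor C\models B\lor C$, precisely because the deduction theorem fails there). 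Justifying it requires essentially the central-element factorization you employ, or alternatively a reduction to directly indecomposable algebras using semisimplicity and preservation of quasi-identities under subdirect products. Your proof makes this mechanism explicit: given $v$ with $v(T)=1$, you set $\sigma=e_d(v(s))\in Z(A)$, use Proposition \ref{CENTRALCOVERGEN} and Proposition \ref{LQFFACTOR} to see that $p(a)=a\land\sigma$ is an $LQF$-homomorphism onto $[0,\sigma]_A\in{\cal LQF}$, apply the hypothesis to the valuation $p\circ v$ (legitimate since $\sigma\le v(s)$ makes $s$, and every member of $T$, evaluate to the top of $[0,\sigma]_A$), and conclude $\sigma\le v(t)$, hence $v(\neg e_d(s)\lor t)=1$. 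What your route buys is rigor and economy: no compactness is needed, arbitrary theories $T$ are handled directly, and the implication the paper leaves implicit is actually proved; the paper's route is shorter on the page only because that key manipulation is asserted rather than established. One marginal case is worth a sentence in a final write-up: if $\sigma=0$ then $[0,\sigma]_A$ is the trivial one-element algebra, which still belongs to the variety, and the conclusion $\sigma\le v(t)$ is then automatic.
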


\begin{proof}
Let us suppose $T \cup \{s \} \vdash t$. Then, by Corollary \ref{COMPACTMOD}, there exists
$t_1 \ldots t_n \in T$ such that $(t_1 \land \ldots \land t_n) \land s \models_{{\cal LQF}} t$.
Let $r = t_1 \land \ldots \land t_n$. Then $r \land s \models_{{\cal LQF}} t$ implies
that $(r \land s) \lor \neg e_d(s) \models_{{\cal LQF}} \neg e_d(s) \lor t$. Therefore, $r \lor \neg
e_d(s)  \models_{{\cal LQF}} \neg e_d(s) \lor t$
because for each valuation $v$, $v(e_d(s))$ is a central element
and $v(s \lor \neg e_d(s)) = 1$. Since $r \models_{{\cal LQF}} r \lor \neg e_d(s)$ then $r \models_{{\cal LQF}}  \neg e_d(s) \lor t$. Thus, $T \models_{{\cal LQF}}  \neg e_d(s) \lor t$ and, by  Theorem \ref{COM},  $T \vdash \neg e_d(s) \lor t$.

On the other hand, let us suppose that $T \vdash \neg e_d(s) \lor t$. Then, there exist $t_1 \ldots t_n \in T$ such that, by defining $r = t_1 \land \ldots \land t_n$,
$r \models_{{\cal LQF}}  \neg e_d(s) \lor t$. Therefore, we also have that $r \land e_d(s) \models_{{\cal LQF}} e_d(s) \land (\neg e_d(s) \lor t)$ and, consequently,
$r \land e_d(s) \models_{{\cal LQF}} e_d(s) \land t $
because for each valuation $v$, $v(e_d(s) \land (\neg e_d(s) \lor t)) = v(e_d(s) \land t)$.
Since $e_d(s) \land t \models_{{\cal LQF}} t$ we have that
\begin{equation}\label{auxDED}
r \land e_d(s) \models_{{\cal LQF}} t.
\end{equation}
Let $v$ be a valuation such that $v(T) = 1$ and $v(s) = 1$. Then, $v(r) = 1$ and $v(e_d(s)) = e_d(v(s)) = e_d(1) = 1$. Thus, by Eq.(\ref{auxDED}), $v(t) = 1$ proving that
$T \cup \{s\} \models_{{\cal LQF}} t$. Hence, by Theorem \ref{COM}, $T \cup \{s \} \vdash t$.

\qed
\end{proof}

\section*{Acknowledgments}
This work is supported by i) MIUR, project PRIN 2017: Theory and applications
of resource sensitive logics, CUP: 20173WKCM5.

\end{document}